\definecolor{light-gray}{gray}{0.85}
\title{Deconstructing Stellar Consensus\newline(Extended Version)}
\titlerunning{Deconstructing Stellar Consensus}
\author{Álvaro García-Pérez}{IMDEA Software Institute, Madrid, Spain}{}{}{}
\author{Maria A. Schett}{University College London, United Kingdom}{}{}{}
\authorrunning{A. García-Pérez and M. A. Schett}
\keywords{Blockchain, Consensus protocol, Stellar, Byzantine quorum systems}
\newif\ifcomments
\newcommand{\mute}[1]{\ifcomments{#1}\fi}
\definecolor{msgreen}{rgb}{0.53, 0.66, 0.42}
\definecolor{agblue}{rgb}{0.13, 0.67, 0.8}
\definecolor{alred}{rgb}{1.0, 0.44, 0.37}
\newcommand{\al}[1]{\mute{\textcolor{alred}{(Alvaro: {#1})}}}
\newcommand{\ie}{\emph{i.e.}\xspace}
\newcommand{\VV}{\mathbf{A}}
\newcommand{\Tag}{\mathbf{Tag}}
\newcommand{\Value}{\mathbf{Val}} 
\newcommand{\Nat}{\mathbb{N}} 
\newcommand{\Counters}{\Nat^+} 
\newcommand{\Ballot}{\mathbf{Ballot}} 
\newcommand{\bllt}[2]{\langle #1, #2 \rangle}
\newcommand{\lic}{\lnsim} 
\newcommand{\lc}{\lesssim} 
\newcommand{\cl}{\gtrsim} 
\newcommand{\nbllt}{\bllt{0}{\bot}}
\newcommand{\round}{n}
\newcommand{\val}{x}
\newcommand{\AbstractConsensus}{Abstract SCP}
\newcommand{\abstractConsensus}{abstract SCP}
\newcommand{\AFCP}{ASCP}
\newcommand{\ConcreteConsensus}{Concrete SCP}
\newcommand{\concreteConsensus}{concrete SCP}
\newcommand{\CFCP}{CSCP}
\newcommand{\SubAlgo}[2]{#1 \SubAlgoBlock{#2}}
\newcommand{\V}{\mathbf{V}}
\newcommand{\Qs}{\mathcal{Q}}
\newcommand{\Ss}{\mathcal{S}}
\newcommand{\ok}{\mathrm{ok}}
\newcommand{\powerset}[1]{2^{#1}}
\newcommand{\Bool}{{\sf Bool}}
\newcommand{\true}{\mathit{true}}
\newcommand{\false}{\mathit{false}}
\newcommand{\readied}{{\sf readied}}
\newcommand{\ready}{{\sf ready}}
\newcommand{\delivered}{{\sf delivered}}
\newcommand{\deliver}{{\sf deliver}}
\newcommand{\READY}{{\tt READY}}
\newcommand{\federatedVoting}{{\sf federated\text{-}voting}}
\newcommand{\voted}{{\sf voted}}
\newcommand{\vote}{{\sf vote}}
\newcommand{\VOTE}{{\tt VOTE}}
\newcommand{\abstractconsensus}{{\sf abstract\text{-}consensus}}
\newcommand{\brs}{{\sf brs}}
\newcommand{\fvs}{{\sf ballots}}
\renewcommand{\b}{{\sf candidate}}
\newcommand{\h}{{\sf prepared}}
\newcommand{\n}{{\sf round}}
\newcommand{\commit}{{\sf commit}}
\newcommand{\starttimer}{{\sf start\text{-}timer}}
\newcommand{\timeout}{{\sf timeout}}
\newcommand{\propose}{{\sf propose}}
\newcommand{\decide}{{\sf decide}}
\newcommand{\bunchedVoting}{{\sf bunched\text{-}voting}}
\newcommand{\prepare}{{\sf prepare}}
\newcommand{\committed}{{\sf committed}}
\newcommand{\prepared}{{\sf prepared}}
\newcommand{\he}{{\sf max\text{-}{voted\text{-}prep}}}
\newcommand{\hr}{{\sf max\text{-}{readied\text{-}prep}}}
\newcommand{\hd}{{\sf max\text{-}{delivered\text{-}prep}}}
\newcommand{\Ce}{{\sf ballots\text{-}{voted\text{-}cmt}}}
\newcommand{\Cr}{{\sf ballots\text{-}{readied\text{-}cmt}}}
\newcommand{\Cd}{{\sf ballots\text{-}{delivered\text{-}cmt}}}
\newcommand{\M}{{\tt M}}
\newcommand{\CMT}{\textup{\sc cmt}}
\newcommand{\PREP}{\textup{\sc prep}}
\newcommand{\STA}{\textup{\sc s}}
\newcommand{\fh}{\phi}
\newcommand{\concreteconsensus}{{\sf concrete\text{-}consensus}}
\newcommand{\send}{{\sf send}}
\newcommand{\sendBatch}{{\sf send\text{-}batch}}
\newcommand{\receive}{{\sf receive}}
\newcommand{\receiveBatch}{{\sf receive\text{-}batch}}
\newcommand{\transmitBatch}{{\sf \transmit\text{-}batch}}
\newcommand{\voteBatch}{{\sf vote\text{-}batch}}
\newcommand{\deliverBatch}{{\sf deliver\text{-}batch}}
\newcommand{\concat}{\cdot}
\newcommand{\maptrace}{\sigma}
\newcommand{\transmit}{{\sf op}}
\newcommand{\emptrace}{[\,]}
\newcommand{\hist}{H}
\def\SmallTabColSep{\setlength{\tabcolsep}{1pt}}
\newcommand{\A}{\mathbf{1}}
\newcommand{\B}{\mathbf{2}}
\newcommand{\C}{\mathbf{3}}
\renewcommand{\section}{\@startsection{section}{1}{0pt}%
{-3ex plus -1ex minus -.2ex}{1.5ex plus.2ex}%
{\normalfont\large\bfseries}}
\renewcommand{\subsection}{\@startsection{subsection}{1}{0pt}%
{-2ex plus -1ex minus -.2ex}{1ex plus.2ex}%
{\bfseries}}
\def\@listI{\leftmargin\leftmargini
            \parsep 0\p@ \@plus1\p@ \@minus\p@
            \topsep 6\p@ \@plus2\p@ \@minus0\p@
            \itemsep 0\p@}
\let\@listi\@listI
\renewcommand{\paragraph}{%
  \@startsection%
    {paragraph}%
    {4}%
    {0pt}%
    {7pt plus 2pt minus 2pt}
    {-1em}%
    {\normalsize \bfseries}}
\begin{document}

\maketitle

\begin{abstract}
  Some of the recent blockchain proposals, such as Stellar and Ripple, allow for open
membership while using quorum-like structures typical for classical Byzantine consensus
with closed membership. This is achieved by constructing quorums in a decentralised way:
each participant independently chooses whom to trust, and quorums arise from these
individual decisions. Unfortunately, the consensus protocols underlying such blockchains
are poorly understood, and their correctness has not been rigorously investigated. In this
paper we rigorously prove correct the Stellar Consensus Protocol (SCP), with our proof
giving insights into the protocol structure and its use of lower-level abstractions.  To
this end, we first propose an abstract version of SCP that uses as a black box Stellar's
federated voting primitive (analogous to reliable Byzantine broadcast), previously
investigated by García-Pérez and Gotsman~\cite{GG18}. The abstract consensus protocol
highlights a modular structure in Stellar and can be proved correct by reusing the
previous results on federated voting. However, it is unsuited for realistic
implementations, since its processes maintain infinite state. We thus establish a
refinement between the abstract protocol and the concrete SCP that uses only finite state,
thereby carrying over the result about the correctness of former to the latter. Our
results help establish the theoretical foundations of decentralised blockchains like
Stellar and gain confidence in their correctness.


\end{abstract}

\section{Introduction}\label{sec:introduction}

Permissioned blockchains are becoming increasingly popular due to the low-energy
consumption and hard guarantees they provide on when a transaction can be considered
successfully committed. Such blockchains are often based on classical Byzantine
fault-tolerant (BFT) consensus protocols, like PBFT~\cite{CL02}. In these protocols
consensus is reached once a {\em quorum} of participants agrees on the same
decision. Quorums can be defined as sets containing enough nodes in the system (e.g.,
$2f+1$ out of $3f+1$, assuming at most $f$ failures) or by a more general structure of a
Byzantine quorum system (BQS)~\cite{MR98}.  Unfortunately, defining quorums in this way
requires fixing the number of participants in the system, which prevents decentralisation.

Some of the recent blockchain proposals, such as Stellar~\cite{M2015} %
and Ripple~\cite{SYB14}, allow for open membership while using quorum-like structures
typical for classical Byzantine consensus with closed membership. This is achieved by
constructing quorums in a decentralised way: each protocol participant independently
chooses whom to trust, and quorums arise from these individual decisions. In particular,
in Stellar trust assumptions are specified using a {\em federated Byzantine quorum system
  (FBQS)}, where each participant selects a set of {\em quorum slices}---sets of nodes
each of which would convince the participant to accept the validity of a given statement
(\S\ref{sec:background}). Quorums are defined as sets of nodes $U$ such that each node in
$U$ has some quorum slice fully within $U$, so that the nodes in a quorum can potentially
reach an agreement. Consensus is then implemented by a fairly intricate protocol whose key
component is {\em federated voting}---a protocol similar to Bracha's protocol for reliable
Byzantine broadcast~\cite{Bra87,CGR11}. Unfortunately, even though Stellar has been
deployed as a functioning blockchain, the structure of the consensus protocol underlying
it is poorly understood and its correctness has not been rigorously investigated. In this
paper we aim to close this gap, rigorously defining and proving correct the Stellar
Consensus Protocol (SCP). Apart from giving more confidence in the correctness of the
protocol, our proof is structured in such a way as to give insights into its structure and
its use of lower-level abstractions.

In more detail, the guarantees provided by SCP are nontrivial. When different
participants in an FBQS choose different slices, only a subset of the participants may
take part in a subsystem in which every two quorums intersect in a correct node---a
property required for achieving consensus. The system may partition into such subsystems,
and SCP will guarantee agreement within each of them. In blockchain terms, the blockchain
may fork, but in this case each fork will be internally consistent, a property that is
enough for business applications of the Stellar blockchain. The subsystems where agreement
is guaranteed are characterised by Mazières et al.~\cite{MLG19} through the notion of
\emph{intact sets}. Our proof of correctness establishes safety and liveness properties of
SCP relative to such intact sets (\S\ref{sec:spec}).

As a stepping stone in the proof, we first propose an {\em abstract version} of SCP that
uses as a black box Stellar's federated voting primitive (analogous to reliable Byzantine
broadcast) previously investigated by García-Pérez and Gotsman~\cite{GG18}
(\S\ref{sec:abstract-federated-consensus}). This abstract formulation allows specifying
the protocol concisely and highlights the modular structure present in it. This allows
proving the protocol by reusing the previous results on federated voting~\cite{GG18}
(reviewed in \S\ref{sec:stellar-broadcast}). However, the abstract protocol is unsuited
for realistic implementations, since its processes maintain infinite state. To address
this, we formulate a realistic version of the protocol---a {\em concrete SCP}---that uses
only finite state. We then prove a refinement between the abstract and concrete SCP,
thereby carrying over the result about the correctness of former to the latter
(\S\ref{sec:concrete-federated-consensus}).

A subtlety in SCP is that its participants receive information about quorum slices of
other participants directly from them. Hence, Byzantine participants may lie to others
about their choices of quorum slices, which may cause different participants to disagree
on what constitutes a quorum. Our results also cover this realistic case
(\S\ref{sec:lying}).

Overall, our results help establish the theoretical foundations of decentralised
blockchains like Stellar and gain confidence in their correctness. %
Proofs of the lemmas and theorems in the paper are given in the appendices.


\section{Background: System Model and Federated Byzantine Quorum Systems}
\label{sec:background}

\paragraph*{System model.}
We consider a system consisting of a finite {\em universe} of {\em nodes} $\V$ and assume
a Byzantine failure model where {\em faulty} nodes can deviate arbitrarily from their
specification. All other nodes are called {\em correct}. Nodes that are correct, or that
only deviate from their specification by stopping execution, are called
\emph{honest}. Nodes that deviate from their specification in ways other than stopping are
called \emph{malicious}. We assume that any two nodes can communicate over an
authenticated perfect link. We assume a partial synchronous network, which guarantees that
messages arrive within bounded time after some unknown, finite \emph{global stabilisation
  time} (\emph{GST}). Each node has a local timer and a timeout service that can be
initialised with an arbitrary delay $\Delta$. We assume that after GST the clock skew of
correct nodes is bounded, \ie, after GST two correct nodes can only disagree in the
duration of a given delay $\Delta$ by a bounded margin.

\paragraph*{Federated Byzantine quorum systems.}

Given a finite universe $\V$ of nodes, a \emph{federated Byzantine quorum system}
(\emph{FBQS}) \cite{M2015,GG18} is a function
$\Ss : \V \to \powerset{\powerset{\V}} \setminus \{\emptyset\}$ that specifies a non-empty
set of {\em quorum slices} for each node, ranged over by $q$. We require that a node
belongs to all of its own quorum slices:
$\forall v\in\V.\, \forall q\in \Ss(v).\, v \in q$. Quorum slices reflect the trust
choices of each node. A non-empty set of nodes $U\subseteq \V$ is a \emph{quorum} in an
FBQS $\Ss$ iff $U$ contains a slice for each member, \ie,
$\forall v\in U.\, \exists q\in \Ss(v).\, q\subseteq U$.

For simplicity, for now we assume that faulty nodes do not equivocate about their quorum
slices, so that all the nodes share the same FBQS. In \S\ref{sec:lying} we consider the
more realistic \emph{subjective FBQS} \cite{GG18}, where malicious nodes may lie about
their slices and different nodes have different views on the FBQS. There we also lift the
results on the subsequent sections of the paper to subjective FBQSes.

\begin{example}\label{ex:three-f-plus-one}
  Consider a universe $\V$ with $3f+1$ nodes, and consider the FBQS $\Ss$ where for every
  node $v\in \V$, the set of slices $\Ss(v)$ consists of every set of $2f+1$ nodes that
  contains $v$ itself. $\Ss$ encodes the classical cardinality-based quorum system of $3f+1$
  nodes with failure threshold $f$, since every set of $2f+1$ or more nodes is a quorum.
\end{example}

\begin{example}\label{ex:split}
  Let the universe $\V$ contain four nodes $v_1$ to $v_4$, and consider the FBQS $\Ss$
  in the diagram below.
  \begin{center}
    \begin{minipage}{.4\linewidth}
      \begin{tikzpicture}[->,>=stealth',shorten >=1pt,auto,node distance=1.7cm,
        semithick]
        \tikzstyle{every state}=[fill=light-gray,draw=none,text=black]

        \node[state] (1)              {$v_1$};
        \node[state] (2) [below of=1] {$v_2$};
        \node[state] (3) [right of=1]  {$v_3$};
        \node[state] (4) [below of=3] {$v_4$};

        \path[dashed] (2) edge (3);
        \path (1) edge (2)
              (2) edge (1);
        \path (1) edge [loop left] (1);
        \path (2) edge [loop left] (2);
        \path[dashed] (2) edge [loop right] (2);
        \path (3) edge [loop right] (3);
        \path (4) edge [loop right] (4);
      \end{tikzpicture}
    \end{minipage}
    \begin{minipage}{.4\linewidth}
      \begin{displaymath}
        \begin{array}{l}
          \Ss(v_1)\;=\;\{\{v_1,v_2\}\}\\
          \Ss(v_2)\;=\;\{\{v_1,v_2\},\{v_2,v_3\}\}\\
          \Ss(v_3)\;=\;\{\{v_3\}\}\\
          \Ss(v_4)\;=\;\{\{v_4\}\}
        \end{array}
      \end{displaymath}
    \end{minipage}
  \end{center}
  For each node, all the outgoing arrows with the same style determine one slice. Node
  $v_2$ has two slices, determined by the solid and dashed arrow styles respectively. The
  rest of the nodes have one slice. $\Ss$ has the following set of quorums $\Qs = {}$
  \begin{center}
    $\{\{v_1,v_2\},\{v_2,v_3\},\{v_3\},\{v_4\}, \{v_1,v_2,v_3\},\{v_3,v_4\},
    \{v_1,v_2,v_4\},\{v_2,v_3,v_4\}, \{v_1,v_2,v_3,v_4\}\}$.
  \end{center}
\end{example}
A consensus protocol that runs on top of an FBQS may not guarantee global agreement,
because when nodes choose slices independently, only a subset of the nodes may take part
in a subsystem in which every two quorums intersect in at least one correct node---a basic
requirement of a Byzantine quorum system \cite{MR98} to ensure agreement. To formalise
which parts of the system may reach agreement internally, we borrow the notions of
\emph{intertwined nodes} and of \emph{intact set} from \cite{MLG19}. Two nodes $v_1$ and
$v_2$ are \emph{intertwined} iff they are correct and every quorum containing $v_1$
intersects every quorum containing $v_2$ in at least one correct node. Consider an FBQS
$\Ss$ and a set of nodes $I$. The {\em projection} $\Ss|_I$ of $\Ss$ to $I$ is the FBQS
over universe $I$ given by $\Ss|_I(v) = \{q \cap I \mid q \in \Ss(v)\}$. For a given set
of faulty nodes, a set $I$ is an \emph{intact set} iff $I$ is a quorum in $\Ss$ and every
member of $I$ is intertwined with each other in the projected FBQS $\Ss|_I$. The intact
sets characterise those sets of nodes that can reach consensus, which we later show using
the following auxiliary result.
\begin{lemma}
  \label{lem:quorums-intersect-intact-set}
  Let $\Ss$ be an FBQS and assume some set of faulty nodes. Let $I$ be an intact set in
  $\Ss$ and consider any two quorums $U_1$ and $U_2$ in $\Ss$ such that
  $U_1\cap I\not= \emptyset$ and $U_2\cap I\not= \emptyset$. Then the intersection
  $U_1\cap U_2$ contains some node in $I$.
\end{lemma}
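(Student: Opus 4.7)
The plan is to reduce the statement to the definition of intertwined nodes in the projected FBQS $\Ss|_I$. The key observation is that intersecting a quorum of $\Ss$ with the intact set $I$ yields a quorum in the projected system $\Ss|_I$, at which point the intertwined property supplied by the definition of an intact set does the rest of the work.

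First I would show the auxiliary fact: if $U$ is a quorum in $\Ss$ and $U \cap I \neq \emptyset$, then $U \cap I$ is a quorum in $\Ss|_I$. To see this, note that $U \cap I$ is a non-empty subset of $I$ and, for every $v \in U \cap I$, the fact that $U$ is a quorum in $\Ss$ gives some $q \in \Ss(v)$ with $q \subseteq U$; then $q \cap I$ is an element of $\Ss|_I(v)$ by definition of projection, and clearly $q \cap I \subseteq U \cap I$. So $U \cap I$ meets the quorum condition in $\Ss|_I$. Applying this observation to both $U_1$ and $U_2$ yields two quorums $U_1 \cap I$ and $U_2 \cap I$ of $\Ss|_I$.

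Next I would pick any $v_1 \in U_1 \cap I$ and $v_2 \in U_2 \cap I$; these nodes exist by hypothesis. Since $I$ is intact, $v_1$ and $v_2$ are intertwined in $\Ss|_I$, which by definition means they are correct and every quorum of $\Ss|_I$ containing $v_1$ intersects every quorum of $\Ss|_I$ containing $v_2$ in at least one correct node. Instantiating this with the quorums $U_1 \cap I$ and $U_2 \cap I$ constructed above, there exists a correct node in $(U_1 \cap I) \cap (U_2 \cap I) = U_1 \cap U_2 \cap I$. In particular this witness lies in $I$ and in $U_1 \cap U_2$, establishing the lemma.

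I don't anticipate a real obstacle here; the only subtle point is recognising that the projection machinery is set up precisely so that quorums of $\Ss$ restrict to quorums of $\Ss|_I$, and that once this is noted the statement is essentially a direct unfolding of the definition of intertwined applied inside $\Ss|_I$. No use of the condition that $I$ itself is a quorum in $\Ss$ appears necessary for this lemma beyond what is already packaged in the definition of intact, since intertwinedness alone suffices once we are working within $\Ss|_I$.
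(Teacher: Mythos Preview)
Your proposal is correct and follows essentially the same approach as the paper: the paper also first observes that $U_1\cap I$ and $U_2\cap I$ are quorums in $\Ss|_I$ (packaged there as a separate lemma), then uses that any two quorums of $\Ss|_I$ intersect (packaged as another short lemma derived from intertwinedness) to conclude $(U_1\cap I)\cap(U_2\cap I)\neq\emptyset$. The only cosmetic difference is that you inline both auxiliary facts rather than stating them as standalone lemmas.
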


The maximal intact sets are disjoint with each other:
\begin{lemma}\label{lem:intact-sets-closed-union}
  Let $\Ss$ be an FBQS and assume some set of faulty nodes. Let $I_1$ and $I_2$ be two
  intact sets in $\Ss$. If $I_1\cap I_2\not=\emptyset$ then $I_1\cup I_2$ is an intact set
  in $\Ss$.
\end{lemma}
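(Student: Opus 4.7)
The plan is to verify both defining conditions of an intact set for $I_1\cup I_2$: (a) that $I_1\cup I_2$ is a quorum in $\Ss$, and (b) that every pair of nodes in $I_1\cup I_2$ is intertwined in the projection $\Ss|_{I_1\cup I_2}$. Condition (a) is routine, since each node in $I_1\cup I_2$ lies in $I_1$ or in $I_2$, both of which are quorums in $\Ss$, so it already has a slice contained in the union. Correctness of all members of $I_1\cup I_2$ follows from the same property for each of $I_1$ and $I_2$, which takes care of the correctness clause of condition~(b).

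For intertwinedness the key idea is to use any fixed $w\in I_1\cap I_2$ to bridge the two intact sets, via two simple facts about the projection. First, if $U$ is a quorum in $\Ss|_{I_1\cup I_2}$ with $U\cap I_j\neq\emptyset$, then $U\cap I_j$ is a quorum in $\Ss|_{I_j}$: any $v\in U\cap I_j$ has some $q\in\Ss(v)$ with $q\cap(I_1\cup I_2)\subseteq U$, whence $q\cap I_j\in\Ss|_{I_j}(v)$ is contained in $U\cap I_j$. Second, $I_1\cap I_2$ is itself a quorum in $\Ss|_{I_1}$ (and symmetrically in $\Ss|_{I_2}$): for $v\in I_1\cap I_2\subseteq I_2$, $I_2$ being a quorum in $\Ss$ gives some $q\in\Ss(v)$ with $q\subseteq I_2$, and then $q\cap I_1\subseteq I_1\cap I_2$.

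Putting these together, I would first show that every quorum $U$ in $\Ss|_{I_1\cup I_2}$ meets both $I_1$ and $I_2$: if $U\cap I_1\neq\emptyset$ but $U\cap I_2=\emptyset$, then $U\subseteq I_1\setminus I_2$, so both $U$ and $I_1\cap I_2$ are quorums in $\Ss|_{I_1}$, and intactness of $I_1$ forces any member of $U$ and $w$ to be intertwined in $\Ss|_{I_1}$, yielding a node in $U\cap(I_1\cap I_2)$ and thus contradicting $U\cap I_2=\emptyset$. Given then two arbitrary quorums $U_1,U_2$ in $\Ss|_{I_1\cup I_2}$ witnessing a pair of nodes from $I_1\cup I_2$, both $U_1\cap I_1$ and $U_2\cap I_1$ are nonempty quorums in $\Ss|_{I_1}$, and intactness of $I_1$ furnishes a correct node in $U_1\cap U_2\cap I_1\subseteq I_1\cup I_2$. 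The main obstacle is precisely this ``no parasitic quorum'' step, since it is where the hypothesis $I_1\cap I_2\neq\emptyset$ does real work; the remainder is careful unfolding of the definition of the projected FBQS.
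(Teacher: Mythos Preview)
Your proposal is correct and follows essentially the same route as the paper. The paper argues by contradiction and invokes Lemma~\ref{lem:quorums-intersect-intact-set} at the two places where you instead use your Facts~1 and~2 directly in the projections, but the underlying mechanism---using $I_1\cap I_2\neq\emptyset$ to force any quorum of $\Ss|_{I_1\cup I_2}$ to meet both $I_j$, then applying intactness of one $I_j$ to obtain a correct node in the intersection---is the same; your explicit handling of the projected systems $\Ss|_{I_j}$ is, if anything, the more careful of the two write-ups.
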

In SCP the system may split into different partitions---\ie, the maximal intact
sets---that may be inconsistent with each other, but which constitute independent systems
each of which can reach consensus.

Consider the $\Ss$ from Example~\ref{ex:three-f-plus-one}, which encodes the
cardinality-based quorum system of $3f+1$ nodes, and let $f=1$, so that the universe $\V$
contains four nodes $v_1$ to $v_4$. If we assume that node $v_3$ is faulty, then the set
$I=\{v_1,v_2,v_4\}$ is the only maximal intact set: $I$ is a quorum in $\Ss$, and $\Ss|_I$
contains the quorums $\{\{v_1,v_2\},\{v_2,v_4\},\{v_1,v_4\},\{v_1,v_2,v_4\}\}$, which
enjoy quorum intersection. This ensures that every two nodes in $I$ are intertwined in the
projected system $\Ss|_I$.

Now consider the $\Ss$ from Example~\ref{ex:split}. If we assume that node $v_3$ is
faulty, then the sets $I=\{v_1,v_2\}$ and $I'=\{v_4\}$ are the maximal intact sets: $I$
and $I'$ are quorums in $\Ss$, and the projected systems $\Ss|_I$ and $\Ss|_{I'}$ enjoy
quorum intersection---$\Ss_I$ contains quorums $\{v_1,v_2\}$ and $\{v_2\}$, and
$\Ss|_{I'}$ contains quorum $\{v_4\}$---which ensures that every two nodes in either $I$
or $I'$ are intertwined in the projected systems $\Ss|_I$ and $\Ss_{I'}$ respectively. It
is easy to check that adding any set of correct nodes to either $I$ or $I'$ results in
sets that are not quorums in $\Ss$, or in projected systems that contain some pairs of
nodes that are not intertwined.


\section{Specifications}
\label{sec:spec}

Assume a set $\Value$ of \emph{consensus values}. In the consensus protocols that we study
in \S\ref{sec:abstract-federated-consensus}--\ref{sec:concrete-federated-consensus}, each
correct node proposes some $x\in\Value$ through an invocation $\propose(x)$, and each
node may decide some $x'\in \Value$ through an indication $\decide(x')$. We consider a
variant of the \emph{weak Byzantine consensus} specification in \cite{CGR11} that we call
\emph{non-blocking Byzantine consensus for intact sets}, which is defined as
follows. Given a maximal intact set $I$,
\begin{description}
\item[\rm\color{black}(\emph{Integrity})] no correct node decides twice,
\item[\rm\color{black}(\emph{Agreement for intact sets})] no two nodes in $I$ decide
  differently,
\item[\rm\color{black}(\emph{Weak validity for intact sets})] if all nodes
    are honest and every node proposes $x$, then no node in $I$ decides a consensus value
    different from $x$; furthermore, if all nodes are honest and some node in $I$ decides
    $x$, then $x$ was proposed by some node, and \label{pag:weak-validity}
\item[\rm\color{black}(\emph{Non-blocking for intact sets})] if a node $v$ in $I$ has not
  yet decided in some run of the protocol, then for every continuation of that run in
  which all the malicious nodes stop, node $v$ eventually decides some consensus value.
\end{description}
\noindent
The usual \emph{Weak validity} property of consensus~\cite{CGR11} ensures that if all
nodes are correct and they propose the same consensus value, then no node can decide a
consensus value different from the proposed one; and that if all nodes are correct, then a
node can only decide a consensus value proposed by some node. Our \emph{Weak validity for
  intact sets} above adapts this requirement to the nodes in a maximal intact set, and
weakens its condition by assuming that all nodes are honest instead of correct. Notice
that if every two quorums intersect our property entails the usual one because a correct
node is also honest, and because if all nodes are correct then the maximal intact set is
the universe. For instance, this condition holds in the cardinality-based quorum systems
($3f+1$).

The consensus protocols we consider in this paper specify the behaviour of SCP's ballot
protocol~\cite{M2015,MLG19} with one of its suggested strategies for managing timeouts
(Strategy~1 from \cite{MLG19}). As discussed in \cite{MLG19}, in SCP malicious nodes with
good network timing could permanently delay the termination of the nodes in an intact set,
and thus the protocol does not provide the usual \emph{Termination} guarantee that every
correct node eventually decides some consensus value~\cite{CGR11}. Instead, we consider
the weaker liveness guarantee of \emph{Non-blocking for intact sets}, which we have
obtained by adapting the \emph{Non-blocking} property in
\cite{SRSD08}. \emph{Non-blocking} requires that some continuation of a given run
\emph{exists} in which every correct node terminates. Our \emph{Non-blocking for intact
  sets} adapts this requirement to the nodes in a maximal intact set, and requires that
they terminate in \emph{every} continuation of the run in which malicious nodes are
stopped. It is easy to check that if every correct node is in an intact set, then
\emph{Non-blocking for intact sets} entails \emph{Non-blocking} in \cite{SRSD08}. For
instance, this condition holds in the cardinality-based quorum systems ($3f+1$). Besides,
if every correct node is honest, then \emph{Non-blocking for intact sets} entails the
usual {\em Termination} property that guarantees that every correct node eventually
decides some consensus value.

The \emph{non-blocking Byzantine consensus for intact sets} above entails the \emph{weak
  Byzantine consensus} specification \cite{CGR11} in the cardinality-based quorum systems
($3f+1$), which guarantees the \emph{Integrity} property above, as well as the usual
\emph{Agreement} property that ensures that no two correct nodes decide differently, and
the usual \emph{Weak validity} and \emph{Termination} properties that we have recalled in
the paragraphs above.

One of the core components of the consensus protocol in
\S\ref{sec:abstract-federated-consensus} is \emph{federating voting} (\emph{FV})
\cite{M2015,MLG19}. Assume a set of \emph{voting values} $\VV$ that could be disjoint with
the set $\Value$ of consensus values (we typically let $\VV$ be the set of Booleans
$\Bool\equiv\{\true,\false\}$). FV allows each correct node to vote for some $a\in\VV$
through an invocation $\vote(a)$, and each node may deliver some $a'\in\VV$ through an
indication $\deliver(a')$. The interface of FV is akin to that of consensus, where each
node activates itself through the primitive $\vote(a)$. However, FV has weaker liveness
guarantees than consensus, which are reminiscent to those of \emph{Byzantine reliable
  broadcast} from \cite{CGR11} and \emph{weakly reliable Byzantine broadcast} from
\cite{GG18}. Here, we consider a variant of the latter specification that we call
\emph{reliable Byzantine voting for intact sets}, which is defined as follows. Given a
maximal intact set $I$,
\begin{description}
\item[\rm\color{black}(\emph{No duplication})] every correct node delivers at most one
  voting value,
\item[\rm\color{black}(\emph{Totality for intact sets})] if a node in $I$ delivers a
  voting value, then every node in $I$ eventually delivers a voting value,
\item[\rm\color{black}(\emph{Consistency for intertwined nodes})] if two intertwined nodes
  $v$ and $v'$ deliver $a$ and $a'$ respectively, then $a = a'$, and
\item[\rm\color{black}(\emph{Validity for intact sets})] if all nodes in $I$ vote for $a$,
  then all nodes in $I$ eventually deliver $a$.
\end{description}

The ability of each node to activate itself independently in the specification above
simulates a malicious sender that may send different voting values to each node in the
specification of \emph{weakly reliable Byzantine broadcast} from \cite{GG18}.


\section{Federated Voting}
\label{sec:stellar-broadcast}

In this section we recall \emph{federated voting} (\emph{FV}) from~\cite{M2015}, which
also corresponds to the {\em Stellar broadcast} considered in~\cite{GG18}. We prove that
FV implements the specification of reliable Byzantine voting for intact sets, thereby
generalising the results of~\cite{GG18} to the case of multiple intact sets within the
system. The consensus protocol that we study in the next section uses multiple instances
of FV independent from each other. This is done by letting each node run a distinct
process for each instance of FV, which is identified by a \emph{tag} $t$ from some
designated set $\Tag$ of tags.

Algorithm~\ref{alg:broadcast} below depicts FV over an FBQS $\Ss$ with set of quorums
$\Qs$. A node $v$ runs a process $\federatedVoting(v,t)$ for each tag $t$. The messages
exchanged by such a process are also tagged with $t$, in order to distinguish them from
the messages exchanged for instances of FV associated with tags different from $t$.

\begin{algo}
  \begin{algorithm}[H]
    \setcounter{AlgoLine}{0}
    \SubAlgo{\Procc $\federatedVoting(v\in\V,t\in\Tag)$}
    {
      \smallskip
      $\voted,\ready,\delivered \leftarrow \false \in \Bool$\;\label{lin:bc-init}
      \smallskip
      \SubAlgo{$\vote(a)$\label{lin:echo}}
      {
        \If{\Not $\voted$\label{lin:if-echoed}}
        {
          $\voted \leftarrow \true$\; \label{lin:echoed}
          \Send $\VOTE(t,a)$ \To \Every $v'\in\V$; \label{lin:send-echo}
        }
      }
      \smallskip
      \SubAlgo{\Upon \Received $\VOTE(t,a)$
        \From \Every $u\in U$ \For \Some $U\in \mathcal{Q}$ \SuchThat $v\in U$
        \And \Not $\ready$ \label{lin:ready}}
      {
        $\ready \leftarrow \true$\; \label{lin:bc-readied}
        \Send $\READY(t,a)$ \To \Every $v'\in\V$;\label{lin:send-ready}
      }
      \smallskip
      \SubAlgo{\Upon \Received
        $\READY(t,a)$ \From \Every $u\in B$ \For \Some
        {\bf $v$-blocking} $B$ \And \Not $\ready$ \label{lin:bc-readied-v-blocking}
        \label{lin:v-blocking}}
      {
        $\ready \leftarrow \true$\;
        \Send $\READY(t,a)$ \To \Every $v'\in\V$;\label{lin:send-v-blocking}
      }
      \smallskip
      \SubAlgo{\Upon \Received
        $\READY(t,a)$ \From \Every $u\in U$ \For \Some $U\in \mathcal{Q}$
        \SuchThat $v\in U$ \And \Not $\delivered$
        \label{lin:deliver}}
      {
        $\delivered \leftarrow \true$\;
        \Trigger $\deliver(a)$;\label{lin:send-deliver}
      }
    }
    \caption{Federated voting (FV) over an FBQS $\Ss$ with set of quorums $\Qs$.}
    \label{alg:broadcast}
  \end{algorithm}
\end{algo}

FV adapts \emph{Bracha's protocol} for reliable Byzantine broadcast \cite{Bra87}, which
works over the cardinality-based quorum systems of $3f+1$ nodes, to the federated setting
of the FBQSs. In FV nodes process each other's messages in several stages, where for each
tag $t$ progress is denoted by several Boolean flags (line~\ref{lin:bc-init} of
Algorithm~\ref{alg:broadcast}). When a node $v$ votes $a$ for tag $t$ for the first time,
the node sends $\VOTE(t,a)$ to every node (including itself, for uniformity;
lines~\ref{lin:echo}--\ref{lin:send-echo}). When a node $v$ receives a $\VOTE(t,a)$
message from a quorum to which $v$ itself belongs, it sends a $\READY(t,a)$ message to
every node, signalling its willingness to deliver the value $a$ for tag $t$
(lines~\ref{lin:ready}--\ref{lin:send-ready}). Note that, for each tag $t$, two nodes in
the same intact set $I$ cannot send $\READY$ messages with two different voting values
through the rule in lines~\ref{lin:ready}--\ref{lin:send-ready}. Indeed, this would
require two quorums of $\VOTE$ messages, each with a node in $I$, with different voting
values for the same tag.  But by Lemma~\ref{lem:quorums-intersect-intact-set} these
quorums would intersect in a node in $I$, which is by definition correct and cannot send
contradictory $\VOTE$ messages for the same tag. When a node $v$ receives the message
$\READY(t,a)$ from a quorum to which $v$ itself belongs, it delivers $a$ for tag $t$
(lines~\ref{lin:deliver}--\ref{lin:send-deliver}).

The exchange of $\READY$ messages in the protocol is necessary to establish liveness
guarantees. It ensures that, if a node in an intact set $I$ delivers a voting value for
some tag, other nodes in $I$ have enough information to also deliver a voting value for
the same tag. This relies on the rule in
lines~\ref{lin:v-blocking}--\ref{lin:send-v-blocking}, which uses the notion of
\emph{$v$-blocking set}~\cite{M2015}. Given a node $v$, a set $B$ is \emph{$v$-blocking}
iff $B$ overlaps each of $v$'s slices, \ie, $\forall q\in \Ss(v).~q\cap B\not= \emptyset$.
(To illustrate this notion, in Example~\ref{ex:three-f-plus-one} every set of $f+1$ nodes
is $v$-blocking for every $v$, and in Example~\ref{ex:split} the set $\{v_1,v_3\}$ is
$v_2$-blocking and the set $\{v_2\}$ is $v_1$-blocking.)
Lines~\ref{lin:v-blocking}--\ref{lin:send-v-blocking} allow a node to send a $\READY(t,a)$
message even if it previously voted for a different voting value for tag $t$: this is done
if $v$ receives $\READY(t,a)$ from each member of a $v$-blocking set. If $v$ is in an
intact set~$I$, the following lemma guarantees that in this case $v$ has received at least
one $\READY(t,a)$ message from some node in $I$.

\begin{lemma}\label{lem:empty-I-not-v-blocking}
  Let $\Ss$ be an FBQS and assume a set of faulty nodes. Let $I$ be an intact set in $\Ss$
  and $v\in I$. Then, no $v$-blocking set $B$ exists such that $B\cap I = \emptyset$.
\end{lemma}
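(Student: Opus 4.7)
The plan is to proceed by a direct contrapositive argument, leveraging the fact that every intact set is, by definition, a quorum in $\Ss$.

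First, I would unfold the definitions. Since $I$ is an intact set in $\Ss$, in particular $I$ is a quorum. By the definition of a quorum, $\forall u\in I.\,\exists q\in \Ss(u).\, q\subseteq I$. Specialising this to $v\in I$, there exists some slice $q^{\star}\in \Ss(v)$ with $q^{\star}\subseteq I$.

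Next, I would exhibit this $q^{\star}$ as a witness that $B$ fails to be $v$-blocking. Assume for contradiction that $B$ is $v$-blocking while $B\cap I=\emptyset$. Being $v$-blocking means $\forall q\in \Ss(v).\, q\cap B\neq\emptyset$, so in particular $q^{\star}\cap B\neq\emptyset$. But since $q^{\star}\subseteq I$, we obtain $\emptyset\neq q^{\star}\cap B\subseteq I\cap B=\emptyset$, a contradiction.

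There is really no obstacle here: the whole argument rests on the single observation that an intact set is a quorum, and hence every member of an intact set has at least one of its own slices completely inside that intact set. Any set disjoint from $I$ is therefore forced to miss this distinguished slice, and cannot block $v$.
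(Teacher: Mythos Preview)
Your proof is correct and follows essentially the same approach as the paper: both use the fact that $I$ is a quorum, so $v$ has a slice entirely contained in $I$, which any set disjoint from $I$ must miss. The paper's proof is simply a terser one-line version of your argument.
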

By Lemma~\ref{lem:empty-I-not-v-blocking}, the first node in $I$ to ever send a
$\READY(t,a)$ message for a tag $t$ has to do it through the rule in
lines~\ref{lin:ready}--\ref{lin:send-ready}, and hence the value $a$ has been
cross-checked by a quorum.

If the condition $v\in U$ in lines~\ref{lin:ready} and \ref{lin:deliver} of
Algorithm~\ref{alg:broadcast} was dropped, this could violate \emph{Agreement for intact
  sets} as follows. Take the $\Ss$ from Example~\ref{ex:split} and consider a run of FV
for some tag $t$ where $v_3$ is malicious. Node $v_3$ could respectively send
$\READY(t,a)$ and $\READY(t,a')$ with $a\not=a'$ to correct nodes $v_1$ and $v_2$.  Since
$\{v_3\}\in \Qs$, these nodes will respectively deliver $a$ and $a'$ by
lines~\ref{lin:deliver}--\ref{lin:send-deliver} of Algorithm~\ref{alg:broadcast} without
condition $v\in U$.

Our first contribution is to generalise the results of \cite{GG18} to establish the
correctness of FV within each of the maximal intact sets of an FBQS, as captured by
Theorem~\ref{thm:reliable-byzantine-broadcast-intact-sets} below.

\begin{theorem}\label{thm:reliable-byzantine-broadcast-intact-sets}
  Let $\Ss$ be an FBQS and $t$ be a tag. The instance for $t$ of FV over $\Ss$ satisfies
  the specification of reliable Byzantine voting for intact sets.
\end{theorem}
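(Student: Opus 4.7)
The plan is to prove Theorem~4.1 by checking each of the four properties in the specification separately, adapting the arguments for single intact sets from~\cite{GG18} so that they apply to each maximal intact set in isolation. The two enabling facts will be Lemma~\ref{lem:quorums-intersect-intact-set} (two quorums hitting $I$ intersect inside $I$) and Lemma~\ref{lem:empty-I-not-v-blocking} (every $v$-blocking set meets $I$ when $v\in I$); together they let us rerun the Bracha-style reasoning inside $I$.

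\emph{No duplication} is immediate: the Boolean flag $\delivered$ guards the trigger on line~\ref{lin:send-deliver}, so no correct node can fire $\deliver$ twice. For \emph{Consistency for intertwined nodes}, if $v$ and $v'$ are intertwined and deliver $a$ and $a'$, each received $\READY(t,\cdot)$ from a quorum containing itself. By the definition of intertwined those two quorums share a correct node $w$. Since the flag $\ready$ prevents $w$ from sending more than one $\READY$ message for tag $t$, the two $\READY$ messages $w$ sent to $v$ and $v'$ must agree, forcing $a=a'$.

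For \emph{Validity for intact sets}, assume every node in $I$ invokes $\vote(a)$. I will argue by well-founded induction on the event-time of becoming $\ready$ that every $v\in I$ that ever sets $\ready$ sends exactly $\READY(t,a)$. The first such $v$ cannot have fired the $v$-blocking rule, because Lemma~\ref{lem:empty-I-not-v-blocking} would demand an earlier $\READY$ from some node of $I$; so it fired lines~\ref{lin:ready}--\ref{lin:send-ready} with a quorum $U\ni v$. Both $U$ and $I$ are quorums meeting $I$, so by Lemma~\ref{lem:quorums-intersect-intact-set} there is some $w\in U\cap I$, and $w$ (being correct and in $I$) only sent $\VOTE(t,a)$; hence the common value received from $U$ is $a$. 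For a later $v\in I$ the line~\ref{lin:ready}--\ref{lin:send-ready} case repeats this argument, while the $v$-blocking case uses Lemma~\ref{lem:empty-I-not-v-blocking} to locate an earlier $w\in I\cap B$ whose $\READY$ was, by the induction hypothesis, $\READY(t,a)$. Liveness then follows because $I$ is a quorum: every node in $I$ eventually receives $\VOTE(t,a)$ from the whole of $I$, becomes $\ready$ (if not already), sends $\READY(t,a)$, and finally collects $\READY(t,a)$ from the quorum $I$ and delivers $a$.

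\emph{Totality} is the main obstacle, because we must both fix the value and propagate readiness across all of $I$. I will first establish, by the same induction used for validity but now without the $\vote$ premise, a \textbf{uniqueness lemma}: there is at most one value $a$ such that any node of $I$ ever sends $\READY(t,a)$. The base case again rules out the $v$-blocking rule via Lemma~\ref{lem:empty-I-not-v-blocking}, and the line~\ref{lin:ready}--\ref{lin:send-ready} case appeals to Lemma~\ref{lem:quorums-intersect-intact-set} to put a common correct voter from $I$ into any two such $\VOTE$ quorums, forcing their values to coincide. Now suppose some $v\in I$ delivers $a$; by Lemma~\ref{lem:quorums-intersect-intact-set} applied to its delivering quorum and $I$, some $w\in I$ already sent $\READY(t,a)$, pinning the unique value to $a$. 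To spread readiness to the rest of $I$, I argue that the set $R\subseteq I$ of $I$-nodes that have sent $\READY(t,a)$ is monotone and, after GST, any $v'\in I\setminus R$ eventually receives $\READY(t,a)$ from all of $R$; by the contrapositive-style use of Lemma~\ref{lem:empty-I-not-v-blocking} inside the projected system $\Ss|_I$ (where $I$-quorums exist by intactness), the growing set $R$ must eventually cover or $v'$-block each remaining $v'\in I$, whichever arrives first, so $v'$ fires lines~\ref{lin:ready}--\ref{lin:send-ready} or lines~\ref{lin:v-blocking}--\ref{lin:send-v-blocking} and sends $\READY(t,a)$. Once $R=I$ every node in $I$ sees $\READY(t,a)$ from the quorum $I$ and delivers $a$, finishing the proof.
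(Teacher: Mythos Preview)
Your arguments for \emph{No duplication}, \emph{Consistency for intertwined nodes}, and \emph{Validity for intact sets} are correct and essentially match the paper's proof. Your inline induction for validity rederives what the paper packages as a separate uniqueness lemma (any two nodes in $I$ that send $\READY(t,\cdot)$ agree on the value), and your use of Lemmas~\ref{lem:quorums-intersect-intact-set} and~\ref{lem:empty-I-not-v-blocking} there is right.

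The gap is in \emph{Totality}. Your uniqueness lemma is fine, but the propagation step is not justified by the lemma you cite. You claim that ``by the contrapositive-style use of Lemma~\ref{lem:empty-I-not-v-blocking} inside $\Ss|_I$'' the stable set $R$ must $v'$-block some $v'\in I\setminus R$. Lemma~\ref{lem:empty-I-not-v-blocking} only says that a $v$-blocking set for $v\in I$ must intersect $I$; it says nothing about when a subset of $I$ is $v'$-blocking for some $v'$ outside it. What you actually need is: if $R\subsetneq I$ and $R$ is not $v'$-blocking for any $v'\in I\setminus R$, then $I\setminus R$ is a quorum in $\Ss|_I$ (each such $v'$ has a slice avoiding $R$, and projecting to $I$ gives a slice inside $I\setminus R$). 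Since the delivering quorum $U$ satisfies $U\cap I\subseteq R$ and $U\cap I$ is a quorum in $\Ss|_I$, quorum intersection in $\Ss|_I$ then forces $(U\cap I)\cap(I\setminus R)\neq\emptyset$, a contradiction. This is exactly the paper's auxiliary lemma, and it is the missing ingredient in your sketch. Also note that your ``fires lines~\ref{lin:ready}--\ref{lin:send-ready}'' option does not apply here: those lines trigger on $\VOTE$ messages, whereas the propagation of $R$ proceeds only through $\READY$ messages and the $v$-blocking rule.
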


FV also guarantees the property stated by the following lemma, which helps establish the
liveness properties of the consensus protocol that we introduce in
\S\ref{sec:abstract-federated-consensus}.
\begin{lemma}\label{lem:bounded-totality}
  Let $\Ss$ be an FBQS and $t$ be a tag. Consider an execution of the instance for $t$ of
  FV over $\Ss$. Let $I$ be an intact set in $\Ss$ and assume that GST has expired. If a
  node $v\in I$ delivers a voting value then every node in $I$ will deliver a voting value
  within bounded time.
\end{lemma}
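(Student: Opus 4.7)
The plan is to track how $\READY(t,a)$ propagates through $I$ and argue that it reaches every member within a number of message delays proportional to $|I|$. Let $T$ be the time at which $v$ delivers $a$, and let $\delta$ bound the message delay after GST (which by assumption has expired). From the delivery rule at lines~\ref{lin:deliver}--\ref{lin:send-deliver}, $v$ has received $\READY(t,a)$ from a quorum $U$ with $v\in U$. Since $U$ is a quorum in $\Ss$, every $u\in U\cap I$ has a slice $q\in\Ss(u)$ with $q\subseteq U$, whose projection $q\cap I\subseteq U\cap I$ is a slice of $u$ in $\Ss|_I$; therefore $U\cap I$ is itself a quorum in $\Ss|_I$. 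Moreover, each member of $U\cap I$ is correct and broadcast $\READY(t,a)$ no later than time $T$.

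Define $W_k$ as the set of nodes in $I$ that have sent $\READY(t,a)$ by time $T+k\delta$, so $W_0\supseteq U\cap I\neq\emptyset$. The main step is an auxiliary claim: whenever $W_{k-1}\subsetneq I$, some $u\in I\setminus W_{k-1}$ has $W_{k-1}$ as a $u$-blocking set. Otherwise, every $u\in I\setminus W_{k-1}$ would admit a slice $q\in\Ss(u)$ with $q\cap W_{k-1}=\emptyset$, whence $q\cap I\subseteq I\setminus W_{k-1}$, making $I\setminus W_{k-1}$ a quorum in $\Ss|_I$. Together with the quorum $U\cap I\subseteq W_{k-1}$, this would yield two disjoint quorums in $\Ss|_I$, contradicting the fact that every two nodes of $I$ are intertwined in $\Ss|_I$ and hence quorums in $\Ss|_I$ pairwise intersect.

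Given the claim, fix such a $u\in I\setminus W_{k-1}$ at step $k$. By time $T + k\delta$, $u$ has received $\READY(t,a)$ from every member of $W_{k-1}$, which is $u$-blocking. Firing the $v$-blocking rule at lines~\ref{lin:v-blocking}--\ref{lin:send-v-blocking} additionally requires the flag $\ready$ to be false at $u$; this holds because no node of $I$ ever sends $\READY(t,a')$ for any $a'\neq a$. Indeed, tracing back along $v$-blocking triggers via Lemma~\ref{lem:empty-I-not-v-blocking}, any such $\READY(t,a')$ originating in $I$ would ultimately rely on the quorum rule at lines~\ref{lin:ready}--\ref{lin:send-ready} firing on a quorum of $\VOTE(t,a')$ messages, which by Lemma~\ref{lem:quorums-intersect-intact-set} would intersect in $I$ the quorum of $\VOTE(t,a)$ underpinning $v$'s delivery, forcing a correct node in $I$ to vote for two different values. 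Hence $u$ fires the $v$-blocking rule, so $u\in W_k$ and $|W_k|>|W_{k-1}|$. Induction over $k\leq |I|$ yields $W_k = I$.

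To conclude, $I$ is itself a quorum (by the definition of intact set) and contains every $u\in I$, so by time $T + (|I|+1)\delta$ every $u\in I$ has received $\READY(t,a)$ from the quorum $I$ and, if not already delivered, fires the rule at lines~\ref{lin:deliver}--\ref{lin:send-deliver} to trigger $\deliver(a)$. The bound $(|I|+1)\delta$ is finite. The main obstacle is the quorum-intersection argument underlying the auxiliary claim: one must recognise $U\cap I$ as a quorum in the projected FBQS $\Ss|_I$ and show that otherwise $I\setminus W_{k-1}$ would be a competing disjoint quorum, contradicting the intertwinedness of the members of $I$.
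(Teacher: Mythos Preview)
Your proposal is correct and follows essentially the same approach as the paper's proof. The paper proves this lemma by pointing back to its proof of \emph{Totality for intact sets} (the cascading of $\READY(t,a)$ messages through $I$ via $v$-blocking sets, terminating because of quorum intersection in $\Ss|_I$) and then observing that the number of message exchanges involved is bounded by the system size, so after GST the whole cascade completes in bounded time; you unroll that same cascade explicitly with the sets $W_k$, reprove the key step (your auxiliary claim is exactly the paper's Lemma~\ref{lem:federated-totality-intact-set}), and extract the concrete bound $(|I|+1)\delta$.
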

We write $\delta_I$ for the time that a node in $I$ takes to deliver some voting value
after GST and provided that some other node in $I$ already delivered some voting
value. The delay $\delta_I$---which is determined by $\Ss$ and $I$---is unknown, but
Lemma~\ref{lem:bounded-totality} guarantees that it is finite.

\begin{example}\label{ex:federated-voting}
  Consider the $\Ss$ from Example~\ref{ex:three-f-plus-one}, which encodes the
  cardinality-based quorum system $3f+1$, and let $f=1$ such that the universe $\V$
  contains four nodes $v_1$ to $v_4$. Every set of three or more nodes is a quorum, and
  every set of two or more nodes is $v$-blocking for every $v\in \V$. Let us fix a tag $t$
  and consider an execution of the instance of FV for tag $t$ where we let the voting
  values be the Booleans. Assume that nodes $v_1$, $v_2$ and $v_4$ are correct, which
  constitute the maximal intact set. In the execution, nodes $v_1$ and $v_2$ vote
  $\false$, and node $v_4$ votes $\true$. Malicious node $v_3$ sends the message
  $\VOTE(t,\false)$ to every node (highlighted in red) thus helping the correct nodes to
  deliver $\false$.

  \begin{figure}[t]
    \centering

    \begingroup\SmallTabColSep{\small\begin{tabular}[t]{|l|l|l|l|}
    Node $v_1$ &  Node $v_2$ & {\color{red} Node $v_3$} & Node $v_4$\\
    \hline\hline

    $\vote(\false)$&

    $\vote(\false)$&&

    $\vote(\true)$\\

    \hline

    $\VOTE(t,\false)$&

    $\VOTE(t,\false)$&

    {\color{red}$\VOTE(t,\false)$}&

    $\VOTE(t,\true)$\\

    \hline

    $\READY(t,\false)$&

    $\READY(t,\false)$&&\\

    \hline

    &&&$\READY(t,\false)$\\

    \hline

    $\deliver(\false)$&

    $\deliver(\false)$&&

    $\deliver(\false)$\\

    \hline

    \end{tabular}}\endgroup

  \caption{Execution of the instance of FV for tag $t$.}
    \label{fig:trace-federated-voting}
  \end{figure}

  Figure~\ref{fig:trace-federated-voting} depicts a possible execution of FV described
  above, from which a trace can be constructed as follows: all the events in each row may
  happen concurrently, and any two events in different rows happen in real time, where
  time increases downwards; in those cells that are tagged with a message, the node sends
  the message to every node, and in a given cell a node has received all the messages from
  every node in the rows above it. (These conventions are only for presentational
  purposes, and should not be mistaken with the \emph{perfectly synchronised round-based
    model} of \cite{DFGP07}, which we do not use.) The quorum $\{v_1,v_2,v_3\}$ sends
  $\VOTE(t,\false)$ to every node, which makes nodes $v_1$ and $v_2$ send
  $\READY(t,\false)$ to every node through lines~\ref{lin:ready}--\ref{lin:send-ready} of
  Algorithm~\ref{alg:broadcast}. However, there exists not a quorum $U$ such that
  $v_4\in U$ and every member of $U$ sends a message $\VOTE(t,a)$ with the same Boolean
  $a$, and thus node $v_4$ sends $\READY(t,\false)$ through
  lines~\ref{lin:bc-readied-v-blocking}--\ref{lin:send-v-blocking} of
  Algorithm~\ref{alg:broadcast}, only after receiving corresponding ready messages from
  the $v_4$-blocking set $\{v_1,v_2\}$. Observe how node $v_4$ changes its original vote
  $\true$ and sends $\false$ in the $\READY$ message. After every correct node receives
  $\READY(t,\false)$ from the quorum $\{v_1,v_2,v_4\}$, they all deliver $\false$.
\end{example}


\section{Abstract Stellar Consensus Protocol}
\label{sec:abstract-federated-consensus}

In this section we introduce the \emph{\abstractConsensus} (\emph{\AFCP}), which concisely
specifies the mechanism of SCP~\cite{M2015,MLG19} and highlights the modular structure
present in it\footnote{More precisely, in this paper we focus on Stellar's core {\em
    balloting} protocol, which aims to achieve consensus. We abstract from Stellar's {\em
    nomination} protocol---which tries to converge (best-effort) on a value to
  propose---by assuming arbitrary proposals to consensus.}. Like \emph{Paxos}
\cite{Lam98}, \AFCP\ uses {\em ballots}---pairs $\bllt{n}{x}$, where $n\in\Counters$ a
natural positive \emph{round number} and $x\in\Value$ a \emph{consensus value}. We assume
that $\Value$ is totally ordered, and we consider a special \emph{null ballot} $\nbllt$
and let $\Ballot = (\Nat^+\times \Value)\cup\{\nbllt\}$ be the set of ballots. (We write
$b.n$ and $b.x$ respectively for the round and consensus value of ballot $b$.) The set
$\Ballot$ is totally ordered, where we let $b < b'$ iff either $b.n < b'.n$, or
$b.n = b'.n$ and $b.\val < b'.\val$.

To better convey SCP's mechanism, we let the abstract protocol use FV as a black box where
nodes may hold a binary vote on each of the ballots: we let the set of voting values $\V$
be the set of Booleans and the set of tags $\Tag$ be the set of ballots, and let the
protocol consider a separate instance of FV for each ballot. A node voting for a Boolean
$a$ for a ballot $b$ that carries the consensus value $b.\val$ encodes the aim to either
\emph{abort} the ballot (when $a=\false$) or to \emph{commit} it (when $a = \true$) thus
deciding the consensus value $b.\val$. From now on we will unambiguously use `Booleans',
`ballots' and `values' instead of `voting values', `tags' and `consensus values',
respectively.

We have dubbed \AFCP\ `abstract' because, although it specifies the protocol concisely, it
is unsuited for realistic implementations. On the one hand, each node $v$ maintains
infinite state, because it stores a process $\federatedVoting(v,b)$ for each of the
infinitely many ballots~$b$ in the array $\fvs$ (line~\ref{lin:brs-init} of
Algorithm~\ref{alg:abstract-federated-consensus}). On the other hand, each node $v$ may
need to send or receive an infinite number of messages in order to progress
(lines~\ref{lin:send-echo}, \ref{lin:prepared}, \ref{lin:quorum-round} and
\ref{lin:prepare-increased} of Algorithm~\ref{alg:abstract-federated-consensus}, which are
explained in the detailed description of \AFCP\ below). This is done by assuming a
\emph{batched network semantics} (\emph{BNS}) in which the network exchanges
\emph{batches}, which are (possibly infinite) sequences of messages, instead of exchanging
individual messages: the sequence of messages to be sent by a node when processing an
event is batched per recipient, and each batch is sent at once after the atomic processing
of the event; once a batch is received, the recipient node atomically processes all the
messages in the batch in sequential order. By convention, we let the statement
$\mathbf{for all}$ in lines~\ref{lin:send-prepare} and \ref{lin:prepare-increased} of
Algorithm~\ref{alg:abstract-federated-consensus} consider the ballots $b'$ in ascending
ballot order. In \S\ref{sec:concrete-federated-consensus} we introduce a `concrete'
version of SCP that is amenable to implementation, since nodes in it maintain finite state
and exchange a finite number of messages; however, this version does not use FV as a black
box.

\AFCP\ uses the following \emph{below-and-incompatible-than} relation on ballots. We say
ballots~$b$ and $b'$ are \emph{compatible} (written $b \sim b'$) iff $b.\val = b'.\val$,
and \emph{incompatible} (written $b \not\sim b'$) otherwise, where we let $\bot \not= x$
for any $x\in \Value$. We say ballot $b$ is \emph{below and incompatible than} ballot $b'$
(written $b\lic b'$) iff $b<b'$ and $b\not\sim b'$.  In a nutshell, \AFCP\ works as
follows: each node uses FV to \emph{prepare} a ballot $b$ which carries the candidate
value $b.\val$, this is, it aborts every ballot $b'\lic b$, which prevents any attempt to
decide a value different from $b.\val$ at a round smaller than $b.\round$; once $b$ is
prepared, the node uses FV again to commit ballot $b$, thus deciding the candidate value
$b.\val$.

\begin{algo}[t]
  \begin{algorithm}[H]
    \setcounter{AlgoLine}{0}
    \SubAlgo{\Procc $\abstractconsensus(v\in\V)$}
    {
      \smallskip
      $\fvs \leftarrow [$\New \Procc $\federatedVoting(v,b)]^{b\in\Ballot}$\;
      \label{lin:brs-init}
      $\b, \h \leftarrow \nbllt \in \Ballot$\;\label{lin:candidate-highest-init}
      $\n\leftarrow 0\in \Nat^+ \cup \{0\}$\;\label{lin:current-round-init}
      \smallskip
      \SubAlgo{$\propose(x)$\label{lin:propose}}
      {
        $\b \leftarrow \bllt{1}{x}$\;\label{lin:init-b}
        \Forall $b' \lic \b$ \Do $\fvs[b'].\vote(\false)$\;\label{lin:send-prepare}
      }
      \smallskip
      \SubAlgo{\Upon \Triggered $\fvs[b'].\deliver(\false)$ \For \Every $b' \lic b$
        \And $\h< b$ \label{lin:prepared}}
      {
        $\h\leftarrow b$\;\label{lin:assign-h}
        \If{$\b \leq \h$\label{lin:h-reached-b}}
        {
          $\b\leftarrow \h$\;\label{lin:assign-b}
          $\fvs[\b].\vote(\true)$\;\label{lin:send-commit}
        }
      }
      \smallskip
      \SubAlgo{\Upon \Triggered $\fvs[b].\deliver(\true)$\label{lin:decided}}
      {
        \Trigger $\decide(b.\val)$;\label{lin:send-decided}
      }
      \smallskip
      \SubAlgo{\Upon \Exists $U\in \mathcal{Q}$ \SuchThat $v\in U$ \And
        \For \Each $u\in U$ \Exist $\M_u\in\{\VOTE,\READY\}$
        \And $b_u\in \Ballot$ \SuchThat $\n < b_u.\round$ \And \Either
        \Received $\M_u(b_u,\true)$ \From $u$ \Or \Received $\M_u(b',\false)$
        \From $u$ \For \Every $b'\in [z_u,b_u)$ \With $z_u< b_u$
        \label{lin:quorum-round}}
      {
        $\n \leftarrow \min\{b_u.\round\mid u \in U\}$\;\label{lin:update-current-round}
        $\starttimer(F(\n))$\;\label{lin:start-timer}
      }
      \smallskip
      \SubAlgo{\Upon \Triggered $\timeout$\label{lin:timeout}}
      {
        \lIf{$\h = \nbllt{}$}
        {$\b \leftarrow \bllt{\n + 1}{\b.\val}$\label{lin:increase-candidate}}
        \lElse{$\b \leftarrow \bllt{\n + 1}{\h.\val}$\label{lin:increase-prepared}}
        \Forall $b' \lic \b$ \Do $\fvs[b'].\vote(\false)$\;\label{lin:prepare-increased}
      }
    }
    \caption{\AbstractConsensus\ (\AFCP) over an FBQS $\Ss$ with set of quorums $\Qs$.}
    \label{alg:abstract-federated-consensus}
  \end{algorithm}
\end{algo}

\AFCP\ is depicted in Algorithm~\ref{alg:abstract-federated-consensus} above. We assume
that each node $v$ creates a process $\federatedVoting(v,b)$ for each ballot $b$, which is
stored in the infinite array $\fvs[b]$ (line~\ref{lin:brs-init}). The node keeps fields
$\b$ and $\h$, which respectively contain the ballot that $v$ is trying to commit and the
highest ballot prepared so far. Both $\b$ and $\h$ are initialised to the null ballot
(line~\ref{lin:candidate-highest-init}). The node also keeps a field $\n$ that contains
the current round, initialised to $0$ (line~\ref{lin:current-round-init}). Once $v$
proposes a value $x$, the node assigns the ballot $\bllt{1}{x}$ to $\b$ and tries to
prepare it by invoking FV's primitive $\vote(\false)$ for each ballot below and
incompatible than $\b$ (lines~\ref{lin:propose}--\ref{lin:send-prepare}). This may involve
sending an infinite number of messages, which by BNS requires sending finitely many
batches. Once $v$ prepares some ballot $b$ by receiving FV's indication $\deliver(\false)$
for every ballot below and incompatible than $b$, and if $b$ exceeds $\h$, the node
updates $\h$ to $b$ (lines~\ref{lin:prepared}--\ref{lin:assign-h}). The condition in
line~\ref{lin:prepared} may concern an infinite number of ballots, but it may hold after
receiving a finite number of batches by BNS. If $\h$ reaches or exceeds $\b$, then the
node updates $\b$ to $\h$, and tries to commit it by voting $\true$ for that ballot
(lines~\ref{lin:h-reached-b}--\ref{lin:send-commit}). Once $v$ commits some ballot $b$ by
receiving FV's indication $\deliver(\true)$ for ballot $b$, the node decides the value
$b.\val$ (lines~\ref{lin:decided}--\ref{lin:send-decided}) and stops execution.

If the candidate ballot of a node $v$ can no longer be aborted nor committed, then $v$
resorts to a timeout mechanism that we describe next. The primitive $\starttimer(\Delta)$
starts the node's local timer, such that a $\timeout$ event will be triggered once the
specified delay $\Delta$ has expired. (Invoking $\starttimer(\Delta')$ while the timer is
already running has the effect of restarting the timer with the new delay $\Delta'$.) In
order to start the timer, a node $v$ needs to receive, from each member of a quorum that
contains $v$ itself, messages that endorse either committing or preparing ballots with
rounds bigger than $\n$ (line~\ref{lin:quorum-round} of
Algorithm~\ref{alg:abstract-federated-consensus}). Since the domain of values can be
infinite, the condition in line~\ref{lin:quorum-round} requires that for each node $u$ in
some quorum $U$ that contains $v$ itself, there exists a ballot $b_u$ with round
$b_u.\round>\n$, and either $v$ receives from $u$ a message endorsing to commit $b_u$, or
otherwise $v$ receives from $u$ messages endorsing to abort every ballot in some
non-empty, right-open interval $[z_u,b_u)$, whose upper bound is $b_u$. This condition may
require receiving an infinite number of ballots, but it may hold after receiving a finite
number of batches by BNS. Once the condition in line~\ref{lin:quorum-round} holds, the
node updates $\n$ to the smallest $n$ such that every member of the quorum endorses to
either commit or prepare some ballot with round bigger or equal than $n$, and (re-)starts
the timer with delay $F(\n)$, %
where $F$ is an unbound function that doubles its value with each increment of $n$
(lines~\ref{lin:update-current-round}--\ref{lin:start-timer}). %
If the candidate ballot can no longer be aborted or committed, then $\timeout$ will be
eventually triggered (line~\ref{lin:timeout}) and the node considers a new candidate
ballot with the current round increased by one, and with the value $\b.\val$ if the node
never prepared any ballot yet (line~\ref{lin:increase-candidate}) or the value $\h.\val$
otherwise (line~\ref{lin:increase-prepared}). Then $v$ tries to prepare the new candidate
ballot by voting $\false$ for each ballot below and incompatible than it
(line~\ref{lin:prepare-increased}). This may involve sending an infinite number of
messages, which by BNS requires sending finitely many batches.

The condition for starting the timer in line~\ref{lin:quorum-round} does not strictly use
FV as a black box. However, this use is warranted because line~\ref{lin:quorum-round} only
`reads' the state of the network. \AFCP\ makes every other change to the network through
FV's primitives.

\AFCP\ guarantees the safety properties of \emph{non-blocking Byzantine consensus} in
\S\ref{sec:spec}. Since a node stops execution after deciding some value, \emph{Integrity
  for intact sets} holds trivially. The requirement in
lines~\ref{lin:prepared}--\ref{lin:send-commit} of
Algorithm~\ref{alg:abstract-federated-consensus} that a node prepares the candidate ballot
before voting for committing it, enforces that if a voting for committing some ballot
within the nodes of an intact set $I$ succeeds, then some node in $I$ previously prepared
that ballot:
\begin{lemma}\label{lem:prepared-before-commit-intact-set}
  Let $\Ss$ be an FBQS and consider an execution of \AFCP\ over $\Ss$. Let $I$ be an
  intact set in $\Ss$. If a node $v_1\in I$ commits a ballot $b$, then some node
  $v_2\in I$ prepared $b$.
\end{lemma}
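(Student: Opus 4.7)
The plan is to reduce the lemma to showing that some node in $I$ invokes $\vote(\true)$ on $\fvs[b]$. The key observation is that in Algorithm~\ref{alg:abstract-federated-consensus} a correct node only votes $\true$ on $\fvs[\b]$ via line~\ref{lin:send-commit}, which executes immediately after the assignment $\h\leftarrow b$ in line~\ref{lin:assign-h}; but the handler in lines~\ref{lin:prepared}--\ref{lin:send-commit} can only fire after the node has received $\fvs[b'].\deliver(\false)$ for every $b'\lic b$, which is exactly what it means for the node to have prepared $b$. Hence any node in $I$ that ever votes $\true$ on $\fvs[b]$ has already prepared $b$.

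With that reduction in hand, I would trace through the federated-voting protocol of Algorithm~\ref{alg:broadcast} to locate such a voter. Since $v_1\in I$ commits $b$, i.e.\ triggers $\fvs[b].\deliver(\true)$, lines~\ref{lin:deliver}--\ref{lin:send-deliver} imply that $v_1$ has received $\READY(b,\true)$ from every node in some quorum $U$ with $v_1\in U$. Because $I$ is itself a quorum and $U\cap I\ni v_1$, Lemma~\ref{lem:quorums-intersect-intact-set} applied to $U$ and $I$ gives a node of $I$ inside $U$ that sent $\READY(b,\true)$. Let $v_2\in I$ be the earliest node in $I$ to have sent such a message. If $v_2$ had sent it via the $v$-blocking rule of lines~\ref{lin:v-blocking}--\ref{lin:send-v-blocking}, it would have previously received $\READY(b,\true)$ from every member of some $v_2$-blocking set $B$; but Lemma~\ref{lem:empty-I-not-v-blocking} forces $B\cap I\not=\emptyset$, contradicting the minimality of $v_2$. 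So $v_2$ must have fired the quorum rule in lines~\ref{lin:ready}--\ref{lin:send-ready}, meaning it received $\VOTE(b,\true)$ from every node in some quorum $U'$ with $v_2\in U'$. A second application of Lemma~\ref{lem:quorums-intersect-intact-set} yields a node $v_3\in U'\cap I$ that genuinely invoked $\vote(\true)$ on $\fvs[b]$, and by the opening paragraph $v_3$ prepared $b$.

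The main subtlety will be resisting the tempting shortcut of invoking Theorem~\ref{thm:reliable-byzantine-broadcast-intact-sets} as a black box: the specification of reliable Byzantine voting for intact sets does not directly assert that a delivery within $I$ originates from a vote inside $I$. The proof therefore has to unpack the federated-voting algorithm and chain two applications of Lemma~\ref{lem:quorums-intersect-intact-set} (to peel off a $\READY$-quorum and then a $\VOTE$-quorum) together with one application of Lemma~\ref{lem:empty-I-not-v-blocking} (to rule out the $v$-blocking rule at the minimal $v_2$). Once this chain is established, the passage from ``voted $\true$'' to ``prepared'' is immediate from the structure of the handler in lines~\ref{lin:prepared}--\ref{lin:send-commit} of Algorithm~\ref{alg:abstract-federated-consensus}.
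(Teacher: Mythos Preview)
Your proof is correct and follows essentially the same route as the paper: trace back from $v_1$'s delivery through Algorithm~\ref{alg:broadcast} to a $\VOTE(b,\true)$ issued by a node in $I$, then observe that in Algorithm~\ref{alg:abstract-federated-consensus} such a vote can only arise from line~\ref{lin:send-commit}, hence after the node has prepared $b$. The paper packages your minimality argument (ruling out the $v$-blocking rule for the first node in $I$ to send $\READY(b,\true)$) as Lemma~\ref{lem:federated-intact-set-ready}; note also that both invocations of Lemma~\ref{lem:quorums-intersect-intact-set} are unnecessary, since $v_1\in U\cap I$ and $v_2\in U'\cap I$ already witness the required intersections and, being members of $U$ (respectively $U'$), have themselves sent the corresponding message.
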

Aborting every ballot below and incompatible than the candidate one prevents that one node
in an intact set $I$ prepares a ballot $b_1$, and concurrently another node in $I$ sends
$\READY(b_2,\true)$ with $b_2$ below and incompatible than $b_1$:
\begin{lemma}\label{lem:ready-commit-prepare-intact-set}
  Let $\Ss$ be an FBQS and consider an execution of \AFCP\ over $\Ss$. Let $I$ be an
  intact set in $\Ss$. Let $v_1$ and $v_2$ be nodes in $I$ and $b_1$ and $b_2$ be ballots
  such that $b_2\lic b_1$. The following two things cannot both happen: node $v_1$
  prepares $b_1$ and node $v_2$ sends $\READY(b_2,\true)$.
\end{lemma}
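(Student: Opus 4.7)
The plan is to assume for contradiction that $v_1$ prepares $b_1$ and $v_2$ sends $\READY(b_2,\true)$ with $b_2\lic b_1$, and then exhibit a correct node that both votes $\false$ and $\true$ for $b_2$ in the federated voting instance $\fvs[b_2]$, contradicting the single $\voted$ Boolean of line~\ref{lin:bc-init} of Algorithm~\ref{alg:broadcast}. The first ingredient is immediate: the guard in line~\ref{lin:prepared} of Algorithm~\ref{alg:abstract-federated-consensus} forces $v_1$ to deliver $\false$ from $\fvs[b_2]$ (because $b_2\lic b_1$), so by lines~\ref{lin:deliver}--\ref{lin:send-deliver} of Algorithm~\ref{alg:broadcast} node $v_1$ received $\READY(b_2,\false)$ from a quorum $U_1$ with $v_1\in U_1$.

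For the second ingredient, I chase $v_2$'s $\READY(b_2,\true)$ back through the $v$-blocking rule: let $v^{*}\in I$ be the earliest $I$-node ever to send $\READY(b_2,\true)$. By Lemma~\ref{lem:empty-I-not-v-blocking} every $v^{*}$-blocking set meets $I$, so had $v^{*}$ fired lines~\ref{lin:v-blocking}--\ref{lin:send-v-blocking} of Algorithm~\ref{alg:broadcast} some earlier $I$-node would already have sent that $\READY$, contradicting minimality. Thus $v^{*}$ used the quorum rule, yielding a quorum $U_2\ni v^{*}$ each of whose members sent $\VOTE(b_2,\true)$. Lemma~\ref{lem:quorums-intersect-intact-set} applied to $U_1$ and $U_2$ (both meeting $I$) then yields a correct node $w\in U_1\cap U_2\cap I$ that sent both $\READY(b_2,\false)$ and $\VOTE(b_2,\true)$.

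Finally, I argue that $w$ must have used the $v$-blocking rule to send $\READY(b_2,\false)$, which triggers a second chase-back. Indeed, the only way $w$ emits $\VOTE(b_2,\true)$ in \AFCP\ is by invoking $\fvs[b_2].\vote(\true)$ via line~\ref{lin:send-commit}, and the single $\voted$ Boolean of $\fvs[b_2]$ then prevents $w$ from ever calling $\vote(\false)$ on $\fvs[b_2]$; hence $w$ never sent $\VOTE(b_2,\false)$, ruling out the quorum rule for its $\READY(b_2,\false)$ (which would require $w$'s own $\VOTE(b_2,\false)$ since $w$ belongs to its own quorums). So the chase-back via Lemma~\ref{lem:empty-I-not-v-blocking} applies once more, yielding a quorum $U_1'$ with $U_1'\cap I\neq\emptyset$ whose members all sent $\VOTE(b_2,\false)$. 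A second application of Lemma~\ref{lem:quorums-intersect-intact-set} to $U_1'$ and $U_2$ delivers a correct node in $I$ that sent both $\VOTE(b_2,\true)$ and $\VOTE(b_2,\false)$, the desired contradiction.

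The main obstacle is precisely this second chase-back: one must resist stopping after the first quorum-intersection step, which only supplies the hybrid witness $w$, and instead realise that the contradiction requires deriving \emph{two} competing $\VOTE$ quorums via a repeated intact-set argument, leveraging the fact that \AFCP's use of the FV interface turns $w$'s $\VOTE(b_2,\true)$ into a rigid commitment that forecloses the quorum branch of the $\READY$ rule.
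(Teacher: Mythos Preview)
Your proof is correct and rests on the same two ingredients as the paper's---the chase-back of a $\READY$ message from an $I$-node to a $\VOTE$ quorum (Lemma~\ref{lem:federated-intact-set-ready}, which you re-derive from Lemma~\ref{lem:empty-I-not-v-blocking}) and quorum intersection inside $I$ (Lemma~\ref{lem:quorums-intersect-intact-set})---but you take an unnecessary detour. The paper applies the chase-back \emph{symmetrically} and immediately: since $v_1\in I$ belongs to the delivering quorum for $\fvs[b_2].\deliver(\false)$, node $v_1$ itself sent $\READY(b_2,\false)$, so Lemma~\ref{lem:federated-intact-set-ready} applies directly to the $\false$ side too, yielding a $\VOTE(b_2,\false)$ quorum containing an $I$-node. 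One intersection of this quorum with your $U_2$ produces a correct node sending both $\VOTE(b_2,\false)$ and $\VOTE(b_2,\true)$, and the proof is done. Your intermediate witness $w\in U_1\cap U_2\cap I$ and the argument that $w$'s $\VOTE(b_2,\true)$ forces it onto the $v$-blocking branch are valid but superfluous: you end up performing exactly the $\false$-side chase-back to obtain $U_1'$ that the paper performs at the outset, so the first intersection and the analysis of $w$ can simply be dropped.
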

\emph{Agreement for intact sets} holds as follows: assume towards a contradiction that two
nodes in $I$ respectively commit ballots $b_1$ and $b_2$ with different values. A node in
$I$ prepared the bigger of the two ballots by
Lemma~\ref{lem:prepared-before-commit-intact-set}, which results in a contradiction by
Lemma~\ref{lem:ready-commit-prepare-intact-set}.

Lemma~\ref{lem:prepared-propose-intact-set} below ensures that in
line~\ref{lin:increase-prepared} it is safe to take as the new candidate value that of the
largest prepared ballot, which helps to establish \emph{Weak validity for intact sets}.
\begin{lemma}\label{lem:prepared-propose-intact-set}
  Let $\Ss$ be an FBQS and consider an execution of \AFCP\ over $\Ss$. Let $b_1$ be the
  largest ballot prepared by some node $v_1$ at some moment in the execution. If all nodes
  are honest, then some node $v_2$ proposed $b_1.\val$.
\end{lemma}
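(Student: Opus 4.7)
The plan is to prove Lemma~\ref{lem:prepared-propose-intact-set} via the following stronger joint invariant, maintained by all honest nodes throughout the all-honest execution: at every moment, $\b_v.\val \in P \cup \{\bot\}$ and $\h_v.\val \in P \cup \{\bot\}$, where $P = \{x : \text{some node invoked } \propose(x)\}$. The lemma then follows immediately, since $b_1 = \h_{v_1}$ at the given moment.

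I would proceed by induction on the execution events. The easy cases trivially preserve the invariant: $\propose(x)$ sets $\b_v \leftarrow \bllt{1,x}$ with $x \in P$; timeout (lines~\ref{lin:increase-candidate} and~\ref{lin:increase-prepared}) sets $\b_v$ to a ballot carrying value $\b_v.\val$ or $\h_v.\val$, both in $P \cup \{\bot\}$ by the inductive hypothesis, where the branching on $\h_v = \nbllt$ ensures that $\bot$ does not propagate from $\h_v$ into $\b_v$; and the inheritance $\b_v \leftarrow \h_v$ at line~\ref{lin:assign-b} reuses the invariant for $\h_v$. The crux is the preparation step at line~\ref{lin:assign-h}: showing that whenever $\h_v$ is updated to a new ballot $b$, we have $b.\val \in P$.

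For the preparation step I would use that $v$ must have received $\fvs[b'].\deliver(\false)$ for every $b' \lic b$. By the all-honest assumption and the structure of federated voting in Algorithm~\ref{alg:broadcast}, each such delivery at $v$ chains back through $\READY$ messages to at least one honest node $u$ invoking $\fvs[b'].\vote(\false)$. Inspecting Algorithm~\ref{alg:abstract-federated-consensus}, this invocation occurs only in lines~\ref{lin:send-prepare} or~\ref{lin:prepare-increased}, where $u$ votes $\false$ for every $b'' \lic \b_u$; hence at the moment of the vote, $u$'s candidate $\b_u$ satisfied $b' \lic \b_u$, so $b' < \b_u$ and $\b_u.\val \neq b'.\val$, and by the inductive hypothesis $\b_u.\val \in P$.

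The hard part is combining these witness-candidate constraints, taken across the entire family $\{b' : b' \lic b\}$, to conclude $b.\val \in P$. I would argue by contradiction: assuming $b.\val \notin P$, I pick $b' = \bllt{b.n, p^-}$ with $p^-$ the largest element of $P$ strictly below $b.\val$ (handling the edge case $P \cap (-\infty, b.\val) = \emptyset$ by a symmetric argument at a lower round). Since $p^-$ is maximal in $P$ below $b.\val$ and $b.\val \notin P$, the witness $\b_u$ must satisfy $\b_u.\val \in P$ with $\b_u.\val > b.\val$, forcing $\b_u > b$. The main obstacle will be showing that such a witness cannot consistently exist in an all-honest execution: tracing how $u$ acquired $\b_u$ — through $\propose$, timeout, or $\h$-inheritance from an earlier preparation — and invoking the strong inductive hypothesis on the order of preparation events together with FV's Consistency property (Theorem~\ref{thm:reliable-byzantine-broadcast-intact-sets}), whereby the proposer of each $p \in P$ defends its value by readying $\true$ for $\bllt{?,p}$ and thereby blocks the $\false$ cascade through quorums intersecting the defender's slices, eventually forces $b.\val \in P$, closing the contradiction.
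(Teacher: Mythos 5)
Your skeleton is close to the paper's: establishing an invariant on $\b_v.\val$ and $\h_v.\val$, chaining each $\deliver(\false)$ back through $\READY$ messages to an honest voter whose candidate dominates the aborted ballot, and closing the recursion by case analysis on how that voter acquired its candidate (propose, timeout, or inheritance from $\h$) is exactly the structure of the paper's induction. But the pivotal step --- concluding $b.\val \in P$ for a newly prepared $b$ --- has a genuine gap. The contradiction you aim for, namely the existence of a witness $u$ with $\b_u > b$, $\b_u \not\sim b$ and $\b_u.\val \in P$, is not a contradiction: such witnesses arise in perfectly legitimate executions. Concretely, let $P=\{1,3\}$ and let a quorum propose value $3$; each member votes $\false$ for every $b'\lic\bllt{1}{3}$, in particular for $\nbllt$, $\bllt{1}{1}$ and $\bllt{1}{2}$, so every ballot $\lic\bllt{1}{2}$ is aborted and the preparation condition for $\bllt{1}{2}$ holds even though $2\notin P$. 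Your witness for $b'=\bllt{1}{1}$ is precisely a node with candidate $\bllt{1}{3} > \bllt{1}{2}$, and nothing breaks. The lemma survives only because $\bllt{1}{3}$ is prepared in the same atomic step and is larger: the statement concerns the \emph{largest} prepared ballot, and any correct proof must use that maximality. Your argument for the preparation step never invokes it, and your per-update formulation (``whenever $\h_v$ is updated to $b$, $b.\val\in P$'') is false for the transient update to $\bllt{1}{2}$ above. The ``defence'' you appeal to also does not exist in the protocol: a node votes $\true$ for a ballot only after preparing it and only when $\h$ reaches its candidate (lines~\ref{lin:prepared}--\ref{lin:send-commit} of Algorithm~\ref{alg:abstract-federated-consensus}), not upon proposing, and even a $\true$ vote for $\bllt{n}{p}$ does not block the abort cascade for other ballots such as $\bllt{n}{q}$ with $q\notin P$.

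What the paper does instead is to exploit maximality to choose a \emph{compatible} skipped ballot rather than an incompatible aborted one. Since $b_1$ is the largest ballot $v_1$ prepares, the $\READY(\cdot,\false)$ batches $v_1$ relies on must omit some $b_0\lc b_1$ (below and \emph{compatible} with $b_1$); otherwise $v_1$ would prepare something larger. Tracing that omission through ready-trees down to $\VOTE$ batches produces a node $v_2$ whose vote-$\false$ batch covers an interval reaching up to $b_1$ yet skips a compatible $b_0'<b_1$. Because an honest node votes $\false$ for exactly the ballots $\lic$ its candidate, the only way to cover the interval while skipping $b_0'$ is that $v_2$'s candidate $b_2$ satisfies $b_2\geq b_1$ and $b_2\sim b_1$ --- i.e., $b_2$ carries the value $b_1.\val$ --- after which your propose/timeout/inheritance case analysis (and an induction on the number of batches) does close the argument. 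That selection of a skipped compatible ballot, available only via maximality, is the idea missing from your proposal; with it, the rest of your plan goes through essentially as in the paper.
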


Now we examine the liveness properties of \emph{non-blocking Byzantine consensus} in
\S\ref{sec:spec}, which \AFCP\ also meets. Recall from \S\ref{sec:stellar-broadcast} the
bounded interval $\delta_I$ that a node in an intact set $I$ takes to deliver some Boolean
for a given ballot, provided that some other node in $I$ has already delivered a Boolean
for the same ballot. %
Let $v$ be a node in $I$ that prepares some ballot $b$ such that no other node in $I$ has
ever prepared a ballot with round bigger or equal than $b.\round$. We call the interval of
duration $\delta_I$ after $v$ prepares $b$ the \emph{window for intact set $I$ of round
  $b.\round$}. Lemma~\ref{lem:window-no-overlap} below guarantees that after some moment
in the execution, no two consecutive windows ever overlap.
\begin{lemma}\label{lem:window-no-overlap}
  Let $\Ss$ be an FBQS and consider an execution of \AFCP\ over $\Ss$. Let $I$ be an
  intact set in $\Ss$ and assume that all faulty nodes eventually stop. There exists a
  round $n$ such that either every node in $I$ decides some value before reaching round
  $n$, or otherwise the windows for $I$ of all the rounds $m \geq n$ never overlap with
  each other, and in each window of round $m$ the nodes in $I$ that have not decided yet
  only prepare ballots with round $m$.
\end{lemma}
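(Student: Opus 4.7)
The strategy is to exploit that the timeout function $F$ doubles with each increment of the round, so for sufficiently large $m$ the delay $F(m)$ dominates every bounded communication and federated-voting delay inside $I$. Concretely, I would show that the openings of two consecutive windows inside $I$ are separated by at least $F(m)$ up to bounded lower-order terms, which eventually exceeds $\delta_I$.

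I would first fix, after GST, a bound $\tau$ on the bounded delays inside $I$, collecting a bound $\Delta$ on message delay between correct nodes and the federated-voting delay $\delta_I$ supplied by Lemma~\ref{lem:bounded-totality}. Since $F$ is unbounded and doubles at each step, I can choose the round $n$ of the statement large enough that $F(m) > \delta_I + \tau$ for every $m \geq n$ and so that, by the time any node of $I$ first considers round $n$, GST has passed and all faulty nodes have stopped. If no node of $I$ ever reaches round $n$ then every node in $I$ decides before round $n$ and the first disjunct of the statement holds; otherwise I proceed by induction on $m \geq n$.

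For the inductive step, let $t_m$ denote the opening time of the window of round $m$ and $R_m$ the earliest moment any node of $I$ has its $\n$ field set to $m$ by the trigger of line~\ref{lin:quorum-round}. Any node $v \in I$ reaches a round-$(m+1)$ candidate only by firing the timeout handler of lines~\ref{lin:timeout}--\ref{lin:prepare-increased} with $\n = m$, and by line~\ref{lin:start-timer} its timer was last restarted with delay $F(m)$ exactly when $\n$ first reached $m$ in $v$; therefore that timeout cannot fire before $R_m + F(m)$. Since no node of $I$ can prepare a round-$(m+1)$ ballot before such a timeout has fired inside $I$---the argument uses Lemma~\ref{lem:quorums-intersect-intact-set} applied to the FV quorums that deliver $\false$ on round-$(m+1)$ ballots, to force the $\vote(\false)$ on those ballots to be initiated inside $I$---we obtain $t_{m+1} \geq R_m + F(m)$. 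A symmetric bookkeeping around $t_m$ gives $R_m \geq t_m - \tau$: by the time $v_0 \in I$ prepares its round-$m$ ballot at $t_m$, the FV messages that deliver $\false$ to it already provide, up to a bounded propagation delay, the quorum condition of line~\ref{lin:quorum-round} for round $\geq m$ at some member of $I$. Combining, $t_{m+1} - t_m \geq F(m) - \tau > \delta_I$, which establishes at once the non-overlap with the window of round $m+1$ and that no node of $I$ prepares a round-$(m+1)$ ballot during the window of round $m$.

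The main obstacle will be the careful timing analysis in the presence of legacy $\VOTE$ and $\READY$ messages that malicious nodes dispatched before stopping: these may still be in transit and contribute to the quorum condition of line~\ref{lin:quorum-round}, potentially pulling $R_m$ earlier. The key that disarms this is Lemma~\ref{lem:quorums-intersect-intact-set}: every triggering quorum at a node of $I$ must include another node of $I$, so Byzantine contributions alone cannot advance the $\n$ field of any node of $I$ beyond what $I$ itself has achieved through federated voting. A secondary subtlety is that preparing a ballot depends on infinitely many FV instances; the batched network semantics ensures that all relevant $\deliver$ events are processed atomically within finitely many batches, which is what lets us speak of a single preparation moment $t_m$.
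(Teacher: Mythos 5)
Your proof follows essentially the same route as the paper's: the paper likewise anchors the argument on the facts that all round-$m$ timers in $I$ are set within a bounded interval around the window of round $m$ (it calls the relevant span the abort-interval of round $m$ and bounds the whole timer-setting period by $\delta_I^{An}+\delta_I^P+\delta_I$), that round-$(m+1)$ preparation can only begin after round-$m$ timeouts fire inside $I$, and that the doubling of $F$ eventually dominates these bounded delays. One small repair: your justification of $R_m \geq t_m - \tau$ argues the wrong direction (it shows the quorum condition of line~\ref{lin:quorum-round} holds at some member of $I$ \emph{by} time $t_m$, which only bounds $R_m$ from above); the lower bound instead follows because messages endorsing round-$\geq m$ ballots are first emitted inside $I$ only after round-$(m-1)$ timeouts fire, and $t_m$ follows within bounded time of that moment --- the same argument you already deploy one round up for $t_{m+1} \geq R_m + F(m)$.
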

\noindent Lemma~\ref{lem:window-no-overlap} helps to establish \emph{Non-blocking for
  intact sets} as follows. After the moment where no two consecutive windows overlap,
either every node in $I$ has the same candidate ballot at the beginning of the window of
some round, or otherwise the highest ballots prepared by each node in $I$ during that
window coincide with each other. In either case all the nodes in $I$ will eventually have
the same candidate ballot, and they will decide a value in bounded time.

Correctness of \AFCP\ is captured by
Theorem~\ref{thm:non-blocking-byzantine-consensu-intact-sets} below:
\begin{theorem}\label{thm:non-blocking-byzantine-consensu-intact-sets}
  Let $\Ss$ be an FBQS. The \AFCP\ protocol over $\Ss$ satisfies the specification of
  non-blocking Byzantine consensus for intact sets.
\end{theorem}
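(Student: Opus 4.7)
The plan is to prove each of the four conjuncts of \emph{non-blocking Byzantine consensus for intact sets} separately, relying heavily on the lemmas already established earlier in this section.

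\emph{Integrity} is immediate from the fact that a node stops execution right after \decide\ in line~\ref{lin:send-decided} of Algorithm~\ref{alg:abstract-federated-consensus}, so the $\deliver(\true)$ indication of line~\ref{lin:decided} can be processed at most once.

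For \emph{Agreement for intact sets}, I would argue by contradiction: suppose two nodes $v_1,v_2\in I$ decide values $x_1\neq x_2$ by committing respectively ballots $b_1$ and $b_2$ in the corresponding FV instances. WLOG $b_1<b_2$; since $x_1\neq x_2$ this gives $b_1\lic b_2$. Because $v_2\in I$ delivered $\true$ for $b_2$, Lemma~\ref{lem:prepared-before-commit-intact-set} yields some $v_2'\in I$ that prepared $b_2$. On the other hand, the commit of $b_1$ by $v_1$ means $v_1$ delivered $\true$ for $b_1$ in the corresponding FV process, and by inspecting the delivery rule (lines~\ref{lin:deliver}--\ref{lin:send-deliver} of Algorithm~\ref{alg:broadcast}) this requires $v_1$ itself to have previously sent $\READY(b_1,\true)$. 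This contradicts Lemma~\ref{lem:ready-commit-prepare-intact-set}.

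For \emph{Weak validity for intact sets}, assume all nodes are honest. If some node in $I$ decides $x$, it committed a ballot $b$ with $b.\val=x$; by Lemma~\ref{lem:prepared-before-commit-intact-set} some node $v_2\in I$ prepared $b$, and at the moment of that preparation $b$ was the largest ballot ever prepared by $v_2$, so Lemma~\ref{lem:prepared-propose-intact-set} yields that $b.\val=x$ was proposed by some node. The first conjunct of weak validity follows from the same argument: if additionally every node proposes the same $x_0$, then every prepared ballot (and in particular $b$) must carry $x_0$, ruling out the decision of any $x\neq x_0$ by a node in $I$.

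The main obstacle is \emph{Non-blocking for intact sets}. Consider a continuation in which all malicious nodes stop; after their stopping and after GST, apply Lemma~\ref{lem:window-no-overlap}. Either every node in $I$ has already decided (and we are done), or there is a round $n$ beyond which the windows for $I$ do not overlap and within each window of round $m\geq n$ the nodes of $I$ that have not decided only prepare round-$m$ ballots. Fix such a window. By honesty and Lemma~\ref{lem:prepared-propose-intact-set}, at most one value can be attached to the largest round-$m$ ballot prepared by any node of $I$ during the window, so all undecided nodes in $I$ end the window with the \emph{same} value $\h$ (either because their candidates already coincided at the start of the window, or because they all update $\h$, and hence $\b$, to the same ballot via lines~\ref{lin:assign-h}--\ref{lin:send-commit} of Algorithm~\ref{alg:abstract-federated-consensus}). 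They all invoke $\fvs[\h].\vote(\true)$ on the same ballot; the \emph{Validity for intact sets} clause of reliable Byzantine voting (Theorem~\ref{thm:reliable-byzantine-broadcast-intact-sets}) guarantees that each of them eventually receives $\deliver(\true)$ and so decides. The delicate point, and the place I expect to spend most of the work, is to show that the timeout schedule driven by lines~\ref{lin:quorum-round}--\ref{lin:start-timer} together with the doubling function $F$ indeed leads to a round $m\geq n$ in which the nodes of $I$ reach such a common candidate before the next timeout fires; this uses Lemma~\ref{lem:bounded-totality} to bound $\delta_I$ and the partial-synchrony assumption on clock skew to argue that $F(m)$ eventually dominates every relevant message and processing delay.
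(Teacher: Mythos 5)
Your treatment of \emph{Integrity}, \emph{Agreement for intact sets} and \emph{Weak validity for intact sets} matches the paper's proof in all essentials. (For Weak validity the paper interposes an auxiliary result, Lemma~\ref{lem:commit-largest-prepared-intact-set}, asserting that the committed ballot is compatible with the largest ballot prepared in $I$; your shortcut---that the committing node only votes $\true$ for $\b$ when $\b=\h$, so the committed ballot is the largest one prepared by that node at that moment---reaches the same place, though your blanket claim that ``every prepared ballot must carry $x_0$'' is false as stated and should be restricted to the committed ballot.)

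The genuine gap is in \emph{Non-blocking for intact sets}, and it is twofold. First, you justify convergence of the undecided nodes of $I$ on a common ballot ``by honesty and Lemma~\ref{lem:prepared-propose-intact-set}'', but that lemma assumes \emph{all nodes are honest}, whereas Non-blocking only assumes the malicious nodes eventually \emph{stop}---they may have equivocated arbitrarily beforehand, so the lemma is inapplicable in this setting. What actually forces convergence is Totality (Lemma~\ref{lem:bounded-totality}): the paper's Corollary~\ref{cor:b-max} shows that every node in $I$ prepares the \emph{same} maximal ballot $b_{\max}$ before the abort-interval of round $m+1$ starts. Second, even once all nodes set $\h=b_{\max}$, it does not follow that they all set $\b=b_{\max}$ and invoke $\fvs[b_{\max}].\vote(\true)$ within the window of round $m$: line~\ref{lin:h-reached-b} only updates $\b$ when $\b\le\h$, and a node whose candidate has round $m$ but a value larger than $b_{\max}.\val$ will not commit-vote in that window at all. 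The paper closes this with a two-case analysis: either every node in $I$ reaches candidate $b_{\max}$ before the abort-interval of round $m+1$ and they all decide $b_{\max}.\val$ in bounded time, or they all fall through to the timeout, all set $\b=\bllt{m+1}{b_{\max}.\val}$ via line~\ref{lin:increase-prepared}, and decide in round $m+1$. You explicitly defer this ``delicate point'' to future work, but it is precisely the content of the liveness argument; as written, the Non-blocking case is not yet a proof.
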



\section{Concrete Stellar Consensus Protocol}
\label{sec:concrete-federated-consensus}

In this section we introduce \emph{\concreteConsensus} (\emph{\CFCP}) which is amenable to
implementation because each node~$v$ maintains finite state and only needs to send and
receive a finite number of messages in order to progress. \CFCP\ relies on \emph{bunched
  voting} (\emph{BV}) in Algorithm~\ref{alg:bunch-voting}, which generalises FV and
embodies all of FV's instances for each of the ballots. \CFCP\ considers a single instance
of BV, and thus each node~$v$ keeps a single process $\bunchedVoting(v)$
(line~\ref{lin:pbd-brs-init} of Algorithm~\ref{alg:concrete-federated-consensus}).
In BV, nodes exchange messages that contain two kinds of statements: a \emph{prepare
  statement} $\PREP\ b$ encodes the aim to abort the possibly infinite range of ballots
that are lower and incompatible than $b$; and a \emph{commit statement} $\CMT\ b$ encodes
the aim to commit ballot $b$.


\begin{algo}[p]
\begin{algorithm}[H]
  \setcounter{AlgoLine}{0}
  \SubAlgo{\Procc $\bunchedVoting(v\in\V)$}
  {
    \smallskip
    $\he,\hr,\hd\leftarrow \nbllt \in \Ballot$\;\label{lin:h-init}
    $\Ce,\Cr,\Cd \leftarrow \emptyset \in  \powerset{\Ballot}$\;
    \label{lin:set-init}
    \smallskip
    \SubAlgo{$\prepare(b)$\label{lin:prepare}}
    {
      \If{$\he<b$}
      {
        $\he\leftarrow b$\;
        \Send $\VOTE(\PREP\ \he)$ \To \Every $v'\in\V$; \label{lin:send-vote-prepared}
      }
    }
    \smallskip
    \SubAlgo{\Upon \Exists {\bf maximum} $b$ \SuchThat $\he < b$ \And
      \Exists $U\in \mathcal{Q}$ \SuchThat $v\in U$ \And \For \Every $u\in U$
      \Received $\VOTE(\PREP\ b_u)$ \Where $b'\lic b_u$ \For \Every $b'\lic b$
      \label{lin:ready-prepare}}
    {
      $\hr \leftarrow b$ \label{lin:update-hr} \;
      \Send $\READY(\PREP\ \hr)$ \To \Every $v'\in\V$; \label{lin:send-ready-prepare}
    }
    \smallskip \SubAlgo{\Upon \Exists {\bf maximum} $b$ \SuchThat $\hr<b$ \And
      \Exists {\bf $v$-blocking} $B$ \SuchThat
      \For \Every  $u\in B$ \Received $\READY(\PREP\ b_u)$
      \Where $b'\lic b_u$ \For \Every $b'\lic b$\label{lin:ready-prepare-v-blocking}}
    {
      $\hr \leftarrow b$ \label{lin:update-hr-v-blocking} \;
      \Send $\READY(\PREP\ \hr)$ \To \Every $v'\in\V$;
      \label{lin:send-ready-prepare-v-blocking}
    }
    \smallskip
    \SubAlgo{\Upon \Exists {\bf maximum} $b$ \SuchThat $\hd<b$ \And
      \Exists $U\in\mathcal{Q}$ \SuchThat $v\in U$ \And \For \Every $u\in U$
      \Received $\READY(\PREP\ b_u)$ \Where $b'\lic b_u$ \For \Every $b'\lic b$
      \label{lin:b-prepared}}
    {
      $\hd \leftarrow b$\;\label{lin:hd}
      $\prepared(\hd)$;\label{lin:b-send-prepared}
    }
    \smallskip
    \SubAlgo{$\commit(b)$\label{lin:commit}}
    {
      \If{$b\not\in\Ce$ \And $\he = b$\label{lin:condition-commit}}
      {
        $\Ce \leftarrow \Ce \cup \{b\}$\;
        \Send $\VOTE(\CMT\ b)$ \To \Every $v'\in\V$ \label{lin:send-vote-commit};
      }
    }
    \smallskip
    \SubAlgo{\Upon \Received $\VOTE(\CMT\ b)$ \From \Every $u\in U$
      \For \Some $U\in \mathcal{Q}$ \SuchThat $v\in U$ \And $b\not\in\Cr$
      \label{lin:ready-commit}}
    {
      $\Cr \leftarrow \Cr \cup \{b\}$ \label{lin:readied} \;
      \Send $\READY(\CMT\ b)$ \To \Every $v'\in\V$ \label{lin:send-ready-commit};
    }
    \smallskip
    \SubAlgo{\Upon \Received
      $\READY(\CMT\ b)$ \From \Every $u\in B$ \For \Some
      {\bf $v$-blocking} $B$ \And $b\not\in\Cr$
       \label{lin:ready-commit-v-blocking}}
    {
      $\Cr \leftarrow \Cr \cup \{b\}$ \label{lin:readied-v-blocking} \;
      \Send $\READY(\CMT\ b)$ \To \Every $v'\in\V$
      \label{lin:send-ready-commit-v-blocking};
    }
    \smallskip
    \SubAlgo{\Upon \Received
      $\READY(\CMT\ b)$ \From \Every $u\in U$ \For \Some $U\in \mathcal{Q}$
      \SuchThat $v\in U$ \And $b\not\in\Cd$ \label{lin:committed}}
    {
      $\Cd \leftarrow \Cd \cup \{b\}$\label{lin:delivered}\;
      $\committed(b)$ \label{lin:send-committed};
    }
  }
  \caption{Bunched voting (BV) over an FBQS $\Ss$ with set of quorums  $\Qs$.}
  \label{alg:bunch-voting}
\end{algorithm}
\end{algo}
Algorithm~\ref{alg:bunch-voting} depicts BV over an FBQS $\Ss$ with set of quorums $\Qs$. A
node~$v$ stores the highest ballot for which $v$ has respectively voted, readied, or
delivered a prepare statement in fields $\he$, $\hr$, and $\hd$
(line~\ref{lin:h-init}). It also stores the set of ballots for which $v$ has respectively
voted, readied, or delivered a commit statement in fields $\Ce$, $\Cr$, and $\Cd$
(line~\ref{lin:set-init}). All these fields are finite and thus $v$ maintains only finite
state. %
When a node $v$ invokes $\prepare(b)$, if $b$ exceeds the highest ballot for which $v$ has
voted a prepare, then the node updates $\he$ to $b$ and sends $\VOTE(\PREP\ b)$ to every
other node (lines~\ref{lin:prepare}--\ref{lin:send-vote-prepared}). The protocol then
proceeds with the usual stages of FV, with the caveat that at each stage of the protocol
only the maximum ballot is considered for which the node can send a message---or deliver
an indication---with a prepare statement. In particular, when there exists a ballot~$b$
that exceeds $\hr$ and such that $v$ received a message $\VOTE(\PREP\ b_u)$ from each
member $u$ of some quorum to which $v$ belongs, then the node proceeds as follows: it
checks that each $b'$ lower and incompatible than $b_u$ is also lower and incompatible
than~$b$ (line~\ref{lin:ready-prepare}). If $b$ is the maximum ballot passing the previous
check for every member $u$ of the quorum, then the node updates the field $\hr$ to $b$ and
sends the message $\READY(\PREP\ b)$ to every other node
(lines~{\ref{lin:update-hr}--\ref{lin:send-ready-prepare}}). The node $v$ checks similar
conditions for the case when it receives messages $\READY(\PREP\ b_u)$ from each member
$u$ of a $v$-blocking set, and proceeds similarly by updating $\hr$ to $b$ and sending
$\READY(\PREP\ b)$ to every other node
(lines~\ref{lin:ready-prepare-v-blocking}--\ref{lin:send-ready-prepare-v-blocking}). The
node will update $\hd$ and trigger the indication $\prepared(b)$ when the same conditions
are met after receiving messages $\READY(\PREP\ b_u)$ from each member $u$ of a quorum to
which $v$ belongs (lines~\ref{lin:b-prepared}--\ref{lin:b-send-prepared}).
%
%
When a node $v$ invokes $\commit(b)$ then the protocol proceeds with the usual stages of
FV with two minor differences (lines~\ref{lin:commit}--\ref{lin:send-committed}). First, a
node $v$ only votes commit for the highest ballot for which $v$ has voted a prepare
statement (condition $\he = b$ in line~\ref{lin:condition-commit}). Second, the protocol
uses the sets of ballots $\Ce$, $\Cr$ and $\Cd$ in order to keep track of the stage of the
protocol for each ballot.

The structure of \CFCP\ in Algorithm~\ref{alg:concrete-federated-consensus} directly
relates to \AFCP\ in Algorithm~\ref{alg:abstract-federated-consensus}. A node proposes a
value~$x$ in line~\ref{lin:pbd-propose}. A node tries to prepare a ballot~$b$ by invoking
$\prepare(b)$ in line~\ref{lin:pbd-send-prepare}, and receives the indication
$\prepared(b)$ in line~\ref{lin:pbd-prepared}. A node tries to commit a ballot $b$ by
invoking $\commit(b)$ in line~\ref{lin:pbd-send-commit}, and receives the indication
$\committed(b)$ in line~\ref{lin:pbd-decided}. A node decides a value $x$ in
line~\ref{lin:pbd-send-decided}. Timeouts are set in
lines~\ref{lin:pbd-quorum-round}--\ref{lin:pbd-start-timer} and triggered in
line~\ref{lin:pbd-timeout}.
\begin{algo}[h]
  \begin{algorithm}[H]
    \setcounter{AlgoLine}{0}
    \SubAlgo{\Procc $\concreteconsensus(v\in\V)$}
    {
      \smallskip
      $\brs \leftarrow$ \New \Procc $\bunchedVoting(v)$\;
      \label{lin:pbd-brs-init}
      $\b, \h \leftarrow \nbllt \in \Ballot$\;\label{lin:pbd-candidate-highest-init}
      $\n\leftarrow 0\in \Counters \cup \{0\}$\;\label{lin:pbd-current-round-init}
      \smallskip
      \SubAlgo{$\propose(x)$\label{lin:pbd-propose}}
      {
        $\b \leftarrow \bllt{1}{x}$\;\label{lin:pbd-init-b}
        $\brs.\prepare(\b)$\;\label{lin:pbd-send-prepare}
      }
      \smallskip
      \SubAlgo{\Upon \Triggered $\brs.\prepared(b)$ \And $\h< b$ \label{lin:pbd-prepared}}
      {
        $\h\leftarrow b$\;\label{lin:pbd-assign-h}
        \If{$\b \leq \h$\label{lin:pbd-h-reached-b}}
        {
          $\b\leftarrow \h$\;\label{lin:pbd-assign-b}
          $\brs.\commit(\b)$\;\label{lin:pbd-send-commit}
        }
      }
      \smallskip
      \SubAlgo{\Upon \Triggered $\brs.\committed(b)$\label{lin:pbd-decided}}
      {
        \Trigger $\decide(b.\val)$;\label{lin:pbd-send-decided}
      }
      \smallskip
      \SubAlgo{\Upon \Exists $U\in\mathcal{Q}$ \SuchThat $v\in U$
        \And \For \Each $u \in U$ \Exist $\M_u\in\{\VOTE,\READY\}$
        \And $b_u \in \Ballot$ \SuchThat $\n < b_u.\round$
        \And \Received $~\M_u(\STA_u\ b_u)$ \From $u$ \With
        $\STA_u\in \{\CMT,\PREP\}$
        \label{lin:pbd-quorum-round}}
      {
        $\n \leftarrow \min\{b_u.\round \mid u\in U\}$\;\label{lin:pbd-update-current-round}
        $\starttimer(F(\n))$\;\label{lin:pbd-start-timer}
      }
      \smallskip
      \SubAlgo{\Upon \Triggered $\timeout$\label{lin:pbd-timeout}}
      {
        \lIf{$\h = \nbllt{}$}
        {$\b \leftarrow \bllt{\n + 1}{\b.\val}$\label{lin:pbd-increase-candidate}}
        \lElse{$\b \leftarrow \bllt{\n + 1}{\h.\val}$\label{lin:pbd-increase-round}}

        $\brs.\prepare(\b)$\;\label{lin:pbd-prepare-increased}
      }
    }
  \caption{\ConcreteConsensus\ (\CFCP) over an FBQS $\Ss$ with set of quorums $\Qs$.}
  \label{alg:concrete-federated-consensus}
  \end{algorithm}
\end{algo}%

Next we establish a correspondence between \CFCP\ in and \AFCP\ in
\S\ref{sec:abstract-federated-consensus}: the concrete protocol observationally refines
the abstract one, which means that any externally observable behaviour of the former can
also be produced by the latter \cite{FORY10}. Informally, the refinement shows that for an
FBQS~$\Ss$ and an intact set~$I$, for every execution of \CFCP\ over $\Ss$ there exists an
execution of \AFCP\ over $\Ss$ (with some behaviour of faulty nodes) such that each node
in $I$ decides the same value in both of the executions. The refinement result allows us
to carry over the correctness of \AFCP\ established in
\S\ref{sec:abstract-federated-consensus} to \CFCP.

We first define several notions required to formalise our refinement result.
%
A \emph{history} is a sequence of the events $v.\propose(x)$ and $v.\decide(x)$, where $v$
is a correct node and $x$ a value.
%
The specification of consensus assumes that $v$ triggers an event $v.\propose(x)$, thus a
history will have $v.\propose(x)$ for every correct node~$v$.
A \emph{concrete trace} $\tau$ is a sequence of events that subsumes histories, and
contains events $v.\prepare(b)$, $v.\commit(b)$, $v.\prepared(b)$, $v.\committed(b)$,
$v.\starttimer(n)$, $v.\timeout$, $v.\send(m, v')$, and $v.\receive(m, v')$, where $v$ is
a correct node and $v'$ is any node, $b$ is a ballot, $m$ is a message in
$\{\VOTE(s),\READY(s)\}$ with $s$ a statement in $\{\PREP\ b,\CMT\ b\}$, and $n$ is a
round.
An \emph{abstract trace} $\tau$ is a sequence of events that subsumes histories, and
contains events $v.\starttimer(n)$, $v.\timeout$, and batched events
$v.\voteBatch([b_i], a)$, $v.\deliverBatch([b_i], a)$, $v.\sendBatch([m_i], v')$, and
$v.\receiveBatch([m_i], v')$,%
where $v$ is a correct node and $v'$ is any node, $n$ is a round, $[b_i]$ is a sequence of
ballots, $a$ is a Boolean, and $[m_i]$ is a sequence of messages in
$\{\VOTE(b, a),\READY(b,a)\}$. The sequences of ballots and messages above, which
represent a possibly infinite number of `batched' events, ensure that the length of any
abstract trace is bounded by $\omega$.
We may omit the adjective `concrete/abstract' from `trace' when it is clear from the
context. Given a trace $\tau$, a history $\hist(\tau)$ can be uniquely obtained from
$\tau$ by removing every event in $\tau$ different from $v.\propose(x)$ or $v.\decide(x)$.

An execution of \CFCP\ (respectively, \AFCP) \emph{entails} a \emph{concrete trace}
(respectively, \emph{abstract trace})~$\tau$ iff for every invocation and indication as
well as for every send or receive primitive in an execution of the protocol in
Algorithm~\ref{alg:concrete-federated-consensus} (respectively, for every invocation,
indication and primitive in an execution of the protocol in
Algorithm~\ref{alg:abstract-federated-consensus}, where the $\vote$, $\deliver$, $\send$
and $\receive$ events are batched together), $\tau$ contains corresponding events in the
same order.

We are interested in traces that are relative to some intact set $I$. Given a trace
$\tau$, the \emph{$I$-projected} trace $\tau|_I$ is obtained by removing the events
$v.{\sf ev}\in\tau$ such that $v\not\in I$.
\begin{theorem}\label{thm:refinement}
  Let $\Ss$ be an FBQS and $I$ be an intact set. For every execution of \CFCP\ over $\Ss$
  with trace~$\tau$, there exists an execution of \AFCP\ over $\Ss$ with trace $\rho$ and
  $\hist(\tau|_I) = \hist(\rho|_I)$.
\end{theorem}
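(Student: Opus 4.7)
The plan is to construct the abstract trace $\rho$ by a forward simulation on the length of the concrete trace $\tau$, under a translation $\maptrace$ that maps each concrete event to a (possibly batched) bundle of abstract events. The translation is guided by the intuition that in \CFCP\ a statement $\PREP\ b$ compactly represents the infinite set of aborts $\{b'\mid b'\lic b\}$ that \AFCP\ performs through individual FV instances, whereas $\CMT\ b$ represents a single vote $\true$ on ballot $b$. Concretely, I would translate $v.\prepare(b)$ into $v.\voteBatch([b' \mid b'\lic b],\false)$, $v.\commit(b)$ into $v.\voteBatch([b],\true)$, $v.\prepared(b)$ into $v.\deliverBatch([b'\mid b'\lic b],\false)$ and $v.\committed(b)$ into $v.\deliverBatch([b],\true)$; each message $\VOTE(\PREP\ b)$ and $\READY(\PREP\ b)$ becomes a batch of the corresponding Boolean-$\false$ FV messages for all $b'\lic b$, and each message with a $\CMT$ statement becomes the singleton batch carrying the Boolean $\true$ on that ballot. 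Events $\propose$, $\decide$, $\starttimer$ and $\timeout$ are carried across unchanged.

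Next I would prove by induction on the prefix length of $\tau$ that $\rho=\maptrace(\tau)$ is a valid execution of \AFCP\ and that for every $v\in I$ the local state of the \AFCP\ process (the fields $\b$, $\h$, $\n$ and the per-ballot $\voted/\ready/\delivered$ flags of each $\fvs[b]$) agrees with the corresponding information derivable from the \CFCP\ state of $v$ (the fields $\b$, $\h$, $\n$ of $\concreteconsensus(v)$ together with $\he,\hr,\hd,\Ce,\Cr,\Cd$ of $\bunchedVoting(v)$). The decisive observations are: (i) the ``maximum $b$'' condition in lines~\ref{lin:ready-prepare}, \ref{lin:ready-prepare-v-blocking} and \ref{lin:b-prepared} of Algorithm~\ref{alg:bunch-voting} produces exactly the set of ballots $\{b'\mid b'\lic b\}$ on which the translated batch of FV instances simultaneously pass the trigger conditions in lines~\ref{lin:ready}, \ref{lin:v-blocking} and \ref{lin:deliver} of Algorithm~\ref{alg:broadcast}; (ii) the guard $\he=b$ of $\commit(b)$ in line~\ref{lin:condition-commit} matches the guard $\b\leq\h$ followed by $\fvs[\b].\vote(\true)$ in lines~\ref{lin:h-reached-b}--\ref{lin:send-commit} of Algorithm~\ref{alg:abstract-federated-consensus}; and (iii) the timer condition in line~\ref{lin:pbd-quorum-round} of \CFCP\ is syntactically the same as line~\ref{lin:quorum-round} of \AFCP\ modulo the statement encoding, so its translation passes the abstract guard with the same witnesses $U$, $b_u$, $\M_u$.

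The behaviour of faulty nodes is handled separately: for a faulty $w\notin I$, I pick as its \AFCP\ strategy precisely the sequence of abstract sends dictated by $\maptrace$ applied to the concrete messages $w$ actually sent in $\tau$. Because the abstract model permits Byzantine nodes to send arbitrary batched FV messages, any combination of prepare/commit statements that $w$ broadcasts in $\tau$, including statements $w$ would not have been allowed to emit if correct, can be reproduced verbatim in $\rho$ via the batched network semantics. Since the translation preserves message delivery order per recipient, each correct node's $\receiveBatch$ events in $\rho$ mirror the $\receive$ events in $\tau$, and the induction goes through.

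The main obstacle is the timing alignment, and in particular showing that all of $\h$, $\b$ and the decision value of every $v\in I$ move in lockstep in the two executions despite the different granularity of events: \CFCP\ updates $\hd$ to a single maximum ballot and emits one $\prepared$ indication, whereas \AFCP\ triggers one $\deliver(\false)$ per ballot in $\{b'\mid b'\lic b\}$ and then reacts in line~\ref{lin:prepared}. I would resolve this by exploiting the BNS, grouping the infinite family of FV deliveries into a single $\deliverBatch$ that is processed atomically, and then arguing that the guard $\h<b$ in line~\ref{lin:prepared} of Algorithm~\ref{alg:abstract-federated-consensus} fires exactly once for the maximum $b$, matching the single update in line~\ref{lin:hd} of Algorithm~\ref{alg:bunch-voting}. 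Once the state invariant is established, the coincidence of $\propose$ and $\decide$ events on nodes in $I$ is immediate, which gives $\hist(\tau|_I)=\hist(\rho|_I)$ and concludes the theorem.
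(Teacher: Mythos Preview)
Your overall strategy---define a translation $\maptrace$ from concrete to abstract traces, prove by induction on prefix length that $\maptrace(\tau)$ is entailed by an execution of \AFCP, and maintain a state-coupling invariant---is precisely the paper's approach (Definition~\ref{def:maptrace} and Lemma~\ref{lem:simulate-prefix}). However, the concrete translation you sketch has a genuine gap that would break the induction.

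Your batches do not exclude ballots already processed in earlier steps. If \CFCP\ triggers $\prepared(b_1)$ and later $\prepared(b_2)$ with $b_1<b_2$, your $\maptrace$ emits $\deliverBatch([b'\mid b'\lic b_1],\false)$ and then $\deliverBatch([b'\mid b'\lic b_2],\false)$; the second batch re-delivers many ballots from the first. But the guard ``\textbf{not} $\delivered$'' in line~\ref{lin:deliver} of Algorithm~\ref{alg:broadcast} forbids a correct node from delivering twice on the same tag, so the resulting sequence is not a trace any \AFCP\ execution can entail. The same problem hits your translation of $\VOTE(\PREP\ b)$ and $\READY(\PREP\ b)$: repeated sends on the same ballot violate the $\voted$ and $\ready$ guards. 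The paper's $\maptrace$ handles this by restricting every batch to exactly those ballots that are not already present in $\maptrace(\tau)$; without that filtering, step (i) of your argument does not go through.

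A second, related discrepancy concerns $\commit(b)$. You map it to the singleton $\voteBatch([b],\true)$ and justify this by claiming that line~\ref{lin:prepared} of Algorithm~\ref{alg:abstract-federated-consensus} ``fires exactly once for the maximum $b$''. The paper takes the opposite reading: as the delivered batch is processed, line~\ref{lin:prepared} fires for every intermediate $b''$ with $\h<b''\leq b$, each time updating $\b$ and invoking $\fvs[\b].\vote(\true)$ (see the case $e=v.\commit(b)$ in the proof of Lemma~\ref{lem:simulate-prefix}, and rows three and four of Figure~\ref{fig:trace-abstract} where $v_4$ votes $\true$ for both $\langle 1,\mathbf{1}\rangle$ and $\langle 1,\mathbf{2}\rangle$). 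Accordingly the paper maps $\commit(b)$ to $\voteBatch([b'\mid \fh(\maptrace(\tau))<b'\leq b],\true)$. Under your single-firing reading the abstract trace omits events that \AFCP\ must perform, so it is again not a valid trace and the simulation fails.
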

\begin{proof}[Proof sketch]
  We define a \emph{simulation function}~$\maptrace$ from concrete to abstract
  traces. Theorem~\ref{thm:refinement} can be established by showing that, for every
  finite prefix~$\tau$ of a trace entailed by \CFCP, the simulation $\maptrace(\tau)$ is a
  prefix of a trace entailed by \AFCP.
\end{proof}

Every execution of \AFCP\ enjoys the properties of \emph{Integrity}, \emph{Agreement for
  intact sets}, \emph{Weak validity for intact sets} and \emph{Non-blocking for intact
  sets}, and so does every execution of \CFCP\ by refinement. %
%
%
\begin{corollary}\label{cor:cscp-correct}
  Let $\Ss$ be an FBQS. The \CFCP\ protocol over $\Ss$ satisfies the specification of
  non-blocking Byzantine consensus for intact sets.
\end{corollary}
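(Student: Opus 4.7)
The plan is to derive Corollary~\ref{cor:cscp-correct} by transporting each clause of the specification of non-blocking Byzantine consensus for intact sets from \AFCP\ (Theorem~\ref{thm:non-blocking-byzantine-consensu-intact-sets}) to \CFCP\ via the refinement established in Theorem~\ref{thm:refinement}. The key observation is that every clause of the specification is phrased purely in terms of $\propose$ and $\decide$ events, hence in terms of histories; and Theorem~\ref{thm:refinement} guarantees that for every \CFCP\ execution with trace $\tau$ there is an \AFCP\ execution with trace $\rho$ such that $\hist(\tau|_I) = \hist(\rho|_I)$ for the intact set $I$ under consideration.

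First I would handle the three safety clauses. For \emph{Integrity}, since the property ranges over all correct nodes (not only over $I$), it cannot be reduced to an $I$-projected history in general; but it follows directly by inspection of Algorithm~\ref{alg:concrete-federated-consensus}: a correct node only triggers $\decide(\cdot)$ in line~\ref{lin:pbd-send-decided} upon $\brs.\committed(b)$, and the set $\Cd$ used in line~\ref{lin:delivered} of Algorithm~\ref{alg:bunch-voting} ensures that $\committed(b)$ is indicated at most once per ballot, so the same node will not decide twice. For \emph{Agreement for intact sets} and \emph{Weak validity for intact sets}, I would proceed by contradiction: a violation in \CFCP\ is witnessed exclusively by $\propose$ and $\decide$ events belonging to nodes in $I$ (or, under the honest-only clause of weak validity, to honest nodes whose decisions in $I$ appear in $\tau|_I$). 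Such a violation would therefore be recorded in $\hist(\tau|_I)$, hence in $\hist(\rho|_I)$ by Theorem~\ref{thm:refinement}, contradicting Theorem~\ref{thm:non-blocking-byzantine-consensu-intact-sets}.

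For \emph{Non-blocking for intact sets}, suppose towards a contradiction that a node $v\in I$ has not decided at some point in an execution of \CFCP, yet some continuation exists in which all malicious nodes stop and $v$ still never decides. Applying Theorem~\ref{thm:refinement} to this continuation yields an \AFCP\ execution whose $I$-projected history agrees with the \CFCP\ one; in particular $v.\decide(\cdot)$ is absent from $\rho|_I$, contradicting the non-blocking guarantee of \AFCP.

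The main obstacle is the liveness direction: one needs to verify that the simulation function $\maptrace$ underlying Theorem~\ref{thm:refinement} maps a \CFCP\ execution in which all malicious nodes stop to an \AFCP\ execution in which malicious nodes can be taken to stop as well, so that the non-blocking hypothesis on the abstract side is genuinely applicable. This amounts to observing that faulty-node behaviour in \CFCP\ is translated by $\maptrace$ into admissible faulty-node behaviour in \AFCP\ (the refinement is only constrained on the intact-set projection), and that the argument lifts from finite prefixes, which is how Theorem~\ref{thm:refinement} is stated, to the infinite continuations that \emph{Non-blocking for intact sets} quantifies over; this last step is routine given that a violation of non-blocking is witnessed by the absence of a finite-length prefix extending to a decision of $v$.
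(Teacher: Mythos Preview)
Your approach is essentially the paper's: derive each clause by contradiction from Theorem~\ref{thm:refinement} and Theorem~\ref{thm:non-blocking-byzantine-consensu-intact-sets}. The paper's proof treats all four properties uniformly in one paragraph, whereas you single out \emph{Integrity} for a direct argument and are more explicit about the liveness subtlety; both differences are cosmetic, and your worry about whether the simulated \AFCP\ execution can be taken to have malicious nodes stopping is a legitimate point that the paper glosses over.

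There is, however, a small but real flaw in your direct argument for \emph{Integrity}. The guard $b\notin\Cd$ in line~\ref{lin:committed} of Algorithm~\ref{alg:bunch-voting} only prevents $\committed(b)$ from being indicated twice for the \emph{same} ballot $b$; it does not prevent $\committed(b_1)$ followed by $\committed(b_2)$ for distinct $b_1\neq b_2$, each of which would trigger $\decide(\cdot)$ in line~\ref{lin:pbd-send-decided}. The correct argument is the one the paper uses for \AFCP\ and carries over verbatim: a correct node stops execution once it decides (this is stated explicitly in \S\ref{sec:abstract-federated-consensus} and is part of the execution model for both protocols). Replace your $\Cd$-based reasoning with that observation and the proof goes through.
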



\section{Lying about Quorum Slices}
\label{sec:lying}

So far we have assumed the unrealistic setting where faulty nodes do not equivocate their
quorum slices, so all nodes share the same FBQS $\Ss$. We now lift this assumption. To
this end, we use a generalisation of FBQS called \emph{subjective FBQS}~\cite{GG18}, which
allows faulty nodes to lie about their quorum slices. Assuming that $\V_\ok$ is the set of
correct nodes, the \emph{subjective FBQS} $\{\Ss_v\}_{v\in \V_{\ok}}$ is an indexed family
of FBQSes where the different FBQSes agree on the quorum slices of correct nodes, \ie,
$\forall v_1,v_2,v\in \V_\ok.~\Ss_{v_1}(v)=\Ss_{v_2}(v)$. For each correct node~$v$, the
FBQS $\Ss_v$ is the \emph{view} of node~$v$, which reflects the choices of trust
communicated to $v$. We can run either \AFCP\ or \CFCP\ over a subjective FBQS
$\{\Ss_v\}_{v\in\V_{\ok}}$ by letting each correct node~$v$ act according to its view
$\Ss_v$.

We generalise the definition of intact set to subjective FBQSes, and we lift our results
so far to the subjective FBQSes. Let $\{\Ss_v\}_{v\in \V_{\ok}}$ be a subjective FBQS. A
set $I$ is an \emph{intact set} iff for each $v\in \V_{\ok}$ the set $I$ is a quorum in
$\Ss_v$ that only contains correct nodes, and every member of $I$ is intertwined with each
other in the projected FBQS $\Ss_v|_I$.

\begin{lemma}\label{lem:subjective-intact-set}
  Let $\{\Ss_v\}_{v\in \V_{\ok}}$ be a subjective FBQS. For any node~$v\in \V_\ok$, a
  set $I$ is an intact set in $\Ss_v$ iff $I$ is an intact set in
  $\{\Ss_v\}_{v\in \V_{\ok}}$.
\end{lemma}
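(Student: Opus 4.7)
The plan is to prove both directions of the equivalence by exploiting the fact that correct nodes have identical slices across all views, so projections onto subsets of $\V_\ok$ are view-invariant.

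The first step is a preliminary observation: if $I\subseteq \V_\ok$, then for all $v_1,v_2\in\V_\ok$ the projected FBQSes $\Ss_{v_1}|_I$ and $\Ss_{v_2}|_I$ coincide. Indeed, every member $u\in I$ is correct, so by the agreement property of subjective FBQSes ($\Ss_{v_1}(u)=\Ss_{v_2}(u)$) the slice sets used to form the projection are the same, hence $\{q\cap I \mid q\in \Ss_{v_1}(u)\} = \{q\cap I \mid q\in \Ss_{v_2}(u)\}$. From this it follows that ``$I$ is a quorum in $\Ss_{v_1}$'' is equivalent to ``$I$ is a quorum in $\Ss_{v_2}$'' whenever $I\subseteq\V_\ok$: being a quorum requires each $u\in I$ to have some slice contained in $I$, and since the slices of correct nodes coincide across views, the membership witnesses transfer. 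Similarly, intertwining in $\Ss_{v_1}|_I$ is equivalent to intertwining in $\Ss_{v_2}|_I$, because the quorums of the projected system are determined entirely by the slices of the members of $I$, which are view-invariant.

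For the forward direction, assume $I$ is an intact set in $\Ss_v$ for some $v\in\V_\ok$. By the original definition, $I$ is a quorum in $\Ss_v$ and every member of $I$ is intertwined in $\Ss_v|_I$; since intertwining requires correctness, $I\subseteq \V_\ok$. Apply the preliminary observation: for every $v'\in\V_\ok$, $I$ is a quorum in $\Ss_{v'}$, and every member of $I$ is intertwined in $\Ss_{v'}|_I$. This is precisely the definition of $I$ being an intact set in $\{\Ss_v\}_{v\in\V_\ok}$.

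For the backward direction, assume $I$ is an intact set in $\{\Ss_v\}_{v\in\V_\ok}$. Instantiating the subjective definition at the specific $v\in\V_\ok$ in the statement gives immediately that $I$ is a quorum in $\Ss_v$ containing only correct nodes and that every member of $I$ is intertwined in $\Ss_v|_I$, which is the original definition of intact set in $\Ss_v$. There is no real obstacle here; the only care needed is in the forward direction, where one must observe that $I\subseteq\V_\ok$ follows from intertwining and then invoke view-invariance of projections and the quorum predicate on subsets of $\V_\ok$.
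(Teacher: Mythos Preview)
Your proposal is correct and follows essentially the same approach as the paper: both arguments hinge on the observation that members of an intact set are correct, so their slices (and hence the quorum and intertwining conditions restricted to $I$) are view-invariant across all $\Ss_{v}$ with $v\in\V_\ok$. Your write-up is somewhat more explicit than the paper's---in particular you spell out that $I\subseteq\V_\ok$ follows from the intertwining requirement before invoking view-invariance---but the underlying reasoning is the same.
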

\noindent Since Lemma~\ref{lem:subjective-intact-set} above guarantees that every view has
the same intact sets, which also coincide with the intact sets of the subjective FBQS,
from now on we may say `an intact set~$I$' and omit to which system (a particular view, or
the subjective FBQS) $I$ belongs .

Using the fact that nodes agree on the slices of correct nodes, we can prove
Lemma~\ref{lem:subjective-quorum-intersection} below, which is the analogue to
Lemma~\ref{lem:quorums-intersect-intact-set} and states sufficient safety conditions for
the nodes in an intact set $I$ to reach agreement when each node acts according to its own
view.
\begin{lemma}\label{lem:subjective-quorum-intersection}
  Let $\{\Ss_v\}_{v\in \V_{\ok}}$ be a subjective FBQS and for each correct node $v$ let
  $\Qs_v$ be the set of quorums in the view $\Ss_v$. Let $I$ be an intact set and consider
  two quorums $U_1$ and $U_2$ in $\bigcup_{v\in\V_\ok}\Qs_v$. If
  $U_1\cap I\not= \emptyset$ and $U_2\cap I\not=\emptyset$, then
  $U_1\cap U_2\cap I\not= \emptyset$.
\end{lemma}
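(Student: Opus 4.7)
The plan is to reduce the subjective statement to a single-view application of Lemma~\ref{lem:quorums-intersect-intact-set}, exploiting the fact that an intact set consists entirely of correct nodes, on whose slices all views agree. First I would fix any correct view $\Ss_v$ (for instance $v\in I$; such a $v$ exists because $U_1\cap I\neq\emptyset$ forces $I\neq\emptyset$). By Lemma~\ref{lem:subjective-intact-set}, $I$ is an intact set of $\Ss_v$, so $I\subseteq\V_{\ok}$ and every two members of $I$ are intertwined in the projected FBQS $\Ss_v|_I$.

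The core step mimics the proof of Lemma~\ref{lem:quorums-intersect-intact-set} inside $\Ss_v|_I$ by showing that both $U_1\cap I$ and $U_2\cap I$ are quorums of $\Ss_v|_I$. For any $w\in U_1\cap I$, correctness of $w$ gives $\Ss_v(w)=\Ss_{v_1}(w)$; the hypothesis $U_1\in\Qs_{v_1}$ then supplies a slice $q\in\Ss_{v_1}(w)=\Ss_v(w)$ with $q\subseteq U_1$, so $q\cap I\in\Ss_v|_I(w)$ and $q\cap I\subseteq U_1\cap I$. Symmetrically, $U_2\cap I$ is a quorum of $\Ss_v|_I$. Picking any $w_1\in U_1\cap I$ and $w_2\in U_2\cap I$, the intertwinedness of $w_1$ and $w_2$ in $\Ss_v|_I$ forces the two quorums $U_1\cap I$ and $U_2\cap I$ to share a correct node of $I$, yielding $U_1\cap U_2\cap I\neq\emptyset$.

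The main subtlety, and the point where the argument would go wrong if made naively, is that $U_1$ need not itself be a quorum of $\Ss_v$ (its faulty members may have advertised to $v$ different slices than they did to $v_1$), so Lemma~\ref{lem:quorums-intersect-intact-set} cannot be applied to $\Ss_v$ directly. The proof sidesteps this by working exclusively in the projection $\Ss_v|_I$, where the only slices consulted are those of nodes in $I$---all correct, hence view-invariant. Once this observation is in place, the remainder of the argument is essentially mechanical.
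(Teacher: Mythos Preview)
Your proposal is correct and follows essentially the same approach as the paper: project both quorums to $I$, observe that the projections are quorums in a single common FBQS $\Ss_v|_I$ (because all views agree on the slices of the correct nodes in $I$), and then invoke quorum intersection there. The only cosmetic difference is that the paper first applies Lemma~\ref{lem:quorum-project} in each of the two originating views and then argues that all projected views $\Ss_{v'}|_I$ coincide, whereas you transport the slices of each $w\in U_i\cap I$ directly into the single fixed view $\Ss_v$; both amount to the same observation that only slices of correct nodes matter once you work inside $I$.
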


Using arguments similar to those in the previous sections, we can establish the
correctness of \AFCP\ and \CFCP\ over subjective FBQSes.

\begin{theorem}\label{thm:non-blocking-byzantine-consensus-intact-sets-lying}
  Let $\{\Ss_v\}_{v\in \V_{\ok}}$ be a subjective FBQS. The \AFCP\ protocol over
  $\{\Ss_v\}_{v\in \V_{\ok}}$ satisfies the specification of non-blocking Byzantine
  consensus for intact sets.
\end{theorem}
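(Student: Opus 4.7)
The plan is to lift the whole proof chain that led to Theorem~\ref{thm:non-blocking-byzantine-consensu-intact-sets} to the subjective setting, by systematically replacing every appeal to Lemma~\ref{lem:quorums-intersect-intact-set} with an appeal to its subjective analogue, Lemma~\ref{lem:subjective-quorum-intersection}, and by invoking Lemma~\ref{lem:subjective-intact-set} to justify that the notion of intact set is view-independent. The overall structure of the argument is unchanged because all the reasoning in \S\ref{sec:stellar-broadcast}--\S\ref{sec:abstract-federated-consensus} about agreement only relied on two quorums that both intersect an intact set $I$ intersecting each other inside $I$, which is exactly what Lemma~\ref{lem:subjective-quorum-intersection} provides for quorums drawn from $\bigcup_{v\in\V_\ok}\Qs_v$.

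First I would re-establish Theorem~\ref{thm:reliable-byzantine-broadcast-intact-sets} (correctness of FV) for subjective FBQSes. \emph{No duplication} and \emph{Validity for intact sets} go through verbatim, since each correct node only consults its own view. For \emph{Consistency for intertwined nodes}, the original argument rules out that two correct nodes in the same intact set emit $\READY$ messages with different voting values; the same argument applies, because if they emit through lines~\ref{lin:ready}--\ref{lin:send-ready} of Algorithm~\ref{alg:broadcast}, the two witnessing quorums lie in $\bigcup_{v\in\V_\ok}\Qs_v$ and both meet $I$, so Lemma~\ref{lem:subjective-quorum-intersection} gives a correct node in their intersection that cannot have sent contradictory $\VOTE$s. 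For \emph{Totality for intact sets}, note that a $v$-blocking set is defined from $v$'s own slices, and because all correct nodes agree on the slices of correct nodes, the $v$-blocking sets of a correct $v$ coincide across views; in particular Lemma~\ref{lem:empty-I-not-v-blocking} lifts directly, so the $\READY$-cascade argument still forces every node in $I$ to eventually deliver. Lemma~\ref{lem:bounded-totality} lifts in the same way.

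Next I would re-derive the \AFCP-specific lemmas of \S\ref{sec:abstract-federated-consensus} in the subjective setting. Lemmas~\ref{lem:prepared-before-commit-intact-set} and \ref{lem:ready-commit-prepare-intact-set} both depend on arguments about two quorums (possibly from different views) that each meet $I$; in each place Lemma~\ref{lem:subjective-quorum-intersection} supplies a correct witness node in $I$, which yields the needed contradiction exactly as before. Lemma~\ref{lem:prepared-propose-intact-set} only talks about honest nodes and is unaffected by equivocation on slices. Lemma~\ref{lem:window-no-overlap} rests on the bounded-totality delay $\delta_I$ and on the subjective version of the agreement lemmas above, and so lifts once these are in place.

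Given these lifted lemmas, the proof of the four specification clauses mirrors the one for Theorem~\ref{thm:non-blocking-byzantine-consensu-intact-sets}: \emph{Integrity} is immediate from stopping after $\decide$; \emph{Agreement for intact sets} follows by combining the lifted versions of Lemmas~\ref{lem:prepared-before-commit-intact-set} and \ref{lem:ready-commit-prepare-intact-set}; \emph{Weak validity for intact sets} uses the lifted Lemma~\ref{lem:prepared-propose-intact-set}; and \emph{Non-blocking for intact sets} uses the lifted Lemma~\ref{lem:window-no-overlap} together with the lifted Lemma~\ref{lem:bounded-totality}. The main obstacle I anticipate is book-keeping in the lifted Lemma~\ref{lem:ready-commit-prepare-intact-set}: one has to be careful that quorums witnessing $\READY(b_2,\true)$ and the preparation of $b_1$ by two \emph{different} nodes in $I$ lie in \emph{their own} views, so that the intersection argument must be phrased against $\bigcup_{v\in\V_\ok}\Qs_v$ rather than a single $\Qs$. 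Once this is handled by a direct application of Lemma~\ref{lem:subjective-quorum-intersection}, everything else falls into place.
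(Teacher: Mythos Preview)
Your proposal is correct and matches the paper's approach: the paper likewise lifts the entire chain of lemmas (Lemmas~\ref{lem:empty-I-not-v-blocking}, \ref{lem:federated-intact-set-ready}, \ref{lem:federated-ready-consistent-intact-set}, \ref{lem:federated-totality-intact-set}, \ref{lem:bounded-totality}, \ref{lem:prepared-before-commit-intact-set}, \ref{lem:ready-commit-prepare-intact-set}, \ref{lem:commit-largest-prepared-intact-set}, \ref{lem:prepared-propose-intact-set}, \ref{lem:window-no-overlap} and Corollary~\ref{cor:b-max}) to the subjective setting by replacing appeals to Lemma~\ref{lem:quorums-intersect-intact-set} with Lemma~\ref{lem:subjective-quorum-intersection} and using that honest nodes agree on correct nodes' slices, then re-runs the proof of Theorem~\ref{thm:non-blocking-byzantine-consensu-intact-sets} verbatim. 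One small addition: for \emph{Weak validity for intact sets} you will also need the subjective analogue of Lemma~\ref{lem:commit-largest-prepared-intact-set} (not just Lemma~\ref{lem:prepared-propose-intact-set}), since the original proof chains a commit through the largest prepared ballot before invoking the propose lemma.
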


\begin{theorem}\label{thm:refinement-lying}
  Let $\{\Ss_v\}_{v\in \V_{\ok}}$ be a subjective FBQS and $I$ be an intact set. For every
  execution of \CFCP\ over $\{\Ss_v\}_{v\in \V_{\ok}}$ with trace~$\tau$, there exists an
  execution of \AFCP\ over $\{\Ss_v\}_{v\in \V_{\ok}}$ with trace $\rho$ and
  $\hist(\tau|_I) = \hist(\rho|_I)$.
\end{theorem}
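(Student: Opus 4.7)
The plan is to mimic the proof of Theorem~\ref{thm:refinement}, reusing the same simulation function $\maptrace$ from concrete to abstract traces, but re-justifying each step in the subjective setting where each correct node $v$ acts according to its own view $\Ss_v$ rather than a common FBQS. The crucial observation that makes this lifting possible is Lemma~\ref{lem:subjective-intact-set}, which guarantees that the intact set $I$ coincides across all views $\{\Ss_v\}_{v\in\V_\ok}$; consequently the $I$-projection $\tau|_I$ is insensitive to which view we use to interpret quorums, and the simulation $\maptrace(\tau)$ is well defined exactly as before.

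First, I would restate the simulation function $\maptrace$ unchanged: it batches the per-ballot $\vote/\deliver$ events and the individual $\send/\receive$ events of \CFCP{} into the batched abstract events of \AFCP{}, preserving ordering and preserving all $v.\propose(x)$ and $v.\decide(x)$ events for $v\in I$. Thus $\hist(\maptrace(\tau)|_I) = \hist(\tau|_I)$ by construction, as in the non-subjective proof.

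Next, I would argue by induction on the length of a finite prefix of any concrete trace $\tau$ entailed by \CFCP\ over $\{\Ss_v\}_{v\in\V_\ok}$ that $\maptrace(\tau)$ is a prefix of a trace entailed by \AFCP\ over $\{\Ss_v\}_{v\in\V_\ok}$. Each inductive step amounts to checking that whenever the enabling condition of a rule fires at some correct node $v$ in the concrete protocol, the corresponding rule in the abstract protocol is also enabled at $v$, and vice versa. Since each such rule is formulated purely in terms of $v$'s own view (quorums $U\in\Qs_v$ with $v\in U$, or $v$-blocking sets with respect to $\Ss_v$), the local translation is identical to the non-subjective case; the only subtlety is that the messages $v$ receives come from nodes acting in their own views, but this only affects the (adversarial) scheduling of messages from faulty nodes, which the abstract execution can freely mimic.

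The main obstacle I anticipate is the safety arguments that in the non-subjective proof relied on Lemma~\ref{lem:quorums-intersect-intact-set} to intersect two quorums inside $I$. In the subjective setting these two quorums may now belong to two different views $\Qs_{v_1}$ and $\Qs_{v_2}$, so the non-subjective lemma does not immediately apply. This is resolved by invoking Lemma~\ref{lem:subjective-quorum-intersection} instead, which provides exactly the same intersection guarantee for quorums drawn from $\bigcup_{v\in\V_\ok}\Qs_v$, provided both meet $I$. Once this replacement is made everywhere Lemma~\ref{lem:quorums-intersect-intact-set} was used in the proof of Theorem~\ref{thm:refinement}, together with Lemma~\ref{lem:subjective-intact-set} to justify that $v$-blocking and intact-set reasoning is view-independent on $I$, the simulation invariants and hence the refinement carry over verbatim, yielding $\hist(\tau|_I)=\hist(\rho|_I)$ for $\rho = \maptrace(\tau)$.
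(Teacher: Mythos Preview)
Your overall plan---reuse the simulation function $\maptrace$ unchanged and redo the induction of Lemma~\ref{lem:simulate-prefix} with each correct node $v$ interpreting quorums and $v$-blocking sets in its own view $\Ss_v$---is exactly what the paper does: it restates Lemmas~\ref{lem:coincide-consensus-state}--\ref{lem:not-cd-delivered} in the subjective setting (Lemmas~\ref{lem:coincide-consensus-state-lying}--\ref{lem:not-cd-delivered-lying}), obtains Lemma~\ref{lem:simulate-prefix-lying} from them, and then concludes Theorem~\ref{thm:refinement-lying} verbatim from Theorem~\ref{thm:refinement}. Your observation that the simulation is purely node-local, so switching from a shared $\Ss$ to per-node $\Ss_v$ changes nothing in the step-by-step enabling conditions, is the right reason this works.

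One point to correct: the ``main obstacle'' you anticipate---replacing Lemma~\ref{lem:quorums-intersect-intact-set} by Lemma~\ref{lem:subjective-quorum-intersection}---does not actually arise in the refinement proof. Neither Theorem~\ref{thm:refinement} nor Lemma~\ref{lem:simulate-prefix} (nor any of the auxiliary Lemmas~\ref{lem:coincide-consensus-state}--\ref{lem:not-cd-delivered}) uses quorum intersection; those lemmas are needed for the \emph{correctness} of \AFCP{} (Agreement, Validity), not for the simulation. The refinement argument only ever reasons about a single node $v$ checking its own quorum/$v$-blocking conditions and tracking its own state variables, so there is never a step where two quorums from different views must be intersected. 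Your lifting therefore goes through for simpler reasons than you suggest, and the substitution you propose, while harmless, is unnecessary here.
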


\begin{corollary}\label{cor:cscp-correct-lying}
  Let $\{\Ss_v\}_{v\in \V_{\ok}}$ be a subjective FBQS. The \CFCP\ protocol over
  $\{\Ss_v\}_{v\in \V_{\ok}}$ satisfies the specification of non-blocking Byzantine
  consensus for intact sets.
\end{corollary}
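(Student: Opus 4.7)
The plan is to combine Theorem~\ref{thm:non-blocking-byzantine-consensus-intact-sets-lying} (correctness of \AFCP\ over subjective FBQSes) with the refinement of Theorem~\ref{thm:refinement-lying} in order to transfer the four properties of non-blocking Byzantine consensus for intact sets from \AFCP\ to \CFCP, mirroring how Corollary~\ref{cor:cscp-correct} was obtained in the non-lying setting. Each of those four properties is expressible over the $\propose$/$\decide$ events of correct nodes in an intact set~$I$, and the refinement preserves exactly these events through the projected history $\hist(\tau|_I)$, so history equality is enough to carry them across.

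For the three safety properties the argument is direct. Given any \CFCP\ execution over $\{\Ss_v\}_{v\in \V_{\ok}}$ with trace~$\tau$ and any intact set~$I$, Theorem~\ref{thm:refinement-lying} produces an \AFCP\ execution over the same subjective FBQS with trace~$\rho$ satisfying $\hist(\tau|_I)=\hist(\rho|_I)$. \emph{Integrity} reduces to a predicate on the sequence of $v.\decide(\cdot)$ events of each correct node and can also be read off the concrete algorithm directly (a correct node emits $\decide$ at most once per ballot via line~\ref{lin:pbd-send-decided}); \emph{Agreement for intact sets} and \emph{Weak validity for intact sets} are predicates on the $\propose$/$\decide$ events of nodes in~$I$, exactly what $\hist(\tau|_I)$ records. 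Each holds in $\rho$ by Theorem~\ref{thm:non-blocking-byzantine-consensus-intact-sets-lying} and hence in $\tau$ by the history equality.

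The delicate step is \emph{Non-blocking for intact sets}, which quantifies over all continuations of a given run in which the malicious nodes stop, rather than being a predicate on a single trace. The plan is, given a finite \CFCP\ prefix in which some $v\in I$ has not yet decided, to take an arbitrary continuation in which all malicious nodes stop and apply Theorem~\ref{thm:refinement-lying} to the resulting trace to produce an \AFCP\ trace~$\rho$ with matching $I$-history. Non-blocking in \AFCP\ then forces $v.\decide(x)\in\rho|_I$ for some $x$, and by the history equality the same event must appear in the concrete continuation. I expect the main obstacle to be checking that the simulation~$\maptrace$ underlying Theorem~\ref{thm:refinement-lying} maps a \CFCP\ run in which the malicious nodes have stopped to an \AFCP\ run in which their abstract counterparts also stop, since this is what licenses invoking the hypothesis of \AFCP's Non-blocking clause. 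Under the subjective FBQS this check must additionally respect that each correct node~$v$ acts according to its own view~$\Ss_v$, but since $\maptrace$ is defined per correct node from the events it emits and consumes, the needed bookkeeping should already be covered by Theorem~\ref{thm:refinement-lying} together with Lemma~\ref{lem:subjective-intact-set} and Lemma~\ref{lem:subjective-quorum-intersection}.
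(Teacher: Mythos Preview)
Your proposal is correct and follows essentially the same route as the paper, which simply declares the proof analogous to that of Corollary~\ref{cor:cscp-correct} with Theorem~\ref{thm:refinement-lying} substituted for Theorem~\ref{thm:refinement}. Your explicit handling of \emph{Non-blocking for intact sets}---in particular the observation that one must check the simulated \AFCP\ execution also has its malicious nodes stopped so that the abstract Non-blocking hypothesis applies---is more careful than the paper's own argument, which folds all four properties into a single contradiction via history equality without isolating this point.
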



\section{Related Work}\label{sec:related-future-work}

García-Pérez and Gotsman~\cite{GG18} have previously investigated Stellar's federated
voting and its relationship to Bracha's broadcast over classical Byzantine quorum
systems. They did not address the full Stellar consensus protocol. Our proof of SCP
establishes the correctness of federated voting by adjusting the results in~\cite{GG18} to
multiple intact sets within the system.

Losa et al.~\cite{GLM19} have also investigated consensus over FBQSs. They propose a
generalisation of Stellar's quorums that does not prescribe constructing them from slices,
yet allows different participants to disagree on what constitutes a quorum. They then
propose a protocol solving consensus over intact sets in this setting that provides better
liveness guarantees than SCP, but is impractical. Losa et al.'s work is orthogonal to
ours: they consider a more general setting than Stellar's and a theoretical protocol,
whereas we investigate the practical protocol used by Stellar.

The advent of blockchain has given rise to a number of novel proposals of BFT protocols;
see~\cite{CV17} for a survey. Out of these, the most similar one to Stellar is
Ripple~\cite{SYB14}. In particular, Ripple have recently proposed a protocol called Cobalt
that allows for a federated setting similar to Stellar's~\cite{McB18}. We hope that our
work will pave the way to investigating the correctness of this and similar protocols.

\subparagraph*{Acknowledgements.} We thank Ilya Sergey, Giuliano Losa and Alexey Gotsman
for their comments. This work was supported by ERC Starting Grant RACCOON, Spanish State
Research Agency project BOSCO, and Madrid Regional Government project BLOQUES.


\bibliography{paper}

\appendix

\section{Proofs in \S\ref{sec:background}}
\label{ap:background}

\begin{lemma}[Lemma~33 in \cite{GG18b}]
  \label{lem:quorum-project}
  Let $U$ be a quorum in an FBQS $\Ss$, let $I$ be a set of nodes, and let $U'=U\cap
  I$. If $U'\not=\emptyset$ then $U'$ is a quorum in $\Ss|_I$.
\end{lemma}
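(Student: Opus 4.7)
The plan is to unfold the definitions and verify the quorum condition pointwise for $U'$ in the projected FBQS $\Ss|_I$.

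First I would observe that the hypothesis already gives $U' \neq \emptyset$, so the non-emptiness requirement in the definition of a quorum is immediate. It remains to show that every node $v \in U'$ has some slice in $\Ss|_I(v)$ that is fully contained in $U'$.

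Pick any $v \in U' = U \cap I$. Since $v \in U$ and $U$ is a quorum in $\Ss$, there exists $q \in \Ss(v)$ with $q \subseteq U$. By the definition of projection, $q' := q \cap I$ belongs to $\Ss|_I(v)$. Then $q' = q \cap I \subseteq U \cap I = U'$, which is the required slice inside $U'$. Since $v$ was arbitrary, $U'$ satisfies the quorum condition in $\Ss|_I$.

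There is no real obstacle here; the statement is essentially a definitional unfolding, and the only thing to be careful about is that the projection is taken over universe $I$, so the slice $q'$ must be a subset of $I$ (which it is, by construction) as well as of $U'$. No use of intactness, quorum intersection, or Byzantine assumptions is required, which matches the fact that the lemma is a purely set-theoretic fact about quorum projection.
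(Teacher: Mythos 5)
Your proof is correct and is exactly the definitional unfolding that the paper's one-line proof (``straightforward by the definition of the projection operation'') alludes to: for each $v\in U'$ take a slice $q\in\Ss(v)$ with $q\subseteq U$ and observe that $q\cap I\in\Ss|_I(v)$ lies in $U\cap I=U'$. Nothing is missing.
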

\begin{proof}
  Straightforward by the definition of the projection operation.
\end{proof}

\begin{lemma}
  \label{lem:quorum-intersection}
  Let $\Ss$ be an FBQS and assume some set of faulty nodes. Let $I$ be an intact set in
  $\Ss$. Every two quorums in $\Ss|_I$ have non-empty intersection.
\end{lemma}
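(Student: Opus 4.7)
The plan is to unfold the definitions and observe that the statement is an almost direct consequence of the definition of intact set, together with the fact that quorums are non-empty.

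First I would recall that, by the definition of the projection $\Ss|_I$, every quorum $U$ in $\Ss|_I$ lives within the universe $I$, so $U \subseteq I$. Since an FBQS assigns to every node a non-empty set of slices and each slice must contain the node itself, any quorum contains at least one node (in fact, every quorum contains at least one of its members); so $U_1$ and $U_2$ are non-empty.

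Next, I would pick arbitrary $v_1 \in U_1$ and $v_2 \in U_2$. By the previous observation $v_1, v_2 \in I$. Now I invoke the defining property of an intact set: every two members of $I$ are intertwined in the projected FBQS $\Ss|_I$. In particular $v_1$ and $v_2$ are intertwined in $\Ss|_I$, which by definition means that any quorum in $\Ss|_I$ containing $v_1$ must intersect any quorum in $\Ss|_I$ containing $v_2$ in at least one correct node. Applying this to the pair $(U_1, U_2)$ yields $U_1 \cap U_2 \neq \emptyset$, as required.

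There is no real obstacle here: the lemma is essentially a definition-chase, and the only subtle point worth stating carefully is that the intertwined property cited is the one relative to $\Ss|_I$ (not the ambient $\Ss$), so that the quorums under consideration are indeed the projected ones. I would therefore keep the proof to a few lines, making explicit only the two facts just highlighted: non-emptiness of $U_1$ and $U_2$, and the application of the intertwined property in $\Ss|_I$ to a pair $(v_1, v_2) \in U_1 \times U_2$.
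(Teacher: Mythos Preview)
Your proposal is correct and follows essentially the same approach as the paper: the paper's proof is a one-liner invoking exactly the fact that every node in $I$ is intertwined with every other in $\Ss|_I$ by the definition of intact set, and your argument simply spells this out in detail by picking representatives $v_1\in U_1$ and $v_2\in U_2$.
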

\begin{proof}
  Straightforward, since every node in $\Ss|_I$ is intertwined with each other by
  definition of intact set.
\end{proof}

\begin{proof}[Proof of Lemma~\ref{lem:quorums-intersect-intact-set}]
  $U_1\cap I$ and $U_2\cap I$ are quorums in $\Ss|_I$ by
  Lemma~\ref{lem:quorum-project}. Since every two quorums in $\Ss|_I$ have non-empty
  intersection by Lemma~\ref{lem:quorum-intersection}, we have
  $(U_1\cap I)\cap(U_2\cap I) = (U_1\cap U_2)\cap I \not= \emptyset$. Therefore the
  intersection $U_1\cap U_2$ contains some node in $I$.
\end{proof}

\begin{proof}[Proof of Lemma~\ref{lem:intact-sets-closed-union}]
  Assume $\V_\ok$ is the set of correct nodes, and let $I_1$ and $I_2$ be intact sets such
  that $I_1\cap I_2\not=\emptyset$. Assume towards a contradiction that $I_1\cup I_2$ is
  not an intact set, and therefore there exist nodes $v_1\in I_1$ and $v_2\in I_2$ that
  are not intertwined. By definition of intertwined, this assumption entails that there
  exist quorums $U_1$ and $U_2$ such that $v_1\in U_1$ and $v_2\in U_2$, and
  $(U_1\cap U_2) \cap \V_\ok = \emptyset$. Since both $U_1$ and $I_2$ are quorums in $\Ss$
  and both have non-empty intersection with the intact set $I_1$, we have
  $(U_1 \cap I_2) \cap I_1 \not= \emptyset$ by
  Lemma~\ref{lem:quorums-intersect-intact-set}, and we can conclude that $U_1$ has
  non-empty intersection with the intact set $I_2$.  Thus, we know that
  $(U_1\cap U_2) \cap I_2 \not= \emptyset$ again by
  Lemma~\ref{lem:quorums-intersect-intact-set}. But this results in a contradiction
  because $(U_1\cap U_2)\cap I_2=\emptyset$ by assumptions, since $I_2$ contains only
  correct nodes.
\end{proof}


\section{Proofs in \S\ref{sec:stellar-broadcast}}
\label{ap:stellar-broadcast}

\begin{proof}[Proof of Lemma~\ref{lem:empty-I-not-v-blocking}]
  Since $I$ is a quorum in $\Ss$ and by the definition of quorum, for every node $v\in I$
  there exists one slice of $v$ that lies within $I$, and the required holds.
\end{proof}

\begin{lemma}[Analogous to Lemma~36 in \cite{GG18b} for intact sets]
  \label{lem:federated-intact-set-ready}
  Let $\Ss$ be an FBQS and $t$ be a tag, and consider an execution of the instance for $t$
  of FV over $\Ss$. Let $I$ be an intact set in $\Ss$. The first node $v\in I$ that sends
  a $\READY(t,a)$ message first needs to receive a $\VOTE(t,a)$ message from every member
  of a quorum $U$ to which $v$ belongs.
\end{lemma}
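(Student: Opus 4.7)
The plan is to proceed by contradiction, examining the two possible rules through which a node can send a $\READY$ message in Algorithm~\ref{alg:broadcast}. A node $v$ can send $\READY(t,a)$ either via the quorum rule in lines~\ref{lin:ready}--\ref{lin:send-ready}, which requires receiving $\VOTE(t,a)$ from every member of a quorum to which $v$ belongs, or via the $v$-blocking rule in lines~\ref{lin:v-blocking}--\ref{lin:send-v-blocking}, which requires receiving $\READY(t,a)$ from every member of some $v$-blocking set $B$.

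First I would fix $v$ to be the first node in $I$ that ever sends a $\READY(t,a)$ message, and assume towards a contradiction that $v$ did so by firing the $v$-blocking rule. Then there exists a $v$-blocking set $B$ such that $v$ has already received $\READY(t,a)$ from every $u \in B$ at the moment it fires the rule.

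The key step is to invoke Lemma~\ref{lem:empty-I-not-v-blocking}: since $v \in I$ and $I$ is an intact set, no $v$-blocking set can be disjoint from $I$. Hence $B \cap I \neq \emptyset$, so some node $v' \in I \cap B$ must have sent $\READY(t,a)$ to $v$ strictly before $v$ itself fired the rule. This contradicts the assumption that $v$ is the first node in $I$ to send any $\READY(t,a)$ message. Therefore $v$ must have fired the quorum rule in lines~\ref{lin:ready}--\ref{lin:send-ready}, meaning it received $\VOTE(t,a)$ from every member of some quorum $U$ with $v \in U$, which is exactly the required conclusion.

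I do not expect any serious obstacle: the whole argument is a one-step case analysis resting on Lemma~\ref{lem:empty-I-not-v-blocking}. The only care needed is to justify the temporal ordering (that reception of the $\READY$ messages from $B$ strictly precedes $v$'s own send), which follows from the authenticated perfect-link assumption and the fact that messages are not self-generated before being sent by the claimed sender.
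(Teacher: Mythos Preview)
Your proposal is correct and follows essentially the same argument as the paper: both invoke Lemma~\ref{lem:empty-I-not-v-blocking} to show that any $v$-blocking set $B$ meets $I$, so the first node in $I$ to send $\READY(t,a)$ cannot have used the $v$-blocking rule and must have used the quorum rule in lines~\ref{lin:ready}--\ref{lin:send-ready}. Your write-up is simply a more detailed unpacking of the same one-step case analysis the paper gives.
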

\begin{proof}
  Let $v$ be any node in $I$. By Lemma~\ref{lem:empty-I-not-v-blocking} no $v$-blocking
  set $B$ exists such that $B\cap I=\emptyset$. Therefore, the first node $v\in I$ that
  sends a $\READY(t,a)$ message does it through
  lines~\ref{lin:ready}--\ref{lin:send-ready} of Algorithm~\ref{alg:broadcast}, which
  means that $v$ received $\VOTE(t,a)$ messages from every member of a quorum $U$ to which
  $v$ belongs.
\end{proof}

\begin{lemma}
  \label{lem:federated-not-v-blocking-availability-intact-set}
  Let $\Ss$ be an FBQS and assume some set of faulty nodes. Let $I$ be an intact set in
  $\Ss$, and consider a set $B$ of nodes. If $B$ is not $v$-blocking for any
  $v\in I \setminus B$, then either $B\supseteq I$ or $I \setminus B$ is a quorum in
  $\Ss|_I$.
\end{lemma}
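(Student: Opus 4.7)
The plan is to unwind the definitions directly; this lemma is essentially a bookkeeping exercise rather than a deep combinatorial argument. First I would do a case split on whether $I \setminus B$ is empty. If it is, then $B \supseteq I$ and the conclusion holds trivially, so the only interesting case is when $I \setminus B \neq \emptyset$, and the goal becomes to show that $U := I \setminus B$ is a quorum in the projected FBQS $\Ss|_I$.

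Next I would fix an arbitrary $v \in U$ and produce a slice of $v$ in $\Ss|_I$ that lies inside $U$. By hypothesis $B$ is not $v$-blocking, so by definition of $v$-blocking there exists some slice $q \in \Ss(v)$ with $q \cap B = \emptyset$. Consider the projected slice $q' := q \cap I$, which by definition belongs to $\Ss|_I(v)$. Since $v \in q$ (slices always contain their owner) and $v \in I$, we have $v \in q'$, so in particular $q'$ is non-empty; and $q' \cap B \subseteq q \cap B = \emptyset$, hence $q' \subseteq I \setminus B = U$. Thus $v$ has a slice in $\Ss|_I$ contained in $U$, which is precisely the quorum condition.

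Since $v \in U$ was arbitrary and $U$ is non-empty, $U = I \setminus B$ is a quorum in $\Ss|_I$, completing the proof. The main thing to be careful about is to remember the self-containment axiom $v \in q$ for $q \in \Ss(v)$, which guarantees that the projected slice $q'$ is non-empty (so the resulting $U$ is a legitimate, non-empty quorum) and to use the right direction of the "not $v$-blocking" statement (the existence of a slice disjoint from $B$, rather than some weaker negation). There is no real obstacle here; the statement follows essentially by direct inspection once the definitions of $\Ss|_I$, $v$-blocking, and quorum are expanded.
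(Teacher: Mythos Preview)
Your proof is correct and essentially identical to the paper's own argument: both case-split on whether $I\setminus B$ is empty, then for each $v\in I\setminus B$ use the failure of $v$-blocking to extract a slice $q$ disjoint from $B$, project it to $q\cap I$, and observe this projected slice lies inside $I\setminus B$. The paper even invokes the same self-containment axiom $v\in q$ at the same point to ensure the projected slice is non-empty.
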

\begin{proof}
  Assume $B$ is not $v$-blocking for any $v\in I \setminus B$. If $B \supseteq I$ then we
  are done. Otherwise, for every node $v$ in $I \setminus B$, there exists a slice
  $q\in\Ss(v)$ such that $q\cap B=\emptyset$. We know that $q\cap I\not=\emptyset$ since
  $v\in q$ by definition of FBQS. We also know that $q\cap I \in \Ss|_I(v)$ by definition
  of $\Ss|_I$, and since $q\cap B=\emptyset$, the intersection $q\cap I$ is a subset of
  $I\setminus B$. Since for each node $v\in I$ there exists a slice $q\in \Ss(v)$ such
  that $q\cap I$ is a subset of $I\setminus B$, the set $I \setminus B$ is a quorum in
  $\Ss|_I$, as required.
\end{proof}

\begin{lemma}[Analogous to Lemma~16 in \cite{GG18b} for intact sets]
  \label{lem:federated-ready-consistent-intact-set}
  Let $\Ss$ be an FBQS and $t$ be a tag, and consider an execution of the instance for $t$
  of FV over $\Ss$. Let $I$ be an intact set in $\Ss$. If two nodes in $I$ send
  respectively messages $\READY(t,a)$ and $\READY(t,a')$, then $a = a'$.
\end{lemma}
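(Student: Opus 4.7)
The plan is to induct on the order in which nodes of $I$ emit their very first $\READY(t,\cdot)$ message, ordered by send-time (breaking ties arbitrarily but respecting message causality). The statement to prove inductively is that every $\READY(t,\cdot)$ message sent by a node in $I$ up to and including the current one carries the same voting value~$a$. Once this invariant is established for the whole execution, the lemma follows because both messages $\READY(t,a)$ and $\READY(t,a')$ in the statement lie in some finite prefix of this order.

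For the base case, consider the first node $v_0\in I$ that ever sends a $\READY$ message, say $\READY(t,a)$. By Lemma~\ref{lem:federated-intact-set-ready}, $v_0$ must have used the rule in lines~\ref{lin:ready}--\ref{lin:send-ready}, so $v_0$ received $\VOTE(t,a)$ messages from every member of some quorum $U_0\ni v_0$. This fixes the value $a$ for the invariant.

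For the inductive step, suppose every $\READY$ emitted so far by nodes of $I$ carries value $a$, and consider the next node $v\in I$ emitting $\READY(t,a')$. I will case-split on which rule triggered this emission. If $v$ used lines~\ref{lin:ready}--\ref{lin:send-ready}, then $v$ received $\VOTE(t,a')$ from every member of a quorum $U\ni v$; since both $U_0$ and $U$ are quorums intersecting $I$, Lemma~\ref{lem:quorums-intersect-intact-set} yields a correct node $w\in U_0\cap U\cap I$ that sent both $\VOTE(t,a)$ and $\VOTE(t,a')$, forcing $a=a'$ by the guard on line~\ref{lin:if-echoed} of Algorithm~\ref{alg:broadcast}. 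If instead $v$ used lines~\ref{lin:v-blocking}--\ref{lin:send-v-blocking}, then $v$ received $\READY(t,a')$ from each member of some $v$-blocking set~$B$; by Lemma~\ref{lem:empty-I-not-v-blocking} we have $B\cap I\neq\emptyset$, so some $u\in B\cap I$ had sent $\READY(t,a')$ strictly earlier in the order (since $v$ received its message), and the inductive hypothesis gives $a'=a$.

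The main obstacle I anticipate is ensuring the induction is well-founded in the presence of the $v$-blocking rule, which could, a priori, create a circular dependency among nodes of $I$. This is exactly what Lemma~\ref{lem:federated-intact-set-ready} rules out: it guarantees a genuine base case so that the first $\READY$ in~$I$ is forced through the quorum rule, after which every subsequent $v$-blocking-based emission in~$I$ can be traced back to an earlier member of~$I$. The rest of the argument is bookkeeping on the two rules together with two appeals to the quorum-intersection properties of intact sets.
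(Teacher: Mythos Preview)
Your proof is correct and follows essentially the same approach as the paper: both hinge on Lemma~\ref{lem:federated-intact-set-ready}, Lemma~\ref{lem:quorums-intersect-intact-set}, and the $\voted$ guard. The only organisational difference is that the paper applies Lemma~\ref{lem:federated-intact-set-ready} once per value (to $a$ and to $a'$) to obtain two quorums of $\VOTE$s and then intersects them directly, whereas you unfold this into an explicit induction on the order of $\READY$ emissions in $I$.
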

\begin{proof}
  Assume that two nodes in $I$ send respectively messages $\READY(t,a)$ and
  $\READY(t,a')$. By Lemma~\ref{lem:federated-intact-set-ready}, some node $v\in I$ has
  received $\VOTE(t,a)$ from a quorum $U$ to which $v$ belongs, and some node $v'\in I$
  has received $\VOTE(t,a')$ from a quorum $U'$ to which $v'$ belongs. By
  Lemma~\ref{lem:quorums-intersect-intact-set}, the intersection $U\cap U'$ contains some
  node in $I$, so that this node has sent $\VOTE(t,a)$ and $\VOTE(t,a')$. But due to the
  use of the guard variable $\voted$ in lines~\ref{lin:bc-init} and
  \ref{lin:if-echoed}--\ref{lin:echoed} of Algorithm~\ref{alg:broadcast}, a node can only
  vote for one value per tag, and thus it cannot vote different values for the same
  tag. Hence, $a=a'$.
\end{proof}

\begin{lemma}[Analogous to Lemma~17 in \cite{GG18b} for intact sets]
  \label{lem:federated-totality-intact-set}
  Let $\Ss$ be an FBQS and assume some set of faulty nodes. Let $I$ be an intact set in
  $\Ss$. Assume that $I = V^+ \uplus V^-$ and for some quorum $U$ we have
  $U \cap I \subseteq V^+$. Then either $V^-=\emptyset$ or there exists some node
  $v\in V^-$ such that $V^+$ is $v$-blocking.
\end{lemma}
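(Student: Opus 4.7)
The plan is to prove the contrapositive of the main conclusion: assuming $V^- \neq \emptyset$, I will show that some $v \in V^-$ must have $V^+$ as a $v$-blocking set. I will argue by contradiction, invoking the two structural facts already established: Lemma~\ref{lem:federated-not-v-blocking-availability-intact-set} (on $v$-blocking sets within an intact set), together with Lemma~\ref{lem:quorum-project} and Lemma~\ref{lem:quorum-intersection} (projected quorums and their pairwise intersection in $\Ss|_I$).

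First I would dispatch the easy case $V^- = \emptyset$ trivially and suppose $V^- \neq \emptyset$. Assume for contradiction that no $v \in V^-$ has $V^+$ as a $v$-blocking set. Since $I = V^+ \uplus V^-$, we have $V^- = I \setminus V^+$, so the hypothesis of Lemma~\ref{lem:federated-not-v-blocking-availability-intact-set} holds with $B = V^+$. That lemma then yields two alternatives: either $V^+ \supseteq I$, which is impossible because $V^- \subseteq I$ is non-empty and disjoint from $V^+$; or $I \setminus V^+ = V^-$ is a quorum in the projected FBQS $\Ss|_I$.

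So I may assume $V^-$ is a quorum in $\Ss|_I$. On the other hand, the hypothesis $U \cap I \subseteq V^+$ (with the quorum $U$ witnessing non-trivial overlap with $I$, as is guaranteed by the situations in which this lemma is invoked) gives that $U \cap I$ is non-empty, so by Lemma~\ref{lem:quorum-project} the set $U \cap I$ is a quorum in $\Ss|_I$. Applying Lemma~\ref{lem:quorum-intersection} to the two quorums $V^-$ and $U \cap I$ of $\Ss|_I$ yields $V^- \cap (U \cap I) \neq \emptyset$. Since $V^- \subseteq I$, this simplifies to $V^- \cap U \neq \emptyset$, which contradicts $U \cap I \subseteq V^+$ and $V^+ \cap V^- = \emptyset$.

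The main obstacle I expect is the careful bookkeeping around the projection $\Ss|_I$: one has to be mindful that both $V^-$ and $U \cap I$ genuinely lie inside $I$ and thus produce legitimate quorums in the projected FBQS. Once this is in place, the contradiction is immediate and the remainder is a direct application of the previously established lemmas.
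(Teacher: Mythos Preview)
Your proposal is correct and follows essentially the same route as the paper: assume $V^+$ is not $v$-blocking for any $v\in V^-$, apply Lemma~\ref{lem:federated-not-v-blocking-availability-intact-set} with $B=V^+$ to conclude $V^-$ is a quorum in $\Ss|_I$, then intersect with the projected quorum $U\cap I$ via Lemmas~\ref{lem:quorum-project} and~\ref{lem:quorum-intersection} to obtain a contradiction. Your parenthetical remark that $U\cap I\neq\emptyset$ is needed for Lemma~\ref{lem:quorum-project} is a fair observation; the paper's proof applies that lemma without explicitly checking non-emptiness, relying (as you do) on the way the lemma is actually invoked.
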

\begin{proof}
  Assume that $V^+$ is not $v$-blocking for any $v\in V^-$. By
  Lemma~\ref{lem:federated-not-v-blocking-availability-intact-set}, either $V^-=\emptyset$
  or $V^-$ is a quorum in $\Ss|_I$. In the former case we are done, while in the latter we
  get a contradiction as follows. By Lemma~\ref{lem:quorum-project}, the intersection
  $U\cap I$ is a quorum in $\Ss|_I$. Since every two quorums in $\Ss|_I$ have non-empty
  intersection by Lemma~\ref{lem:quorum-intersection}, we have
  $(U\cap I)\cap V^-\not=\emptyset$. But this is impossible, since $U\cap I \subseteq V^+$
  and $V^+\cap V^-=\emptyset$.
\end{proof}

\begin{proof}[Proof of Theorem~\ref{thm:reliable-byzantine-broadcast-intact-sets}]
  We prove that the instance for tag $t$ of FV over $\Ss$ enjoys each of the properties
  that define the specification of \emph{reliable Byzantine voting for intact sets}.
  \begin{description}
  \item \emph{No duplication:} Straightforward by the use of the guard variable
    $\delivered$ in line~\ref{lin:delivered} of Algorithm~\ref{alg:broadcast}.
  \item \emph{Totality for intact sets:} Assume some node in $I$ delivers a value $a$ for
    tag $t$. By the condition in line~\ref{lin:ready} of Algorithm~\ref{alg:broadcast},
    the node has received $\READY(t,a)$ messages from every member in a quorum $U$. Since
    $U\cap I$ contains only correct nodes, these nodes send $\READY(t,a)$ messages to
    every node. By the condition in line~\ref{lin:v-blocking} of
    Algorithm~\ref{alg:broadcast}, any correct node $v$ sends $\READY(t,a)$ messages if it
    receives $\READY(t,a)$ from every member in a $v$-blocking set. Hence, the
    $\READY(t,a)$ messages from the nodes in $U\cap I$ may convince additional correct
    nodes to send $\READY(t,a)$ messages to every node. Let these additional correct nodes
    send $\READY(t,a)$ messages until a point is reached at which no further nodes in $I$
    can send $\READY(t,a)$ messages. At this point, let $V^+$ be the set of nodes in $I$
    that sent $\READY(t,a)$ messages (where $U\cap I\subseteq V^+$), and let
    $V^- = I \setminus V^+$. By Lemma~\ref{lem:federated-ready-consistent-intact-set} the
    nodes in $V^-$ did not send any $\READY(t,\_)$ messages at all. The set $V^+$ cannot
    be $v$-blocking for any node $v$ in $V^-$, or else more nodes in $I$ could come to
    send $\READY(t,a)$ messages. Then by Lemma~\ref{lem:federated-totality-intact-set} we
    have $V^-=\emptyset$, meaning that every node in $I$ has sent $\READY(t,a)$
    messages. Since $I$ is a quorum, all the nodes in $I$ will eventually deliver a
    Boolean for tag~$t$ due to the condition in line~\ref{lin:ready} of
    Algorithm~\ref{alg:broadcast}.
  \item \emph{Consistency for intertwined nodes:} Assume that two intertwined nodes $v$
    and $v'$ deliver values $a$ and $a'$ for tag $t$ respectively. By the condition in
    line~\ref{lin:deliver}, the nodes received a quorum of $\READY(t,a)$, respectively,
    $\READY(t,a')$ messages. Since the two nodes are intertwined, there is a correct node
    $u$ in the intersection of the two quorums, which sent both $\READY(t,a)$ and
    $\READY(t,a')$. By the use of the guard variable $\readied$ in line~\ref{lin:ready} of
    Algorithm~\ref{alg:broadcast}, node $u$ can only send one and the same ready message
    for tag $t$ to every other node, and thus $a = a'$ as required.
  \item \emph{Validity for intact sets:} Assume every node in an intact set $I$ votes for
    value $a$. Since $I$ is a quorum, every node in $I$ will eventually send $\READY(t,a)$
    by the condition in line~\ref{lin:ready} of Algorithm~\ref{alg:broadcast}. By
    Lemma~\ref{lem:federated-ready-consistent-intact-set}, these messages cannot carry a
    value different from $a$. Then by the condition in line~\ref{lin:deliver} of
    Algorithm~\ref{alg:broadcast} every node in $I$ will eventually deliver the value $a$
    for tag $t$. Due to \emph{Consistency for intact sets}, no node delivers a value
    different from $a$.\qedhere
  \end{description}
\end{proof}

\begin{proof}[Proof ol Lemma~\ref{lem:bounded-totality}]
  We show how to strengthen the proof of \emph{Totality for intact sets} above to prove
  the required. In order to reach the point at which no further nodes in $I$ can send
  $\READY(b,a)$ messages, each node has to send and receive a finite number of
  messages. This number is unknown, but its upper bound is determined by the size of the
  system and the topology of the slices, and thus it is constant and bounded. Since GST
  has expired, the point at which no further nodes in $I$ can send $\READY(b,a)$ messages
  is reached in bounded time. Therefore, every node in $I$ sends $\READY(b,a)$ to every
  other node in $I$ in bounded time. These messages arrive in bounded time too, after
  which every node in $I$ delivers a value, and the required holds.
\end{proof}


\section{Proofs in \S\ref{sec:abstract-federated-consensus}}
\label{ap:abstract-federated-consensus}

In the remainder of the appendix, we say ballot $b_1$ is \emph{below and compatible than}
ballot $b_2$ (written $b_1\lc b_2$) iff $b_1 \leq b_2$ and $b_1\sim b_2$.

\begin{proof}[Proof of Lemma~\ref{lem:prepared-before-commit-intact-set}]
  Assume that a node $v_1 \in I$ commits ballot $b$. By line~\ref{lin:ready} of
  Algorithm~\ref{alg:broadcast}, node $v_1$ received $\READY(b,\true)$ from every member
  of a quorum to which $v_1$ belongs. By Lemma~\ref{lem:federated-intact-set-ready} the
  first node to do so received $\VOTE(b,\true)$ messages from every member of a quorum $U$
  to which $v_1$ belongs. Since $v_1$ is intertwined with every other node in $I$, there
  exists a correct node $v_2$ in the intersection $U \cap I$ that sent
  $\VOTE(b,\true)$. The node $v_2$ can send $\VOTE(b,\true)$ only through
  line~\ref{lin:send-echo} of Algorithm~\ref{alg:broadcast}, which means that $v_2$
  triggers $\brs[b].\vote(\true)$ in line~\ref{lin:send-commit} of
  Algorithm~\ref{alg:abstract-federated-consensus}. By line~\ref{lin:prepared} of the same
  figure, this is only possible after $v_2$ has aborted every $b' \lic b$, and the lemma
  holds.
\end{proof}

\begin{proof}[Proof of Lemma~\ref{lem:ready-commit-prepare-intact-set}]
  Assume towards a contradiction that $v_1$ prepares $b_1$, and that $v_2$ sends
  $\READY(b_2,\true)$. By definition of prepare, node $v_1$ aborted every ballot
  $b \lic b_1$. By line~\ref{lin:ready} of Algorithm~\ref{alg:broadcast}, node $v_1$
  received $\READY(b,\false)$ from every member of a quorum $U_b$ for each ballot
  $b \lic b_1$.  By assumptions, $b_2\lic b_1$, and therefore $v_2$ received
  $\READY(b_2,\false)$ from every member of the quorum $U_{b_2}$. By
  Lemma~\ref{lem:federated-intact-set-ready}, the first node $u_1\in I$ that sent
  $\READY(b_2,\false)$ received $\VOTE(b_2,\false)$ from a quorum $U_1$ to which $u_1$
  belongs. Since $v_2$ sent $\READY(b_2,\true)$ and by
  Lemma~\ref{lem:federated-intact-set-ready}, the first node $u_2\in I$ that sent
  $\READY(b_2,\true)$ received $\VOTE(b_2,\true)$ from a quorum $U_2$ to which $u_2$
  belongs. Since $u_1$ and $u_2$ are intertwined, the intersection $U_1\cap U_2$ contains
  some correct node $v$, which sent both $\VOTE(b_2,\false)$ and $\VOTE(b_2,\true)$
  messages. By the use of the Boolean $\voted$ in line~\ref{lin:echo} of
  Algorithm~\ref{alg:broadcast} this results in a contradiction and we are done.
\end{proof}

\begin{lemma}\label{lem:commit-largest-prepared-intact-set}
  Let $\Ss$ be an FBQS and consider an execution of \AFCP\ over $\Ss$. Let $I$ be an
  intact set in $\Ss$. If a node $v_1\in I$ commits a ballot $b_1$, then the largest
  ballot $b_2$ prepared by any node $v_2\in I$ before $v_1$ commits $b_1$ is such that
  $b_1\sim b_2$. \al{Check removing $I$ and adding that every node is honest.}
\end{lemma}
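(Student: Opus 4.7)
The goal is to show $b_1\sim b_2$, so I would proceed by contradiction. The key observation is that $b_1$ itself participates in the maximisation defining $b_2$: by Lemma~\ref{lem:prepared-before-commit-intact-set}, because $v_1\in I$ commits $b_1$, some node in $I$ must have prepared $b_1$, and inspecting that lemma's proof this preparation happens causally before the commit. Consequently $b_2\geq b_1$ in the total ballot order. Assume for contradiction that $b_1\not\sim b_2$; since $b_2\geq b_1$ and equality would force compatibility, we actually have $b_2>b_1$, which together with incompatibility gives $b_1\lic b_2$.

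Now I would extract a $\READY(b_1,\true)$ message from some node in $I$. Since $v_1$ commits $b_1$, in the instance $\fvs[b_1]$ of FV it triggered $\deliver(\true)$, which by line~\ref{lin:deliver} of Algorithm~\ref{alg:broadcast} requires receiving $\READY(b_1,\true)$ from every member of some quorum $U$ with $v_1\in U$. In particular, $v_1$ itself is in $U$ and thus sent $\READY(b_1,\true)$ (through the ``send to every $v'\in\V$'' step of line~\ref{lin:send-ready} or line~\ref{lin:send-v-blocking}, which includes sending to itself).

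At this point I have, on one hand, a node $v_2\in I$ that prepares the ballot $b_2$, and on the other, a node $v_1\in I$ that sends $\READY(b_1,\true)$, with $b_1\lic b_2$. This is exactly the configuration forbidden by Lemma~\ref{lem:ready-commit-prepare-intact-set} (after renaming: take its $v_1,b_1$ to be our $v_2,b_2$, and its $v_2,b_2$ to be our $v_1,b_1$). The contradiction establishes $b_1\sim b_2$.

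\textbf{Main obstacle.} There is no real technical difficulty once the two earlier lemmas are in place; the care needed is entirely in the bookkeeping. Two small points deserve explicit justification in the write-up: (i) that the maximum $b_2$ is well defined, which follows because only finitely many prepare events have occurred by the instant $v_1$ commits $b_1$ and the set of ballots is totally ordered; and (ii) that $v_1$ itself (rather than merely some unspecified quorum member) sent $\READY(b_1,\true)$, which uses the $v\in U$ side-condition in line~\ref{lin:deliver} of Algorithm~\ref{alg:broadcast}. These two observations are exactly what lets the argument reduce cleanly to Lemma~\ref{lem:ready-commit-prepare-intact-set}.
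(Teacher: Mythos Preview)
Your proposal is correct and follows essentially the same route as the paper's proof: establish $b_2\geq b_1$ (you invoke Lemma~\ref{lem:prepared-before-commit-intact-set} directly, while the paper re-derives this inline via Lemma~\ref{lem:federated-intact-set-ready}), observe that $v_1$ itself sent $\READY(b_1,\true)$ from the $v\in U$ condition in line~\ref{lin:deliver}, and then close the contradiction with Lemma~\ref{lem:ready-commit-prepare-intact-set}. Your explicit remark that the preparation of $b_1$ causally precedes the commit (so that $b_1$ legitimately participates in the maximum defining $b_2$) is a point the paper leaves implicit, and your write-up is arguably cleaner for making it explicit.
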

\begin{proof}
  Assume node $v_1$ commits ballot $b_1$. By the guard in line~\ref{lin:deliver} of
  Algorithm~\ref{alg:broadcast}, node $v_1$ received the message $\READY(b_1,\true)$
  from every member of a quorum to which $v_1$ belongs, which entails that node $v_1$
  received $\READY(b_1,\true)$ from itself. By
  Lemma~\ref{lem:federated-intact-set-ready}, the first node $u\in I$ that send
  $\READY(b_1,\true)$ needs to receive a $\VOTE(b_1,\true)$ message from every member of
  some quorum to which $u$ belongs. Thus, $u$ itself triggered $\brs[b_1].\vote(\true)$,
  which by lines~\ref{lin:send-prepare} and~\ref{lin:prepare-increased} of
  Algorithm~\ref{alg:abstract-federated-consensus} means that $u$ prepared ballot
  $b_1$. Hence, the largest ballot $b_2$ such that there exists a node $v_2\in I$ that
  triggers $\brs[b_2].\vote(\true)$ before $v_1$ commits $b_1$, is bigger or equal than
  $b_1$. If $b_2=b_1$, then $b_2.\val = b_1.\val$ and by
  lines~\ref{lin:prepared}--\ref{lin:send-commit} of
  Algorithm~\ref{alg:abstract-federated-consensus}, node $v_2$ prepares $b_2$ before it
  triggers $\brs[b_2].\vote(\true)$ and the lemma holds.

  If $b_2>b_1$, then we assume towards a contradiction that $b_2.\val\not=b_1.\val$. By
  lines~\ref{lin:prepared}--\ref{lin:send-commit} of
  Algorithm~\ref{alg:abstract-federated-consensus}, node $v_2$ prepared $b_2$. But this
  results in a contradiction by Lemma~\ref{lem:ready-commit-prepare-intact-set}, because
  $v_1$ and $v_2$ are intertwined and $v_1$ sent $\READY(b_1,\true)$, but $b_1\lic
  b_2$.
  Therefore $b_2.\val=b_1.\val$, and by lines~\ref{lin:prepared}--\ref{lin:send-commit} of
  Algorithm~\ref{alg:abstract-federated-consensus}, node $v_2$ prepares $b_2$ before it
  triggers $\brs[b_2].\vote(\true)$.
\end{proof}

We define the \emph{ready-tree for Boolean $a$ and ballot $b$ at node $v$}, which
characterises the messages that need to be exchanged by the FV protocol in order for node
$v$ to send a $\READY(b,a)$ message, under the assumption that all nodes are honest. The
\emph{ready-tree for Boolean $a$ and ballot $b$ at node $v$} is the tree computed
recursively as follows:
\begin{itemize}
\item If $v$ sent the message after receiving $\VOTE(b,a)$ from every member of a quorum
  $U$, then let $U$ be the root of the tree, which has no children.
\item If $v$ sent the message after receiving $\READY(b,a)$ from every member of a
  $v$-blocking set $B$, then let $B$ be the root of the tree, and let its children be the
  ready-trees for Boolean $a$ and ballot $b$ at each of the members of $B$.
\end{itemize}
For short, we may say `ready-tree at node $v$' when the Boolean $a$ and the ballot $b$ are
clear from the context.

A ready-tree is always of finite height, or otherwise some node would have faked ready
messages by lines~\ref{lin:ready}--\ref{lin:send-ready} and
\ref{lin:v-blocking}--\ref{lin:send-v-blocking} of Algorithm~\ref{alg:broadcast}, which
contradicts the assumption that all nodes are honest.

\begin{lemma}\label{lem:cascade}
  Let $\Ss$ be an FBQS and $b$ be a ballot, and consider and execution of the instance for
  ballot $b$ of FV over $\Ss$. Assume all nodes are honest. If a node $v$ sends
  $\READY(b,a)$ then there exists a quorum $U$ such that every member of $U$ sent
  $\VOTE(b,a)$.
\end{lemma}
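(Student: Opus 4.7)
The plan is to prove the lemma by induction on the height of the ready-tree for Boolean $a$ and ballot $b$ at node $v$, using the observation already made in the paper that this tree has finite height under the honesty assumption. The intuition is that every $\READY(b,a)$ message traces back, through a (possibly nested) cascade of $v$-blocking invocations of lines~\ref{lin:v-blocking}--\ref{lin:send-v-blocking}, to applications of the quorum rule in lines~\ref{lin:ready}--\ref{lin:send-ready}, each of which exhibits a quorum of $\VOTE(b,a)$ senders. Gluing all those quorums together produces the required single quorum.

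For the base case, the ready-tree at $v$ has root $U$ with no children, meaning $v$ sent $\READY(b,a)$ via lines~\ref{lin:ready}--\ref{lin:send-ready} of Algorithm~\ref{alg:broadcast} after receiving $\VOTE(b,a)$ from every member of the quorum $U$. Since all nodes are honest and links are authenticated and perfect, each $u\in U$ actually sent $\VOTE(b,a)$, so $U$ itself is the required quorum. For the inductive step, the root of the ready-tree is a $v$-blocking set $B$ and $v$ sent $\READY(b,a)$ via lines~\ref{lin:v-blocking}--\ref{lin:send-v-blocking} after receiving $\READY(b,a)$ from every $u\in B$. Honesty again guarantees that each $u\in B$ really sent $\READY(b,a)$, and its ready-tree has strictly smaller height than $v$'s, so the inductive hypothesis yields a quorum $U_u$ each of whose members sent $\VOTE(b,a)$. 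Set $U=\bigcup_{u\in B}U_u$. Every member $w\in U$ belongs to some $U_u$ and thus has a slice $q\in\Ss(w)$ with $q\subseteq U_u\subseteq U$, so $U$ is a quorum; and every member of $U$ sent $\VOTE(b,a)$ by construction.

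The step that requires the most care is noting that the union-of-quorums construction preserves the quorum property (which follows immediately from the definition of quorum once stated explicitly) and that honesty is used in two distinct places: once to ensure the ready-tree is finite, so induction on its height is sound, and again at every level of the induction to turn a \emph{received} message into a \emph{sent} message by its apparent author. Without the honesty hypothesis, a Byzantine node could emit $\READY(b,a)$ with neither a witnessing quorum of votes nor a witnessing $v$-blocking set of readies, which would both destroy the inductive construction and make the ready-tree ill-defined.
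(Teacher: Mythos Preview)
Your proof is correct and follows essentially the same approach as the paper: both exploit the ready-tree at $v$ and its finite height under the honesty assumption. The paper's argument is even terser---it simply observes that any single leaf of the ready-tree is already a quorum all of whose members sent $\VOTE(b,a)$, so your inductive union $\bigcup_{u\in B}U_u$ is valid but more than is needed.
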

\begin{proof}
  Assume that a node $v$ sends $\READY(b,a)$, and consider the ready-tree at node $v$. The
  lemma holds since each leaf of the ready-tree is a quorum whose members sent
  $\VOTE(b,a)$.
\end{proof}

\begin{proof}[Proof of Lemma~\ref{lem:prepared-propose-intact-set}]
  Assume all nodes are honest, which entails that each node in the system sends the same
  set of batches to every node. We proceed by induction on the number of batches
  sent by the nodes in the execution so far.

  Since $v_1$ prepares $b_1$, and by lines~\ref{lin:deliver}--\ref{lin:send-deliver} of
  Algorithm~\ref{alg:broadcast}, for each ballot $b_i\lic b_1$ there exists a quorum $U_i$
  such that $v_1$ receives $\READY(b_i,\false)$ from every member of $U_i$. Since the
  number of quorums to which $v_1$ belongs is finite and by BNS, there exists a quorum
  $U_R$ among the $U_i$ such that every member of $U_R$ sent a batch containing a message
  $\READY(b_j,\false)$ for each $b_j$ in some right-open interval $[b,b_1)$ with
  $b < b_1$, where the ballot $b$ is determined by the ballots that $v_1$ had aborted
  before preparing $b_1$. If $v_1$ had aborted every ballot $b_k\lc b_1$ before preparing
  $b_1$, or if $v_1$ had received from each member of $U_R$ a message $\READY(b_k,\false)$
  for each ballot $b_k\lc b_1$ such that $v_1$ did not abort $b_k$ before preparing $b_1$,
  then $v_1$ would have prepared a ballot bigger than $b_1$ by BNS and by
  lines~\ref{lin:deliver}--\ref{lin:send-deliver} of Algorithm~\ref{alg:broadcast}, which
  contradicts the assumptions. Therefore, there exists a ballot $b_0\lc b_1$ and a node
  $u\in U_R$ such that $v_1$ did not abort $b_0$ before preparing $b_1$, and such that
  $v_1$ received from $u$ a batch that contains a message $\READY(b_j,\false)$ for each
  $b_j\in [b,b_1)$, but which does not contain the message $\READY(b_0,\false)$.

  By Lemma~\ref{lem:cascade}, for each $b_j\in [b,b_1)$ there exists a quorum $U_j$ such
  that $u$ received $\VOTE(b_j,\false)$ from each member of $U_j$. Without loss of
  generality, we fix each $U_j$ to be the union of the quorums at the leaves of the
  ready-tree for Boolean $\false$ and ballot $b_j$ at node $u$, which is itself a quorum
  since quorums are closed by union. What follows mimics the argument used in the previous
  paragraph in order to show that some node exists that sent enough vote messages to
  prepare $b_1$, and not so many as to prepare a ballot bigger than $b_1$. Since the
  number of quorums to which $u$ belongs is finite and by BNS, there exists a quorum $U_V$
  among the $U_j$ such that every member of $U_V$ sent a batch containing a message
  $\VOTE(b_\ell,\false)$ for each $b_\ell$ in some right-open interval $[b',b_1)$ with
  $b' < b_1$, where the ballot $b'$ is determined by the messages sent by $u$ before
  sending the batch that contains the ready messages described in the paragraph above. If
  $u$ had sent $\READY(b_k,\false)$ for every ballot $b_k\lc b_1$ before sending the batch
  described in the paragraph above, or if $u$ had received from each member of $U_V$ a
  message $\VOTE(b_k,\false)$ for each ballot $b_k\lc b_1$ such that $u$ did not send
  $\READY(b_k,\false)$ before sending the batch described in the paragraph above, then the
  ballot $b_0$ would not exist by the definition of ready-tree, by BNS, and by
  lines~\ref{lin:deliver}--\ref{lin:send-deliver} of Algorithm~\ref{alg:broadcast}, which
  would contradict the facts established in the paragraph below. Therefore, there exists a
  ballot $b_0'\lc b_1$ and a node $v_2\in U_V$ such that $u$ did not send
  $\READY(b_0',\false)$ before sending the batch described in the paragraph above, and
  such that $u$ received from $v_2$ a batch that contains a message $\VOTE(b_\ell,\false)$
  for each $b_\ell\in [b',b_1)$, but which does not contain the message
  $\VOTE(b_0',\false)$.

  By BNS and by line~\ref{lin:if-echoed} of Algorithm~\ref{alg:broadcast} and
  lines~\ref{lin:send-prepare} and \ref{lin:prepare-increased} of
  Algorithm~\ref{alg:abstract-federated-consensus}, in order for $v_2$ to send a batch
  containing the vote messages described in the previous paragraph, the node necessarily
  tried to prepare some ballot $b_2\geq b_1$ such that
  $b_2.\val\not\in\{b.\val \mid b\in [b',b_1)\}$---which results in $v_2$ triggering
  $\brs[b_i].\vote(b_i,\false)$ for each ballot $b_i\lic b_2$, thus triggering
  $\brs[b_\ell].\vote(b_\ell,\false)$ for each ballot $b_\ell \in [b',b_1)$---and either
  \begin{enumerate}[(i)]
  \item \label{it:b2-cl-b1} $b_1\lc b_2$, or otherwise
  \item \label{it:b2-cil-b1} $b_1\lic b_2$ and $v_2$ triggered
    $\brs[b_0'].\vote(b_0',\true)$ before triggering the $\brs[b_i].\vote(b_i,\false)$ for
    each $b_i\lic b_2$,
  \end{enumerate}
  which prevents that $v_2$ triggers $\brs[b_0'].\vote(b_0',\false)$.

  We show that (\ref{it:b2-cl-b1}) above always holds. Assume (\ref{it:b2-cil-b1}) towards
  a contradiction. Let us focus on the ballots ${b_3=\bllt{1}{b_1.\val}\lc b_1}$ and
  ${b_4=\bllt{1}{b_2.\val}\lc b_2}$. By BNS node $v_2$ triggered each
  $\brs[b_j].\vote(b_j,\true)$ before triggering any $\brs[b_i].\vote(b_i,\false)$ with
  $i\geq j$, and thus it triggered $\brs[b_3].\vote(b_3,\true)$ before triggering the
  $\brs[b_i].\vote(b_i,\false)$ with $b_i\geq b_3$ and $b_i\lic b_4$. Since $v_2$ triggers
  $\brs[b_3].\vote(b_3,\true)$, and by lines~\ref{lin:prepared}--\ref{lin:send-commit} of
  Algorithm~\ref{alg:abstract-federated-consensus}, the largest ballot $\b$ prepared by
  $v_2$ before triggering $\brs[b_3].\vote(b_3,\false)$ is such that $\b\leq b_3$. Since
  every node proposes some value when the execution of the protocol starts by the
  specification of consensus, the value $\b.\val$ is bigger or equal than the value
  proposed by $v_2$ by lines~\ref{lin:init-b}, \ref{lin:h-reached-b} and
  \ref{lin:assign-b} of Algorithm~\ref{alg:abstract-federated-consensus}. Since $v_2$
  triggered $\brs[b_i].\vote(b_i,\false)$ for each $b_i\lic b_4$, and by BNS and by
  line~\ref{lin:send-prepare} of Algorithm~\ref{alg:abstract-federated-consensus}, node
  $v_2$ proposed a value bigger or equal than $b_4.\val$. But since $\b\leq b_3 < b_4$,
  this contradicts the fact we established earlier that $\b.\val$ is bigger or equal than
  the value proposed by $v_2$. Therefore, we know that $b_2\cl b_1$.

  Now we distinguish the following cases. If $b_2.\round = 1$ then, by
  line~\ref{lin:send-prepare} of Algorithm~\ref{alg:abstract-federated-consensus}, node
  $v_2$ proposed $b_2.\val=b_1.\val$ and the lemma holds. If $b_2.\round > 1$ then, by BNS
  and by lines~\ref{lin:timeout}--\ref{lin:prepare-increased} of
  Algorithm~\ref{alg:abstract-federated-consensus}, node $v_2$ triggered a timeout event
  when its current round was $b_2.\round - 1$. If $v_2$ never prepared any ballot when
  this timeout expired, then by line~\ref{lin:increase-candidate} of
  Algorithm~\ref{alg:abstract-federated-consensus} the node proposed $b_2.\val$ and the
  lemma holds. Otherwise, by line~\ref{lin:increase-prepared} of
  Algorithm~\ref{alg:abstract-federated-consensus} the value $b_2.\val$ is equal to
  $h.\val$ where $h$ is the largest ballot prepared by $v_2$ when that timeout expired,
  and the required holds by applying the induction hypothesis.
\end{proof}

Recall from \S\ref{sec:stellar-broadcast} the \emph{window for intact set $I$ of round
  $n$}, which is the interval of duration $\delta_I$ in which every node in $I$ that has
not decided any value yet prepares a ballot of round $n$.


Now we introduce some additionally terminology for the proofs in the remainder of the
appendix. Assume all faulty nodes have stopped, and consider the window for intact set $I$
of round $n$, where we let $v_n$ be the first node in $I$ that prepares a ballot $b_n$
with round $n$. By line~\ref{lin:prepared} of
Algorithm~\ref{alg:abstract-federated-consensus} and
lines~\ref{lin:deliver}--\ref{lin:send-deliver} of Algorithm~\ref{alg:broadcast}, for each
ballot $b_i\lic b_n$ there exists a quorum $U_i$ such that node $v_n$ received
$\READY(b_i,\false)$ from each member of $U_i$. We call the \emph{prepare-footprint for
  intact set $I$ of round $n$}, written $P_I^n$, to the set computed as follows:
\begin{itemize}
\item Let $v_n\in I$ be the first node in $I$ that prepares a ballot $b_n$ with round $n$.
\item Let $h$ be the highest ballot prepared by $v_n$ before preparing $b_n$.
\item Compute the set of ballots that $v_n$ needs to abort in order to prepare $b_n$ and
  that $v_n$ did not abort previously. This is, take the ballots $b_j\lic b_n$ such that
  either $b_j > h$, or $b_j\lc h$ and $v_n$ never aborted $b_j$ yet.
\item For each $b_j$ in the set computed in the previous step, and for each member
  $u\in U_j$, compute the ready-tree of Boolean $\false$ and ballot $b_j$ at node $u$.
\item Let $P_I^n$ be the union of the leaves of all the ready-trees computed in the previous
  step.
\end{itemize}

Since the leaves of any ready-tree are quorums, and since quorums are closed under union,
the prepare-footprint $P_I^n$ is a quorum. By the definition of ready-tree, node $v_n$
prepares $b_n$ because of the vote messages sent by each member of $P_I^n$.

For each window for intact set $I$ and round $n$, we consider the \emph{abort-interval for
  intact set $I$ of round $n$}, which is the interval in which the nodes in $P_I^n$ send
the vote messages needed for $v_n$ to prepare $b_n$. We write $\delta_I^{An}$ for the
\emph{duration of the abort-interval for intact set $I$ of round $n$}.

Since the prepare-footprint $P_I^n$ of any round $n$ in any execution of \AFCP\ is finite,
and since messages arrive in bounded time after GST expires, after the last node in the
prepare footprint $P_I^n$ sends its messages, the first node $v_n\in I$ that prepares a
ballot $b_n$ with round $n$ does so in bounded time. We write $\delta_I^{Pn}$ for the
finite delay that $v_n$ takes in preparing $b_n$ after every node in $P_I^n$ has sent
their messages, once GST has expired. Since the number of nodes in the universe is finite,
all the $\delta_I^{Pn}$ with any round $n$ are bounded by some finite delay $\delta_I^P$.

From now on we may omit the `for intact set $I$' qualifier from the window, the
abort-interval and the prepare-footprint of round $n$ when the intact set is clear from
the context.

\begin{proof}[Proof of Lemma~\ref{lem:window-no-overlap}]
  If every node in $I$ decides a value before reaching some round $n$ then we are
  done. Otherwise, assume that GST has expired and let $\delta$ be the network delay after
  GST. Without loss of generality, assume that all faulty nodes have already stopped. Let
  $n_0$ be a round such that the abort-interval of round $n_0$ happens entirely after
  GST. (We accompany the proof with Figure~\ref{fig:intervals} to illustrate the intervals
  that we describe in the remainder.) After the abort-interval of $n_0$, the first node
  $v_{n_0}\in I$ to ever prepare a ballot $b_{n_0}$ with round $n_0$ does it less than
  $\delta_I^P$ time after the abort-interval of $n_0$ terminates. By \emph{Totality for
    intact nodes}, each node in $I$ that has not decided any value yet prepares some ballot
  with round $n_0$ during the window of round $n_0$, and thus by lines
  lines~\ref{lin:quorum-round}--\ref{lin:start-timer} of
  Algorithm~\ref{alg:abstract-federated-consensus} each node in $I$ that has not decided
  any value yet sets its timeout to $F(n_0)$ in the period of time between the beginning
  of the abort-interval of $n_0$ and the end of the window of $n_0$. (Remember the
  primitive $\starttimer$ has the effect of restarting the timeout if a node's local timer
  was already running.) This period of time has a duration bounded by
  $\delta_I^{An_0} + \delta_I^P + \delta_I$.

  Now consider any round $n\geq n_0$ and assume that
  $F(n)>\delta_I^{An} + \delta_I^P + \delta_I$. As shown in Figure~\ref{fig:intervals},
  the abort-interval of round $n+1$ happens entirely after the window of round $n$
  ends. Since the window of round $n+1$ can only happen once the nodes in the
  abort-interval of round $n+1$ have sent their messages, the window of round $n+1$ does
  not overlap the window of round $n$. And since the delay function $F$ doubles its value
  with each increment of the round, the same happens with any subsequent window
  thereafter. By the definition of window of round $n$, no node in $I$ ever prepares a
  ballot with round equal or bigger than $n$ before the window starts. Since faulty nodes
  are stopped, and by lines~\ref{lin:timeout}--\ref{lin:prepare-increased} of
  Algorithm~\ref{alg:abstract-federated-consensus}, no messages supporting to prepare a
  ballot of round $n+1$ are ever sent before the abort-interval of round $n+1$
  starts. Therefore, each window of round $n + 1$ with $n\geq n_0$ happens after the
  immediately preceding window of round $n$, and no two of such windows overlap with each
  other, and the lemma holds.\qedhere
\end{proof}

\begin{figure}[t]
  \centering

  \begin{tikzpicture}[scale=1]
    \draw [thick] (-2,0) -- (12,0);

    \draw [thin,dashed] (-.5,0) -- (-.5,-.6);
    \node [align=right,rotate=90] at (-.5,-3) {GST has expired and\\
      malicious nodes have stopped};

    \draw [thick,decorate,decoration={brace,amplitude=6pt,raise=0pt,mirror}]
    (0,-0.15) -- (2,-0.15);
    \node [align=center] at (1,-.6) {$(\delta_I^{An})$};
    \node[align=right,rotate=90] at (1,-2.8) {\textbf{abort-interval of $n$}\\
      (members of $P_I^n$\\send messages)};

    \draw [thin,dashed] (2,0) -- (2,-.6);

    \draw [thin,dashed,<->] (2,-.2) -- (3.5,-.2);
    \node [align=center] at (2.75,-.6) {$(\leq \delta_I^P)$};

    \draw [thin,dashed] (3.5,0) -- (3.5,-.6);
    \node [align=right,rotate=90] at (3.5,-1.9) {\textbf{$v_n$ prepares $b_n$}};

    \draw [line width=0.5mm,decorate,decoration={brace,amplitude=6pt,raise=0pt,mirror}]
    (3.5,-0.15) -- (5.5,-0.15);
    \node [align=center] at (4.5,-.6) {$(\leq \delta_I)$};
    \node[rectangle,draw = black,align=right,rotate=90] at (4.5,-3)
    {\textbf{window of $n$}\\
      (every node in $I$ prepares\\
      some ballot of round $n$)};

    \draw [thin,dashed] (.5,0) -- (.5,4.6);
    \draw [thick,decorate,decoration={brace,amplitude=6pt,raise=0pt}]
    (.5,.15) -- (5,.15);
    \node [align=center] at (2.75,.6) {$(\leq \delta_I^{An}+\delta_I^P+\delta_I)$};
    \node[align=left,rotate=90] at (2.75,2.5) {members of $P_I^{n+1}$\\set timeouts};
    \draw [thin,dashed] (5,0) -- (5,4.8);

    \draw [thin,dashed,<->] (.5,4.6) -- (6.5,4.6);
    \node [align=center] at (3.5,4.8) {$F(n)$};

    \draw [thin,dashed,<->] (5,4.8) -- (11,4.8);
    \node [align=center] at (8,5) {$F(n)$};

    \draw [thin,dashed] (6.5,0) -- (6.5,4.6);
    \draw [thick,decorate,decoration={brace,amplitude=6pt,raise=0pt}]
    (6.5,0.15) -- (11,0.15);
    \node[align=left,rotate=90] at (8.75,2.5) {\textbf{abort-interval of $n+1$}\\
      (members of $P_I^{n+1}$\\send messages)};
    \draw [thin,dashed] (11,0) -- (11,4.8);
  \end{tikzpicture}

  \caption{Window (boxed) and abort-interval of round $n$, assuming that
    $F(n) > \delta_I^{An}+\delta_I^P+\delta_I$.}
  \label{fig:intervals}
\end{figure}

\begin{corollary}\label{cor:b-max}
  Let $m\geq n$ and let $b_{\max}$ be the maximum ballot prepared by any node in $I$
  before the abort-interval of round $m+1$ starts. Every node in $I$ prepares $b_{\max}$
  before the abort-interval of round $m+1$ starts.
\end{corollary}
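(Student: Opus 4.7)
The plan is to combine Lemma~\ref{lem:window-no-overlap} with the bounded totality of federated voting (Lemma~\ref{lem:bounded-totality}). First, I would pin down the round of $b_{\max}$: by Lemma~\ref{lem:window-no-overlap}, every node in $I$ that has not yet decided prepares some ballot of round $m$ during the window of round $m$, so $b_{\max}.\round \geq m$; and the proof of the same lemma also establishes that no node in $I$ sends any message supporting preparation of a round-$(m+1)$ ballot before the abort-interval of round $m+1$ starts, so $b_{\max}.\round \leq m$. Hence $b_{\max}$ has round exactly $m$, and was prepared at some time $t$ inside the window of round $m$.

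Next, I would fix a node $v_1\in I$ that prepares $b_{\max}$ at time $t$. By line~\ref{lin:prepared} of Algorithm~\ref{alg:abstract-federated-consensus}, $v_1$ has received the indication $\fvs[b'].\deliver(\false)$ for every $b'\lic b_{\max}$ by time $t$. Applying Lemma~\ref{lem:bounded-totality} to each such $b'$ (viewed as a distinct tag in an independent FV instance) yields that every $v_2\in I$ also receives $\fvs[b'].\deliver(\false)$ within $\delta_I$ time of $v_1$'s delivery; in particular, by time $t+\delta_I$ every node $v_2\in I$ has all the deliveries needed to fire the trigger for $b=b_{\max}$. The batched network semantics is what makes it legitimate to treat the possibly infinite collection of ballots $b'\lic b_{\max}$ as being delivered together.

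The step I would pay the most attention to is showing that firing the trigger at $v_2$ actually results in $v_2$ preparing $b_{\max}$ (and not some other ballot): the condition in line~\ref{lin:prepared} also requires $\h < b_{\max}$. But by the maximality of $b_{\max}$, no node in $I$ can have $\h > b_{\max}$ at any time before the abort-interval of round $m+1$ starts, since $\h$ records a ballot the node has prepared. Thus the only possibilities are that $v_2$ has already prepared $b_{\max}$ or that the event fires and sets $\h \leftarrow b_{\max}$. Either way, $v_2$ prepares $b_{\max}$.

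Finally, I would close the timing argument using the choice of $n$ in Lemma~\ref{lem:window-no-overlap}: for $m\geq n$, the delay $F(m)$ exceeds $\delta_I^{An}+\delta_I^P+\delta_I$, which places the abort-interval of round $m+1$ strictly later than time $t + \delta_I$ (the latter being at most $2\delta_I$ past the start of the window of round $m$). The hard part is the bookkeeping between two different notions of ``bounded time''—the per-tag $\delta_I$ provided by federated voting and the global timer delay $F(m)$—but this is exactly what Lemma~\ref{lem:window-no-overlap} was engineered to reconcile, so no new obstacle arises beyond invoking it.
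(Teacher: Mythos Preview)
Your proposal is correct and follows essentially the same approach as the paper: invoke Lemma~\ref{lem:bounded-totality} to propagate the preparation of $b_{\max}$ from one node in $I$ to all of $I$ within $\delta_I$, then use the inequality $F(m)>\delta_I^{Am}+\delta_I^P+\delta_I$ from Lemma~\ref{lem:window-no-overlap} to conclude this happens before the abort-interval of round $m+1$. The paper's proof is a terse three sentences that does exactly this; your additional steps (pinning down $b_{\max}.\round=m$ and checking the guard $\h<b_{\max}$) are sound but not needed for the argument.
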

\begin{proof}
  Since the set-timeout interval is of length at most $\delta$, ballot $b_{\max}$ is
  prepared by some node in $I$ within $\delta+\delta_I^P$ time after the set-timeout
  interval of round $m$ starts. By Lemma~\ref{lem:bounded-totality}, every node in $I$
  prepares $b_{\max}$ within $\delta+\delta_I^P+\delta_I$ time after the set-timeout
  interval of round $m$ starts. The required holds since
  $F(m)>\delta+\delta_I^P+\delta_I$.
\end{proof}

\begin{proof}[Proof of Theorem~\ref{thm:non-blocking-byzantine-consensu-intact-sets}]
  We prove that \AFCP\ over $\Ss$ enjoys each of the properties that define the
  specification of \emph{non-blocking Byzantine consensus for intact sets}.
  \begin{description}
  \item \emph{Integrity:} Straightforward by definition since each node $v$ stops
    execution once $v$ decides some value.
  \item \emph{Agreement for intact sets:} Assume towards a contradiction that two nodes
    $v_1$ and $v_2$ in $I$ decide respectively through ballots $b_1$ and $b_2$ such that
    $b_1.\val \not= b_2.\val$.  Without loss of generality, we assume $b_1< b_2$. By
    line~\ref{lin:decided} of Algorithm~\ref{alg:abstract-federated-consensus}, nodes
    $v_1$ and $v_2$ respectively committed ballots $b_1$ and $b_2$. Since $v_1$ has
    committed ballot $b_1$ and by lines~\ref{lin:deliver}--\ref{lin:send-deliver} of
    Algorithm~\ref{alg:broadcast}, node $v_1$ received $\READY(b_1,\true)$ from a quorum
    to which $v_1$ belongs. And since $v_2$ has committed ballot $b_2$ and by
    Lemma~\ref{lem:prepared-before-commit-intact-set}, we know that some correct node
    $v'\in I$ has prepared $b_2$. But this results in a contradiction by
    Lemma~\ref{lem:ready-commit-prepare-intact-set} and we are done.
  \item \emph{Weak validity for intact sets:} Assume all nodes are honest. To prove the
    first part of the property, let every node propose value $x$. Now assume towards a
    contradiction that a node $v_1\in I$ decides value $y\not=x$. By
    Lemma~\ref{lem:commit-largest-prepared-intact-set}, the largest ballot $b_2$ prepared
    by any node $v_2\in I$ before $v_1$ decides $y$ is such that $b_2.\val=y$. By
    Lemma~\ref{lem:prepared-propose-intact-set}, there exists a node that proposed value
    $y$. But this contradicts our assumption that every node proposed value $x$.

    To prove the second part of the property, let a node $v_1\in I$ decide value $x$. By
    Lemma~\ref{lem:commit-largest-prepared-intact-set}, the largest ballot $b$ prepared by
    any node $v_2\in I$ before $v_1$ decides $x$ is such that $b.x=x$. By
    Lemma~\ref{lem:prepared-propose-intact-set}, there exists a node that proposed value
    $x$, and we are done. \al{Check the proof of validity with the commented lemma above.}
  \item \emph{Non-blocking for intact sets:} Assume all faulty nodes eventually stop. If
    some node in $I$ decides some value, then by Lemma~\ref{lem:bounded-totality} and by
    \emph{Agreement for intact sets} every other node in $I$ will decide the same value
    within bounded time. Without loss of generality, assume that no node in $I$ has
    decided any value and that GST has expires and every malicious node has stopped. By
    Lemma~\ref{lem:prepared-before-commit-intact-set}, no node can ever send a
    $\READY(b,\true)$ message for a ballot $b$ below and incompatible than any ballot that
    is already prepared, and thus no node can block itself by signalling that its
    willingness to commit a ballot that can no longer be committed. By
    Lemma~\ref{lem:window-no-overlap} there exists a round $n$ such that any two windows
    of rounds bigger or equal than $n$ never overlap. Without loss of generality, assume
    that no node in $i$ has decided any valued before the window of round $n$. By
    Corollary~\ref{cor:b-max}, for every window of round $m$ bigger or equal than $n$,
    every node in $I$ that has not decided any value yet prepares the same ballot
    $b_{\max}$ before the abort-interval of round $m+1$ starts. If every node in $I$
    updates its candidate ballot to $b_{\max}$ before the abort-interval of round $m+1$
    starts, then every node in $I$ will try to commit $b_{\max}$ by
    lines~\ref{lin:prepared}--\ref{lin:send-commit} of
    Algorithm~\ref{alg:abstract-federated-consensus}, and they all will decide value
    $b_{\max}.\val$ in bounded time by lines~\ref{lin:echo}--\ref{lin:send-echo},
    \ref{lin:ready}--\ref{lin:send-ready} and \ref{lin:deliver}--\ref{lin:send-deliver} of
    Algorithm~\ref{alg:broadcast}, and lines~\ref{lin:decided}--\ref{lin:send-decided} of
    Algorithm~\ref{alg:abstract-federated-consensus}. Otherwise, every node in $I$ will
    update its candidate ballot to $\bllt{m+1}{b_{\max}.\val}$ in the abort-interval of
    round $m+1$, and all will try to commit $\bllt{m+1}{b_{\max}.\val}$ and decide value
    $b_{\max}.\val$ in bounded time for reasons similar to the ones above.\qedhere
  \end{description}
\end{proof}

Now we introduce a notation for \emph{sequence comprehension} that we will use intensively
in the remainder of the appendices. We write $[{\sf el}_1,\ldots,{\sf el}_m]$ for a
sequence of elements, where each element ${\sf el}_i$ is a ballot, an event or a
message. We write $\emptrace$ for the \emph{empty sequence} and $\tau_1\concat\tau_2$ for
the \emph{concatenation of sequences}. The notation $[{\sf el}(b),\;P(b)]$ stands for the
sequence $[{\sf el}(b_1),\ldots,{\sf el}(b_m)]$ where each element ${\sf el}(b_i)$ depends
on a ballot $b_i$ that meets predicate $P$, and where the elements are ordered in
ascending ballot order, this is, ${\sf el}(b_i)$ occurs before ${\sf el}(b_j)$ in
$[{\sf el}(b),\;P(b)]$ iff $i<j$.

  \begin{landscape}
  \begin{figure}
    \centering
    \begingroup\SmallTabColSep{\small\begin{tabular}[t]{|l|l|l|l|}
    Node $v_1$ &  Node $v_2$  & {\color{red} Node $v_3$} & Node $v_4$ \\
    \hline\hline
    \begin{tabular}[t]{l}
      $\propose(\C)$\\
      $\voteBatch([b ,\, b\lic \bllt{1}{\C}],\false)$\\
      \hdashline
      $[\VOTE(b,\false) ,\; b\lic \bllt{1}{\C}]$\\
    \end{tabular}&
    \begin{tabular}[t]{l}
      $\propose(\C)$\\
      $\voteBatch([b ,\, b\lic \bllt{1}{\C}],\false)$\\
      \hdashline
      $[\VOTE(b,\false) ,\; b\lic \bllt{1}{\C}]$\\
    \end{tabular}&
    {\color{red}\begin{tabular}[t]{l}
                  $[\VOTE(b,\false) ,\; b\lic \bllt{1}{\B}]$
                \end{tabular}}&
    \begin{tabular}[t]{l}
      $\propose(\A)$\\
      $\voteBatch([b ,\, b\lic \bllt{1}{\A}],\false)$\\
      \hdashline
      $[\VOTE(\nbllt,\false)]$
    \end{tabular}\\
    \hline
    \begin{tabular}[t]{l}
      $\starttimer(F(1))$\\
      \hdashline
      $[\READY(b,\false),\; b\lic \bllt{1}{\B}]$
    \end{tabular}&
    \begin{tabular}[t]{l}
      $\starttimer(F(1))$\\
      \hdashline
      $[\READY(b,\false),\; b\lic \bllt{1}{\B}]$
    \end{tabular}&&
    \begin{tabular}[t]{l}
      $\starttimer(F(1))$\\
      \hdashline
      $[\READY(\nbllt,\false)]$
    \end{tabular}\\
    \hline
    \begin{tabular}[t]{l}
      $\deliverBatch([b ,\, b \lic \bllt{1}{\A}], \false)$
    \end{tabular}&
    \begin{tabular}[t]{l}
      $\deliverBatch([b ,\, b \lic \bllt{1}{\A}], \false)$
    \end{tabular}&&
    \begin{tabular}[t]{l}
      $\deliverBatch([b ,\, b \lic \bllt{1}{\A}], \false)$\\
      $\voteBatch([\bllt{1}{\A}], \true)$\\
      \hdashline
      $[\VOTE(\bllt{1}{\A},\true)]$\\
      $[\READY(\bllt{1}{\A},\false)]$
    \end{tabular}\\
    \hline
    \begin{tabular}[t]{l}
      $\deliverBatch([\bllt{1}{\A}], \false)$
    \end{tabular}&
    \begin{tabular}[t]{l}
      $\deliverBatch([\bllt{1}{\A}], \false)$
    \end{tabular}&&
    \begin{tabular}[t]{l}
      $\deliverBatch([\bllt{1}{\A}], \false)$\\
      $\voteBatch([\bllt{1}{\B}], \true)$\\
      \hdashline
      $[\VOTE(\bllt{1}{\B},\true)]$
    \end{tabular}\\
    \hline \qquad\vdots&\qquad\vdots&\qquad\vdots&\qquad\vdots\\
    \hline
    \begin{tabular}[t]{l}
      $\timeout$\\
      $\voteBatch([b\lic \bllt{2}{\B}], \false)$\\
      \hdashline
      $[\VOTE(b,\false),\; \bllt{1}{\C}\leq b\lic\bllt{2}{\B}]$
    \end{tabular}&
    \begin{tabular}[t]{l}
      $\timeout$\\
      $\voteBatch([b\lic \bllt{2}{\B}], \false)$\\
      \hdashline
      $[\VOTE(b,\false),\; \bllt{1}{\C}\leq b\lic\bllt{2}{\B}]$
    \end{tabular}&&
    \begin{tabular}[t]{l}
      $\timeout$\\
      $\voteBatch([b\lic \bllt{2}{\B}], \false)$\\
      \hdashline
      $[\VOTE(b,\false),\; \bllt{1}{\C}\leq b\lic\bllt{2}{\B}]$
    \end{tabular}\\
    \hline
    \begin{tabular}[t]{l}
      $\starttimer(F(2))$\\
      \hdashline
      $[\READY(b,\false),\; \bllt{1}{\C}\leq b\lic\bllt{2}{\B}]$
    \end{tabular}&
    \begin{tabular}[t]{l}
      $\starttimer(F(2))$\\
      \hdashline
      $[\READY(b,\false),\; \bllt{1}{\C}\leq b\lic\bllt{2}{\B}]$
    \end{tabular}&&
    \begin{tabular}[t]{l}
      $\starttimer(F(2))$\\
      \hdashline
      $[\READY(b,\false),\; \bllt{1}{\C}\leq b\lic\bllt{2}{\B}]$
    \end{tabular}\\
    \hline
    \begin{tabular}[t]{l}
      $\deliverBatch([b ,\, \bllt{1}{\C}\leq b\lic\bllt{2}{\B}], \false)$\\
      \hdashline
      $[\VOTE(\bllt{2}{\B},\true)]$
    \end{tabular}&
    \begin{tabular}[t]{l}
      $\deliverBatch([b ,\, \bllt{1}{\C}\leq b\lic\bllt{2}{\B}], \false)$\\
      \hdashline
      $[\VOTE(\bllt{2}{\B},\true)]$
    \end{tabular}&&
    \begin{tabular}[t]{l}
      $\deliverBatch([b ,\, \bllt{1}{\C}\leq b\lic\bllt{2}{\B}], \false)$\\
      \hdashline
      $[\VOTE(\bllt{2}{\B},\true)]$
    \end{tabular}\\
    \hline
    \begin{tabular}[t]{l}
      $[\READY(\bllt{2}{\B},\true)]$
    \end{tabular}&
    \begin{tabular}[t]{l}
      $[\READY(\bllt{2}{\B},\true)]$
    \end{tabular}&&
    \begin{tabular}[t]{l}
      $[\READY(\bllt{2}{\B},\true)]$
    \end{tabular}\\
    \hline
    \begin{tabular}[t]{l}
      $\deliverBatch([\bllt{2}{\B}], \true)$\\
      $\decide(\B)$
    \end{tabular}&
    \begin{tabular}[t]{l}
      $\deliverBatch([\bllt{2}{\B}], \true)$\\
      $\decide(\B)$
    \end{tabular}&&
    \begin{tabular}[t]{l}
      $\deliverBatch([\bllt{2}{\B}], \true)$\\
      $\decide(\B)$
    \end{tabular}\\
    \hline
    \end{tabular}}\endgroup
    \caption{Execution of \AFCP.}
    \label{fig:trace-abstract}
  \end{figure}
\end{landscape} %

\begin{example}\label{ex:abstract}
  Consider the FBQS from Example~\ref{ex:federated-voting} where the universe contains
  four nodes $v_1$ to $v_4$ and every set of three or more nodes is a quorum, and every
  set of two or more nodes is $v$-blocking for any $v\in \V$. Consider an execution of
  \AFCP\ where the node $v_3$ is faulty. The FBQS has the intact set $I=\{v_1,v_2,v_4\}$.

  We assume that the set of values coincides with $\Nat^+$, which we write in boldface. In
  the execution, nodes $v_1$ and $v_2$ propose value $\C$, and node $v_4$ propose value
  $\A$. Faulty node $v_3$ sends a batch containing the messages $\VOTE(\nbllt,\false)$ and
  $\VOTE(\bllt{1}{\A},\false)$ to every correct node, thus helping them to prepare ballot
  $\bllt{1}{\B}$. Since $\bllt{1}{\B}$ exceeds $v_4$'s candidate ballot $\bllt{1}{\A}$,
  node $v_4$ will try to commit both $\bllt{1}{\A}$ and $\bllt{1}{\B}$. However, neither
  of $v_1$ or $v_2$ will try to commit any ballot since $\bllt{1}{\B}$ is smaller than
  their candidate ballot $\bllt{1}{\C}$, and therefore no quorum exists that tries to
  commit a ballot. Consequently, the timeout at round $1$ of every correct node will
  expire, and since all of them managed to prepare $\bllt{1}{\B}$, they all will try to
  prepare the increased ballot $\bllt{2}{\B}$, and will ultimately commit that ballot and
  decide value $\B$. Notice that value $\B$ was not proposed by any correct node, but
  nevertheless all of them agree on the same decision. To the nodes in $I$, node $v_3$
  being faulty is indistinguishable from the situation where node $v_3$ is correct but
  slow, and it proposes $\B$. Therefore the nodes in $I$ cannot detect whether the decided
  value was proposed by some node in $I$ or not.

Figure~\ref{fig:trace-abstract} depicts the trace of the execution of \AFCP\ described
above. In each cell, we separate by a dashed line the events (above the line) that are
triggered atomically, if any, from the batches of messages (below the line) that are sent
by the node, if any. By BNS, the sending of every batch happens atomically with the events
above the dashed line. At each cell, a node has received every batch in the rows above
it. (For convenience, above the dashed line, we depict `batched' events $\voteBatch$ and
$\deliverBatch$, which are defined in \S\ref{sec:concrete-federated-consensus}. Under the
dashed line, we save the `batched' send and receive primitives, and we depict one batch of
messages per line.)

In the first row of Figure~\ref{fig:trace-abstract}, the correct nodes $v_1$, $v_2$ and
$v_4$ try to prepare the ballots that they propose
(lines~\ref{lin:propose}--\ref{lin:send-prepare} of
Algorithm~\ref{alg:abstract-federated-consensus} and
lines~\ref{lin:echo}--\ref{lin:send-echo} of Algorithm~\ref{alg:broadcast}), which results
in each of the $v_1$, $v_2$ and $v_4$ sending a $\VOTE(b,\false)$ message for each
$b\lic \bllt{1}{x}$, where $x$ is respectively $\C$, $\C$ and $\A$. (Faulty node $v_3$
sends a $\VOTE(b,\false)$ message for each $b\lic \bllt{1}{\B}$.) Notice the use of the
sequence comprehension notation to denote sequences of events triggered in a cell, as well
as sequences of messages in a batch. To wit, node $v_1$ triggers $\propose(\C)$ followed
by the batched event $\voteBatch([b,\, b\lic \bllt{1}{\C}],\false)$, which stands for
\begin{align*}
  &[\fvs[\nbllt].\vote(\nbllt,\false),\, \fvs[\bllt{1}{\A}].\vote(\bllt{1}{\A},\false),\\
  &\fvs[\bllt{1}{\B}].\vote(\bllt{1}{\B},\false)],
\end{align*}
and it sends a batch with the sequence of messages
$[\VOTE(b,\false),\,b\lic\bllt{1}{\C}]$, which stands for
\begin{displaymath}
  [\VOTE(\nbllt,\false),\,\VOTE(\bllt{1}{\A},\false),\,\VOTE(\bllt{1}{\B},\false)].
\end{displaymath}

In the second row of Figure~\ref{fig:trace-abstract}, nodes $v_1$, $v_2$, and $v_4$ start
the timer with delay $F(1)$, since there exist ballot $\bllt{1}{\A}$ and open interval
$[\nbllt,\bllt{1}{\A})$ such that the quorum $\{v_1,v_2,v_4\}$ receives from itself a
message $\VOTE(\nbllt,\false)$, and $[\nbllt,\bllt{1}{\A})$ is the singleton containing
the null ballot $\nbllt$ (lines~\ref{lin:quorum-round}--\ref{lin:start-timer} of
Algorithm~\ref{alg:abstract-federated-consensus}). This means that all correct nodes
receive from themselves vote messages that support preparing ballots with rounds bigger or
equal than $1$. In addition to this, nodes $v_1$ and $v_2$ send the batch
$[\READY(b,\false),\,b\lic \bllt{1}{\B}]$, since they receive a message $\VOTE(b,\false)$
for each $b\lic \bllt{1}{\B}$ from the quorum $\{v_1,v_2,v_3\}$, to which they belong
(lines~\ref{lin:ready}--\ref{lin:send-ready} of Algorithm~\ref{alg:broadcast}). And
similarly, node $v_4$ sends a $\READY(\nbllt,\false)$, since it receives the message
$\VOTE(\nbllt,\false)$ from all nodes, which constitute a quorum to which $v_4$
belongs. Notice that node $v_4$ cannot send $\READY(\bllt{1}{\A},\false)$ because no
quorum to which $v_4$ belongs exists that sends $\VOTE(\bllt{1}{\A},\false)$.

In the third row of of Figure~\ref{fig:trace-abstract}, nodes $v_1$, $v_2$ and $v_4$
deliver $\false$ for ballot $\bllt{1}{\A}$, since they receive the message
$\READY(\nbllt,\false)$ from the quorum $\{v_1,v_2,v_4\}$ to which they all belong
(lines~\ref{lin:deliver}--\ref{lin:send-deliver} of Algorithm~\ref{alg:broadcast}), which
results in each of those nodes preparing ballot $\bllt{1}{\A}$ and triggering
lines~\ref{lin:prepared}--\ref{lin:send-commit} of
Algorithm~\ref{alg:abstract-federated-consensus}. Since the prepared ballot $\bllt{1}{\A}$
reaches $v_4$'s candidate ballot, then $v_4$ triggers the batched event
$\voteBatch([\bllt{1}{\A}],\true)$ and prepares a batch with the message
$\VOTE(\bllt{1}{\A},\true)$ that it will send later
(lines~\ref{lin:prepared}--\ref{lin:send-commit} of
Algorithm~\ref{alg:abstract-federated-consensus} and
lines~\ref{lin:echo}--\ref{lin:send-echo} of Algorithm~\ref{alg:broadcast}). In addition
to this, node $v_4$ also prepares a batch with the message $\READY(\bllt{1}{\A},\false)$
that it will also send later, since it receives $\READY(\bllt{1}{\A},\false)$ from the
$v_4$-blocking set $\{v_1,v_2\}$ (lines~\ref{lin:v-blocking}--\ref{lin:send-v-blocking} of
Algorithm~\ref{alg:broadcast}). Recall that the rule in
lines~\ref{lin:v-blocking}--\ref{lin:send-v-blocking} of Algorithm~\ref{alg:broadcast}
allows a node to send a ready message with some Boolean even if the node previously voted
a different Boolean for the same ballot. Finally, node $v_4$ sends the two batches
prepared before atomically.

In the fourth row of Figure~\ref{fig:trace-abstract}, nodes $v_1$, $v_2$ and $v_4$ deliver
$\false$ for ballot $\bllt{1}{\B}$, since they receive a message $\READY(b,\false)$ for
each $b\lic \bllt{1}{\B}$ from the quorum $\{v_1,v_2,v_4\}$ to which they all belong
(lines~\ref{lin:deliver}--\ref{lin:send-deliver} of Algorithm~\ref{alg:broadcast}), which
results in each of those nodes preparing ballot $\bllt{1}{\B}$ and triggering
lines~\ref{lin:prepared}--\ref{lin:send-commit} of
Algorithm~\ref{alg:abstract-federated-consensus}. Since the prepared ballot $\bllt{1}{\B}$
exceeds $v_4$'s candidate ballot, then $v_4$ updates its candidate ballot to
$\bllt{1}{\B}$ and triggers $\voteBatch([\bllt{1}{\B}], \true)$, which results in $v_4$
sending $\VOTE(\bllt{1}{\B},\true)$ (lines~\ref{lin:prepared}--\ref{lin:send-commit} of
Algorithm~\ref{alg:abstract-federated-consensus} and
lines~\ref{lin:echo}--\ref{lin:send-echo} of Algorithm~\ref{alg:broadcast}).

At this point no node can decide any value, because there exists not any ballot such that
a quorum of nodes votes $\true$ for it, and the timeouts of all correct nodes will expire
after $F(1)$ time.

In the sixth row of Figure~\ref{fig:trace-abstract}, nodes $v_1$, $v_2$ and $v_4$ trigger
$\timeout$, and since they all prepared ballot $\bllt{1}{\B}$, they update their candidate
ballot to $\bllt{2}{\B}$ and trigger the batched event
$\voteBatch.([b,\, b\lic \bllt{2}{\B}],\false)$
(lines~\ref{lin:timeout}--\ref{lin:increase-prepared} of
Algorithm~\ref{alg:abstract-federated-consensus}). Nodes $v_1$, $v_2$ and $v_4$ send the
batch $[\VOTE(\bllt{2}{\B},\false),\,\bllt{1}{\C}\leq b \lic \bllt{2}{\B}]$, which
contains infinitely many messages that are sent at once by BNS.

In the seventh row of Figure~\ref{fig:trace-abstract}, nodes $v_1$, $v_2$ and $v_4$ start
the timer with delay $F(2)$, since there exist ballot $\bllt{2}{\B}$ and open interval
$[\bllt{1}{\B},\bllt{2}{\B})$ such that the quorum $\{v_1,v_2,v_4\}$ receives from itself
the infinitely many messages $\VOTE(b,\false)$ with $b\in [\bllt{1}{\B},\bllt{2}{\B})$
(lines~\ref{lin:quorum-round}--\ref{lin:start-timer} of
Algorithm~\ref{alg:abstract-federated-consensus}), which are received at once by BNS. This
means that all correct nodes receive from themselves vote messages that support preparing
ballots with rounds bigger or equal than $2$. Then, nodes $v_1$, $v_2$ and $v_4$ send the
batch $[\READY(b,\false),\,\bllt{1}{\C}\leq b\lic \bllt{2}{\B}]$, since they receive a
message $\VOTE(b,\false)$ for each $b$ such that $\bllt{1}{\C}\leq b\lic \bllt{2}{\B}$
from the quorum $\{v_1,v_2,v_3\}$ to which they belong
(lines~\ref{lin:ready}--\ref{lin:send-ready} of Algorithm~\ref{alg:broadcast}). The batch
contains infinitely many messages, which are sent at once by BNS.

In the eight row of Figure~\ref{fig:trace-abstract}, nodes $v_1$, $v_2$ and $v_4$ trigger
$\voteBatch(b,\, \bllt{1}{\C}\leq b\lic\bllt{2}{\B}],\false)$, which
stands for a vote $\false$ for each~$b$ below and incompatible than $\bllt{2}{\B}$ for
which the node didn't vote any Boolean yet, since they receive a message
$\READY(b,\false)$ for each of such $b$'s from the quorum $\{v_1,v_2,v_4\}$ to which they
all belong (lines~\ref{lin:deliver}--\ref{lin:send-deliver} of
Algorithm~\ref{alg:broadcast}). Since the prepared ballot $\bllt{2}{\B}$ reaches the
candidate ballot of all correct nodes, they trigger the event
$\voteBatch([\bllt{2}{\B}], \true)$ and send a $\VOTE(\bllt{2}{\B},\true)$
(lines~\ref{lin:prepared}--\ref{lin:send-commit} of
Algorithm~\ref{alg:abstract-federated-consensus} and
lines~\ref{lin:echo}--\ref{lin:send-echo} of Algorithm~\ref{alg:broadcast}).

In the ninth row of Figure~\ref{fig:trace-abstract}, nodes $v_1$, $v_2$ and $v_4$ send the
batch $[\READY(\bllt{2}{\B},\true)]$, since they all received $\VOTE(\bllt{2}{\B},\true)$
from the quorum $\{v_1,v_2,v_4\}$ to which all belong
(lines~\ref{lin:ready}--\ref{lin:send-ready} of Algorithm~\ref{alg:broadcast}).

Finally, in the tenth row of Figure~\ref{fig:trace-abstract}, nodes $v_1$, $v_2$ and $v_4$
trigger $\deliverBatch([\bllt{2}{\B}], \true)$, since they all received
$\READY(\bllt{2}{\B},\true)$ from the quorum $\{v_1,v_2,v_4\}$ to which all belong
(lines~\ref{lin:deliver}--\ref{lin:send-deliver} of Algorithm~\ref{alg:broadcast}), and
they all decide value $\B$ and end the execution.
\end{example}


\section{Proofs in \S\ref{sec:concrete-federated-consensus}}
\label{ap:correctness-concrete}

\begin{definition} \label{def:maptrace} %
  We define $\fh(\tau) = max\{b \mid \forall b' \lic b.~v.b'.\deliver(\false) \in \tau\}$
  and let $\transmit \in \{\send,\receive\}$ and $\M\in\{\VOTE,\READY\}$.
  \begin{displaymath}
    \begin{array}{rcl}
      \maptrace(\emptrace) &=& \emptrace\\

      \maptrace(\tau\concat [v.\prepare(b)])
                           &=& \maptrace(\tau) \concat
                               v.\voteBatch([b',\; b' \lic b], \false)\\

      \maptrace(\tau\concat [v.\commit(b)])
                           &=& \maptrace(\tau) \concat
                               v.\voteBatch([b',\;
                               \fh(\maptrace(\tau)) < b' \leq b], \true)\\

      \maptrace(\tau\concat [v.\prepared(b)])
                           &=&\maptrace(\tau) \concat
                               v.\deliverBatch([b' ,\; b' \lic b \\
                           &&\qquad {}
                               \land \forall v.\deliverBatch(bs) \in \maptrace(\tau).\, (b', \false) \not\in bs ], \false)\\

      \maptrace(\tau\concat [v.\committed(b)])
                           &=& \maptrace(\tau) \concat \deliverBatch([b], \true) \\

      \maptrace(\tau\concat [v.\transmit(\VOTE(\PREP\ b),u)])
                           &=& \maptrace(\tau) \concat
                               v.\transmitBatch([\M(b', \false),\; b' \lic b \\
                           &&\qquad {} \land {\forall a \in \Bool.\,
                              \forall v.\transmitBatch(ms,u) \in \maptrace(\tau).\,
                              \M(b', a) \not\in ms}], u)\\

      \maptrace(\tau\concat [v.\transmit(\READY(\PREP\ b),u)])
                           &=& \maptrace(\tau) \concat
                               v.\transmitBatch([\M(b', \false),\; b' \lic b \\
                           &&\qquad {} \land \forall v.\transmitBatch(ms,u) \in
\maptrace(\tau).\, \M(b', \false) \not\in ms], u)\\

      \maptrace(\tau\concat [v.\transmit(\VOTE(\CMT\ b),u)])
                           &=& \maptrace(\tau) \concat
                               v.\transmitBatch([\VOTE(b', \true),\; \fh(\maptrace(\tau)) < b' \leq b ], u)\\

      \maptrace(\tau\concat [v.\transmit(\READY(\CMT\ b),u)])
                           &=& \maptrace(\tau) \concat v.\transmitBatch([\READY(b, \true)], u)\\

      \maptrace(\tau\concat [e])
                           &=& \maptrace(\tau) \concat [e]\qquad\text{otherwise}
    \end{array}
  \end{displaymath}
\end{definition}



\begin{lemma}\label{lem:coincide-consensus-state} %
  Let $\Ss$ be an FBQS with some intact set $I$, $v$ be a node with $v \in I$, and $\tau$
  be a trace entailed by an execution of \CFCP. If $\maptrace(\tau)$ is a trace entailed
  by an execution of \AFCP, then $v.\n$, $v.\h$, and $v.\b$ coincide in both executions.
\end{lemma}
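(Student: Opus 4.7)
The plan is to prove this lemma by induction on the length of the concrete trace $\tau$. In the base case $\tau = \emptrace$, we have $\maptrace(\emptrace) = \emptrace$, and both Algorithm~\ref{alg:abstract-federated-consensus} and Algorithm~\ref{alg:concrete-federated-consensus} initialise $\n \leftarrow 0$, $\h, \b \leftarrow \nbllt$ at lines~\ref{lin:candidate-highest-init}--\ref{lin:current-round-init} and \ref{lin:pbd-candidate-highest-init}--\ref{lin:pbd-current-round-init}, so the three fields trivially coincide. For the inductive step, I would assume the claim for $\tau$ and consider each possible event $e$ appended to $\tau$, performing a case analysis driven by the definition of $\maptrace$ in Definition~\ref{def:maptrace}.

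The easy cases are the events that $\maptrace$ transports unchanged and those that map to $\voteBatch$, $\sendBatch$, or $\receiveBatch$ events at the abstract level: none of $v.\prepare(b)$, $v.\commit(b)$, $v.\send(\cdot)$, $v.\receive(\cdot)$, $v.\starttimer(n)$ nor their abstract counterparts alter $\n$, $\h$, or $\b$, so the invariant is preserved. For $v.\propose(x)$ and $v.\timeout$, the outer code of both algorithms updates the three fields through textually identical assignments (lines~\ref{lin:init-b} vs.~\ref{lin:pbd-init-b}, and \ref{lin:increase-candidate}--\ref{lin:increase-prepared} vs.~\ref{lin:pbd-increase-candidate}--\ref{lin:pbd-increase-round}), so by induction hypothesis the post-states coincide.

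The two load-bearing cases are $v.\prepared(b)$ and the quorum-round event in lines~\ref{lin:pbd-quorum-round}/\ref{lin:quorum-round}. For $v.\prepared(b)$, the concrete update in lines~\ref{lin:pbd-prepared}--\ref{lin:pbd-send-commit} fires exactly when $\h < b$ and sets $\h \leftarrow b$, possibly aligning $\b$ and invoking $\commit(\b)$. The simulation maps this event to $v.\deliverBatch([b' ,\; b' \lic b \land \text{not previously delivered}], \false)$. I would argue that after this batch is processed, the guard of line~\ref{lin:prepared} of Algorithm~\ref{alg:abstract-federated-consensus} becomes enabled for the same maximal $b$, because the batch delivers $\false$ for every ballot below-and-incompatible than $b$ that was still missing. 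The abstract rule then performs the matching assignments to $\h$ and possibly $\b$, and the subsequent $\vote(\true)$ on line~\ref{lin:send-commit} corresponds to the $\commit(\b)$ in the concrete execution (whose own image under $\maptrace$ is a $\voteBatch$ of $\true$). For the quorum-round event, I would observe that $\maptrace$ preserves the messages modulo the $\VOTE/\READY$-$\PREP/\CMT$ encoding, so the same quorum $U$ and same ballots $b_u$ witness the abstract condition, yielding the same $\n \leftarrow \min\{b_u.\round \mid u \in U\}$.

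The main obstacle I anticipate is the $\prepared$ case: ensuring the simulation's batched deliver actually enables exactly one firing of the abstract rule at line~\ref{lin:prepared} with the same $b$ as in the concrete trace. This requires a side invariant that the set of ballots $b'$ for which $v.\fvs[b'].\deliver(\false)$ has already been triggered in $\maptrace(\tau)$ is precisely $\{b' \mid b' \lic \hd^{\tau}\}$, where $\hd^\tau$ is the value of $\bunchedVoting$'s $\hd$ in the concrete execution after $\tau$. This monotonicity invariant, which follows from the ``$\mathbf{maximum}\ b$'' guards at line~\ref{lin:b-prepared} of Algorithm~\ref{alg:bunch-voting} and from the filter $(b', \false) \not\in bs$ in the definition of $\maptrace$, guarantees that the batch is non-empty iff $\hd$ strictly increases, so the abstract guard fires exactly once with the intended $b$.
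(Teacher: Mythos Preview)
Your proposal is correct and follows essentially the same approach as the paper: induction on the length of $\tau$, with a case analysis on the last event, observing that the state-updating code in Algorithms~\ref{alg:abstract-federated-consensus} and~\ref{alg:concrete-federated-consensus} is textually identical and that $\maptrace$ preserves the information needed to fire the corresponding guards. One small slip: you list $v.\starttimer(n)$ among the ``easy'' events that do not alter $\n$, but in the paper's trace model this event represents the whole when-block at lines~\ref{lin:pbd-quorum-round}--\ref{lin:pbd-start-timer}, including the assignment $\n \leftarrow \min\{b_u.\round \mid u \in U\}$; there is no separate ``quorum-round event''. Your later paragraph handles the $\n$ update correctly, so this is only a labelling issue. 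Your final paragraph about the side invariant on delivered ballots is more careful than the paper's own proof, which simply leans on the hypothesis that $\maptrace(\tau)$ is already entailed by an abstract execution and does not spell out why the abstract guard at line~\ref{lin:prepared} fires with the same $b$.
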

\begin{proof}
  We prove the statement by induction on $\tau$. For the base case, it suffices to
  observe, that $\b$, $\h$, and $\n$ coincide when initialised in
  line~\ref{lin:pbd-candidate-highest-init} and \ref{lin:pbd-current-round-init} of
  Algorithm~\ref{alg:concrete-federated-consensus} and
  line~\ref{lin:candidate-highest-init} and \ref{lin:current-round-init} of
  Algorithm~\ref{alg:abstract-federated-consensus}.  For the step case
  $\tau = \tau' \concat e$ we consider only the interesting cases, where $\b$, $\h$, or
  $\n$ are modified in line~\ref{lin:pbd-init-b}, line~\ref{lin:pbd-assign-b},
  line~\ref{lin:pbd-update-current-round}, line~\ref{lin:pbd-increase-candidate}, and
  line~\ref{lin:pbd-increase-round} of
  Algorithm~\ref{alg:concrete-federated-consensus}. For the other events in the concrete
  trace~$\tau$, the fields are not modified and the statement holds.
  \begin{description}
  \item[Case $e = v.\propose(x)$:] %
    By definition $\maptrace(\tau)$ contains $v.\propose(x)$, and by
    line~\ref{lin:pbd-init-b} of Algorithm~\ref{alg:concrete-federated-consensus} and by
    line~\ref{lin:init-b} of Algorithm~\ref{alg:abstract-federated-consensus}, $\b$ coincides.
  \item[Case $e = \prepared(b)$:] %
    By definition $\maptrace(\tau)$ contains
    $v.\deliverBatch([b', b' \lic b], \false)$. By induction hypothesis $\h$
    coincide, and therefore $\h < b$. Then by line~\ref{lin:assign-h} of
    Algorithm~\ref{alg:concrete-federated-consensus} and by line~\ref{lin:pbd-assign-h} of
    Algorithm~\ref{alg:abstract-federated-consensus}, $\h$ coincides. Again, by induction
    hypothesis $\b$ coincides, and therefore $\b \leq \h$ coincides. If $\b \leq \h$ holds
    then by line~\ref{lin:assign-b} of Algorithm~\ref{alg:concrete-federated-consensus}
    and by line~\ref{lin:pbd-assign-b} of
    Algorithm~\ref{alg:abstract-federated-consensus}, $\b$ coincides.
  \item[Case $e = \starttimer(n)$:] %
    By line~\ref{lin:pbd-quorum-round} of Algorithm~\ref{alg:concrete-federated-consensus}
    $\tau'$ contains $v.\receive(\M_u(\STA_u\ b_u), u)$ from $u$ with
    $\STA_u\in \{\CMT,\PREP\}$ for a $U\in\mathcal{Q}$ such that $v\in U$ and for each
    $u \in U$ exists $\M_u\in\{\VOTE,\READY\}$ and $b_u \in \Ballot$ such that
    $\n < b_u.\round$. %
    \begin{description}
    \item[Sub-case $\M_u(\PREP\ b_u)$.] By definition $\maptrace(\tau')$ contains a batch
      with $\M_u(b_u', \false)$ for every $b_u' \lic b_u$ and every $\M_u(\PREP\ b_u)$.
    \item[Sub-case $\M_u(\CMT\ b_u)$.] By definition $\maptrace(\tau')$ contains a batch
      with $\M_u(b_u, \true)$ for every $\M_u(\CMT\ b_u)$.
    \end{description}
    By induction hypothesis, $\n$ and therefore $\n < b_u.\round$ coincides. By
    line~\ref{lin:pbd-update-current-round} of Algorithm~\ref{alg:concrete-federated-consensus}
    and by line~\ref{lin:update-current-round} of
    Algorithm~\ref{alg:abstract-federated-consensus}, $\n$ coincides.
  \item[Case $e = \timeout$:] %
    By definition $\maptrace(\tau)$ contains $v.\timeout$, and by induction hypothesis
    $\b$, $\h$, and $\n$ coincide. Then by line~\ref{lin:pbd-increase-candidate} and
    \ref{lin:pbd-increase-round} of Algorithm~\ref{alg:concrete-federated-consensus} and
    line~\ref{lin:increase-candidate} and \ref{lin:increase-prepared} of
    Algorithm~\ref{alg:abstract-federated-consensus}, $\b$, $\h$, and $\n$ coincide.
  \end{description}
\end{proof}

\begin{lemma} \label{lem:upper-prep} %
  Let $\Ss$ be an FBQS with some intact set $I$, $v$ be a node with $v \in I$, and $\tau$
  be a trace entailed by an execution of \CFCP. Then for every ballot $b \in v.\Cd$
  (respectively, $b \in v.\Cr$) holds $b \leq v.\hd$ (respectively, $b \leq v.\hd$).
\end{lemma}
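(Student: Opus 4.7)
The plan is to proceed by structural induction on the length of the trace~$\tau$. In the base case of the empty trace, the sets $v.\Cr$ and $v.\Cd$ are both initialised to $\emptyset$ by line~\ref{lin:set-init} of Algorithm~\ref{alg:bunch-voting}, so the statement holds vacuously. For the inductive step I write $\tau = \tau' \concat e$, assume the statement for $\tau'$, and case-split on $e$. The only non-trivial cases are those in which $e$ updates $v.\hd$ (line~\ref{lin:hd}), adds a ballot $b$ to $v.\Cd$ (line~\ref{lin:delivered}), or adds a ballot $b$ to $v.\Cr$ (lines~\ref{lin:readied} or~\ref{lin:readied-v-blocking}). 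Updates to $v.\hd$ via line~\ref{lin:hd} are monotonic---the condition in line~\ref{lin:b-prepared} requires a strictly larger maximum---so the invariant for every $b$ already in $v.\Cd$ or $v.\Cr$ is preserved.

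For the addition of $b$ to $v.\Cd$ in line~\ref{lin:delivered}, the guard in line~\ref{lin:committed} requires $v$ to receive $\READY(\CMT\ b)$ from a quorum $U$ with $v \in U$, so $v$ itself must have sent $\READY(\CMT\ b)$ at an earlier moment. Such a send occurs only together with an addition of $b$ to $v.\Cr$ in lines~\ref{lin:send-ready-commit} or~\ref{lin:send-ready-commit-v-blocking}, so $b \in v.\Cr$ at that earlier moment; applying the inductive hypothesis and monotonicity of $v.\hd$ yields $b \leq v.\hd$ now. This reduces the proof to the $\Cr$ case.

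For the addition of $b$ to $v.\Cr$ via the quorum rule in line~\ref{lin:readied}, node $v$ belongs to the quorum of $\VOTE(\CMT\ b)$ senders, so $v$ itself sent $\VOTE(\CMT\ b)$ in line~\ref{lin:send-vote-commit}, which was triggered by an invocation of $\brs.\commit(b)$ with the guard $\he = b$ in line~\ref{lin:condition-commit}. In \CFCP the only invocation of $\brs.\commit(\b)$ is in line~\ref{lin:pbd-send-commit} of Algorithm~\ref{alg:concrete-federated-consensus}, which is reached only after the indication $\brs.\prepared(b)$ was triggered in line~\ref{lin:pbd-prepared}. In turn, $\brs.\prepared(b)$ is triggered only in line~\ref{lin:b-send-prepared}, immediately after $v.\hd$ was updated to $b$ in line~\ref{lin:hd}. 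Hence $v.\hd \geq b$ at the moment $\brs.\commit(b)$ was invoked, and monotonicity gives $v.\hd \geq b$ now.

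The main obstacle will be the v-blocking rule for $\Cr$ (line~\ref{lin:readied-v-blocking}), in which $v$ adds $b$ to $v.\Cr$ after receiving $\READY(\CMT\ b)$ from a $v$-blocking set $B$ \emph{without} having invoked $\brs.\commit(b)$ itself. To handle this I plan to exploit the intact-set hypothesis $v \in I$. By Lemma~\ref{lem:empty-I-not-v-blocking}, $B$ intersects $I$ at some node $v'$ that sent $\READY(\CMT\ b)$, and iterating this argument along the causal chain of $\READY(\CMT\ b)$ messages as in Lemma~\ref{lem:federated-intact-set-ready} forces the earliest node $v^{*} \in I$ to have used the quorum rule of line~\ref{lin:ready-commit}---hence to have invoked $\brs.\commit(b)$ and reached $\hd \geq b$. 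I then plan to transfer this to $v$ by combining the broadcast discipline of BV (so the enabling $\READY(\PREP\ \cdot)$ messages are sent to every node) with the auxiliary invariant $v.\he \geq v.\hr \geq v.\hd$ of BV (which I can establish by inspecting the guards of lines~\ref{lin:ready-prepare}, \ref{lin:ready-prepare-v-blocking} and~\ref{lin:b-prepared}), so that enough $\READY(\PREP\ \cdot)$ messages are visible at $v$ to push $v.\hd$ up to at least $b$ by line~\ref{lin:b-prepared}.
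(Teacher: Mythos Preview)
Your inductive approach is more explicit than the paper's terse contradiction argument, and your treatment of the quorum-rule cases is correct: tracing $\VOTE(\CMT\ b)$ back through $\commit(b)$ to $\prepared(b)$ and the assignment $\hd \leftarrow b$ is the right causal chain, and reducing the $\Cd$ case to the $\Cr$ case via $v$'s own $\READY(\CMT\ b)$ send is clean. The paper instead argues directly that $b\in v.\Cd$ (respectively $v.\Cr$) with $b > v.\hd$ forces $v$ to have sent both $\READY(\PREP\ b')$ with $b'<b$ and $\READY(\CMT\ b)$ to itself, and declares these ``contradicting messages'' for a correct node in $I$---a much shorter route that reasons only about $v$'s own sends.

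The $v$-blocking case for $\Cr$ has a genuine gap. You correctly locate some $v^{*}\in I$ with $v^{*}.\hd \geq b$ via the first-mover argument, but the proposed ``transfer'' to $v$ does not go through. Broadcast discipline only ensures that the $\READY(\PREP\ \cdot)$ messages enabling $v^{*}$'s update were \emph{sent} to every node; nothing forces $v$ to have received and processed them by the point in $\tau$ where $b$ enters $v.\Cr$, since asynchrony allows the $\READY(\CMT\ b)$ messages from the $v$-blocking set $B$ to arrive at $v$ first. Even granting receipt, the quorum witnessing $v^{*}$'s update need not contain $v$, so the guard of line~\ref{lin:b-prepared} need not fire at $v$; and faulty members of $B$ may send $\READY(\CMT\ b)$ without ever sending any $\READY(\PREP\ \cdot)$, so $v$ need not see a full $v$-blocking set of prepare-readies either. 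The auxiliary inequality $v.\he \geq v.\hr \geq v.\hd$ is purely local to $v$ and cannot import state from $v^{*}$; incidentally, the $\he \geq \hr$ direction is itself not guaranteed by lines~\ref{lin:ready-prepare-v-blocking}--\ref{lin:update-hr-v-blocking}, which update $\hr$ from a $v$-blocking set that need not include $v$. To close this case you need an argument that stays at $v$, in the spirit of the paper's proof, rather than a cross-node transfer.
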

\begin{proof}
  Assume towards a contradiction, that there is a ballot~$b \in \Cd$ (respectively,
  $b \in v.\Cr$) such that $b > \hd$ (respectively, $b > v.\hd$). This is only possible,
  if $v$ sent $\READY(\PREP\ b')$ and $\READY(\CMT\ b)$ to itself where $b' < b$
  (lines~\ref{lin:hd} and \ref{lin:b-send-prepared}, and lines~\ref{lin:committed} and
  \ref{lin:delivered} of Algorithm~\ref{alg:bunch-voting}). But then $v$ sent contradicting
  messages, which contradicts that $v \in I$.
\end{proof}

\begin{lemma} \label{lem:max-false} %
  Let $\Ss$ be an FBQS with some intact set $I$, $v$ be a node with $v \in I$, and $\tau$
  be a trace entailed by an execution of \CFCP. If $\maptrace(\tau)$ is a trace entailed
  by an execution of \AFCP, for every $b > v.\hd$ holds $v.\brs[b].\delivered$ is false.
\end{lemma}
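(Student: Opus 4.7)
The plan is to proceed by induction on the length of the concrete trace $\tau$, tracking the co-evolution of the concrete field $v.\hd$ and of the abstract flags $v.\brs[b].\delivered$ along the simulation $\maptrace(\tau)$. For the base case $\tau = \emptrace$, initially $v.\hd = \nbllt$ by line~\ref{lin:h-init} of Algorithm~\ref{alg:bunch-voting} and $\maptrace(\emptrace) = \emptrace$, so every $v.\brs[b].\delivered$ is still $\false$ by the initialisation at line~\ref{lin:bc-init} of Algorithm~\ref{alg:broadcast}, and the claim holds vacuously.

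For the inductive step $\tau = \tau' \concat [e]$, two observations will drive the case analysis. First, inspecting Algorithm~\ref{alg:bunch-voting} shows that $v.\hd$ is updated only at line~\ref{lin:hd}, jointly with the trigger of $v.\prepared(\hd)$ at line~\ref{lin:b-send-prepared}; every other concrete event leaves $v.\hd$ unchanged. Second, by Definition~\ref{def:maptrace} the only events in $\maptrace$ that can flip a $v.\brs[b].\delivered$ from $\false$ to $\true$ are the $\deliverBatch$ events produced when handling a concrete $v.\prepared(b')$ (which deliver $\false$ for ballots $b''\lic b'$ not yet delivered) and when handling a concrete $v.\committed(b')$ (which delivers $\true$ for $b'$ alone).

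The interesting cases are therefore $e = v.\prepared(b^*)$ and $e = v.\committed(b^*)$. In the former, the guard at line~\ref{lin:b-prepared} ensures the previous value $h^{prev}$ of $v.\hd$ satisfies $h^{prev} < b^*$; the event updates $v.\hd$ to $b^*$ while the newly appended $\deliverBatch$ only affects ballots $b''\lic b^*$, hence $b''<b^*$, so no flag $v.\brs[b].\delivered$ with $b>b^*$ is touched, and the induction hypothesis (which covers every $b > h^{prev}$, and in particular every $b>b^*$) yields the claim. In the latter, $v.\hd$ is unchanged while $v.\Cd$ is augmented with $b^*$, so Lemma~\ref{lem:upper-prep} forces $b^* \leq v.\hd$; the new $v.\brs[b^*].\deliver(\true)$ therefore affects only a flag indexed by some $b^*\leq v.\hd$, leaving the invariant for $b>v.\hd$ intact. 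All remaining events neither change $v.\hd$ nor produce any new deliver event in $\maptrace$, so the invariant propagates by induction hypothesis.

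The main obstacle will be the $v.\committed(b^*)$ case, where I have to apply Lemma~\ref{lem:upper-prep} to the state immediately after the update of $v.\Cd$ at line~\ref{lin:delivered} of Algorithm~\ref{alg:bunch-voting}; this amounts to checking that the prefix of $\tau$ up to and including $e$ is still a trace entailed by \CFCP\ so that the lemma applies. The rest is bookkeeping on Definition~\ref{def:maptrace}.
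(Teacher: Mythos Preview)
Your inductive approach is correct and takes a genuinely different route from the paper's argument. The paper proceeds by direct contradiction: it assumes $v.\brs[b].\delivered$ is true for some $b > v.\hd$, observes that by lines~\ref{lin:deliver}--\ref{lin:send-deliver} of Algorithm~\ref{alg:broadcast} this forces a quorum of $\READY(b,a)$ messages to have been received in the abstract trace, and then inverts $\maptrace$ to recover the corresponding concrete events. If $a=\true$ it concludes $b\in v.\Cd$ and invokes Lemma~\ref{lem:upper-prep} for a contradiction; if $a=\false$ it concludes the guard at line~\ref{lin:b-prepared} of Algorithm~\ref{alg:bunch-voting} would have fired and assigned $v.\hd\gets b$, contradicting $b>v.\hd$.

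Your forward induction instead maintains the invariant step by step along $\maptrace$, exploiting that only the two $\deliverBatch$-producing clauses of Definition~\ref{def:maptrace} can flip an abstract $\delivered$ flag, and that only line~\ref{lin:hd} of Algorithm~\ref{alg:bunch-voting} updates $v.\hd$. This is cleaner in that it never needs to invert $\maptrace$ or reason about which concrete messages must underlie a given abstract receive; it trades that for a slightly longer case split and the need to check that the induction hypothesis is available for prefixes (which it is, since prefixes of entailed traces are entailed by prefix executions). Both routes bottom out in Lemma~\ref{lem:upper-prep} for the commit case, so neither is strictly more elementary. Your identification of the $v.\committed(b^*)$ case as the crux is accurate, and your handling of it is sound: since $\tau'\concat[e]$ is itself a trace entailed by \CFCP, Lemma~\ref{lem:upper-prep} applies directly to the post-$e$ state where $b^*\in v.\Cd$.
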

\begin{proof}
  Assume towards a contradiction, that $v.\brs[b].\delivered$ is true.  By
  lines~\ref{lin:deliver}--\ref{lin:send-deliver} of Algorithm~\ref{alg:broadcast} this is
  only possible if $\maptrace(\tau)$ contains an event $v.\sendBatch(ms, u)$ with
  $\READY(b, a) \in ms$ for $a \in \{ \true, \false \}$ from every $u$ in a quorum $U$.
  Assume $\READY(b, \true) \in ms$. Then by definition $\maptrace(\tau)$ contains
  $v.\send(\READY(\CMT\ b), u)$ and by lines~\ref{lin:committed} and \ref{lin:delivered}
  of Algorithm~\ref{alg:bunch-voting}, $b \in v.\Cd$. But then $b \leq v.\hd$ by
  Lemma~\ref{lem:upper-prep}. As $b > v.\hd$, $\maptrace(\tau)$ contains an event
  $v.\sendBatch(ms, u)$ with $\READY(b, \false) \in ms$ and by
  lines~\ref{lin:deliver}--\ref{lin:send-deliver} of Algorithm~\ref{alg:broadcast} this is
  only possible if $\maptrace(\tau)$ contains an event $v.\sendBatch(ms, u)$ where
  $\READY(b, \false) \in ms$ from every node~$u$ in a quorum~$U$ where $v \in U$. Again,
  by definition of $\maptrace$ and BNS this entails that $\tau$ contains
  $v.\receive(\READY(\PREP\ b_u), u)$ for $b' \lic b_u$ for every $b' \lic b$. But then,
  by lines~\ref{lin:b-prepared} and \ref{lin:hd} of Algorithm~\ref{alg:bunch-voting},
  $v.\hd$ is assigned to $b$ and this contradicts $b > v.\hd$.
\end{proof}

\begin{lemma} \label{lem:ready-false} %
  Let $\Ss$ be an FBQS with some intact set $I$, $v$ be a node with $v \in I$, and $\tau$
  be a trace entailed by an execution of \CFCP. If $\maptrace(\tau)$ is a trace entailed
  by an execution of \AFCP, for every $b > v.\hr$ holds $v.\brs[b].\ready$ is false.
\end{lemma}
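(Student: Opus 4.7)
The plan is to mirror the structure of Lemma~\ref{lem:max-false}'s proof, chasing $v.\brs[b].\ready$ through $\maptrace$ back to an originating event in the concrete trace $\tau$. Assuming towards a contradiction that $v.\brs[b].\ready$ is true while $b > v.\hr$, I would use that in Algorithm~\ref{alg:broadcast} the $\ready$ flag is set exactly at the moment $v$ sends a $\READY(b,a)$ message (either through lines~\ref{lin:ready}--\ref{lin:send-ready} or through lines~\ref{lin:v-blocking}--\ref{lin:send-v-blocking}), so $\maptrace(\tau)$ must contain an event $v.\sendBatch(ms,u)$ with $\READY(b,a) \in ms$ for some $a \in \{\true,\false\}$. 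Inspecting Definition~\ref{def:maptrace}, the only clauses that can produce such an item in a sent batch are those for $v.\send(\READY(\PREP\ b'),u)$ (yielding $\READY(b,\false)$ entries for every $b \lic b'$) and for $v.\send(\READY(\CMT\ b),u)$ (yielding precisely the entry $\READY(b,\true)$).

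The case $a = \false$ closes directly: the concrete send $v.\send(\READY(\PREP\ b'),u)$ can only have been fired by lines~\ref{lin:update-hr}--\ref{lin:send-ready-prepare} or lines~\ref{lin:update-hr-v-blocking}--\ref{lin:send-ready-prepare-v-blocking} of Algorithm~\ref{alg:bunch-voting}, both of which assign $v.\hr := b'$. Since $\hr$ is monotonically non-decreasing and $b' > b$ (a consequence of $b \lic b'$), we obtain $v.\hr \geq b' > b$, contradicting the hypothesis $b > v.\hr$.

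The case $a = \true$ requires a bridge from $v.\hd$ to $v.\hr$. The send $v.\send(\READY(\CMT\ b),u)$ can only originate in lines~\ref{lin:send-ready-commit} or \ref{lin:send-ready-commit-v-blocking} of Algorithm~\ref{alg:bunch-voting}, both of which insert $b$ into $v.\Cr$; Lemma~\ref{lem:upper-prep} then gives $b \leq v.\hd$. To finish I would establish the auxiliary monotonicity invariant $v.\hd \leq v.\hr$ by inspecting each update of $\hd$ in line~\ref{lin:hd}, whose guard (line~\ref{lin:b-prepared}) demands a quorum $U$ with $v \in U$ such that every $u \in U$ sent $\READY(\PREP\ b_u)$ with $b' \lic b_u$ for every $b' \lic b$. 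Taking $u = v$, the self-received $\READY(\PREP\ b_v)$ had previously been sent by $v$, which fixed $v.\hr = b_v$ by lines~\ref{lin:update-hr} or \ref{lin:update-hr-v-blocking}; the coverage condition forces $b \leq b_v$, whence $v.\hd = b \leq b_v \leq v.\hr$ by monotonicity of $\hr$.

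The main obstacle is justifying the coverage implication ``$b' \lic b_v$ for every $b' \lic b$ entails $b \leq b_v$'': ruling out $b > b_v$ requires exhibiting a witness ballot $b'$ satisfying $b' \lic b$ but $b' \not\lic b_v$, which leans on the totally ordered $\Value$ containing enough distinct values to separate $b$ from $b_v$ in ballot order. Once this invariant is in place, the two cases close uniformly and the lemma follows by contradiction.
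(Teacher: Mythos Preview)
Your proposal is correct and follows the same skeleton as the paper's proof: both argue by contradiction, use that $v.\brs[b].\ready$ being set forces $\maptrace(\tau)$ to contain a $v.\sendBatch$ with $\READY(b,a)$, and then split on $a$.

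For $a=\false$ your argument is exactly right and in fact tidier than the paper's written version: tracing the abstract send back through Definition~\ref{def:maptrace} to a concrete $v.\send(\READY(\PREP\ b'),u)$ with $b\lic b'$, and noting that this very send sets $v.\hr=b'>b$, closes the case immediately. The paper instead detours through the receive side (from a quorum or $v$-blocking set) before invoking the $\hr$-update, which is more roundabout.

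For $a=\true$ the paper takes a shortcut you miss because of a typo in Lemma~\ref{lem:upper-prep}: the stated conclusion ``$b\in v.\Cr$ implies $b\le v.\hd$'' should read $b\le v.\hr$, and the paper's own proof of the present lemma confirms this intended reading (it invokes Lemma~\ref{lem:upper-prep} to get $b\le v.\hr$ directly). With the intended statement, the $a=\true$ case is immediate: $b\in v.\Cr$ gives $b\le v.\hr$, contradicting $b>v.\hr$. Your route via $b\le v.\hd$ followed by the auxiliary invariant $v.\hd\le v.\hr$ is sound, and the coverage implication you flag (``$b'\lic b_v$ for every $b'\lic b$'' forces $b\le b_v$) does hold whenever $|\Value|\ge 2$, but the whole detour is unnecessary once Lemma~\ref{lem:upper-prep} is read as intended.
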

\begin{proof}
  Assume towards a contradiction, that $v.\brs[b].\ready$ is true.  By
  lines~\ref{lin:ready}--\ref{lin:send-ready} and
  lines~\ref{lin:v-blocking}--\ref{lin:send-v-blocking} of Algorithm~\ref{alg:broadcast}
  this is only possible if $\maptrace(\tau)$ contains an event $v.\sendBatch(ms, u)$ with
  $\READY(b, a) \in ms$ for $a \in \{ \true, \false \}$ for every $u$ in either a
  quorum~$U$ or a $v$-blocking set~$B$.  Assume $\READY(b, \true) \in ms$. Then by
  definition $\maptrace(\tau)$ contains $v.\send(\READY(\CMT\ b), u)$ for every $u$ and by
  lines~\ref{lin:ready-prepare} and \ref{lin:update-hr}, or
  lines~\ref{lin:ready-prepare-v-blocking} and \ref{lin:update-hr-v-blocking} of
  Algorithm~\ref{alg:bunch-voting}, $b \in v.\Cr$. But then $b \leq v.\hr$ by
  Lemma~\ref{lem:upper-prep}. As $b > v.\hr$, $\maptrace(\tau)$ contains an event
  $v.\sendBatch(ms, u)$ with $\READY(b, \false) \in ms$ for every $u$ in either a
  quorum~$U$ or a $v$-blocking set~$B$.  Assume $\READY(b, \true) \in ms$. %
  Then again, by definition of $\maptrace$ and BNS this entails that $\tau$ contains
  $v.\receive(\READY(\PREP\ b_u), u)$ for $b' \lic b_u$ for every $b' \lic b$ for every
  $u$ in either a quorum~$U$ or a $v$-blocking set~$B$. But then, by
  lines~\ref{lin:ready-prepare} and \ref{lin:update-hr}, or
  lines~\ref{lin:ready-prepare-v-blocking} and \ref{lin:update-hr-v-blocking} of
  Algorithm~\ref{alg:concrete-federated-consensus} of Algorithm~\ref{alg:bunch-voting},
  $v.\hr$ is assigned to $b$ and this contradicts $b > v.\hr$.
\end{proof}

\begin{lemma}\label{lem:not-cd-delivered} %
  Let $\Ss$ be an FBQS with some intact set $I$, $v$ be a node with $v \in I$, and $\tau$
  be a trace entailed by an execution of \CFCP. If $\maptrace(\tau)$ is a trace entailed
  by an execution of \AFCP\ and $b \not\in \Cd$ then $b.\delivered$ is false.
\end{lemma}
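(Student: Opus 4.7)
The plan is to establish the contrapositive: if $v.\brs[b].\delivered$ is true in the \AFCP\ execution entailing $\maptrace(\tau)$---understood as $v$'s abstract federated voting instance for ballot $b$ having triggered $\deliver(\true)$---then $b \in v.\Cd$ in $\tau$.

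First, I would invoke lines~\ref{lin:deliver}--\ref{lin:send-deliver} of Algorithm~\ref{alg:broadcast}: the $\delivered$ flag of $\fvs[b]$ is set only upon triggering a $\deliver$ event. Under the batched network semantics of \AFCP, a triggered $\deliver(\true)$ surfaces in $\maptrace(\tau)$ as an event $v.\deliverBatch(bs,\true)$ with $b \in bs$. This step is purely mechanical, simply recasting the abstract-state hypothesis as the presence of a specific batched event in the trace.

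Next, I would perform a case analysis on the clauses of Definition~\ref{def:maptrace} to determine which concrete event could have induced such a batch. Only the clause for $v.\committed(b')$ emits a $\deliverBatch$ with value $\true$, and it emits exactly $\deliverBatch([b'],\true)$; the clause for $v.\prepared(b')$ emits deliver-batches only with value $\false$, and no other clause emits a deliver-batch at all. Matching $b \in [b']$ therefore forces $b' = b$, so $\tau$ must contain the event $v.\committed(b)$.

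Finally, by inspecting lines~\ref{lin:committed}--\ref{lin:send-committed} of Algorithm~\ref{alg:bunch-voting}, the event $v.\committed(b)$ is triggered only immediately after $b$ is inserted into $\Cd$ at line~\ref{lin:delivered}. Hence $b \in v.\Cd$, as required. The only non-routine ingredient is the case analysis on $\maptrace$: one must carefully check that no other clause produces a deliver-batch with value $\true$ containing the ballot $b$. Once that inspection is confirmed, the rest is a direct unfolding of the bunched voting code.
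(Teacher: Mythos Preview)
Your proof has a genuine gap stemming from the parenthetical ``understood as $v$'s abstract federated voting instance for ballot $b$ having triggered $\deliver(\true)$''. The flag $\delivered$ in Algorithm~\ref{alg:broadcast} is a single Boolean guard (line~\ref{lin:bc-init}), and it is set to $\true$ at line~12 regardless of whether the triggered indication is $\deliver(\true)$ or $\deliver(\false)$. Hence the contrapositive hypothesis ``$b.\delivered$ is true'' also covers the case where the abstract instance delivered $\false$ for ballot $b$. You explicitly discard the $\maptrace$-clause for $v.\prepared(b')$ because it ``emits deliver-batches only with value $\false$'', but that is precisely the clause that can produce a $\deliverBatch(bs,\false)$ with $b\in bs$ (whenever $b\lic b'$), setting $b.\delivered$ to true without any $\committed(b)$ event in $\tau$ and hence without $b\in\Cd$ following from your argument.

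The paper's proof differs exactly at this point: it argues from the quorum of received $\READY(b,a)$ messages that causes the guard at line~\ref{lin:deliver} to fire, and then case-splits on $a\in\{\true,\false\}$. The $a=\true$ branch is essentially your argument; the $a=\false$ branch is disposed of by inverting $\maptrace$ to obtain concrete $\READY(\PREP\ b_u)$ receipts and then invoking Lemma~\ref{lem:max-false} to derive a contradiction. Your trace-level inversion through $\deliverBatch$ is arguably cleaner than the paper's inversion through $\receiveBatch$, but to be complete it must also treat the $\deliverBatch(\cdot,\false)$ case and discharge it---for instance by tracing it back to a concrete $\prepared(b')$ with $b\lic b'$ and then appealing to Lemma~\ref{lem:max-false} in a manner analogous to the paper.
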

\begin{proof}
  Assumes towards a contradiction that $b.\delivered$ is true. By
  lines~\ref{lin:deliver}--\ref{lin:send-deliver} of Algorithm~\ref{alg:broadcast} and
  BNS, this is only possible if $\maptrace(\tau)$ contains an event
  $v.\receiveBatch(ms, u)$ with $\READY(b, a) \in ms$ for $a \in \{ \true, \false \}$ from
  a quorum~$U$ such that $v\in U$. If $a$ is $\true$, then by definition of $\maptrace$,
  $\tau$ contains $v.\receive(\READY(\CMT\ b), u)$ from a quorum~$U$ such that $v\in
  U$. By lines~\ref{lin:committed} and \ref{lin:delivered} in
  Algorithm~\ref{alg:concrete-federated-consensus}, $b \in \Cd$ and this contradicts
  $b \not\in \Cd$. If $a$ is $\false$, then $v.\receive(\READY(\PREP\ b_u), u)$ from a
  quorum~$U$ such that $v\in U$ and $b' \lic b_u$ for every $b' \lic b$. Then by
  lines~\ref{lin:b-prepared} and \ref{lin:hd} of Algorithm~\ref{alg:bunch-voting}, $\hd$
  is assigned to $b$ and $b.\delivered$ is true contradicts Lemma~\ref{lem:max-false}.
\end{proof}

\begin{lemma}\label{lem:simulate-prefix}
  Let $\Ss$ be an FBQS with some intact set $I$ and $\tau$ be a trace entailed by an
  execution of \CFCP. For every finite prefix $\tau'$ of the projected trace $\tau|_I$,
  the simulated $\rho' = \maptrace(\tau')$ is the prefix of a trace entailed by an
  execution of \AFCP.
\end{lemma}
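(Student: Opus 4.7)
The plan is to prove the statement by induction on the length of the prefix $\tau'$, with the inductive step being a case analysis on the last event appended. The base case $\tau' = \emptrace$ is immediate, since $\maptrace(\emptrace) = \emptrace$ is a prefix of every trace entailed by \AFCP. For the inductive step, write $\tau' = \tau'' \concat [e]$, assume by induction that $\rho'' = \maptrace(\tau'')$ is a prefix of a trace entailed by some execution $E$ of \AFCP, and show that the new tail produced by $\maptrace$ can be appended to $\rho''$ in such a way that $E$ is extended to an \AFCP\ execution whose trace has $\maptrace(\tau'')\concat\maptrace\text{-tail}(e)$ as a prefix. Throughout, Lemma~\ref{lem:coincide-consensus-state} guarantees that the state fields $\b$, $\h$, $\n$ of each node in $I$ agree in $E$ and in the \CFCP\ execution, so that every test on these fields in Algorithm~\ref{alg:abstract-federated-consensus} evaluates the same way as the corresponding test in Algorithm~\ref{alg:concrete-federated-consensus}.

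I would split the case analysis into three groups. First, the events copied verbatim by $\maptrace$ ($v.\propose(x)$, $v.\decide(x)$, $v.\starttimer(n)$, $v.\timeout$): their \AFCP-side guards are verified either directly from Lemma~\ref{lem:coincide-consensus-state} or, in the case of $\starttimer$, by invoking the inductive hypothesis on the preceding $v.\receive$ events, which $\maptrace$ has already translated into batched VOTE/READY messages satisfying the quorum condition in line~\ref{lin:quorum-round} of Algorithm~\ref{alg:abstract-federated-consensus}. Second, the invocations $v.\prepare(b)$ and $v.\commit(b)$, which map to batched $\voteBatch$ events: here one uses the guards $\he < b$ in line~\ref{lin:prepare} and $\he = b \land b \notin \Ce$ in line~\ref{lin:condition-commit} of Algorithm~\ref{alg:bunch-voting} to show that the corresponding per-ballot $\fvs[b'].\vote(\cdot)$ sub-invocations in \AFCP\ have not been triggered before, so the $\voted$ flags in the simulated FV instances are still $\false$ and the batched vote is enabled by BNS.

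Third, and most delicate, the indications $v.\prepared(b)$ and $v.\committed(b)$, together with the send/receive pairs carrying PREP/CMT statements. For $v.\prepared(b)$, one must show that the batch $v.\deliverBatch([b', b' \lic b \land \dots], \false)$ produced by $\maptrace$ is in fact enabled in the simulated \AFCP\ state, \ie, that for every $b'$ in the batch node $v$ has received $\READY(b',\false)$ messages from a quorum to which it belongs. This follows by unwinding the trigger of line~\ref{lin:b-prepared} of Algorithm~\ref{alg:bunch-voting}: each concrete $\READY(\PREP\ b_u)$ received by $v$ is translated by $\maptrace$ into a batch of $\READY(b',\false)$ messages for every $b' \lic b_u$, so the required quorum of READY messages is present in the simulated state. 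The filter ``$b'$ not already delivered in $\maptrace(\tau'')$'' in the definition of $\maptrace$, combined with Lemma~\ref{lem:max-false} and Lemma~\ref{lem:not-cd-delivered}, ensures that the $\delivered$ flags remain consistent and no ballot is delivered twice. The send/receive cases are handled analogously, using Lemma~\ref{lem:ready-false} to justify that the freshly introduced per-ballot $\READY$ messages are not blocked by $\ready$ flags that were already raised in earlier simulated steps.

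The main obstacle is this last group, because a single concrete message with a PREP statement must be unfolded into a possibly infinite batch of abstract messages without generating duplicates or violating the staged FV flags in the simulated execution. The crux of the argument is that the ``filter'' clauses in Definition~\ref{def:maptrace}---which exclude messages already transmitted or delivered in $\maptrace(\tau'')$---exactly compensate for the difference between CSCP's ``maximum'' semantics and ASCP's per-ballot semantics; checking that these filters align precisely with Lemmas~\ref{lem:upper-prep}, \ref{lem:max-false}, \ref{lem:ready-false}, and \ref{lem:not-cd-delivered} is the real content of the proof. Once this is verified for each case, the induction closes and the lemma follows.
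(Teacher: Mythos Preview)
Your proposal is correct and follows essentially the same approach as the paper: induction on the length of the prefix, case analysis on the last event $e$, and appeal to Lemmas~\ref{lem:coincide-consensus-state}, \ref{lem:upper-prep}, \ref{lem:max-false}, \ref{lem:ready-false}, and \ref{lem:not-cd-delivered} to verify that the batch produced by $\maptrace$ is enabled in the simulated \AFCP\ state. The paper's proof enumerates the cases one by one rather than grouping them into your three categories, and in the $v.\send(\READY(\PREP\ b),u)$ case it additionally distinguishes the sub-cases $b.\round = 1$ versus $b.\round > 1$ (the latter arguing that $v$ must have prepared a ballot in the previous round), but the substance of the argument is the same as what you outline.
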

\begin{proof}
  We proceed by induction on the length of $\tau'$. The case $\tau'=\emptrace$
  is trivial since $\maptrace(\emptrace) = \emptrace$ is the prefix of any trace. We let
  $\tau' = \tau'_1 \concat [e]$ and consider the following cases:
  \begin{description}
  \item[Case $e = v.\prepare(b)$:] %
    For any execution of the \CFCP\ with trace $\tau_1'$, the prefix $\tau'_1$ contains
    either the event $v.\propose(b.x)$ by lines~\ref{lin:pbd-propose} and
    \ref{lin:pbd-send-prepare} of Algorithm~\ref{alg:concrete-federated-consensus}, or the
    event $v.\timeout$ by lines \ref{lin:pbd-timeout} and \ref{lin:pbd-prepare-increased}
    of Algorithm~\ref{alg:concrete-federated-consensus}. %
    The definition of $\maptrace$ entails that the simulated prefix
    $\rho_1' = \maptrace(\tau_1')$ contains either $v.\propose(b.x)$ or $v.\timeout$. By
    the induction hypothesis, the simulated prefix $\rho'_1$ is entailed by an execution
    of \AFCP. Let the subtrace that simulates event $e$ be
    $\rho_e' = v.\voteBatch([b', b' \lic b], \false)$. %
    We show that $\rho_1'\concat \rho_e'$ is the prefix of a trace entailed by an
    execution of \AFCP. %
    \begin{description}
    \item[Sub-case $v$ proposes $b.x$.] %
      By lines~\ref{lin:propose}--\ref{lin:send-prepare} of
      Algorithm~\ref{alg:abstract-federated-consensus}, $v$ triggers $v.b'.\vote(\false)$
      for every $b' \lic \bllt{1}{b.x}$ is in the execution of \AFCP.
    \item[Sub-case $v$ triggers $\timeout$] %
      By line~\ref{lin:pbd-prepare-increased} of
      Algorithm~\ref{alg:concrete-federated-consensus} ballot $b$ equals $\b$ and by
      Lemma~\ref{lem:coincide-consensus-state} $\b$ coincides. By
      lines~\ref{lin:timeout}--\ref{lin:prepare-increased} of
      Algorithm~\ref{alg:abstract-federated-consensus}, $v.b'.\vote(\false)$ for every
      $b' \lic b$ is in the execution of \AFCP.
    \end{description}
    As $v$ triggered $\vote(\false)$ for every $b' \lic b$ in both cases. When batched,
    this results in the event $\voteBatch([b', b' \lic b], \false)$, and
    $\rho'_1\concat \rho'_e$ is the prefix of a trace entailed by an execution of \AFCP.
  \item[Case $v.\commit(b)$.] %
    By lines~\ref{lin:pbd-prepared} and \ref{lin:pbd-send-commit} of
    Algorithm~\ref{alg:concrete-federated-consensus}, for any execution of \CFCP\ with
    trace~$\tau'$, the prefix $\tau'_1$ contains the event $v.\prepared(b)$.
    The definition of $\maptrace$ entails that the simulated prefix
    $\rho'_1 = \maptrace(\tau'_1)$ contains the event
    $v.\deliverBatch([b', b'\lic b], \false)$. By the induction hypothesis, the simulated
    prefix $\rho'_1$ is entailed by an execution of \AFCP.  Let the subtrace that
    simulates event $e$ be
    $\rho'_e = v.\voteBatch([b'', \fh(\tau_1') < b'' \leq b], true)$. We show that
    $\rho'_1\concat \rho_e'$ is the prefix of a trace entailed by an execution of \AFCP.
    Fix a ballot $b''$ where $\fh(\tau_1') < b'' \leq b$. By definition $\fh(\tau_1')$
    equals $\h$ and for every $b''$ holds $\h < b''$. Since $\rho'_1$ contains the event
    $v.\deliverBatch([b_i, b'\lic b], \false)$, $v$ triggered $b'.\deliver(\false)$ for
    each $b'\lic b$, and $\b$ and $\h$ coincide by
    Lemma~\ref{lem:coincide-consensus-state}, the guard at line~\ref{lin:prepared} of
    Algorithm~\ref{alg:abstract-federated-consensus} holds after any of such executions of
    \AFCP. %
    We can reason in the same fashion for every $b''$ in $\fh(\tau_1') < b'' \leq b$. By
    processing $b''$ in increasing order of ballots, $\b$ increases monotonically and
    triggers $v.\vote(b'', \true)$ for every ballot $b''$. As $v$ triggered
    $\vote(b'', \true)$ for every $\fh(\tau_1') < b'' \leq b$. When batched, this results
    in the event $v.\voteBatch([b'', \fh(\tau_1') < b'' \leq b], true)$, and therefore
    $\rho'_1\concat \rho'_e$ is the prefix of a trace entailed by an execution of \AFCP.
  \item[Case $e = v.\prepared(b)$.]
    By lines~\ref{lin:b-prepared} and \ref{lin:b-send-prepared} of
    Algorithm~\ref{alg:bunch-voting}, for any execution of \CFCP\ with trace~$\tau'$ there
    exists a maximum $b$ such $b > \hd$ and a quorum $U$ that contains node~$v$ and for
    each $u\in U$ node~$v$ received $\READY(\PREP\ b_u)$ where $b' \lic b_u$ for every
    $b' \lic b$.  Therefore the prefix $\tau'_1$ contains for every $u\in U$ the event
    $v.\receive(\READY(\PREP\ b_u),u)$.
    The definition of $\maptrace$ entails that the simulated prefix
    $\rho'_1 = \maptrace(\tau'_1)$ contains the event
    $v.\receiveBatch([\READY(b'_u, \false), b'_u\lic b_u], u)$ for each
    $v.\receive(\READY(\PREP\ b_u), u)$ that occurs in $\tau_1'$. By the induction
    hypothesis, the simulated prefix $\rho'_1$ is entailed by an execution of \AFCP.  Let
    the subtrace that simulates event $e$ be
    $\rho'_e = v.\deliverBatch([b'',\; b'' \lic b \land \forall v.\deliverBatch(bs) \in
    \maptrace(\tau).\, (b', \false) \not\in bs], \false)$. We show that
    $\rho'_1\concat \rho'_e$ is the prefix of a trace entailed by an execution of \AFCP.
    Fix a ballot $b''$ where $b'' \lic b$ and there is no batch
    $v.deliverBatch(bs, \false)$ with $b'' \in bs$ in $\rho_1'$. For each node $u\in U$,
    we know that $\rho'_1$ contains an event
    $v.\receiveBatch([\READY(b'_u, \false), b'_u\lic b_u], u)$.  As for every $b' \lic b$
    we know $b' \lic b_u$, we have $b'' \lic b_u$.  Thus and by BNS, we know that $v$
    received $\READY(b'',\false)$ from $u$.
    By Lemma~\ref{lem:max-false} and by $b > \hd$, we know that $b'.\delivered$ is false.
    Therefore, by lines~\ref{lin:deliver}--\ref{lin:send-deliver} of
    Algorithm~\ref{alg:broadcast}, triggers $v.b'.\deliver(b',\false)$.
    We can reason in the same fashion for every ballot $b'$ and batch the delivers in the
    event
    $v.\deliverBatch([b'',\; b'' \lic b \land \forall v.\deliverBatch(bs) \in
    \maptrace(\tau).\, (b', \false) \not\in bs], \false)$, and therefore
    $\rho'_1 \concat \rho'_e$ is the prefix of a trace entailed by an execution of \AFCP.

  \item[Case $e = v.\committed(b)$:] %
    By lines~\ref{lin:committed} and \ref{lin:send-committed} of
    Algorithm~\ref{alg:bunch-voting}, for any execution of \CFCP\ with trace $\tau_1'$, there
    exists a quorum~$U$ that contains node $v$ which is such that $v$ receives
    $\READY(\CMT\ b)$ from every $u$ in $U$ and $b \not\in \Cd$.
    The definition of $\maptrace$ entails that the simulated prefix
    $\rho'_1 = \maptrace(\tau'_1)$ contains an event
    $v.\receiveBatch([\READY(b, \true)], u)$ for each $v.\receive(\READY(\CMT\ b), u)$ that
    occurs in $\tau_1'$. By the induction hypothesis, the simulated prefix $\rho'_1$ is
    entailed by an execution of \AFCP. Let the subtrace that simulates the event $e$ be
    $\rho_e' = v.\deliverBatch([b], \true)$. We show that $\rho'_1 \concat \rho_e'$ is the
    prefix of trace entailed by an execution of \AFCP.
    As $v$ received $\READY(b, \true)$ from a quorum~$U$ where $v \in U$. As
    $b \not\in \delivered$ by Lemma~\ref{lem:not-cd-delivered} $\deliver$ is false, and by
    lines~\ref{lin:ready} and \ref{lin:send-ready} of Algorithm~\ref{alg:broadcast}
    triggers $v.\deliver(b, \true)$. When batched, this results in the event
    $v.\deliverBatch([b], \true)$, and therefore $\rho_1' \concat \rho_e'$ is the prefix
    of a trace entailed by an execution of \AFCP.

  \item[Case $e = v.\send(\VOTE(\PREP\ b), u)$.]  %
    By lines~\ref{lin:prepare} and \ref{lin:send-vote-prepared} of
    Algorithm~\ref{alg:bunch-voting}, for any execution of \CFCP\ with
    trace $\tau'$, the prefix $\tau_1'$ contains the event $v.\prepare(b)$.
    The definition of $\maptrace$ entails that the simulated prefix
    $\rho_1'=\maptrace(\tau_1')$ contains the event %
    $v.\voteBatch([b', b' \lic b], \false)$. %
    By the induction hypothesis, the simulated prefix~$\rho_1'$ is a trace entailed by an
    execution of \AFCP. Let the subtrace that simulates event $e$ be
    $\rho_e' = v.\sendBatch([\VOTE(b', \false), {b' \lic b} \land \forall a \in \Bool.\,
    \forall v.\sendBatch(ms,u) \in \maptrace(\tau).\, \M(b', a) \not\in ms], u)$.  We show
    that $\rho_1' \concat \rho_e'$ is the prefix of a trace entailed by an execution of
    \AFCP.
    Fix a ballot $b' \lic b$ such that $\rho_1'$ does not contain the an event
    $v.\sendBatch(ms, u)$ with $\VOTE(b', \false) \in ms$. Then by lines~\ref{lin:bc-init}
    and \ref{lin:echoed} of Algorithm~\ref{alg:broadcast} we know that the Boolean
    $\voted$ is $\false$. Hence, the condition in line~\ref{lin:if-echoed} of the same
    figure is satisfied, and since $v.\voteBatch([b', b' \lic b], \false)$, $v$ triggered
    $b'.\vote(\false)$, appending $v.\sendBatch(ms, u)$ with $\VOTE(b', \false) \in ms$
    results in a trace entailed by an execution of \AFCP\ by line~\ref{lin:send-echo} of
    the same figure. We can reason in the same fashion for every ballot $b' \lic b$ and
    conclude together with BNS that $\rho_1' \concat \rho_e'$ is the prefix of a trace
    entailed by an execution of \AFCP.

  \item[Case $e = v.\receive(\VOTE(\PREP\ b), u)$.] %
    By assumption the network does not create or drop messages, hence $v$ receives
    $\VOTE(\PREP\ b)$ only after $u$ previously sent the same message and the prefix
    $\tau_1'$ contains the event $u.\send(\VOTE(\PREP\ b), v)$.
    The definition of $\maptrace$ entails that the simulated prefix
    $\rho_1' = \maptrace(\tau_1')$ contains an event with
    $u.\sendBatch([\VOTE(b', a), b' \lic b], v)$ for $a \in \Bool$. By the induction
    hypothesis, the simulated prefix $\rho'_1$ is entailed by an execution of \AFCP.  Let
    the subtrace that simulates event $e$ be
    \begin{align*}
      \rho_e'={} &v.\receiveBatch([\VOTE(b', \false),\, {b' \lic b}\\
                &{}\land \forall a \in \Bool.\,
                 \forall v.\receiveBatch(ms,u) \in \maptrace(\tau).\,
                 \VOTE(b', a) \not\in ms], u)
    \end{align*}
    We show that $\rho_1' \concat \rho_e'$ is the prefix of a trace entailed by an
    execution \AFCP.
    By the ascending-ballot-order convention, it is enough to show that each $b' \lic b$,
    $v$ receives a batch with $\VOTE(b', a)$ for $a \in \Bool$ exactly once
    in $\rho'$.
    For a fixed $b'$, an event with $v.\receiveBatch(ms, u)$ with
    $\VOTE(b', \false) \in ms$ is in $\rho_e'$ only if $v.b'.\receive(\VOTE(b', a), u)$ is
    not in $\rho_1'$. On the other hand, $u$ sent a batch event with
    $u.b'.\send(\VOTE(b', a), v)$ for each $b' \lic b$. Hence, $\rho_1' \concat \rho_e'$
    is the prefix of a trace entailed by an execution of \AFCP.

  \item[Case $e = v.\send(\READY(\PREP\ b), u)$.] %
    For any execution of the \CFCP\ with trace $\tau_1'$, the node $v$ sends
    $\READY(\PREP\ b)$ either after hearing from a quorum in
    line~\ref{lin:send-ready-prepare} of Algorithm~\ref{alg:bunch-voting}, or after
    hearing from a $v$-blocking set in line~\ref{lin:send-ready-prepare-v-blocking} of the
    same figure. We consider both cases:
    \begin{description}
    \item[Sub-case $v$ sends $\READY(\PREP\ b)$ after hearing from a quorum.] %
      By lines~\ref{lin:ready-prepare}--\ref{lin:send-ready-prepare} of
      Algorithm~\ref{alg:bunch-voting}, exists a maximum ballot $b$ such that $\hr < b$ and
      there exists a quorum $U$ such that $v \in U$ and for every node $u \in U$ the node
      $v$ received $\VOTE(\PREP\ b_u)$ where $b' \lic b_u$ for every $b' \lic b$. %
      The definition of $\maptrace$ entails that the simulated prefix
      $\rho_1' = \maptrace(\tau_1')$ contains an event with %
      $v.\receiveBatch([\VOTE(b'_u, a), b'_u \lic b_u \land \forall a \in \Bool.\, \forall
      v.\receiveBatch(ms,u) \in \maptrace(\tau).\, \VOTE(b'_u, a) \not\in ms], u)$ for each
      node $u \in U$ and each event $v.\receive(\VOTE(\PREP\ b_u), u)$. %
      By the induction hypothesis, the simulated prefix $\rho_1'$ is a trace entailed by
      an execution of \AFCP.  Let the subtrace that simulates event~$e$ be
      $\rho_e' = v.\sendBatch([\READY(b', \false), b' \lic b \land \forall
      v.\sendBatch(ms,u) \in \maptrace(\tau).\, \READY(b', \false) \not\in ms], u)$.  We
      show that $\rho_1' \concat \rho_e'$ is the prefix of a trace entailed by an
      execution of \AFCP.
      If $b.\round = 1$ then by $b$ maximal and $b \lic b_u$, $v$ received a batch with
      $b'.\VOTE(b', \false)$ for $b' \lic b$ from every $u \in U$ such that $v \in U$.
      Then, by lines~\ref{lin:ready}--\ref{lin:send-ready} in
      Algorithm~\ref{alg:broadcast}, by BNS a batch with $\READY(b_j, \false)$
      is in $\rho_1'$.
      If $b.\round > 1$, and as $v$ is correct, by
      lines~\ref{lin:pbd-timeout}--\ref{lin:pbd-prepare-increased} and
      lines~\ref{lin:pbd-prepared}--\ref{lin:pbd-start-timer} of
      Algorithm~\ref{alg:concrete-federated-consensus} $v$ prepared the ballot
      $b_p^v = \bllt{b.\round - 1}{b^v.\val}$ in the previous round. By
      lines~\ref{lin:b-prepared}--\ref{lin:b-send-prepared} of
      Algorithm~\ref{alg:bunch-voting}, $v$ sends $\READY(\PREP\ b_p^v)$. Hence by
      definition of $\maptrace$, a batch with $\READY(b_j, a)$ is in
      $\rho_1'$ for every $b_j \lic b_p^v$.
      It remains to show that a batch with
      $v.\sendBatch([\READY(b_j, \false), b^p_v < b_j < b], u)$ is in
      $\rho_1' \concat \rho_e'$.  By assumption, for each node $u$ and ${b'_u \lic b_u}$
      the node $v$ received $\VOTE(b'_u, a)$. It suffices to show that the node $v$
      receives $\VOTE(b_j, \false)$ from every $u \in U$ for every ballot $b_j$.  Then, by
      lines~\ref{lin:ready}--\ref{lin:send-ready} and BNS in Algorithm~\ref{alg:broadcast}
      a batch with $\READY(b_j, \false)$ is in $\rho_1'$. By
      Lemma~\ref{lem:ready-false} and $b' > b > \hr$, $\ready$ is false for $b'$.
    \item[Sub-case $v$ sends $\READY(\PREP\ b)$ after hearing from a $v$-blocking set.] %
      By lines~\ref{lin:ready-prepare-v-blocking}--\ref{lin:send-ready-prepare-v-blocking}
      of Algorithm~\ref{alg:bunch-voting} there exists a maximum ballot $b$ such that $\hr < b$
      and there exists a $v$-blocking set $B$ such that for every $u \in B$ the node $v$
      received $\READY(\PREP\ b_u)$ where $b' \lic b_u$ for every $b' \lic b$. %
      The definition of $\maptrace$ entails that the simulated prefix
      $\rho_1' = \maptrace(\tau_1')$ contains an event
      $v.\receiveBatch([\READY(b'_u, \false), b'_u \lic b_u], u)$ for each node $u \in B$
      and each event $v.\receive(\READY(\PREP\ b_u), u)$.  By the induction hypothesis,
      the simulated prefix $\rho_1'$ is entailed by an execution of \AFCP. Let the
      subtrace that simulates event $e$ be
      $\rho_e' = v.\sendBatch([\READY(b', \false), b' \lic b \land \forall v.\sendBatch(ms,u)
      \in \maptrace(\tau).\, \READY(b', \false) \not\in ms], u)$.
      Fix a ballot $b'' \in B$ such that ${b'' \lic b}$ and for a batch with
      $v.b''.\send(\READY(b'', \false), u) \not\in \maptrace(\tau_1')$. By
      Lemma~\ref{lem:ready-false} and $b'' > b > \hr$, $\ready$ is false for $b''$. We
      have to show that $v$ received $\READY(b'', \false)$ from every $u$ in the
      $v$-blocking set $B$. Then by lines~\ref{lin:v-blocking}--\ref{lin:send-v-blocking}
      in Algorithm~\ref{alg:broadcast} $v$ send $\READY(b'', \false)$ to $u$. As for every
      $b' \lic b$ we know $b' \lic b_u$, we have $b'' \lic b_u$.  Thus, we know that a
      batch with $\READY(b'',\false)$ is in $\rho_1'$.
  \end{description}
  Both cases show that for the subtrace $\rho_e'$ that simulates event $e$, the trace
  $\rho_1' \concat \rho_e'$ is the prefix of a trace entailed by an execution of \AFCP.

  \item[Case $e = v.\receive(\READY(\PREP\ b), u)$.]  %
    Analogue to case $v.\receive(\VOTE(\PREP\ b), u)$.

  \item[Case $e = v.\send(\VOTE(\CMT\ b), u)$.]  %
    By lines~\ref{lin:commit} and \ref{lin:send-vote-commit} of
    Algorithm~\ref{alg:bunch-voting}, for any execution of \CFCP\ with trace $\tau'$ the
    prefix $\tau_1'$ contains the event $v.\commit(b)$.
    The definition of $\maptrace$ entails that the simulated prefix
    $\rho_1' = \maptrace(\tau_1')$ contains the event
    $v.\voteBatch([b', \fh(\rho'_1) < b' \leq b], \true)$. %
    By the induction hypothesis, the simulated prefix $\rho_1'$ is the prefix of a trace
    entailed by an execution of \AFCP.  Let the subtrace that simulates event $e$ be
    $\rho_e' = v.\sendBatch([\VOTE(b', \true), {\fh(\rho_1') < b' \leq b}],
    u)$. We show that $\rho_1' \concat \rho_e'$ is the prefix of a trace entailed by an
    execution of \AFCP.
    Fix a ballot $b'$ such that $\rho_1'$ does not contain a $v.\sendBatch(ms, u)$ with
    $\VOTE(b', \true) \in ms$. By line~\ref{lin:condition-commit} of
    Algorithm~\ref{alg:bunch-voting}, we know that $b' = \he$, and by
    lines~\ref{lin:prepare}--\ref{lin:send-vote-prepared} of the same figure, $v$ did not
    send $\VOTE(\PREP\ b'')$ for any $b'' > \he$. By definition of $\maptrace$, a batch
    event with $v.b'.\send(\VOTE(b', \false), u) \not\in \maptrace(\tau)$ for $b' > b$. As
    $b \not\in \Ce$, again by definition of $\maptrace$, a batch event with
    $v.b'.\send(\VOTE(b', \true), u) \not\in \maptrace(\tau)$. Therefore we know that the
    Boolean $\voted$ is $\false$.  Hence, the condition in line~\ref{lin:if-echoed} of the
    same figure is satisfied.  Since
    $v.\voteBatch([b', \fh(\rho'_1) < b' \leq b], \true)$, $v$ triggered
    $b'.\vote(\true)$, appending an event $v.\sendBatch(ms, u)$ with $\VOTE(b, \true) \in
    ms$ results in the prefix of a trace entailed by an execution of \AFCP\ by
    line~\ref{lin:send-echo}.
    We can reason in the same fashion for every $b'$ in
    $\fh(\maptrace(\tau)) < b' \leq b$, and therefore and by BNS $\rho'_1\concat \rho'_e$
    is a trace entailed by an execution of \AFCP.

  \item[Case $e = v.\receive(\VOTE(\CMT\ b), u)$.]  %
    By assumption the network does not create or drop messages, hence $v$ receives
    $\VOTE(\CMT\ b)$ only after $u$ previously sent the same message and the prefix
    $\tau_1'$ contains the event $u.\send(\VOTE(\CMT\ b), v)$.
    The definition of $\maptrace$ entails that the simulated prefix
    $\rho_1' = \maptrace(\tau_1')$ contains an event with
    $u.\sendBatch([\VOTE(b, \false)], v)$. By induction hypothesis $\rho_1'$ is the prefix
    of a trace entailed an execution of \AFCP.  Let the subtrace that simulates event $e$
    be
    $\rho_e' = v.\receiveBatch([\VOTE(b', \true), b' \in \{ b' \mid \fh(\rho_1') < b' \leq
    b \} ], u)$.  We show that $\rho_1' \concat \rho_e'$ is the prefix of a trace entailed
    by an execution of \AFCP.
    As $u$ sent $\VOTE(\CMT\ b)$ to $v$, we know that $v$ receives $\VOTE(b', \true)$
    exactly once for every $b' \in \{ b' \mid \fh(\rho_1') < b' \leq b \} $ and the batch
    is exactly once in $\rho'$.
    Hence, $\rho_1' \concat \rho_e'$ is the prefix of a trace entailed by an execution of
    \AFCP.

  \item[Case $e = v.\send(\READY(\CMT\ b), u)$.] %
    For any execution of \CFCP\ with trace $\tau_1'$, the node $v$
    sends $\READY(\CMT\ b)$ either after hearing from a quorum in
    line~\ref{lin:send-ready-commit} of Algorithm~\ref{alg:bunch-voting}, or after hearing
    from a $v$-blocking set in line~\ref{lin:send-ready-commit-v-blocking} of the same
    figure. We consider both cases:
    \begin{description}
    \item[Sub-case $v$ sends $\READY(\CMT\ b)$ after hearing from a quorum.] %
      By lines~\ref{lin:ready-commit}--\ref{lin:send-ready-commit} of
      Algorithm~\ref{alg:bunch-voting} there exists a quorum $U$ such that $v \in U$ and for
      every node $u \in U$ the node $v$ received $\VOTE(\CMT\ b)$ and
      $b \not\in \readied$ and $b \geqslant \hr$. %
      The definition of $\maptrace$ entails that the simulated prefix
      $\rho_1' = \maptrace(\tau_1')$ contains an event
      $v.\receiveBatch([\VOTE(b', a), \fh(\rho_1') < b' \leq b], u)$ and for every
      $u \in U$ such that $v \in U$ for every event $v.\receive(\VOTE(\CMT\ b), u)$.  By
      the induction hypothesis, the simulated prefix $\rho_1'$ is a trace entailed by an
      execution of \AFCP.  Let the subtrace that simulates event~$e$ be
      $v.\sendBatch([\READY(b, \true)], u)$.  We show that
      $\rho_1' \concat \rho_e'$ is the prefix of a trace entailed by an execution of
      \AFCP.
      If $v$ received $\VOTE(b, \true)$ from a quorum $U$ such that $v \in U$ and
      $\readied$ in Algorithm~\ref{alg:broadcast} is false, then by
      lines~\ref{lin:ready}--\ref{lin:send-ready} in Algorithm~\ref{alg:broadcast}, a
      batch with $\READY(b, \true)$ is in $\rho_1'$.
      Assume a $v.\receiveBatch(ms, u)$ with $\VOTE(b, \false) \in ms$ is in $\rho_1'$. By
      definition of $\maptrace$ this is only possible, if $v$ received $\VOTE(\PREP\ b_u)$
      for some $b_u > b$. As $v$ processed $v.\receive(\VOTE(\CMT\ b), u)$ and as $v$ is
      correct, $v$ cannot have processed $v.\receive(\READY(\PREP\ b_u) , u)$. Hence $v$
      received $\VOTE(b,\true)$ from $u$, and as $v$ has not received $\VOTE(b, \false)$,
      $\readied$ in Algorithm~\ref{alg:broadcast} is false.
    \item[Sub-case $v$ sends $\READY(\CMT\ b)$ after hearing from a $v$-blocking set.] %
      By lines~\ref{lin:ready-commit-v-blocking}--\ref{lin:send-ready-commit-v-blocking}
      of Algorithm~\ref{alg:bunch-voting} there exists a maximum ballot $b$ and a
      $v$-blocking set $B$ such that $v$ received $\READY(\CMT\ b)$ from every node
      $u \in B$ and $b \not\in \readied$ and $b \geqslant \hr$.
      The definition of $\maptrace$ entails that the simulated prefix
      $\rho_1' = \maptrace(\tau_1')$ contains the event $v.b.\receive(\READY(b, a), u)$
      for $a \in \{ \true, \false \}$ for every $u \in B$. By the induction hypothesis,
      $\rho_1'$ is the prefix of a trace entailed by an execution of \AFCP. Let the
      subtrace that simulates $e$ be $v.\sendBatch([\READY(b, \true)], u)$.  We
      show that $\rho_1' \concat \rho_e'$ is the prefix of a trace entailed by an
      execution of \AFCP.
      We have to show that $v$ received $\READY(b, \true)$ from a $v$-blocking set $B$ and
      $\readied$ in Algorithm~\ref{alg:broadcast} is false. Then by
      lines~\ref{lin:v-blocking}--\ref{lin:send-v-blocking} in
      Algorithm~\ref{alg:broadcast}, $v.\sendBatch(ms, u)$ with $\READY(b, \true)$ is in
      $\rho_1'$.
      Assume $v$ received $\READY(b, \false)$ from $u$. By definition of $\maptrace$ this
      is only possible, if $v$ received $\READY(\PREP\ b_u)$ for some $b_u > b$. As $v$
      processed $v.\receive(\READY(\CMT\ b), u)$ and as $v$ is correct, $v$ cannot have
      processed $v.\receive(\READY(\PREP\ b_u) , u)$. Hence $v$ received $\READY(b,\true)$
      from $u$, and as $v$ has not received $\VOTE(b, \false)$, $\readied$ in
      Algorithm~\ref{alg:broadcast} is false.
    \end{description}
    Both cases show that for the subtrace $\rho_e'$ that simulates event $e$, the trace
    $\rho_1' \concat \rho_e'$ is the prefix of a trace entailed by an execution of \AFCP.

  \item[Case $e = v.\receive(\READY(\CMT\ b), u)$.]  %
    Analogue to case $v.\receive(\VOTE(\CMT\ b), u)$.

  \item[Case $e = v.\propose(x)$.]  %
    Straightforward by definition of $\maptrace$, since $\tau$ contains $v.\propose(x)$
    iff the simulated $\rho =\maptrace(\tau)$ contains $v.\propose(x)$.

  \item[Case $e = v.\decide(x)$.]  %
    By lines~\ref{lin:pbd-decided}--\ref{lin:pbd-send-decided} in
    Algorithm~\ref{alg:concrete-federated-consensus}, for any execution of \CFCP\ with
    trace~$\tau'$ the node $v$ decides value $x$ only after $v$ triggers $\committed(b)$
    for a ballot $b$ with $b.\val = x$.
    The definition of $\maptrace$ entails that the simulated prefix
    $\rho_1' = \maptrace(\tau_1')$ contains the event
    $v.\deliverBatch([b], \true)$. By induction hypothesis $\rho_1'$, the simulated
    prefix $\rho'_1$ is entailed by an execution of \AFCP.  Let the subtrace that
    simulates event $e$ be $\rho_e' = [v.\decide(x)]$. We show that
    $\rho_1' \concat \rho_e'$ is the prefix of a trace entailed by an execution of
    \AFCP. %
    As $v.\deliverBatch([b], \true)$, $v$ triggered $\deliver(\true)$ for ballot~$b$, by
    lines~\ref{lin:decided} and \ref{lin:send-decided} of
    Algorithm~\ref{alg:abstract-federated-consensus}, $v.\decide(x)$ is in the execution
    of \AFCP\ and $\rho_1' \concat \rho_e'$ is the prefix of a trace entailed by an
    execution of \AFCP.

  \item[Case $e = v.\starttimer(n)$:] %
    By lines~\ref{lin:pbd-quorum-round}--\ref{lin:pbd-start-timer} of
    Algorithm~\ref{alg:concrete-federated-consensus} for any execution of \CFCP\ with
    trace~$\tau_1'$, there exists a quorum $U$ which is such that $v$ receives
    $\M(\STA\ b_{u})$ where $\M\in\{\VOTE,\READY\}$ and $\STA\in \{\CMT,\PREP\}$ from
    every $u$ in $U$ and $\n < b_{u}.\round$.
    The definition of $\maptrace$ entails that the simulated prefix
    $\rho'_1 = \maptrace(\tau'_1)$ contains
    the event
    \begin{displaymath}
      v.\receiveBatch([\M(b', \false), b' \lic b_{u}], u)
    \end{displaymath}
    for every $v.\receive(\M(\PREP\ b_{u}), u)$ that occurs in $\tau_1'$, or
    $v.\receiveBatch([\M(b_{u}, \true)], u)$ for every $v.\receive(\M(\CMT\ b_{u}), u)$
    that occurs in $\tau_1'$.
    By induction hypothesis $\rho'_1$ is the prefix of a trace entailed by an execution of
    \AFCP. Let the subtrace that simulates the event $e$ be
    $\rho_e' = [v.\starttimer(n)]$. We show that $\rho'_1 \concat \rho_e'$ is the
    prefix of trace entailed by an execution of \AFCP.
    By Lemma~\ref{lem:coincide-consensus-state} coincides $\n$ and by assumption
    $\round < b_{u}.\n$ holds.
    We have distinguish two cases:
    \begin{description}
    \item[Sub-case $v$ received $\M(\PREP\ b_{u})$ from $u$.] %
      The definition of $\maptrace$ entails that the simulated prefix
      $\rho'_1 = \maptrace(\tau'_1)$ contains an event with
      $v.\receiveBatch([\M_u(b_u', \false), b_u' \lic b_u], u)$ and every
      $\M_u(\PREP\ b_u)$.
    \item[Sub-case $\M_u(\CMT\ b_u)$.] The definition of $\maptrace$ entails that the
      simulated prefix $\rho'_1 = \maptrace(\tau'_1)$ contains a
      $v.\receiveBatch([\M_u(b_u, \true)], u)$ for every $\M_u(\CMT\ b_u)$.
    \end{description}
    Combining the cases leads to the conditions in line~\ref{lin:quorum-round} in
    Algorithm~\ref{alg:abstract-federated-consensus} satisfied.
    Thus, by line~\ref{lin:pbd-start-timer} of the same figure, $v.\starttimer(n)$ is in
    the execution of \AFCP\ and $\rho_1' \concat \rho_e'$ is the prefix of a trace entailed
    by an execution of \AFCP.

  \item[Case $e = v.\timeout$:] %
    Straightforward by definition of $\maptrace$, since $\tau$ contains $v.\timeout$
    iff the simulated $\rho =\maptrace(\tau)$ contains $v.\timeout$.
    \qedhere
  \end{description}
\end{proof}

\begin{proof}[Proof of Theorem~\ref{thm:refinement}]
  Let $\tau$ be the trace entailed by an execution of \CFCP\ over $\Ss$. We prove that
  there exists a trace $\rho$ entailed by an execution of \AFCP\ such that
  $\hist(\tau|_I) = \hist(\rho|_I)$ by \emph{reductio ad absurdum}. Assume towards a
  contradiction that for all traces~$\rho$, if $\hist(\tau|_I) = \hist(\rho|_I)$ then
  $\rho$ is not a trace entailed by an execution of \AFCP. Fix the trace $\rho$ to be
  $\maptrace(\tau|_I)$, which entails that $\hist(\tau|_I) = \hist(\rho|_I)$ by definition
  of $\maptrace$ and $\hist$. Since the number of events in a trace entailed by \AFCP\ is
  bounded by $\omega$, we denote the $i$th event of $\rho$ as $e_i$, with $i$ a natural
  number. Since $\rho$ is not a trace entailed by an execution of \AFCP\ by assumptions,
  there exists $i\geq 0$ such that the prefix $[e_i,\ldots, e_i]$ of $\rho$ is a trace
  entailed by \AFCP, but the prefix $[e_0,\ldots,e_{i+1}]$ of $\rho$ is not a trace
  entailed by \AFCP. Since $\maptrace$ maps one event of a concrete trace into one event
  of an abstract trace, there exists a finite prefix $\tau'$ of $\tau|_I$ such that
  $\maptrace(\tau') = [e_0,\ldots,e_{i+1}]$. But this leads to a contradiction because
  $\maptrace(\tau')$ is a trace of an execution of \AFCP\ by
  Lemma~\ref{lem:simulate-prefix}. Therefore, there exists a trace $\rho$ entailed by an
  execution of \AFCP\ such that $\hist(\tau|_I) = \hist(\rho|_I)$, and we are done. %
\end{proof}

\begin{proof}[Proof of Corollary~\ref{cor:cscp-correct}] %
  Let $\tau$ be the trace entailed by an execution of \CFCP\ over $\Ss$. Assume towards a
  contradiction that the execution does not satisfy some of the properties of
  \emph{Integrity}, \emph{Agreement for intact sets}, \emph{Weak validity for intact
    sets}, or \emph{Non-blocking for intact sets}. By Theorem~\ref{thm:refinement}, there
  exists a trace $\rho$ entailed by an execution of \AFCP\ over $\Ss$ and such that
  $\hist(\tau|_I) = \hist(\rho|_I)$. By definition of history, $\hist(\tau|_I)$ and
  $\hist(\rho|_I)$ coincide in their respective propose and decide events. Since $\rho|_I$
  is entailed by an execution of \AFCP, this execution fails to satisfy some of the
  properties of \emph{Integrity}, \emph{Agreement for intact sets}, \emph{Weak validity
    for intact sets}, or \emph{Non-blocking for intact sets}, which contradicts
  Theorem~\ref{thm:non-blocking-byzantine-consensu-intact-sets}.
\end{proof}

\begin{example}\label{ex:concrete}
\begin{figure}[t]
\begin{center}
  \begingroup\SmallTabColSep{\small\begin{tabular}[t]{|l|l|l|l|}
      Node $v_1$&Node $v_2$&{\color{red} Node $v_3$}&Node $v_4$\\
      \hline\hline
      \begin{tabular}[t]{l}
        $\propose(\C)$\\
        $\brs.\prepare(\bllt{1}{\C})$\\
        \hdashline
        $\VOTE(\PREP\ \bllt{1}{\C})$
      \end{tabular}&

      \begin{tabular}[t]{l}
        $\propose(\C)$\\
        $\brs.\prepare(\bllt{1}{\C})$\\
        \hdashline
        $\VOTE(\PREP\ \bllt{1}{\C})$
      \end{tabular}&

      \begin{tabular}[t]{l}
        {\color{red}$\VOTE(\PREP\ \bllt{1}{\B})$}
      \end{tabular}&

      \begin{tabular}[t]{l}
        $\propose(\A)$\\
        $\brs.\prepare(\bllt{1}{\A})$\\
        \hdashline
        $\VOTE(\PREP\ \bllt{1}{\A})$
      \end{tabular}\\

      \hline

      \begin{tabular}[t]{l}
        $\starttimer(F(1))$\\
        \hdashline
        $\READY(\PREP\ \bllt{1}{\B})$
      \end{tabular}&

      \begin{tabular}[t]{l}
        $\starttimer(F(1))$\\
        \hdashline
        $\READY(\PREP\ \bllt{1}{\B})$
      \end{tabular}&&

      \begin{tabular}[t]{l}
        $\starttimer(F(1))$\\
        \hdashline
        $\READY(\PREP\ \bllt{1}{\A})$
      \end{tabular}\\

      \hline

      \begin{tabular}[t]{l}
        $\brs.\prepared(\bllt{1}{\A})$
      \end{tabular} &

      \begin{tabular}[t]{l}
        $\brs.\prepared(\bllt{1}{\A})$
      \end{tabular} &
      &
        \begin{tabular}[t]{l}
          $\brs.\prepared(\bllt{1}{\A})$\\
          $\brs.\commit(\bllt{1}{\A})$\\
          \hdashline
          $\VOTE(\CMT\ \bllt{1}{\A})$\\
          $\READY(\PREP\ \bllt{1}{\B})$
      \end{tabular}\\

      \hline

      \begin{tabular}[t]{l}
        $\brs.\prepared(\bllt{1}{\B})$
      \end{tabular}&

      \begin{tabular}[t]{l}
        $\brs.\prepared(\bllt{1}{\B})$
      \end{tabular}&&

      \begin{tabular}[t]{l}
        $\brs.\prepared(\bllt{1}{\B})$\\
        $\brs.\commit(\bllt{1}{\B})$\\
        \hdashline
        $\VOTE(\CMT\ \bllt{1}{\B})$
      \end{tabular}\\

      \hline \qquad\vdots&\qquad\vdots&\qquad\vdots&\qquad\vdots\\
      \hline

      \begin{tabular}[t]{l}
        $\timeout$\\
        $\brs.\prepare(\bllt{2}{\B})$\\
        \hdashline
        $\VOTE(\PREP\ \bllt{2}{\B})$
      \end{tabular}&

      \begin{tabular}[t]{l}
        $\timeout$\\
        $\brs.\prepare(\bllt{2}{\B})$\\
        \hdashline
        $\VOTE(\PREP\ \bllt{2}{\B})$
      \end{tabular}&&

      \begin{tabular}[t]{l}
        $\timeout$\\
        $\brs.\prepare(\bllt{2}{\B})$\\
        \hdashline
        $\VOTE(\PREP\ \bllt{2}{\B})$
      \end{tabular}\\

      \hline

      \begin{tabular}[t]{l}
        $\starttimer(F(2))$\\
        \hdashline
        $\READY(\PREP\ \bllt{2}{\B})$
      \end{tabular}&

      \begin{tabular}[t]{l}
        $\starttimer(F(2))$\\
        \hdashline
        $\READY(\PREP\ \bllt{2}{\B})$
      \end{tabular}&&

      \begin{tabular}[t]{l}
        $\starttimer(F(2))$\\
        \hdashline
        $\READY(\PREP\ \bllt{2}{\B})$
      \end{tabular}\\

      \hline

      \begin{tabular}[t]{l}
        $\brs.\prepared(\bllt{2}{\B})$\\
        $\brs.\commit(\bllt{2}{\B})$\\
        \hdashline
        $\VOTE(\CMT\ \bllt{2}{\B})$
      \end{tabular}&

      \begin{tabular}[t]{l}
        $\brs.\prepared(\bllt{2}{\B})$\\
        $\brs.\commit(\bllt{2}{\B})$\\
        \hdashline
        $\VOTE(\CMT\ \bllt{2}{\B})$
      \end{tabular}&&

      \begin{tabular}[t]{l}
        $\brs.\prepared(\bllt{2}{\B})$\\
        $\brs.\commit(\bllt{2}{\B})$\\
        \hdashline
        $\VOTE(\CMT\ \bllt{2}{\B})$
      \end{tabular}\\

      \hline

      \begin{tabular}[t]{l}
        $\READY(\CMT\ \bllt{2}{\B})$
      \end{tabular}&

      \begin{tabular}[t]{l}
        $\READY(\CMT\ \bllt{2}{\B})$
      \end{tabular}&&

      \begin{tabular}[t]{l}
        $\READY(\CMT\ \bllt{2}{\B})$
      \end{tabular}\\

      \hline

      \begin{tabular}[t]{l}
        $\brs.\committed(\bllt{2}{\B})$\\
        $\decide(\B)$
      \end{tabular}&

      \begin{tabular}[t]{l}
        $\brs.\committed(\bllt{2}{\B})$\\
        $\decide(\B)$
      \end{tabular}&&

      \begin{tabular}[t]{l}
        $\brs.\committed(\bllt{2}{\B})$\\
        $\decide(\B)$
      \end{tabular}\\

      \hline

    \end{tabular}}\endgroup
  \end{center}
  \caption{Execution of \CFCP.}
  \label{fig:trace-concrete}
\end{figure}

  Recall Example~\ref{ex:abstract} in Appendix~\ref{ap:abstract-federated-consensus}. Compare the
  execution of \AFCP\ in Figure~\ref{fig:trace-abstract} with infinitely many events and
  messages with the finite execution of \CFCP\ in Figure~\ref{fig:trace-concrete}.
  The nodes propose the same values as in Example~\ref{ex:abstract}. In particular, in the
  first row, the faulty node $v_3$ sends $\VOTE(\PREP\ \bllt{1}{\B})$ to every correct
  node.
  As in \AFCP\ every correct node starts a timer in the second row. As in \AFCP\
  node~$v_4$ has prepared $\bllt{1}{\A}$ and sends $\READY(\PREP\ \bllt{1}{\A})$ after
  receiving $\VOTE(\PREP\ b_u)$ from a quorum for
  $b_u \in \{ \bllt{1}{\B}, \bllt{1}{\C} \}$ where $b' \in \{ \nbllt \}$ and $b' \lic b_u$
  (lines~\ref{lin:ready-prepare}--\ref{lin:send-ready-prepare} of
  Algorithm~\ref{alg:bunch-voting}).
  In the third row, the nodes $v_1$, $v_2$ and $v_4$ prepare the maximum ballot
  $\bllt{1}{\A}$, as they received $\READY(\PREP\ b_u)$ from a quorum for
  $b_u \in \{ \bllt{1}{\A}, \bllt{1}{\B} \}$ where $b' \in \{ \nbllt \}$ and $b' \lic b_u$
  (lines~\ref{lin:b-prepared}--\ref{lin:send-ready-prepare} of
  Algorithm~\ref{alg:bunch-voting}).
  Now node $v_4$ reaches its candidate value $\bllt{1}{\A}$ and therefore votes for
  it. But at the same time, $v_4$ receives $\READY(\PREP(\bllt{1}{\B}))$ from the
  $v_4$-blocking set $\{v_1,v_2\}$ and sends $\READY(\PREP\ \bllt{1}{\B})$
  (lines~\ref{lin:ready-prepare-v-blocking}--\ref{lin:send-ready-prepare-v-blocking} of
  Algorithm~\ref{alg:bunch-voting}).
  In the fourth, node $v_4$ only votes one commit statement $\CMT\ \bllt{1}{\B}$, as
  opposed to voting $\true$ for the two ballots $\bllt{1}{\A}$ and $\bllt{1}{\B}$ in the
  fourth row of Figure~\ref{fig:trace-abstract}. Similar to Example~\ref{ex:abstract}, the
  correct nodes decide value $\B$, which was not proposed by any correct node.
  As in Figure~\ref{fig:trace-concrete}, at this point no node can decide any value, because
  there is no ballot with a quorum of nodes for it, and the timeouts of all correct nodes
  will expire after $F(1)$ time. Then, in the sixth row of Figure~\ref{fig:trace-concrete},
  nodes $v_1$, $v_2$ and $v_4$ trigger $\timeout$, and since they all prepared ballot
  $\bllt{1}{\B}$, they update their candidate ballot to $\bllt{2}{\B}$.
  Now $v_1, v_2$ and $v_4$ have all the same candidate ballot and analogues to row six to
  nine in of Figure~\ref{fig:trace-abstract} can execute \CFCP\ to decide value~$\B$ and end
  the execution.

  For illustration, the executions in Figure~\ref{fig:trace-concrete}
  and~\ref{fig:trace-abstract} entail concrete and abstract traces $\tau$ and $\rho$
  respectively, which consist of the events on the left of each cell when traversing the
  tables in left-to-right, top-down fashion, and where the network events on the right of
  each cell are intermixed in such a way that the assumptions on atomic and batched
  semantics of \S\ref{sec:abstract-federated-consensus} are met. It is routine to check
  that $\hist(\tau|_{\{1,2,4\}})=\hist(\rho|_{\{1,2,4\}})$ and that
  $\rho|_{\{1,2,4\}} = \sigma(\tau|_{\{1,2,4\}})$.
\end{example}


\section{Proofs in \S\ref{sec:lying}}
\label{ap:lying}

Since in a subjective FBQS $\{\Ss_v\}_{v\in \V_\ok}$ each node $v$ runs the consensus
protocols according to its own view $\Ss_v$, the set of quorums $\Qs$ in the pseudo-code
of Algorithm~\ref{alg:broadcast} and
Algorithm~\ref{alg:abstract-federated-consensus}--\ref{alg:concrete-federated-consensus}
coincides with the set of quorums in the view $\Ss_v$. The notion of $v$-blocking set in
these figures also coincides with the usual notion in the view $\Ss_v$. To fix
terminology, we say that $U$ is a \emph{quorum known by $v$ in a subjective FBQS
  $\{\Ss_v\}_{v\in \V_\ok}$} iff $U$ is a quorum in $v$'s view $\Ss_{v}$, \ie,
$\forall v'\in U.\,\exists q\in \Ss_v(v').\,q\subseteq U$. We say that a set $B$ is
\emph{$v$-blocking in a subjective FBQS $\{\Ss_v\}_{v\in \V_\ok}$} iff $B$ overlaps each
of $v$'s slices in $v$'s view $\Ss_v$, \ie,
$\forall q\in \Ss_v(v).\,q\cap B \not= \emptyset$.

\begin{proof}[Proof of Lemma~\ref{lem:subjective-intact-set}]
  Since the nodes in $I$ are correct, they never equivocate their quorum slices, and thus
  $I$ is a quorum in each view $\Ss_v$ with $v\in I$ iff $I$ is a quorum in
  $\{\Ss_v\}_{v\in \V_\ok}$. For the same reason, all the views $\Ss_v|_I$ with $v\in I$
  coincide with each other. Thus, every node in $I$ is intertwined with each other in the
  all the views $\Ss_v$ with $v\in I$ iff every node in $I$ is intertwined with each other
  in $\{\Ss_v\}_{v\in \V_\ok}$, and the lemma holds.
\end{proof}

\begin{proof}[Proof of Lemma~\ref{lem:subjective-quorum-intersection}]
  Since $U_1$ and $U_2$ are in $\bigcup_{v\in\V_{\ok}}\Ss_v$, there exist correct nodes
  $v_1$ and $v_2$ such that $U_1$ is a quorum known by $v_1$ and $U_2$ is a quorum known
  by $v_2$. By Lemma~\ref{lem:quorum-project} the sets $U_1\cap I$ and $U_2\cap I$ are
  quorums in $\Ss_{v_1}|_I$ and $\Ss_{v_2}|_I$ respectively. Since the nodes in $I$ never
  equivocate their slices, all the views $\Ss_v|_I$ with $v\in I$ coincide with each
  other. Since every node is intertwined with each other in such a view, every two quorums
  have non-empty intersection by Lemma~\ref{lem:quorum-intersection}. Therefore,
  $(U_1\cap I)\cap(U_2\cap I) = (U_1\cap U_2)\cap I \not= \emptyset$, and the intersection
  $U_1\cap U_2$ contains some node in $I$.
\end{proof}

\begin{lemma}
  \label{lem:subjective-empty-I-not-v-blocking}
  Assume that $\V_\ok$ is the set of correct nodes and let $\{\Ss_v\}_{v\in \V_{\ok}}$ be
  a subjective FBQS with some intact set $I$. Let $v\in I$. Then, no $v$-blocking set $B$
  exists such that $B\cap I = \emptyset$.
\end{lemma}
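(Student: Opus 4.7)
The plan is to reduce the subjective case to the non-subjective setting by exploiting Lemma~\ref{lem:subjective-intact-set}, which tells us that $I$ is an intact set in each individual view $\Ss_v$ with $v\in\V_\ok$. In particular, for the node $v\in I$ of the statement, $I$ is an intact set in $\Ss_v$, and hence $I$ is a quorum in $\Ss_v$.

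From this I would reason exactly as in the proof of Lemma~\ref{lem:empty-I-not-v-blocking}, but working inside the view $\Ss_v$: since $I$ is a quorum in $\Ss_v$ and $v\in I$, by the definition of quorum there exists some slice $q\in \Ss_v(v)$ such that $q\subseteq I$. Recall that, in the subjective setting, `$v$-blocking' means $v$-blocking with respect to $v$'s own view $\Ss_v$, i.e., $B$ must overlap every $q\in \Ss_v(v)$. If we assume towards a contradiction that some $v$-blocking set $B$ satisfies $B\cap I=\emptyset$, then from $q\subseteq I$ we get $q\cap B\subseteq I\cap B=\emptyset$, contradicting the fact that $B$ overlaps every slice of $v$ in $\Ss_v$.

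There is no real obstacle here: the only subtlety is that in the subjective setting the quorum structure and the $v$-blocking notion must both be interpreted inside $v$'s own view $\Ss_v$, not inside some globally agreed FBQS. Once Lemma~\ref{lem:subjective-intact-set} has supplied the fact that $I$ is a quorum in $\Ss_v$, the argument is a one-line copy of the non-subjective proof.
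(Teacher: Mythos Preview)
Your proposal is correct and follows essentially the same route as the paper. The paper's proof is the one-liner ``Straightforward by the definition of $v$-blocking in $\{\Ss_v\}_{v\in \V_{\ok}}$ and by Lemma~\ref{lem:empty-I-not-v-blocking},'' which is exactly your reduction: interpret $v$-blocking in the view $\Ss_v$, observe that $I$ is intact (hence a quorum) in $\Ss_v$, and then invoke the non-subjective Lemma~\ref{lem:empty-I-not-v-blocking}; you simply make the use of Lemma~\ref{lem:subjective-intact-set} explicit and unfold the last step rather than citing it as a black box.
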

\begin{proof}
  Straightforward by the definition of $v$-blocking in $\{\Ss_v\}_{v\in \V_{\ok}}$ and by
  Lemma~\ref{lem:empty-I-not-v-blocking}.
\end{proof}

\begin{lemma}[Analogous to Lemma~46 in \cite{GG18b} for intact sets]
  \label{lem:subjective-federated-intact-set-ready}
  Let $\{\Ss_v\}_{v\in \V_{\ok}}$ be a subjective FBQS and $t$ be a tag, and consider an
  execution of the instance for $t$ of FV over $\{\Ss_v\}_{v\in \V_{\ok}}$. Let $I$ be an
  intact set in $\{\Ss_v\}_{v\in \V_{\ok}}$.  The first node $v\in I$ that sends a
  $\READY(t,a)$ message first needs to receive a $\VOTE(t,a)$ message from every member of
  a quorum $U$ known by $v$ and to which $v$ belongs.
\end{lemma}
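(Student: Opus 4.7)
The plan is to mirror the proof of Lemma~\ref{lem:federated-intact-set-ready} in the appendix, substituting its key ingredient (Lemma~\ref{lem:empty-I-not-v-blocking}) with its subjective analogue, Lemma~\ref{lem:subjective-empty-I-not-v-blocking}. Concretely, I would fix an arbitrary $v \in I$ and recall that in the subjective setting, the notions of ``quorum known by $v$'' and ``$v$-blocking set'' are evaluated in $v$'s own view $\Ss_v$; so the two rules in Algorithm~\ref{alg:broadcast} by which $v$ can send a $\READY(t,a)$ message are exactly (i) receiving $\VOTE(t,a)$ from every member of a quorum of $\Ss_v$ to which $v$ belongs (lines~\ref{lin:ready}--\ref{lin:send-ready}), or (ii) receiving $\READY(t,a)$ from every member of a set $B$ that is $v$-blocking in $\Ss_v$ (lines~\ref{lin:v-blocking}--\ref{lin:send-v-blocking}).

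Next, I would argue by contradiction: suppose the first node $v \in I$ to ever send $\READY(t,a)$ does so via rule (ii). Then there exists a set $B$ that is $v$-blocking in $\Ss_v$ such that every $u \in B$ has previously sent $\READY(t,a)$. By the minimality of $v$'s send event within $I$, no node in $I$ has yet sent $\READY(t,a)$, and hence $B \cap I = \emptyset$. But Lemma~\ref{lem:subjective-empty-I-not-v-blocking} rules this out precisely because $v \in I$, a contradiction. Therefore the first node in $I$ to send $\READY(t,a)$ must have done so via rule (i), giving the conclusion.

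The only thing to double-check is that the notion of ``first'' is well-defined and that the meanings of ``quorum'' and ``$v$-blocking'' in the pseudocode really do refer to $v$'s view---this is stated in the preamble of \S\ref{sec:lying} and in the terminology paragraph introduced just before Lemma~\ref{lem:subjective-intact-set}. Given those conventions, the argument is essentially a one-line reduction to Lemma~\ref{lem:subjective-empty-I-not-v-blocking}; the main obstacle is not mathematical depth but being careful that the subjective setting does not introduce hidden issues, namely that faulty nodes lying about slices could not somehow manufacture a $v$-blocking set disjoint from $I$ in $v$'s view---and this is exactly what Lemma~\ref{lem:subjective-empty-I-not-v-blocking} precludes, since $I$'s membership as a quorum in every correct node's view (part of the definition of intact set in the subjective case) forces each correct $v \in I$ to have a slice within $I$.
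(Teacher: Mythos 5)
Your proof is correct and follows essentially the same route as the paper's: both reduce the claim to Lemma~\ref{lem:subjective-empty-I-not-v-blocking}, observing that the first node of $I$ to send $\READY(t,a)$ cannot have used the $v$-blocking rule because the blocking set would have to be disjoint from $I$. Your version merely spells out the contradiction that the paper leaves implicit.
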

\begin{proof}
  Let $v$ be any node in $I$. By Lemma~\ref{lem:subjective-empty-I-not-v-blocking} no
  $v$-blocking set $B$ exists such that $B\cap I=\emptyset$. Therefore, the first node
  $v\in I$ that sends a $\READY(t,a)$ message does it through
  lines~\ref{lin:ready}--\ref{lin:send-ready} of Algorithm~\ref{alg:broadcast}, which
  means that $v$ received $\VOTE(t,a)$ messages from every member of a quorum $U$ known by
  $v$ and to which $v$ belongs.
\end{proof}

\begin{lemma}
  \label{lem:subjective-federated-not-v-blocking-availability-intact-set}
  Assume that $\V_\ok$ is the set of correct nodes and let $\{\Ss_v\}_{v\in \V_{\ok}}$ be
  a subjective FBQS with some intact set $I$. Consider a set $B$ of nodes. If $B$ is not
  $v$-blocking in $\{\Ss_v\}_{v\in \V_{\ok}}$ for any $v\in I \setminus B$, then either
  $B\supseteq I$ or $I \setminus B$ is a quorum in $\Ss_v|_I$ for each $v\in I$.
\end{lemma}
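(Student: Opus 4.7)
The plan is to mimic the proof of Lemma~\ref{lem:federated-not-v-blocking-availability-intact-set} while carefully exploiting that nodes in an intact set are correct and hence do not equivocate about their slices. First I would dispose of the easy case: if $B\supseteq I$, there is nothing to prove. So assume there exists some $v\in I\setminus B$, and fix an arbitrary $v^*\in I$ (the witness for the view whose projection we need to analyse).

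Next I would fix an arbitrary node $u\in I\setminus B$ and produce, inside $\Ss_{v^*}|_I(u)$, a slice contained in $I\setminus B$. By hypothesis $B$ is not $u$-blocking in $\{\Ss_v\}_{v\in\V_\ok}$, so by definition of $u$-blocking in a subjective FBQS there exists a slice $q\in\Ss_u(u)$ with $q\cap B=\emptyset$. Because $u$ is correct (being in the intact set $I$), all views of correct nodes agree on $u$'s slices, i.e., $\Ss_u(u)=\Ss_{v^*}(u)$, so $q\in\Ss_{v^*}(u)$. Then $q\cap I\in\Ss_{v^*}|_I(u)$ by the definition of the projection. Since $u\in q$ (by the FBQS requirement) and $u\in I$, the set $q\cap I$ is non-empty, and since $q\cap B=\emptyset$ we have $q\cap I\subseteq I\setminus B$. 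Thus $q\cap I$ is a slice in $\Ss_{v^*}|_I(u)$ contained in $I\setminus B$.

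Since $u\in I\setminus B$ was arbitrary, every node in $I\setminus B$ has a slice in $\Ss_{v^*}|_I$ contained in $I\setminus B$, which by definition makes $I\setminus B$ a quorum in $\Ss_{v^*}|_I$. As $v^*\in I$ was arbitrary, the claim holds for each $v\in I$, concluding the proof.

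The main potential obstacle is purely notational: one must keep straight whose view defines the quorum being exhibited (that of $v^*$) versus whose view certifies the non-$u$-blocking property (that of $u$). These can be reconciled only because $u\in I$ is correct, and the definition of subjective FBQS forces $\Ss_u(u)=\Ss_{v^*}(u)$. This invocation of non-equivocation by correct nodes is the only substantive step; the rest is a direct transcription of the argument of Lemma~\ref{lem:federated-not-v-blocking-availability-intact-set}.
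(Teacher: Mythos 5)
Your proof is correct and follows essentially the same route as the paper's: exhibit, for each $u\in I\setminus B$, a slice of $u$ disjoint from $B$ (available because $B$ is not $u$-blocking), project it into $I$, and use the fact that correct nodes do not equivocate so that this slice appears identically in every view $\Ss_{v}$ with $v\in I$. The only difference is cosmetic — you transfer each slice into a fixed view $\Ss_{v^*}$ up front, whereas the paper first builds the quorum in each node's own projected view and then observes that all projections $\Ss_{v}|_I$ coincide.
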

\begin{proof}
  Assume $B$ is not $v$-blocking in $\{\Ss_v\}_{v\in \V_{\ok}}$ for any
  $v\in I \setminus B$. If $B \supseteq I$ then we are done. Otherwise, for every node $v$
  in $I \setminus B$, there exists a slice $q\in\Ss_v(v)$ such that $q\cap B=\emptyset$.
  We know that $q\cap I\not=\emptyset$ since $v\in q$ by definition of subjective FBQS. We
  also know that $q\cap I \in \Ss_v|_I(v)$ by definition of $\Ss_v|_I$, and since
  $q\cap B=\emptyset$, the intersection $q\cap I$ is a subset of $I\setminus B$. Since for
  each node $v\in I$ there exists a slice $q\in \Ss_v(v)$ such that $q\cap I$ is a subset
  of $I\setminus B$, the set $I \setminus B$ is a quorum in $\Ss_v|_I$. Since the nodes in
  $I$ are correct and never equivocate their slices, we know that
  $\Ss_{v_1}|_I=\Ss_{v_2}|_I$ for every two nodes $v_1$ and $v_2$ in $I$, and the required
  holds.
\end{proof}

\begin{lemma}[Analogous to Lemma~23 of \cite{GG18b} for intact sets]
  \label{lem:subjective-federated-ready-consistent-intact-set}
  Let $\{\Ss_v\}_{v\in \V_{\ok}}$ be a subjective FBQS and consider an execution of \AFCP\
  over $\{\Ss_v\}_{v\in \V_{\ok}}$. Let $I$ be an intact set in
  $\{\Ss_v\}_{v\in \V_{\ok}}$ and $b$ be a ballot. If two nodes in $I$ send respectively
  messages $\READY(b,a)$ and $\READY(b,a')$, then $a = a'$.
\end{lemma}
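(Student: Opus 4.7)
The plan is to mirror the proof of Lemma~\ref{lem:federated-ready-consistent-intact-set} from the non-subjective setting, replacing each auxiliary result with its subjective analogue established earlier in \S\ref{sec:lying}. Recall that \AFCP\ uses an instance of FV for each ballot, so a message $\READY(b,a)$ is a message of the FV instance associated with ballot $b$, and the protocol's behaviour regarding $\READY$ messages at each correct node is exactly that of Algorithm~\ref{alg:broadcast} run according to that node's view.

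First, I would assume that two nodes $v_1,v_2\in I$ send $\READY(b,a)$ and $\READY(b,a')$ respectively, and consider the first node $u_1\in I$ to ever send $\READY(b,a)$ and the first node $u_2\in I$ to ever send $\READY(b,a')$. By Lemma~\ref{lem:subjective-federated-intact-set-ready}, $u_1$ must have received $\VOTE(b,a)$ from every member of some quorum $U_1$ known by $u_1$ and containing $u_1$, and similarly $u_2$ must have received $\VOTE(b,a')$ from every member of some quorum $U_2$ known by $u_2$ and containing $u_2$. Hence $U_1,U_2\in\bigcup_{v\in\V_\ok}\Qs_v$ and both $U_1\cap I$ and $U_2\cap I$ are non-empty (they contain $u_1$ and $u_2$ respectively).

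Next, I would invoke Lemma~\ref{lem:subjective-quorum-intersection} to conclude that $U_1\cap U_2\cap I \neq \emptyset$. Any node $w$ in this intersection is correct, since $I$ consists of correct nodes, and yet $w$ must have sent both $\VOTE(b,a)$ and $\VOTE(b,a')$ (because $w\in U_1$ and $w\in U_2$, and the network does not fabricate messages from correct senders). By the use of the guard variable $\voted$ in lines~\ref{lin:bc-init} and \ref{lin:if-echoed}--\ref{lin:echoed} of Algorithm~\ref{alg:broadcast}, a correct node running the instance of FV for ballot $b$ can send at most one $\VOTE(b,\_)$ message, which forces $a=a'$.

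The only delicate point is step two, namely that the first sender of $\READY(b,a)$ within $I$ indeed fires through the quorum rule (lines~\ref{lin:ready}--\ref{lin:send-ready}) rather than through the $v$-blocking rule (lines~\ref{lin:v-blocking}--\ref{lin:send-v-blocking}); this is precisely the content of Lemma~\ref{lem:subjective-federated-intact-set-ready}, which in turn rests on Lemma~\ref{lem:subjective-empty-I-not-v-blocking} ensuring that no $v$-blocking set is disjoint from $I$. Given those subjective analogues, the argument is otherwise a direct transcription of the non-subjective proof, and the remainder of the reasoning is routine.
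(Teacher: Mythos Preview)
Your proposal is correct and follows essentially the same route as the paper's proof: both invoke Lemma~\ref{lem:subjective-federated-intact-set-ready} to obtain, for each of the two values $a$ and $a'$, a quorum known by some node in $I$ whose members all sent the corresponding $\VOTE$ message, then apply Lemma~\ref{lem:subjective-quorum-intersection} to find a node of $I$ in the intersection, and finally use the $\voted$ guard to force $a=a'$. Your write-up is slightly more explicit in naming the first senders $u_1,u_2$ and in checking the hypotheses of Lemma~\ref{lem:subjective-quorum-intersection}, but the argument is the same.
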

\begin{proof}
  Assume that two nodes in $I$ send respectively messages $\READY(b,a)$ and
  $\READY(b,a')$. By Lemma~\ref{lem:subjective-federated-intact-set-ready}, some node
  $v\in I$ has received $\VOTE(b,a)$ from a quorum $U$ known by $v$ to which $v$ belongs,
  and some node $v'\in I$ has received $\VOTE(b,a')$ from a quorum $U'$ known by $v'$ to
  which $v'$ belongs. By Lemma~\ref{lem:subjective-quorum-intersection}, the intersection
  $U\cap U'$ contains some node in $I$, so that this node has sent $\VOTE(t,a)$ and
  $\VOTE(t,a')$. But due to the use of the guard variable $\voted$ in
  lines~\ref{lin:bc-init} and \ref{lin:if-echoed}--\ref{lin:echoed} of
  Algorithm~\ref{alg:broadcast}, a node can only vote for one value per tag, and thus it
  cannot vote different values for the same tag. Hence, $a=a'$.
\end{proof}

\begin{lemma}[Analogous to Lemma~24 in \cite{GG18b} for intact sets]
  \label{lem:subjective-federated-totality-intact-set}
  Assume that $\V_\ok$ is the set of correct nodes and let $\{\Ss_v\}_{v\in \V_{\ok}}$ be
  a subjective FBQS with some intact set $I$. Assume that $I = V^+ \uplus V^-$ and for
  some quorum $U$ known by a node $v$ we have $U \cap I \subseteq V^+$. Then either
  $V^-=\emptyset$ or there exists some node $v'\in V^-$ such that $V^+$ is $v'$-blocking
  in $\{\Ss_v\}_{v\in \V_{\ok}}$.
\end{lemma}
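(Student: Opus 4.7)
The plan is to mirror the proof of Lemma~\ref{lem:federated-totality-intact-set}, replacing each auxiliary result by its subjective-FBQS counterpart. First, I would assume toward a contradiction that $V^+$ is not $v'$-blocking in $\{\Ss_v\}_{v\in\V_\ok}$ for any $v'\in V^-$. Applying Lemma~\ref{lem:subjective-federated-not-v-blocking-availability-intact-set}, I conclude that either $V^-=\emptyset$ (in which case we are done) or $V^-$ is a quorum in $\Ss_{v''}|_I$ for every $v''\in I$.

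Now assume the latter case. Since the lemma is of interest only when $U\cap I$ is non-empty (otherwise the use-case in the totality argument does not apply), I invoke Lemma~\ref{lem:quorum-project} to obtain that $U\cap I$ is a quorum in $\Ss_v|_I$. The key observation is that, because every node of $I$ is correct and hence does not equivocate about its quorum slices, all correct nodes agree on the slices of nodes in $I$. Thus the projections $\Ss_v|_I$ and $\Ss_{v''}|_I$ coincide for any correct $v$ and any $v''\in I$. Consequently, $U\cap I$ and $V^-$ are both quorums in one and the same FBQS over $I$, in which by the definition of intact set every two nodes are intertwined.

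From this point I apply Lemma~\ref{lem:quorum-intersection} to this common projected system to derive that $(U\cap I)\cap V^-\neq\emptyset$. This contradicts the assumption $U\cap I\subseteq V^+$ combined with $V^+\cap V^-=\emptyset$, completing the argument.

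The main obstacle I anticipate is the subtlety that the quorum $U$ is known by a correct node $v$ that need not lie in $I$, so a priori the projected view $\Ss_v|_I$ in which $U\cap I$ is a quorum might differ from the projected view $\Ss_{v''}|_I$ in which $V^-$ is a quorum. Resolving this depends crucially on the fact---guaranteed by the definition of subjective FBQS---that correct nodes agree on the slices of other correct nodes, together with the fact that intact sets contain only correct nodes. Apart from this alignment step, the rest of the proof is a routine transcription of the non-subjective version.
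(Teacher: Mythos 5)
Your proof is correct and follows essentially the same route as the paper's: assume $V^+$ is not $v'$-blocking for any $v'\in V^-$, apply the subjective not-$v$-blocking/availability lemma to get that $V^-$ is a quorum in the (common) projection onto $I$, project $U$ via Lemma~\ref{lem:quorum-project}, and derive a contradiction from quorum intersection in $\Ss_v|_I$. Your explicit handling of the view-alignment step (and your use of Lemma~\ref{lem:subjective-federated-not-v-blocking-availability-intact-set} rather than its non-subjective counterpart, which the paper's text nominally cites) is, if anything, slightly cleaner than the paper's.
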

\begin{proof}
  Since every one in $I$ is correct, all the views $\Ss_v|_I$ with $v \in I$ coincide with
  each other. Since $V^+$ and $V^-$ only contain nodes in $I$, they both lie within the
  projection $\Ss_v|_I$. Thus, for every $v\in V^+$, $V^-$ is $v$-blocking in
  $\{\Ss_v\}_{v\in\V_\ok}$ iff $V^-$ is $v$-blocking in any $\Ss_v'|_I$ with $v'\in I$.
  Assume that $V^+$ is not $v$-blocking in $\{\Ss_v\}_{v\in \V_{\ok}}$ for any $v\in V^-$.
  By Lemma~\ref{lem:federated-not-v-blocking-availability-intact-set}, either
  $V^-=\emptyset$ or $V^-$ is a quorum in $\Ss|_I$. In the former case we are done, while
  in the latter we get a contradiction as follows. By Lemma~\ref{lem:quorum-project}, the
  intersection $U\cap I$ is a quorum in $\Ss|_I$. Since every two quorums in $\Ss|_I$ have
  non-empty intersection by Lemma~\ref{lem:quorum-intersection}, we have
  $(U\cap I)\cap V^-\not=\emptyset$. But this is impossible, since $U\cap I \subseteq V^+$
  and $V^+\cap V^-=\emptyset$.
\end{proof}

\begin{lemma}
  Let $\{\Ss_v\}_{v\in \V_{\ok}}$ be a subjective FBQS and $t$ be a tag. The instance for
  $t$ of FV over $\{\Ss_v\}_{v\in \V_{\ok}}$ satisfies the specification of reliable
  Byzantine voting for intact sets.
\end{lemma}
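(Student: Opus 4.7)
The plan is to mirror the proof of Theorem~\ref{thm:reliable-byzantine-broadcast-intact-sets}, replacing each appeal to an FBQS lemma by its subjective-FBQS analogue proved earlier in this appendix. I would verify the four clauses of \emph{reliable Byzantine voting for intact sets} in turn.

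\emph{No duplication} is local to a single correct node and uses only the guard variable $\delivered$ in line~\ref{lin:delivered} of Algorithm~\ref{alg:broadcast}; nothing about quorums or slices enters, so the original argument carries over unchanged. For \emph{Totality for intact sets}, I would replay the cascade argument from the proof of Theorem~\ref{thm:reliable-byzantine-broadcast-intact-sets}. Given a node in $I$ that delivers $a$, it received $\READY(t,a)$ from a quorum known to it whose intersection with $I$ is nonempty by Lemma~\ref{lem:subjective-quorum-intersection}; correct nodes in that intersection relay $\READY(t,a)$, possibly convincing further members of $I$ through the $v$-blocking rule. I would fix the frontier $V^+\subseteq I$ of members of $I$ that have sent $\READY(t,a)$, set $V^- = I\setminus V^+$, apply Lemma~\ref{lem:subjective-federated-ready-consistent-intact-set} to conclude that nodes in $V^-$ never sent any ready message for $t$, and then Lemma~\ref{lem:subjective-federated-totality-intact-set} to force $V^-=\emptyset$. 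Since $I$ is a quorum in the view of every member of $I$ by Lemma~\ref{lem:subjective-intact-set}, every such node eventually delivers $a$.

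\emph{Consistency for intertwined nodes} goes through as in the original: if intertwined $v$ and $v'$ respectively deliver $a$ and $a'$, they received $\READY(t,a)$ and $\READY(t,a')$ from quorums known to them, and Lemma~\ref{lem:subjective-quorum-intersection} (or intertwinedness directly) yields a correct node in the intersection that sent both ready messages, contradicting the guard $\readied$. For \emph{Validity for intact sets}, if every node in $I$ votes $a$, then $I$ being a quorum in every view of a member of $I$ (Lemma~\ref{lem:subjective-intact-set}) makes every node in $I$ eventually send $\READY(t,a)$ and deliver $a$, while Lemma~\ref{lem:subjective-federated-ready-consistent-intact-set} forbids delivering any other value.

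The main obstacle I anticipate is conceptually small but deserves care: the $v$-blocking rule used in Algorithm~\ref{alg:broadcast} is interpreted with respect to $v$'s own view $\Ss_v$, whereas Lemma~\ref{lem:subjective-federated-totality-intact-set} speaks of `$v$-blocking in $\{\Ss_v\}_{v\in\V_\ok}$'. These notions coincide precisely because the nodes in $I$ are correct and do not equivocate, so their views agree on $I$ and on the slices of members of $I$. Once this alignment between per-view quorums/blocking sets and the subjective-FBQS definitions is spelled out explicitly, the proof of Theorem~\ref{thm:reliable-byzantine-broadcast-intact-sets} transports verbatim to the subjective setting.
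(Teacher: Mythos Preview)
Your proposal is correct and follows essentially the same approach as the paper's own proof: both verify the four clauses of the specification in turn, invoking the subjective analogues of the FBQS lemmas (Lemmas~\ref{lem:subjective-federated-ready-consistent-intact-set} and~\ref{lem:subjective-federated-totality-intact-set} for Totality, intertwinedness for Consistency, and the fact that $I$ is a quorum in every view for Validity). Your added remark about aligning the per-view notion of $v$-blocking with the subjective-FBQS definition is a useful clarification that the paper handles implicitly via the definitions at the start of the appendix.
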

\begin{proof}
  We prove that the instance for tag $t$ of FV over $\{\Ss_v\}_{v\in \V_{\ok}}$ enjoys
  each of the properties that define the specification of \emph{reliable Byzantine voting
    for intact sets}.
  \begin{description}
  \item \emph{No duplication:} Straightforward by the use of the guard variable
    $\delivered$ in line~\ref{lin:delivered} of Algorithm~\ref{alg:broadcast}.
  \item \emph{Totality for intact sets:} Assume some node $v$ in $I$ delivers a value $a$
    for tag $t$. By the condition in line~\ref{lin:ready} of
    Algorithm~\ref{alg:broadcast}, the node $v$ has received $\READY(t,a)$ messages from
    every member in a quorum $U$ known by $v$. Since $U\cap I$ contains only correct
    nodes, these nodes send $\READY(t,a)$ messages to every node. By the condition in
    line~\ref{lin:v-blocking} of Algorithm~\ref{alg:broadcast}, any correct node $v'$
    sends $\READY(t,a)$ messages if it receives $\READY(t,a)$ from every member in a
    $v'$-blocking set. Hence, the $\READY(t,a)$ messages from the nodes in $U\cap I$ may
    convince additional correct nodes to send $\READY(t,a)$ messages to every node. Let
    these additional correct nodes send $\READY(t,a)$ messages until a point is reached at
    which no further nodes in $I$ can send $\READY(t,a)$ messages. At this point, let
    $V^+$ be the set of nodes in $I$ that sent $\READY(t,a)$ messages (where
    $U\cap I\subseteq V^+$), and let $V^- = I \setminus V^+$. By
    Lemma~\ref{lem:subjective-federated-ready-consistent-intact-set} the nodes in $V^-$
    did not send any $\READY(t,\_)$ messages at all. The set $V^+$ cannot be $v'$-blocking
    for any node $v'$ in $V^-$, or else more nodes in $I$ could come to send $\READY(t,a)$
    messages. Then by Lemma~\ref{lem:subjective-federated-totality-intact-set} we have
    $V^-=\emptyset$, meaning that every node in $I$ has sent $\READY(t,a)$ messages. Since
    $I$ is a quorum known by every node in $I$, all the nodes in $I$ will eventually
    deliver a Boolean for tag $t$ due to the condition in line~\ref{lin:ready} of
    Algorithm~\ref{alg:broadcast}.
  \item \emph{Consistency for intertwined nodes:} Assume that two nodes $v$ and $v'$ in an
    intact set $I$ deliver values $a$ and $a'$ for tag $t$ respectively. By the condition
    in line~\ref{lin:deliver}, the nodes received $\READY(t,a)$ messages from a quorum
    known by $v$, respectively, $\READY(t,a')$ messages from a quorum known by $v'$. Since
    the two nodes are intertwined, there is a correct node $u$ in the intersection of the
    two quorums, which sent both $\READY(t,a)$ and $\READY(t,a')$. By the use of the guard
    variable $\readied$ in line~\ref{lin:ready} of Algorithm~\ref{alg:broadcast}, node $u$
    can only send one and the same ready message to every other node, and thus $a = a'$ as
    required.
  \item \emph{Validity for intact sets:} Assume every node in an intact set $I$ votes for
    value $a$. Since $I$ is a quorum known by every member of $I$, every node in $I$ will
    eventually send $\READY(t,a)$ by the condition in line~\ref{lin:ready} of
    Algorithm~\ref{alg:broadcast}. By
    Lemma~\ref{lem:subjective-federated-ready-consistent-intact-set}, these messages
    cannot carry a value different from $a$. Then by the condition in
    line~\ref{lin:deliver} of Algorithm~\ref{alg:broadcast} every node in $I$ will
    eventually deliver the value $a$ for tag $t$. Due to \emph{Consistency for intact
      sets}, no node delivers a value different from $a$.\qedhere
  \end{description}
\end{proof}

\begin{lemma}\label{lem:subjective-bounded-totality}
  Let $\{\Ss_v\}_{v\in \V_{\ok}}$ be a subjective FBQS and $t$ be a tag, and consider an
  execution of the instance for $t$ of FV over
  $\{\Ss_v\}_{v\in \V_{\ok}}$. Let $I$ be an intact set in $\{\Ss_v\}_{v\in \V_{\ok}}$ and
  assume that GST has expired. If a node $v\in I$ delivers a voting value then every node
  in $I$ will deliver a voting value within some bounded time.
\end{lemma}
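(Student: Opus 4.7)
The plan is to mimic the proof of Lemma~\ref{lem:bounded-totality}, but invoking the subjective analogues of the key auxiliary lemmas already established in the appendix for \S\ref{sec:lying}. Assume GST has expired and some node $v\in I$ triggers $\deliver(a)$. By the guard in line~\ref{lin:deliver} of Algorithm~\ref{alg:broadcast}, $v$ has received $\READY(t,a)$ from every member of some quorum $U$ known by $v$. Since $I$ contains only correct nodes, the nodes in $U\cap I$ each sent $\READY(t,a)$ to every node, and these messages arrive within the bounded post-GST network delay.

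Next, iterate the cascade induced by the $v'$-blocking rule (lines~\ref{lin:v-blocking}--\ref{lin:send-v-blocking} of Algorithm~\ref{alg:broadcast}): each newly convinced node in $I$ also sends $\READY(t,a)$ to every node, which may in turn convince further nodes in $I$. Let the process continue until a fixed point $V^+\subseteq I$ is reached at which no further node in $V^-= I\setminus V^+$ can be triggered by its $v$-blocking rule. By Lemma~\ref{lem:subjective-federated-ready-consistent-intact-set} the nodes in $V^-$ did not send any $\READY(t,\_)$ at all, and by the fixed-point property $V^+$ is not $v'$-blocking in $\{\Ss_v\}_{v\in\V_\ok}$ for any $v'\in V^-$. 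Applying Lemma~\ref{lem:subjective-federated-totality-intact-set} gives $V^-=\emptyset$, so every node in $I$ sends $\READY(t,a)$.

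For the bounded-time sharpening, note that the universe $\V$ is finite and each round of the cascade adds at least one node to $V^+$, so the cascade terminates after at most $|I|$ rounds. Since GST has expired, each round completes within a bounded amount of time determined by the post-GST message delay and the fan-out of each node's slices in its own view; the overall cascade therefore finishes within some finite delay that depends only on $\{\Ss_v\}_{v\in\V_\ok}$ and $I$. Once every node in $I$ has sent $\READY(t,a)$, and because $I$ is a quorum known by every $v'\in I$ (by Lemma~\ref{lem:subjective-intact-set} and the definition of intact set in a subjective FBQS), the guard in line~\ref{lin:deliver} of Algorithm~\ref{alg:broadcast} fires at every $v'\in I$ within one additional bounded post-GST delay, and $v'$ delivers a voting value.

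The only delicate point is checking that the cascade argument, which in the non-subjective case relied on all nodes sharing a common FBQS, still yields a uniform time bound when each $v'\in I$ acts according to its own view $\Ss_{v'}$. This is handled by Lemma~\ref{lem:subjective-intact-set}, which ensures that all views $\Ss_{v'}|_I$ with $v'\in I$ coincide, so the notion of ``$v'$-blocking inside $I$'' is unambiguous and the same bound on the cascade length applies regardless of which view one considers. I expect no new obstacle beyond this observation; the rest is a routine transport of the proof of Lemma~\ref{lem:bounded-totality} into the subjective setting.
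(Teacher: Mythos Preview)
Your proposal is correct and follows essentially the same approach as the paper: the paper's proof simply says ``Analogous to the proof of Lemma~\ref{lem:bounded-totality},'' and your write-up is exactly that analogue, spelling out the cascade argument with the subjective versions of the auxiliary lemmas. If anything, you provide more detail than the paper does, and your observation that Lemma~\ref{lem:subjective-intact-set} ensures all views $\Ss_{v'}|_I$ coincide is the right way to justify that the bound is uniform across views.
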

\begin{proof}
  Analogous to the proof of Lemma~\ref{lem:bounded-totality}. %
  %
\end{proof}

Let $\{\Ss_v\}_{v\in \V_{\ok}}$ be a subjective FBQS and $t$ be a tag, and consider an
execution of the instance for tag $t$ of FV over $\{\Ss_v\}_{v\in \V_{\ok}}$. Let $I$ be
an intact set in $\{\Ss_v\}_{v\in \V_{\ok}}$. As we did in \S\ref{sec:stellar-broadcast},
we write $\delta_I$ for the finite time that a node in $I$ takes to deliver some voting
value after GST and provided that some other node in $I$ already delivered some voting
value. Lemma~\ref{lem:subjective-bounded-totality} guarantees that $\delta_I$ is finite.

\begin{lemma}\label{lem:subjective-prepared-before-commit-intact-set}
  Let $\{\Ss_v\}_{v\in \V_{\ok}}$ be a subjective FBQS and consider an execution of \AFCP\
  over $\{\Ss_v\}_{v\in \V_{\ok}}$.  Let $I$ be an intact set in
  $\{\Ss_v\}_{v\in \V_{\ok}}$. If a node $v_1\in I$ commits a ballot $b$, then some node
  $v_2\in I$ prepared $b$.
\end{lemma}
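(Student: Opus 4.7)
The plan is to mirror the proof of Lemma~\ref{lem:prepared-before-commit-intact-set}, substituting the subjective-FBQS analogues of the underlying lemmas wherever the original appealed to the plain-FBQS versions. The original argument had three ingredients: (i) the guard on line~\ref{lin:deliver} of Algorithm~\ref{alg:broadcast}, which forces a committing node $v_1$ to have heard $\READY(b,\true)$ from a full quorum containing~$v_1$; (ii) Lemma~\ref{lem:federated-intact-set-ready}, which says the first node in $I$ to send $\READY(b,\true)$ has already collected $\VOTE(b,\true)$ from a quorum to which it belongs; and (iii) Lemma~\ref{lem:quorums-intersect-intact-set}, which produces a correct node in $I$ lying in the intersection of the two quorums, and hence a node in $I$ that voted $\true$ for~$b$ and therefore (by the guard on line~\ref{lin:prepared} of Algorithm~\ref{alg:abstract-federated-consensus}) prepared~$b$.

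First I would note that each of (i)--(iii) has a direct counterpart in the subjective setting: (i) is unchanged, since the pseudo-code is interpreted with $\Qs$ meaning the quorums known by the acting node; (ii) is Lemma~\ref{lem:subjective-federated-intact-set-ready}, which I can apply to the first $v_0\in I$ that sends $\READY(b,\true)$ because $v_0$ is correct and uses its own view; and (iii) is Lemma~\ref{lem:subjective-quorum-intersection}, applied to $U_1$, the quorum known by $v_1$ from which it received $\READY(b,\true)$ messages, and $U_0$, the quorum known by $v_0$ from which it received $\VOTE(b,\true)$ messages. Both $U_1$ and $U_0$ lie in $\bigcup_{v\in\V_{\ok}}\Qs_v$, and both intersect $I$ (since $v_1\in U_1\cap I$ and $v_0\in U_0\cap I$), so the lemma yields a correct node $v_2\in U_1\cap U_0\cap I$ that sent $\VOTE(b,\true)$.

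Then, exactly as in the original, since $v_2$ is correct, the only way it could send $\VOTE(b,\true)$ is through line~\ref{lin:send-echo} of Algorithm~\ref{alg:broadcast}, which is only reached by an invocation $\fvs[b].\vote(\true)$ in line~\ref{lin:send-commit} of Algorithm~\ref{alg:abstract-federated-consensus}; this in turn is only reached from the guarded block in lines~\ref{lin:prepared}--\ref{lin:send-commit}, which requires $v_2$ to have previously been delivered $\false$ from $\fvs[b'].\deliver$ for every $b'\lic b$, i.e.\ to have prepared~$b$. This completes the argument.

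I do not anticipate any real obstacle; the whole point of Lemmas~\ref{lem:subjective-intact-set}, \ref{lem:subjective-quorum-intersection} and \ref{lem:subjective-federated-intact-set-ready} is to make this transfer mechanical. The only thing to be careful about is making sure both quorums produced in step (iii) are drawn from $\bigcup_{v\in\V_\ok}\Qs_v$ (which they are, because $v_1$ and $v_0$ are correct members of $I$) so that Lemma~\ref{lem:subjective-quorum-intersection} applies verbatim.
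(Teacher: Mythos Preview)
Your proposal is correct and follows essentially the same approach as the paper's proof: trace back from $v_1$'s commit through the $\READY$ quorum to the first node in $I$ that sent $\READY(b,\true)$, invoke Lemma~\ref{lem:subjective-federated-intact-set-ready} to get a quorum of $\VOTE(b,\true)$ messages, and extract a node $v_2\in I$ from that quorum who must have prepared~$b$. The only cosmetic difference is that you apply Lemma~\ref{lem:subjective-quorum-intersection} to the pair $(U_1,U_0)$, whereas the paper only needs $U_0\cap I\neq\emptyset$, which is immediate since $v_0\in U_0\cap I$; your extra intersection with $U_1$ is harmless but unnecessary.
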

\begin{proof}
  Assume that a node $v_1 \in I$ commits ballot $b$. By line~\ref{lin:ready} of
  Algorithm~\ref{alg:broadcast}, node $v_1$ received $\READY(b,\true)$ from every member
  of a quorum known by $v_1$ and to which $v_1$ belongs. By
  Lemma~\ref{lem:subjective-federated-intact-set-ready} the first node $u\in I$ to do so
  received $\VOTE(b,\true)$ messages from every member of a quorum $U$ known by $u$ and to
  which $u$ belongs. Since $v_1$ is intertwined with every other node in $I$, there exists
  a correct node $v_2$ in the intersection $U \cap I$ that sent $\VOTE(b,\true)$. The node
  $v_2$ can send $\VOTE(b,\true)$ only through line~\ref{lin:send-echo} of
  Algorithm~\ref{alg:broadcast}, which means that $v_2$ triggers $\brs[b].\vote(\true)$ in
  line~\ref{lin:send-commit} of Algorithm~\ref{alg:abstract-federated-consensus}. By
  line~\ref{lin:prepared} of the same figure, this is only possible after $v_2$ has
  aborted every $b' \lic b$, and the lemma holds.
\end{proof}

\begin{lemma}\label{lem:subjective-ready-commit-prepare-intact-set}
  Let $\{\Ss_v\}_{v\in \V_{\ok}}$ be a subjective FBQS and consider an execution of \AFCP\
  over $\{\Ss_v\}_{v\in \V_{\ok}}$.  Let $I$ be an intact set in $\Ss$. Let $v_1$ and
  $v_2$ be nodes in $I$ and $b_1$ and $b_2$ be ballots such that $b_2\lic b_1$. The
  following two things cannot both happen: node $v_1$ prepares $b_1$ and node $v_2$ sends
  $\READY(b_2,\true)$.
\end{lemma}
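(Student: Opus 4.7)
The plan is to mirror the proof of Lemma~\ref{lem:ready-commit-prepare-intact-set} line by line, replacing each appeal to the non-subjective supporting lemma with its subjective analog already established in this section. First, I would assume towards a contradiction that $v_1$ prepares $b_1$ and $v_2$ sends $\READY(b_2,\true)$. By the definition of prepare and by lines~\ref{lin:deliver}--\ref{lin:send-deliver} of Algorithm~\ref{alg:broadcast}, node $v_1$ received $\READY(b,\false)$ from every member of some quorum $U_b$ known by $v_1$, for each $b\lic b_1$. Since $b_2\lic b_1$ by assumption, in particular $v_1$ received $\READY(b_2,\false)$ from every member of some quorum $U_{b_2}$ known by $v_1$, so at least one correct node sent $\READY(b_2,\false)$.

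Next, I would apply Lemma~\ref{lem:subjective-federated-intact-set-ready} twice, once in each direction. The first node $u_1\in I$ that ever sent $\READY(b_2,\false)$ must have received $\VOTE(b_2,\false)$ from every member of some quorum $U_1$ known by $u_1$ with $u_1\in U_1$. Symmetrically, since $v_2\in I$ sent $\READY(b_2,\true)$, the first node $u_2\in I$ that ever sent $\READY(b_2,\true)$ must have received $\VOTE(b_2,\true)$ from every member of some quorum $U_2$ known by $u_2$ with $u_2\in U_2$. Both $U_1$ and $U_2$ lie in $\bigcup_{v\in \V_\ok}\Qs_v$ and both have non-empty intersection with $I$ (they contain $u_1$ and $u_2$ respectively).

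Finally, I would invoke Lemma~\ref{lem:subjective-quorum-intersection} to obtain a correct node $v\in U_1\cap U_2\cap I$. Such a $v$ must have sent both $\VOTE(b_2,\false)$ and $\VOTE(b_2,\true)$, contradicting the use of the Boolean guard $\voted$ in lines~\ref{lin:bc-init} and \ref{lin:if-echoed}--\ref{lin:echoed} of Algorithm~\ref{alg:broadcast}, which ensures that a correct node votes at most one value per tag. This contradiction completes the proof.

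There is essentially no hard part: the entire argument is a syntactic adaptation of the earlier proof, and the subjective analogs needed (Lemmas~\ref{lem:subjective-federated-intact-set-ready} and~\ref{lem:subjective-quorum-intersection}) are the exact drop-in replacements for Lemmas~\ref{lem:federated-intact-set-ready} and~\ref{lem:quorums-intersect-intact-set}. The only subtle point worth double-checking is that quorums in the proof ($U_1$, $U_2$, and the $U_b$'s) are quorums in different views of possibly different correct nodes, so one must verify that Lemma~\ref{lem:subjective-quorum-intersection}'s hypothesis $U_1,U_2\in\bigcup_{v\in\V_\ok}\Qs_v$ is met, which it clearly is since each quorum arises from the view of a correct node in $I$.
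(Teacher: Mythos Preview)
Your proposal is correct and follows essentially the same approach as the paper's proof: assume both events, trace $\READY(b_2,\false)$ and $\READY(b_2,\true)$ back to quorums of $\VOTE$ messages via the ``first node in $I$'' argument, intersect those quorums in a node of $I$, and derive a contradiction from the $\voted$ guard. If anything, your citations are cleaner: the paper's proof invokes Lemma~\ref{lem:subjective-empty-I-not-v-blocking} where Lemma~\ref{lem:subjective-federated-intact-set-ready} is what is actually needed, and it appeals informally to ``intertwined'' where your explicit use of Lemma~\ref{lem:subjective-quorum-intersection} is the right tool in the subjective setting (since $U_1$ and $U_2$ are quorums in possibly different views).
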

\begin{proof}
  Assume towards a contradiction that $v_1$ prepares $b_1$, and that $v_2$ sends
  $\READY(b_2,\true)$. By definition of prepare, node $v_1$ aborted every ballot
  $b \lic b_1$. By line~\ref{lin:ready} of Algorithm~\ref{alg:broadcast}, node $v_2$
  received $\READY(b,\false)$ from every member of a quorum $U_b$ known by $v_1$ for each
  ballot $b \lic b_1$.  By assumptions, $b_2\lic b_1$, and therefore $v_1$ received
  $\READY(b_2,\false)$ from every member of the quorum $U_{b_2}$. By
  Lemma~\ref{lem:subjective-empty-I-not-v-blocking}, the first node $u_1\in I$ that sent
  $\READY(b_2,\false)$ received $\VOTE(b_2,\false)$ from a quorum $U_1$ known by $u_1$ and
  to which $u_1$ belongs. Since $v_2$ sent $\READY(b_2,\true)$ and by
  Lemma~\ref{lem:subjective-empty-I-not-v-blocking}, the first node $u_2\in I$ that sent
  $\READY(b_2,\true)$ received $\VOTE(b_2,\true)$ from a quorum $U_2$ known by $u_2$ and
  to which $u_2$ belongs. Since $u_1$ and $u_2$ are intertwined, the intersection
  $U_1\cap U_2$ contains some correct node $v$, which sent both $\VOTE(b_2,\false)$ and
  $\VOTE(b_2,\true)$ messages. By the use of the Boolean $\voted$ in line~\ref{lin:echo}
  of Algorithm~\ref{alg:broadcast} this results in a contradiction and we are done.
\end{proof}

\begin{lemma}\label{lem:subjective-commit-largest-prepared-intact-set}
  Let $\{\Ss_v\}_{v\in \V_{\ok}}$ be a subjective FBQS and consider an execution of \AFCP\
  over $\{\Ss_v\}_{v\in \V_{\ok}}$.  Let $I$ be an intact set in $\Ss$. If a node
  $v_1\in I$ commits a ballot $b_1$, then the largest ballot $b_2$ prepared by any node
  $v_2\in I$ before $v_1$ commits $b_1$ is such that $b_1\sim b_2$.
\end{lemma}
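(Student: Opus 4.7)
The plan is to mirror the proof of Lemma~\ref{lem:commit-largest-prepared-intact-set} verbatim, swapping in the subjective-FBQS analogues of the auxiliary lemmas. Namely, Lemma~\ref{lem:subjective-federated-intact-set-ready} takes the place of Lemma~\ref{lem:federated-intact-set-ready}, and Lemma~\ref{lem:subjective-ready-commit-prepare-intact-set} replaces Lemma~\ref{lem:ready-commit-prepare-intact-set}. Each occurrence of ``a quorum to which $v$ belongs'' becomes ``a quorum known by $v$ and to which $v$ belongs'', reflecting that in the subjective setting each node reasons according to its own view $\Ss_v$.

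First, from the assumption that $v_1\in I$ commits $b_1$ and from line~\ref{lin:deliver} of Algorithm~\ref{alg:broadcast}, I extract that $v_1$ received $\READY(b_1,\true)$ from every member of a quorum known by $v_1$ to which $v_1$ belongs; in particular $v_1$ itself sent $\READY(b_1,\true)$, so some node in $I$ sends such a message. I then apply Lemma~\ref{lem:subjective-federated-intact-set-ready} to the first node $u\in I$ that sends $\READY(b_1,\true)$: this node must have received $\VOTE(b_1,\true)$ from every member of some quorum known by $u$ and to which $u$ belongs, so $u$ itself voted $\true$ for $b_1$. By lines~\ref{lin:prepared}--\ref{lin:send-commit} of Algorithm~\ref{alg:abstract-federated-consensus}, triggering $\fvs[b_1].\vote(\true)$ is only possible after $u$ has prepared $b_1$. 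Hence the largest ballot $b_2$ prepared by any node in $I$ before $v_1$ commits $b_1$ satisfies $b_2\geq b_1$.

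Next I split on whether $b_2=b_1$ (in which case $b_1\sim b_2$ is immediate and the lemma holds) or $b_2>b_1$. In the latter case I argue by contradiction: suppose $b_2\not\sim b_1$, so $b_1\lic b_2$. By lines~\ref{lin:prepared}--\ref{lin:send-commit}, the node $v_2\in I$ achieving the maximum prepared ballot $b_2$. Since $v_1\in I$ sent the message $\READY(b_1,\true)$, Lemma~\ref{lem:subjective-ready-commit-prepare-intact-set} applied to $v_1,v_2\in I$ and to the pair of ballots $b_1\lic b_2$ (taking the roles $b_2\mapsto b_1$ and $b_1\mapsto b_2$ in that lemma) yields a contradiction. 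Hence $b_2\sim b_1$, as required.

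I do not expect any real obstacle: the argument is a direct translation of the original, and the only point that warrants care is making sure each appeal to the quorum-intersection machinery uses the subjective version, which we have already verified works for nodes in an intact set through Lemma~\ref{lem:subjective-quorum-intersection} and its consequences. Everything else—the use of the Boolean flag $\voted$ to prevent double voting, and the fact that voting $\true$ for a ballot in \AFCP\ is always preceded by preparing it—does not depend on whether the FBQS is objective or subjective.
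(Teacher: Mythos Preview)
Your proposal is correct and follows essentially the same approach as the paper's own proof: both argue that $v_1$ itself sent $\READY(b_1,\true)$, invoke Lemma~\ref{lem:subjective-federated-intact-set-ready} to find a node $u\in I$ that voted $\true$ for (and hence prepared) $b_1$, conclude $b_2\geq b_1$, and then handle the case $b_2>b_1$ by contradiction via Lemma~\ref{lem:subjective-ready-commit-prepare-intact-set}. The only cosmetic difference is which algorithm lines are cited for the claim that voting $\true$ entails having prepared the ballot; your reference to lines~\ref{lin:prepared}--\ref{lin:send-commit} is the appropriate one.
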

\begin{proof}
  Assume node $v_1$ commits ballot $b_1$. By the guard in line~\ref{lin:deliver} of
  Algorithm~\ref{alg:broadcast}, node $v_1$ received the message $\READY(b_1,\true)$ from
  every member of a quorum known by $v_1$ and to which $v_1$ belongs, which entails that
  node $v_1$ received $\READY(b_1,\true)$ from itself. By
  Lemma~\ref{lem:subjective-federated-intact-set-ready}, the first node $u\in I$ that send
  $\READY(b_1,\true)$ needs to receive an $\VOTE(b_1,\true)$ message from every member of
  some quorum known by $u$ and to which $u$ belongs. Thus, $u$ itself triggered
  $\brs[b_1].\vote(\true)$, which by lines~\ref{lin:send-prepare}
  and~\ref{lin:prepare-increased} of Algorithm~\ref{alg:abstract-federated-consensus}
  means that $u$ prepared ballot $b_1$. Hence, the largest ballot $b_2$ such that there
  exists a node $v_2\in I$ that triggers $\brs[b_2].\vote(\true)$ before $v_1$ commits
  $b_1$, is bigger or equal than $b_1$. If $b_2=b_1$, then $b_2.\val = b_1.\val$ and by
  lines~\ref{lin:prepared}--\ref{lin:send-commit} of
  Algorithm~\ref{alg:abstract-federated-consensus}, node $v_2$ prepares $b_2$ before it
  triggers $\brs[b_2].\vote(\true)$ and the lemma holds.

  If $b_2>b_1$, then we assume towards a contradiction that $b_2.\val\not=b_1.\val$. By
  lines~\ref{lin:prepared}--\ref{lin:send-commit} of
  Algorithm~\ref{alg:abstract-federated-consensus}, node $v_2$ prepared $b_2$. But this
  results in a contradiction by
  Lemma~\ref{lem:subjective-ready-commit-prepare-intact-set}, because $v_1$ and $v_2$ are
  intertwined and $v_1$ sent $\READY(b_1,\true)$, but $b_1\lic b_2$.  Therefore
  $b_2.\val=b_1.\val$, and by lines~\ref{lin:prepared}--\ref{lin:send-commit} of
  Algorithm~\ref{alg:abstract-federated-consensus}, node $v_2$ prepares $b_2$ before it
  triggers $\brs[b_2].\vote(\true)$.
\end{proof}

The definition of \emph{ready-tree for Boolean $a$ and ballot $b$ at node $v$} from
Appendix~\ref{ap:abstract-federated-consensus} can be lifted to the subjective FBQSs
straightaway, since the definition assumes that all nodes are honest, and thus they do not
equivocate their quorum slices and all the views coincide with each other.

\begin{lemma}\label{lem:subjective-cascade}
  Let $\{\Ss_v\}_{v\in \V_{\ok}}$ be a subjective FBQS $b$ be a ballot, and consider an
  execution of the instance for ballot $b$ of FV over $\{\Ss_v\}_{v\in \V_{\ok}}$. Assume
  all nodes are honest. If a node $v$ sends $\READY(b,a)$ then there exists a quorum $U$
  known by every node such that every member of $U$ sent $\VOTE(b,a)$.
\end{lemma}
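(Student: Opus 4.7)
The plan is to reduce the subjective setting to the shared-FBQS setting by exploiting honesty, and then reuse the ready-tree machinery from Appendix~\ref{ap:abstract-federated-consensus}. The key observation is that, by the system model, an honest node either follows its specification or stops, and so no honest node equivocates about its quorum slices. Since by assumption all nodes are honest, for every two correct nodes $v_1,v_2\in \V_\ok$ and every node $v\in\V$, the views agree on the slices of $v$: $\Ss_{v_1}(v)=\Ss_{v_2}(v)$. Hence all views $\Ss_v$ coincide with a single FBQS $\Ss$, and a set is a quorum known by some node iff it is a quorum known by every node; similarly a set is $v$-blocking in one view iff it is $v$-blocking in every view.

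First I would lift the definition of the ready-tree for Boolean $a$ and ballot $b$ at node $v$ from Appendix~\ref{ap:abstract-federated-consensus} verbatim to the subjective setting, using the quorums and $v$-blocking sets determined by $v$'s view $\Ss_v$. By the honesty assumption, the ready-tree has finite height, since an infinite branch would require some node to have produced a $\READY$ message without being justified by a quorum of votes or a $v$-blocking set of previous readies, contradicting honesty. By the argument in the preceding paragraph, each quorum appearing at a leaf is a quorum known by every node, and by construction each such leaf consists of nodes that sent $\VOTE(b,a)$.

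Finally I would take $U$ to be the union of the quorums appearing at the leaves of the ready-tree for $a$ and $b$ at $v$. Since quorums are closed under union, $U$ is a quorum known by every node, and every member of $U$ sent $\VOTE(b,a)$, which yields the lemma. The main subtlety---rather than a deep obstacle---is making the reduction from the subjective to the shared setting rigorous: one must be careful that the statement quantifies over ``known by every node'' rather than ``known by $v$,'' so the honesty assumption is genuinely needed to promote the leaves of $v$'s locally-defined ready-tree into quorums recognised by every node. Once this is in place, the remaining argument is a direct lift of Lemma~\ref{lem:cascade}.
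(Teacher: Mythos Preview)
Your proposal is correct and matches the paper's approach: the paper's proof is the one-liner ``Straightforward by Lemma~\ref{lem:cascade}, since all nodes are honest and thus all the views coincide with each other,'' and you have simply unpacked that reduction in more detail (and taken the union of the leaf quorums rather than a single leaf, which is harmless since quorums are closed under union).
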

\begin{proof}
  Straightforward by Lemma~\ref{lem:cascade}, since all nodes are honest and thus all the
  views coincide with each other.
\end{proof}

\begin{lemma}\label{lem:subjective-prepared-propose-intact-set}
  Let $\{\Ss_v\}_{v\in \V_{\ok}}$ be a subjective FBQS and consider an execution of \AFCP\
  over $\{\Ss_v\}_{v\in \V_{\ok}}$. Let $b_1$ be the largest ballot prepared by some node
  $v_1$ at some moment in the execution. If all nodes are honest, then some node $v_2$
  proposed $b_1.\val$.
\end{lemma}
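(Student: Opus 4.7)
The plan is to reduce this lemma to its non-subjective counterpart, Lemma~\ref{lem:prepared-propose-intact-set}, by exploiting the assumption that every node is honest. The key observation is that honest nodes never equivocate about their quorum slices: by the very definition of a subjective FBQS, the family $\{\Ss_v\}_{v\in \V_\ok}$ already agrees on the slices of correct nodes, and under the all-honest assumption no node distorts any other node's slices when communicating them. Consequently, for every pair $v_1,v_2\in \V_\ok$ we have $\Ss_{v_1}=\Ss_{v_2}$; call this common FBQS $\Ss$, with a single notion of quorum and $v$-blocking set shared by every correct node.

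Once this is in place, the only remaining task is to observe that running \AFCP\ over the subjective FBQS $\{\Ss_v\}_{v\in\V_\ok}$ in the all-honest case is operationally the same as running \AFCP\ over the FBQS $\Ss$ in the all-honest case: the tests $U\in\Qs$ and `$v\in U$' of Algorithm~\ref{alg:abstract-federated-consensus} and the underlying calls to Algorithm~\ref{alg:broadcast} are all evaluated in each node's view $\Ss_v$, which by the previous paragraph coincides with $\Ss$. Hence any trace entailed by the former is entailed by the latter, and vice versa.

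With this reduction, I would conclude by direct appeal to Lemma~\ref{lem:prepared-propose-intact-set}: applying it to the execution of \AFCP\ over $\Ss$ with the same largest prepared ballot $b_1$ at node $v_1$, and the same (all-honest) behaviour of the nodes, yields a node $v_2$ that proposed $b_1.\val$, which is exactly the required conclusion for the subjective setting. The main subtlety to be careful about is ensuring the equivalence between the subjective and non-subjective executions is stated precisely enough that the induction on batches carried out in the proof of Lemma~\ref{lem:prepared-propose-intact-set} (which relies on each honest node sending the same batch to every node, and on the definition of the ready-tree which is closed under union of quorums) transfers without change; since all slices and all quorums coincide across views under our assumption, this is immediate.
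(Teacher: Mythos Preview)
Your proposal is correct and matches the paper's own argument: the paper's proof is a one-line reduction stating that, since all nodes are honest, all views coincide and the result follows from Lemma~\ref{lem:prepared-propose-intact-set} (the paper's printed proof actually contains a self-referential typo, but the intended target is clearly the non-subjective lemma, exactly as you identify). Your write-up simply spells out in more detail why the views coincide and why the execution over $\{\Ss_v\}_{v\in\V_\ok}$ is then literally an execution over the common $\Ss$, which is precisely the content the paper leaves implicit.
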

\begin{proof}
  Straightforward by Lemma~\ref{lem:subjective-prepared-propose-intact-set}, since all
  nodes are honest and thus all the views coincide with each other.
\end{proof}

\begin{lemma}\label{lem:subjective-window-n}
  Let $\{\Ss_v\}_{v\in \V_{\ok}}$ be a subjective FBQS and consider an execution of \AFCP\
  over $\{\Ss_v\}_{v\in \V_{\ok}}$.  Let $I$ be a maximal intact set in
  $\{\Ss_v\}_{v\in \V_{\ok}}$ and assume that GST has expired. Let $v$ be a node in $I$
  that prepares some ballot $b$ such that no other node in $I$ has ever prepared a ballot
  with round bigger or equal than $b.\round$. In the interval of duration $\delta_I$ after
  $u$ prepares $b$, every node in $I$ that has not decided any value yet, either decides a
  value or prepares a ballot with round $b.\round$.
\end{lemma}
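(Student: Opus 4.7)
The plan is to adapt the argument from the proof of Lemma~\ref{lem:window-no-overlap} to the subjective setting by appealing to the subjective analogues of the underlying results. Let~$t$ be the time at which~$v$ prepares~$b$; by the guard of the rule at line~\ref{lin:prepared} of Algorithm~\ref{alg:abstract-federated-consensus}, $v$ has received $\fvs[b'].\deliver(\false)$ for every $b' \lic b$ at time~$t$. Fix any $v' \in I$ that has not decided during $(t, t + \delta_I]$; the goal is to show that~$v'$ prepares a ballot of round $b.\round$ inside this interval.

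The first step is to transport $v$'s deliveries to~$v'$. Applying Lemma~\ref{lem:subjective-bounded-totality} separately to each FV instance indexed by a ballot $b' \lic b$, within $\delta_I$ time after~$t$ node~$v'$ also receives $\fvs[b'].\deliver(\false)$ for every $b' \lic b$. The hypothesis that no other node in~$I$ has prepared a ballot of round $\geq b.\round$ before~$t$ gives $v'.\h.\round < b.\round$ and hence $v'.\h < b$ at time~$t$. Consequently the guard at line~\ref{lin:prepared} of Algorithm~\ref{alg:abstract-federated-consensus} becomes satisfied for~$b$ at~$v'$ within the window, and $v'$ updates its field $\h$ to some ballot of round at least $b.\round$.

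The main obstacle will be to rule out $v'$ jumping to a ballot of strictly greater round. For this it suffices to argue that during the window no node in~$I$ delivers $\false$ for any ballot with round $> b.\round$: such a delivery would require, by Lemma~\ref{lem:subjective-federated-intact-set-ready}, a chain of $\VOTE$ and $\READY$ messages originating from some node in~$I$ invoking $\fvs[\tilde b].\vote(\false)$ for a round-$> b.\round$ ballot $\tilde b$, which by lines~\ref{lin:send-prepare} and \ref{lin:prepare-increased} of Algorithm~\ref{alg:abstract-federated-consensus} demands that the voter have reached, via the rule at lines~\ref{lin:quorum-round}--\ref{lin:start-timer}, a current round $\geq b.\round$. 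Intersecting the endorsing quorum with~$I$ via Lemma~\ref{lem:subjective-quorum-intersection} and unrolling the resulting causal dependency in time-decreasing fashion, one eventually forces some node in~$I$ different from~$v$ to have committed or prepared a round-$\geq b.\round$ ballot strictly before~$t$, contradicting the hypothesis. Hence $v'$ prepares a ballot of round exactly~$b.\round$ within the window, as required.
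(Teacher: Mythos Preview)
Your first two paragraphs already contain the paper's entire proof: since $v$ has delivered $\false$ for every $b'\lic b$, Lemma~\ref{lem:subjective-bounded-totality} forces every $v'\in I$ that has not decided to deliver $\false$ for every $b'\lic b$ within $\delta_I$, and therefore $v'$ \emph{has prepared} $b$. Since $b$ has round $b.\round$, the conclusion follows immediately. The paper's proof is exactly this, in two sentences.

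Your third paragraph is superfluous and rests on a misreading of ``prepares a ballot''. In the paper, a node prepares $b$ precisely when it has delivered $\false$ for every $b'\lic b$; this is not the same as the rule at line~\ref{lin:prepared} firing with that specific $b$ and assigning it to $\h$. Once you have established that $v'$ delivers $\false$ for all $b'\lic b$, $v'$ has by definition prepared $b$, regardless of whether $\h$ ends up at $b$ or at some larger ballot. There is no need to rule out $v'$ ``jumping'' to a higher round: even if $v'$ simultaneously prepares some $b''$ with $b''.\round > b.\round$, it has still prepared $b$, and the lemma's conclusion is met.

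The causal-unrolling argument you sketch is also on shaky ground: the lemma does not assume that malicious nodes have stopped, so messages endorsing higher rounds may well be in flight, and the quorum-intersection-and-backtrack reasoning you outline would not straightforwardly yield a contradiction with the hypothesis about time $t$. Fortunately none of this is needed; simply drop the third paragraph and observe that delivering $\false$ for all $b'\lic b$ \emph{is} preparing $b$.
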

\begin{proof}
  Since $v$ has prepared $b$, then $v$ has delivered $\false$ for every ballot
  $b_i\lic b$. Let $u\in I$ be a node different from $v$ that has not decided any value
  yet. By assumptions, node $u$ has neither prepared any ballot with round bigger or equal
  than $b.\round$. Since GST has expired and by
  Lemma~\ref{lem:subjective-bounded-totality}, node $u$ will deliver $\false$ for every
  ballot $b_i\lic b$ within $\delta_I$, and the lemma holds.
\end{proof}

Let $\{\Ss_v\}_{v\in \V_{\ok}}$ be a subjective FBQS and consider an execution of \AFCP\
over $\{\Ss_v\}_{v\in \V_{\ok}}$. Let $I$ be an intact set in $\{\Ss_v\}_{v\in \V_{\ok}}$.
As in \S\ref{sec:abstract-federated-consensus}, we say the \emph{window for intact set $I$
  of round $n$} for the interval in which every node in an intact set $I$ that has not
decided any value yet prepares a ballot of round $n$. As in
Appendix~~\ref{ap:abstract-federated-consensus}, we let $v_n$ be the first node in $I$
that ever prepares a ballot $b_n$ with round $n$. The definition of
\emph{prepare-footprint of ballot $b_n$} from
Appendix~\ref{ap:abstract-federated-consensus} can be lifted to the subjective FBQSs
straightaway, since the definition assumes that all faulty nodes have stopped, and thus
the remaining correct nodes agree on the slices of every other correct node, and all the
quorums that are not stopped belong to all the views. We also distinguish the \emph{abort
  interval for intact set $I$ of round $n$} and the duration $\delta_I^{An}$, whose
definitions can be lifted to subjective FBQSs straightaway. As in
Appendix~\ref{ap:abstract-federated-consensus} we may omit the `for intact set $I$'
qualifier when the intact set is clear from the context.

\begin{lemma}\label{lem:subjective-window-no-overlap}
  Let $\{\Ss_v\}_{v\in \V_{\ok}}$ be a subjective FBQS and consider an execution of \AFCP\
  over $\{\Ss_v\}_{v\in \V_{\ok}}$.  Let $I$ be a maximal intact set in
  $\{\Ss_v\}_{v\in \V_{\ok}}$ and assume that all faulty nodes eventually stop. There
  exists a round $n$ such that either every node in $I$ decides some value before reaching
  round $n$, or otherwise the windows of all the rounds $m \geq n$ happen consecutively
  and never overlap with each other, and in each of the consecutive windows of round $m$
  the nodes in $I$ that have not decided any value yet only prepare ballots with round
  $m$.
\end{lemma}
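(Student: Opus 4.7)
The plan is to mirror the proof of Lemma~\ref{lem:window-no-overlap} almost verbatim, swapping every appeal to the FBQS-level results for their subjective FBQS analogues proven earlier in the appendix. I would first argue that, once all faulty nodes have stopped, the remaining quorums known by any correct node only involve correct nodes whose slices are agreed upon by everyone in $\V_\ok$, so that the constructions of \emph{prepare-footprint} $P_I^n$, \emph{window} of round $n$, and \emph{abort-interval} $\delta_I^{An}$ lift directly to the subjective setting. In particular, the bounded delay $\delta_I^P$ after which the first node in $I$ prepares a ballot of round $n$, once the nodes in $P_I^n$ have sent their messages, is still finite because the number of quorums known by any correct node is finite.

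Next, I would pick a round $n_0$ such that the abort-interval of $n_0$ happens entirely after GST and after every faulty node has stopped. Using Lemma~\ref{lem:subjective-bounded-totality} in place of Lemma~\ref{lem:bounded-totality}, I conclude that, within $\delta_I$ after the first node in $I$ prepares a ballot of round $n_0$, every node in $I$ that has not decided prepares some ballot of round $n_0$ (this is the subjective analogue of Lemma~\ref{lem:subjective-window-n}, which is already established in the appendix). By lines~\ref{lin:quorum-round}--\ref{lin:start-timer} of Algorithm~\ref{alg:abstract-federated-consensus}, each such node then (re)starts its local timer with delay $F(n_0)$ within a period of total duration at most $\delta_I^{An_0} + \delta_I^P + \delta_I$.

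Since $F$ doubles with each increment of the round, there exists a round $n \geq n_0$ such that $F(m) > \delta_I^{Am} + \delta_I^P + \delta_I$ for every $m \geq n$. For such $m$, the abort-interval of round $m+1$ cannot start before the window of round $m$ has ended, because no node in $I$ sends a vote supporting a ballot of round $m+1$ until its timer for round $m$ expires (lines~\ref{lin:timeout}--\ref{lin:prepare-increased}), and faulty nodes are stopped. Hence consecutive windows of rounds $\geq n$ do not overlap, and within each window of round $m$ the non-decided nodes in $I$ only prepare ballots of round $m$, as required.

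The main obstacle is really a bookkeeping one: one must check carefully that the various duration bounds ($\delta_I$, $\delta_I^P$, $\delta_I^{An}$) remain finite and constant once faulty nodes have stopped, even though prior to that moment different correct nodes could disagree on which quorums exist. The key observation that makes this work is that intact sets are preserved across all views by Lemma~\ref{lem:subjective-intact-set}, so the topology relevant to $I$ is stable, and Lemma~\ref{lem:subjective-quorum-intersection} provides the quorum intersection needed to reuse the reasoning of the non-subjective case without modification.
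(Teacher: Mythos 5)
Your proposal is correct and follows essentially the same route as the paper: the paper's own proof is literally ``analogous to the proof of Lemma~\ref{lem:window-no-overlap}'', relying on the preceding observation that the window, prepare-footprint and abort-interval constructions lift to subjective FBQSes once all faulty nodes have stopped (since correct nodes agree on each other's slices), and substituting Lemma~\ref{lem:subjective-bounded-totality} for Lemma~\ref{lem:bounded-totality}. Your write-up makes these substitutions explicit and matches the paper's level of rigor.
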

\begin{proof}
  Analogous to the proof of Lemma~\ref{lem:window-no-overlap}.
\end{proof}

\begin{corollary}\label{cor:subjective-b-max}
  Let $m\geq n$ and let $b_{\max}$ be the maximum ballot prepared by any node in $I$
  before the abort interval of round $m+1$ starts. Every node in $I$ prepares $b_{\max}$
  before the abort interval of round $m+1$ starts.
\end{corollary}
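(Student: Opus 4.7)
The plan is to mirror the proof of Corollary~\ref{cor:b-max}, substituting each dependency on an FBQS with its subjective-FBQS counterpart. First I would invoke Lemma~\ref{lem:subjective-window-no-overlap} to obtain a round $n$ from which onwards the windows for intact set $I$ happen consecutively, never overlap, and such that every node in $I$ that has not yet decided only prepares ballots of round $k$ during the window of round $k$. Fixing any $m\geq n$, it follows that the ballot $b_{\max}$ of the statement has round exactly $m$, since the window of round $m$ closes before the abort-interval of round $m+1$ begins.

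Next I would examine how $b_{\max}$ comes to be prepared. By definition of $b_{\max}$ as the maximum ballot prepared during the window of round $m$, some node $v\in I$ delivers $\false$ through the underlying FV instances for every ballot $b'\lic b_{\max}$, which by the lifted definitions of prepare-footprint and $\delta_I^P$ occurs within $\delta_I^P$ time after the abort-interval of round~$m$ ends. By Lemma~\ref{lem:subjective-bounded-totality} applied to each of these FV instances, every other node in $I$ also delivers some voting value for each such $b'$ within $\delta_I$ of $v$'s corresponding delivery; and by the \emph{Consistency for intertwined nodes} property of reliable Byzantine voting for intact sets, those delivered values must all equal $\false$. Hence every node in $I$ prepares $b_{\max}$ within $\delta_I^{Am}+\delta_I^P+\delta_I$ time after the abort-interval of round $m$ begins. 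Since $F$ doubles with each round increment, taking $n$ large enough (as already done in the proof of Lemma~\ref{lem:subjective-window-no-overlap}) ensures $F(m)>\delta_I^{Am}+\delta_I^P+\delta_I$, so the abort-interval of round $m+1$ cannot start before every node in $I$ has prepared $b_{\max}$, as required.

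The one point where I would be most careful is the batched nature of the deliveries: because preparing $b_{\max}$ requires delivering $\false$ for the infinitely many ballots $b'\lic b_{\max}$, naively one might fear the propagation delay scales with the size of $\{b'\mid b'\lic b_{\max}\}$. I would therefore emphasise that BNS packages all these deliveries into finitely many batches, triggered atomically, so that a single application of Lemma~\ref{lem:subjective-bounded-totality} bounds the propagation by $\delta_I$ uniformly. With that observation the argument reduces to essentially the same bookkeeping as in the non-subjective case of Corollary~\ref{cor:b-max}.
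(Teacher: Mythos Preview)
Your proposal is correct and follows essentially the same approach as the paper, which simply states that the proof is analogous to that of Corollary~\ref{cor:b-max}. You are in fact more careful than the paper's own argument in explicitly invoking \emph{Consistency for intertwined nodes} to ensure the propagated deliveries are all $\false$ (the paper's proof of Corollary~\ref{cor:b-max} appeals directly to Lemma~\ref{lem:bounded-totality} without spelling this out), and your remark about BNS handling the infinitely many ballots is a reasonable clarification.
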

\begin{proof}
  Analogous to the proof of Corollary~\ref{cor:b-max}.
\end{proof}

\begin{proof}[Proof of
  Theorem~\ref{thm:non-blocking-byzantine-consensus-intact-sets-lying}]
  Analogous to the proof of Theorem~\ref{thm:non-blocking-byzantine-consensu-intact-sets},
  by using
  Lemmas~\ref{lem:subjective-bounded-totality}--\ref{lem:subjective-window-no-overlap} and
  Corollary~\ref{cor:subjective-b-max}.
\end{proof}


\begin{lemma}\label{lem:coincide-consensus-state-lying} %
  Let $\{\Ss_v\}_{v\in \V_{\ok}}$ be a subjective FBQS with some intact set $I$, $v$ be a
  node with $v \in I$, and $\tau$ be a trace entailed by an execution of \CFCP. If
  $\maptrace(\tau)$ is a trace entailed by an execution of \AFCP, then $v.\n$, $v.\h$, and
  $v.\b$ coincide in both executions.
\end{lemma}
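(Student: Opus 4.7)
The plan is to adapt the proof of Lemma~\ref{lem:coincide-consensus-state} verbatim, observing that the argument never exploited the fact that all nodes share the same FBQS. Since each correct node~$v$ in both \CFCP\ and \AFCP\ runs Algorithms~\ref{alg:concrete-federated-consensus} and~\ref{alg:abstract-federated-consensus} according to its own view~$\Ss_v$, the set of quorums~$\Qs$ appearing in the pseudo-code is in both cases $v$'s view $\Qs_v$. Hence any guard or quorum-based trigger used to update $\b$, $\h$, or $\n$ references the same set of quorums in the concrete and abstract executions.

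I would proceed by induction on $\tau$. The base case is immediate from lines \ref{lin:pbd-candidate-highest-init}--\ref{lin:pbd-current-round-init} of Algorithm~\ref{alg:concrete-federated-consensus} and lines~\ref{lin:candidate-highest-init}--\ref{lin:current-round-init} of Algorithm~\ref{alg:abstract-federated-consensus}, which initialise $\b, \h, \n$ identically regardless of the underlying FBQS. For the step case $\tau = \tau' \concat [e]$, only the events $v.\propose(x)$, $v.\prepared(b)$, $v.\starttimer(n)$, and $v.\timeout$ modify these fields, and I would treat each exactly as in Lemma~\ref{lem:coincide-consensus-state}: for $v.\propose(x)$, use line~\ref{lin:pbd-init-b} of Algorithm~\ref{alg:concrete-federated-consensus} against line~\ref{lin:init-b} of Algorithm~\ref{alg:abstract-federated-consensus}; for $v.\prepared(b)$, combine the induction hypothesis on~$\h$ with the updates in lines~\ref{lin:pbd-assign-h}--\ref{lin:pbd-assign-b} against lines~\ref{lin:assign-h}--\ref{lin:assign-b}; and for $v.\timeout$, combine the induction hypothesis on $\b,\h,\n$ with lines~\ref{lin:pbd-increase-candidate}--\ref{lin:pbd-increase-round} against lines~\ref{lin:increase-candidate}--\ref{lin:increase-prepared}.

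The only place where the FBQS enters the argument is the $v.\starttimer(n)$ case. Here I would argue that the quorum $U$ witnessing the guard in line~\ref{lin:pbd-quorum-round} of Algorithm~\ref{alg:concrete-federated-consensus} is a quorum in~$\Ss_v$ (since $v$ itself is the one executing the rule), and that by the definition of $\maptrace$ each $v.\receive(\M_u(\STA_u\ b_u), u)$ in $\tau'$ is mapped to a corresponding batched receive in $\maptrace(\tau')$ with the same~$U$. Combined with the induction hypothesis $\n < b_u.\round$, this makes the guard in line~\ref{lin:quorum-round} of Algorithm~\ref{alg:abstract-federated-consensus} satisfied using the very same quorum~$U\in\Qs_v$, so both executions update $\n$ to the same value by lines~\ref{lin:pbd-update-current-round} and~\ref{lin:update-current-round}.

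The main ``obstacle'' is purely notational: one must check that each reference to $\Qs$ in the original proof can be consistently replaced by $\Qs_v$ (the quorums in $v$'s view), and that no step appealed to other nodes sharing the same FBQS as~$v$. Since every rule in Algorithms~\ref{alg:concrete-federated-consensus} and~\ref{alg:abstract-federated-consensus} is evaluated locally at $v$ using only~$\Ss_v$, this replacement goes through unchanged, and the lemma follows.
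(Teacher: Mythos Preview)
Your proposal is correct and takes essentially the same approach as the paper, which simply states that the proof is analogous to that of Lemma~\ref{lem:coincide-consensus-state}. Your additional observation that the only FBQS-dependent step is the $v.\starttimer(n)$ case, and that it goes through because both protocols evaluate the guard locally using $\Qs_v$, is exactly the reason the analogy holds and is more explicit than the paper's one-line proof.
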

\begin{proof}
  Analogous to the proof of Lemma~\ref{lem:coincide-consensus-state}.
\end{proof}

\begin{lemma}\label{lem:upper-prep-lying}
  Let $\{\Ss_v\}_{v\in \V_{\ok}}$ be a subjective FBQS with some intact set $I$, $v$ be a
  node with $v \in I$, and $\tau$ be a trace entailed by an execution of \CFCP. Then for
  every ballot $b \in v.\Cd$ (respectively, $b \in v.\Cr$) holds $b \leq v.\hd$
  (respectively, $b \leq v.\hd$).
\end{lemma}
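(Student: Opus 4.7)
The plan is to replay the argument of Lemma~\ref{lem:upper-prep} essentially verbatim, exploiting the fact that for a correct node~$v\in I$, the execution of Algorithm~\ref{alg:bunch-voting} is entirely governed by $v$'s local view $\Ss_v$, and notions such as ``quorum'' and ``$v$-blocking set'' in that pseudo-code are well-defined relative to $\Ss_v$ in the subjective setting. In particular, since $I$ is an intact set and $v\in I$, node $v$ is correct and does not equivocate about its slices, so $v$'s field updates obey the same local invariants as in the non-subjective case.

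First, I would proceed by contradiction. Assume some ballot $b\in v.\Cd$ (respectively, $b\in v.\Cr$) with $b > v.\hd$ (resp.\ $b > v.\hr$). By the guarded rules in lines~\ref{lin:committed}--\ref{lin:delivered} (resp.\ lines~\ref{lin:ready-commit}--\ref{lin:readied} and \ref{lin:ready-commit-v-blocking}--\ref{lin:readied-v-blocking}) of Algorithm~\ref{alg:bunch-voting}, adding $b$ to $\Cd$ (resp.\ to $\Cr$) requires $v$ to receive a $\CMT$-labelled message for ballot $b$ from a quorum known by $v$ that contains $v$ itself (or, for $\Cr$, from a $v$-blocking set; but the node that first introduces the chain still hears from such a quorum). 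Hence $v$ sent $\READY(\CMT\ b)$ to itself.

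Second, I would observe that the only way $v$ could have $\hd<b$ (resp.\ $\hr<b$) is that $v$ has not yet processed enough $\READY(\PREP\ b_u)$ messages at lines~\ref{lin:b-prepared}--\ref{lin:hd} (resp.\ \ref{lin:ready-prepare}--\ref{lin:update-hr} and \ref{lin:ready-prepare-v-blocking}--\ref{lin:update-hr-v-blocking}) from a quorum known by $v$ that contains $v$ to push the field above $b$. Since the quorum would include $v$, this forces $v$'s sent $\READY(\PREP\ \_)$ messages to carry ballots strictly below $b$, while $v$ has also sent $\READY(\CMT\ b)$. Through the simulation $\maptrace$ of Definition~\ref{def:maptrace}, these two concrete messages would translate into $\READY$ statements at the abstract level that are mutually inconsistent for the same ballot~$b$, so $v$ would be producing contradicting messages, contradicting that $v\in I$ is correct.

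The main obstacle I anticipate is not a technical one but rather carefully checking the $v$-blocking sub-cases for $\Cr$, where the triggering condition does not involve $v$'s own send directly. However, since we only track $v$'s own invariants---$\hr$, $\hd$, $\Cr$, $\Cd$ are purely local fields---and since every guarded rule that updates $\Cr$ or $\Cd$ ultimately causes $v$ to emit a $\READY$ for the corresponding statement, the contradiction argument goes through unchanged. In short, the proof is identical to that of Lemma~\ref{lem:upper-prep}, with ``quorum'' and ``$v$-blocking set'' interpreted relative to $\Ss_v$; no use is made of other nodes' views, so the subjective FBQS assumption introduces no additional difficulty.
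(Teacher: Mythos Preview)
Your high-level approach is correct and matches the paper exactly: the paper's proof of Lemma~\ref{lem:upper-prep-lying} is literally ``Analogous to the proof of Lemma~\ref{lem:upper-prep}'', and you correctly identify that the argument is purely about node $v$'s local invariants, so interpreting quorums and $v$-blocking sets relative to $\Ss_v$ changes nothing.

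One minor point: your detour through the simulation function~$\maptrace$ is unnecessary and not part of the original argument. Lemma~\ref{lem:upper-prep} (and hence this lemma) does \emph{not} assume that $\maptrace(\tau)$ is a valid \AFCP\ trace; it is a statement purely about \CFCP\ executions. The paper's proof of Lemma~\ref{lem:upper-prep} stays entirely at the concrete level: since the quorum in line~\ref{lin:committed} contains $v$ itself, $v$ must have sent $\READY(\CMT\ b)$; meanwhile $v$'s own $\READY(\PREP\ \cdot)$ messages witness $\hd$ (resp.\ $\hr$), and the two together are incompatible for a correct node. You reach the same contradiction, but the appeal to $\maptrace$ is extraneous scaffolding you can simply drop.
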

\begin{proof}
  Analogous to the proof of Lemma~\ref{lem:upper-prep}.
\end{proof}

\begin{lemma}\label{lem:max-false-lying}
  Let $\{\Ss_v\}_{v\in \V_{\ok}}$ be a subjective FBQS with some intact set $I$, $v$ be a
  node with $v \in I$, and $\tau$ be a trace entailed by an execution of \CFCP. If
  $\maptrace(\tau)$ is a trace entailed by an execution of \AFCP, for every $b > v.\hd$
  holds $v.\brs[b].\delivered$ is false.
\end{lemma}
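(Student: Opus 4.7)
The plan is to mirror the proof of Lemma~\ref{lem:max-false} essentially verbatim, substituting Lemma~\ref{lem:upper-prep-lying} for Lemma~\ref{lem:upper-prep}. The key observation that makes this transfer routine is that the argument is local to node~$v$: every guard in Algorithm~\ref{alg:bunch-voting} and Algorithm~\ref{alg:broadcast} that node~$v$ evaluates is evaluated with respect to $v$'s own view $\Ss_v$. Since $v\in I$ is correct, it acts exclusively according to $\Ss_v$, so the statement ``$v$ received messages from every member of a quorum $U$ with $v\in U$'' in either algorithm should throughout be read as ``$U$ is a quorum known by $v$ in $\Ss_v$''.

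I would proceed by contradiction, supposing $v.\brs[b].\delivered$ is $\true$ for some $b>v.\hd$. By lines~\ref{lin:deliver}--\ref{lin:send-deliver} of Algorithm~\ref{alg:broadcast} and BNS, $\maptrace(\tau)$ must contain an event $v.\receiveBatch(ms,u)$ with $\READY(b,a)\in ms$ for some $a\in\Bool$, for every $u$ in some quorum $U$ known by $v$ with $v\in U$. I then split on $a$ exactly as in the proof of Lemma~\ref{lem:max-false}. If $a=\true$, the definition of $\maptrace$ (the $v.\transmit(\READY(\CMT\ b),u)$ clause) forces $\tau$ itself to contain $v.\send(\READY(\CMT\ b),u)$, so by lines~\ref{lin:committed}--\ref{lin:delivered} of Algorithm~\ref{alg:bunch-voting} we get $b\in v.\Cd$, and Lemma~\ref{lem:upper-prep-lying} then yields $b\leq v.\hd$, contradicting $b>v.\hd$.

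If $a=\false$, I unfold the definition of $\maptrace$ (the $v.\transmit(\READY(\PREP\ b_u),u)$ clause) to extract from $\tau$, for each $u\in U$, a concrete receive $v.\receive(\READY(\PREP\ b_u),u)$ with $b'\lic b_u$ for every $b'\lic b$. This exactly matches the guard of line~\ref{lin:b-prepared} of Algorithm~\ref{alg:bunch-voting} evaluated in $v$'s view (and, as observed above, $U$ \emph{is} a quorum in $\Ss_v$), so line~\ref{lin:hd} assigns $v.\hd$ a value at least as large as $b$, again contradicting $b>v.\hd$.

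The main obstacle is essentially absent: because the whole argument concerns transitions taken locally by the single correct node $v$, no quorum-intersection or cross-view reasoning is needed, and the possibility of Byzantine equivocation about slices plays no role. The only discipline required is bookkeeping, namely to keep ``quorum'' and ``$v$-blocking'' consistently qualified by ``in $v$'s view $\Ss_v$'' and to invoke the subjective variant Lemma~\ref{lem:upper-prep-lying} rather than Lemma~\ref{lem:upper-prep}. Apart from that, the proof is the same case analysis.
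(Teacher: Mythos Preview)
Your proposal is correct and takes essentially the same approach as the paper: the paper's own proof is simply ``Analogous to the proof of Lemma~\ref{lem:max-false}, by using Lemma~\ref{lem:upper-prep-lying},'' and you have faithfully spelled out that analogy, including the correct observation that the argument is purely local to the correct node $v$ and hence insensitive to other nodes' equivocation about slices. Your additional care in reading ``quorum'' as ``quorum known by $v$ in $\Ss_v$'' is exactly the right adaptation, and in fact more explicit than what the paper writes.
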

\begin{proof}
  Analogous to the proof of Lemma~\ref{lem:max-false}, by using
  Lemma~\ref{lem:upper-prep-lying}.
\end{proof}

\begin{lemma}\label{lem:ready-false-lying}
  Let $\{\Ss_v\}_{v\in \V_{\ok}}$ be a subjective FBQS with some intact set $I$, $v$ be a
  node with $v \in I$, and $\tau$ be a trace entailed by an execution of \CFCP. If
  $\maptrace(\tau)$ is a trace entailed by an execution of \AFCP, for every $b > v.\hr$
  holds $v.\brs[b].\ready$ is false.
\end{lemma}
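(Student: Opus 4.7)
The plan is to imitate the proof of Lemma~\ref{lem:ready-false} step by step, replacing the single FBQS~$\Ss$ by the view~$\Ss_v$ that the correct node~$v$ uses when executing the protocol, and replacing invocations of Lemma~\ref{lem:upper-prep} by invocations of Lemma~\ref{lem:upper-prep-lying}. Since $v\in I$ is correct, $v$ never equivocates its slices and all the quorums and $v$-blocking sets it considers in Algorithms~\ref{alg:broadcast}, \ref{alg:bunch-voting}, \ref{alg:concrete-federated-consensus} are uniformly those of $\Ss_v$, so the argument is entirely local to $v$'s view.

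First I would suppose towards a contradiction that $v.\brs[b].\ready$ is true for some $b>v.\hr$. By lines~\ref{lin:ready}--\ref{lin:send-ready} and lines~\ref{lin:v-blocking}--\ref{lin:send-v-blocking} of Algorithm~\ref{alg:broadcast}, this forces $\maptrace(\tau)$ to contain a batch $v.\sendBatch(ms,u)$ with $\READY(b,a)\in ms$ for some $a\in\{\true,\false\}$, for every $u$ in either a quorum $U$ known by $v$ in $\Ss_v$ or a $v$-blocking set $B$ in $\Ss_v$.

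Next I would split on $a$. In the case $a=\true$, by Definition~\ref{def:maptrace} this batch can only arise from a concrete event $v.\send(\READY(\CMT\ b),u)$, so by lines~\ref{lin:ready-commit}--\ref{lin:readied} and lines~\ref{lin:ready-commit-v-blocking}--\ref{lin:readied-v-blocking} of Algorithm~\ref{alg:bunch-voting} we have $b\in v.\Cr$, whence Lemma~\ref{lem:upper-prep-lying} gives $b\leq v.\hr$, contradicting $b>v.\hr$. In the case $a=\false$, the definition of $\maptrace$ on $\transmit(\READY(\PREP\ b_u),u)$ together with BNS forces $\tau$ to contain $v.\receive(\READY(\PREP\ b_u),u)$ with $b'\lic b_u$ for every $b'\lic b$, and for every $u$ in either a quorum known by $v$ in $\Ss_v$ or a $v$-blocking set in $\Ss_v$. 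Then lines~\ref{lin:ready-prepare}--\ref{lin:update-hr} or lines~\ref{lin:ready-prepare-v-blocking}--\ref{lin:update-hr-v-blocking} of Algorithm~\ref{alg:bunch-voting} fire, and since the update rule picks the maximum such ballot, $v.\hr$ gets assigned a value $\geq b$, again contradicting $b>v.\hr$.

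I do not anticipate any real obstacle: the only subjective ingredient is that the quorums and $v$-blocking sets are taken with respect to $\Ss_v$, but every rule of Algorithms~\ref{alg:broadcast} and~\ref{alg:bunch-voting} executed by $v$ already reads them from $v$'s local view, and Lemma~\ref{lem:upper-prep-lying} is stated precisely for this setting. Hence the structure of the original proof transfers verbatim, with the only textual change being the replacement of ``quorum'' and ``$v$-blocking'' by ``quorum known by $v$ in $\Ss_v$'' and ``$v$-blocking in $\Ss_v$'', respectively.
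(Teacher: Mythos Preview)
Your proposal is correct and matches the paper's approach exactly: the paper's proof simply states that it is ``Analogous to the proof of Lemma~\ref{lem:ready-false}, by using Lemma~\ref{lem:upper-prep-lying},'' which is precisely the plan you outline. Your additional remark that the argument is local to $v$'s view $\Ss_v$ (since $v\in I$ is correct and runs the algorithms with respect to its own view) is the right justification for why the lift goes through unchanged.
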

\begin{proof}
  Analogous to the proof of Lemma~\ref{lem:ready-false}, by using
  Lemma~\ref{lem:upper-prep-lying}.
\end{proof}

\begin{lemma}\label{lem:not-cd-delivered-lying}
  Let $\{\Ss_v\}_{v\in \V_{\ok}}$ be a subjective FBQS with some intact set $I$, $v$ be a
  node with $v \in I$, and $\tau$ be a trace entailed by an execution of \CFCP. If
  $\maptrace(\tau)$ is a trace entailed by an execution of \AFCP\ and $b \not\in \Cd$ then
  $b.\delivered$ is false.
\end{lemma}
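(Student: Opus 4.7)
The proof will be fully analogous to that of Lemma~\ref{lem:not-cd-delivered}, with the only substantive change being the appeal to the subjective-FBQS variants of the auxiliary results. The plan is to proceed by contradiction: assume that $v.\brs[b].\delivered$ is true while $b\not\in v.\Cd$, and derive a contradiction from the two possible rules in Algorithm~\ref{alg:broadcast} that could have set $\delivered$ to true.

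First I would observe that, by lines~\ref{lin:deliver}--\ref{lin:send-deliver} of Algorithm~\ref{alg:broadcast} together with BNS, $v.\brs[b].\delivered$ becoming true forces $\maptrace(\tau)$ to contain a $v.\receiveBatch(ms,u)$ event with $\READY(b,a)\in ms$ for some $a\in\Bool$, from every $u$ in a quorum $U$ known by $v$ (in the view $\Ss_v$) such that $v\in U$. Here the only adjustment from the non-lying case is that $U$ is interpreted relative to $v$'s own view, which is exactly how the correct node $v$ processes quorum conditions in \CFCP\ and \AFCP\ over $\{\Ss_v\}_{v\in\V_\ok}$.

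Then I split on the value of $a$. If $a=\true$, the definition of $\maptrace$ (clause for $\READY(\CMT\ b)$) forces $\tau$ to contain $v.\receive(\READY(\CMT\ b),u)$ from every $u\in U$; by lines~\ref{lin:committed}--\ref{lin:delivered} of Algorithm~\ref{alg:concrete-federated-consensus} (applied to BV of Algorithm~\ref{alg:bunch-voting}), this triggers the addition of $b$ to $v.\Cd$, contradicting $b\not\in v.\Cd$. If $a=\false$, the definition of $\maptrace$ (clause for $\READY(\PREP\ b_u)$) forces $\tau$ to contain $v.\receive(\READY(\PREP\ b_u),u)$ for each $u\in U$ with $b'\lic b_u$ for every $b'\lic b$. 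Then lines~\ref{lin:b-prepared}--\ref{lin:hd} of Algorithm~\ref{alg:bunch-voting} assign $v.\hd$ to some ballot at least $b$, and so $b\leq v.\hd$. But having $v.\brs[b].\delivered=\true$ under $b\leq v.\hd$ directly contradicts Lemma~\ref{lem:max-false-lying}, which states that for every ballot strictly above $v.\hd$ the $\delivered$ flag is false; combined with monotonicity and the way $\hd$ is updated, the same conclusion carries over (the contradiction is obtained exactly as in the non-lying proof, by choosing a strictly-greater witness or by appealing directly to Lemma~\ref{lem:max-false-lying} on $b$ if $b>v.\hd$).

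The only conceptual subtlety, and thus the main thing to be careful about, is that all quorum-based reasoning here is local to the view $\Ss_v$ of the correct node $v$: since $v\in I$ is correct, its view is well-defined, the quorum $U$ at issue is a quorum known by $v$, and all the helper lemmas used (Lemmas~\ref{lem:upper-prep-lying} and~\ref{lem:max-false-lying}) have been proved precisely for this setting. No use of quorum intersection across views is needed, so no further machinery from \S\ref{sec:lying} is required beyond the already-established analogues.
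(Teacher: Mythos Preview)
Your proposal is correct and follows exactly the paper's approach: the paper's proof is the single line ``Analogous to the proof of Lemma~\ref{lem:not-cd-delivered}, by using Lemma~\ref{lem:max-false-lying},'' and you have faithfully unfolded that analogy, including the correct observation that the quorum $U$ is interpreted in $v$'s own view $\Ss_v$. Your hesitation around the final appeal to Lemma~\ref{lem:max-false-lying} mirrors the same terseness present in the paper's proof of the non-lying Lemma~\ref{lem:not-cd-delivered}; no additional idea is needed beyond what you wrote.
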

\begin{proof}
  Analogous to the proof of Lemma~\ref{lem:not-cd-delivered}, by using
  Lemma~\ref{lem:max-false-lying}.
\end{proof}

\begin{lemma}\label{lem:simulate-prefix-lying}
  Let $\{\Ss_v\}_{v\in \V_{\ok}}$ be a subjective FBQS with some intact set $I$ and $\tau$
  be a trace entailed by an execution of \CFCP. For every finite prefix $\tau'$ of the
  projected trace $\tau|_I$, the simulated $\rho' = \maptrace(\tau')$ is the prefix of a
  trace entailed by an execution of \AFCP\ in
  Algorithm~\ref{alg:abstract-federated-consensus}.
\end{lemma}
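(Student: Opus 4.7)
The plan is to proceed by induction on the length of $\tau'$, following exactly the same case analysis as in the proof of Lemma~\ref{lem:simulate-prefix} in Appendix~\ref{ap:correctness-concrete}. The base case $\tau'=\emptrace$ is immediate since $\maptrace(\emptrace) = \emptrace$ is the prefix of any trace. For the step case $\tau' = \tau'_1 \concat [e]$, I would dispatch on the form of $e$ (i.e.\ $\propose$, $\prepare$, $\commit$, $\prepared$, $\committed$, $\decide$, $\starttimer$, $\timeout$, and the various $\send$/$\receive$ events with $\VOTE$/$\READY$ of $\PREP\ b$/$\CMT\ b$ statements) and argue that appending the subtrace $\rho'_e$ prescribed by $\maptrace$ yields a prefix of a trace entailed by \AFCP\ over $\{\Ss_v\}_{v\in\V_\ok}$.

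The only substantive change with respect to the proof of Lemma~\ref{lem:simulate-prefix} is terminological: wherever the original argument referenced ``a quorum $U$'' or ``a $v$-blocking set $B$'', one must now read ``a quorum $U$ known by $v$'' (i.e.\ a quorum in $v$'s view $\Ss_v$) and ``a $v$-blocking set $B$ in $\{\Ss_v\}_{v\in\V_\ok}$'' (i.e.\ a set overlapping every slice of $v$ in $\Ss_v$). This is consistent because in both \CFCP\ and \AFCP\ a correct node $v$ interrogates only its own view $\Ss_v$ to evaluate the quorum and $v$-blocking conditions, and the map $\maptrace$ is defined pointwise on events of individual nodes, so the ``locality'' of the simulation is preserved verbatim.

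The auxiliary facts invoked in the original argument are already available in their subjective guises: Lemma~\ref{lem:coincide-consensus-state-lying} replaces Lemma~\ref{lem:coincide-consensus-state} for the bookkeeping of $\b$, $\h$, $\n$; Lemma~\ref{lem:upper-prep-lying} controls the ballots in $\Ce,\Cr,\Cd$ via $\he,\hr,\hd$; and Lemmas~\ref{lem:max-false-lying}, \ref{lem:ready-false-lying}, \ref{lem:not-cd-delivered-lying} respectively certify that the FV guards $\delivered$, $\ready$ and ``not yet delivered'' are false exactly when the case analysis needs them to be. The remaining ingredient, namely the network conditions for receive events, follows from the assumption that links do not create or drop messages, which is independent of whether the FBQS is subjective.

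The main obstacle will be the cases $e = v.\send(\READY(\PREP\ b), u)$ and $e = v.\send(\READY(\CMT\ b), u)$, which branch on whether $v$ triggered the rule via a quorum or via a $v$-blocking set and which rely on the maximality condition encoded in BV. Here one must be careful that $I\subseteq \V_\ok$ and $v\in I$, so $v$'s view $\Ss_v$ has $I$ as an intact set by Lemma~\ref{lem:subjective-intact-set}; hence the $v$-blocking sets used by $v$ in \CFCP\ agree with the $v$-blocking sets used in the simulated execution of \AFCP\ over $\{\Ss_v\}_{v\in\V_\ok}$. Once this identification is made, the by-now routine dovetail between the maximality of $b$ in BV and the ballot-by-ballot vote/ready rules in FV (Algorithm~\ref{alg:broadcast}) closes each sub-case exactly as before. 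All other cases are directly adapted, mutatis mutandis, from the non-subjective proof.
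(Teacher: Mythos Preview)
Your proposal is correct and matches the paper's own approach: the paper's proof simply states that it is analogous to the proof of Lemma~\ref{lem:simulate-prefix}, using Lemmas~\ref{lem:coincide-consensus-state-lying}--\ref{lem:not-cd-delivered-lying} in place of their non-subjective counterparts. Your elaboration on the terminological shift to ``quorum known by $v$'' and ``$v$-blocking in $\{\Ss_v\}_{v\in\V_\ok}$'' is accurate and in fact more explicit than what the paper provides.
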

\begin{proof}
  Analogous to the proof of Lemma~\ref{lem:simulate-prefix}, by using
  Lemmas~\ref{lem:coincide-consensus-state-lying}--\ref{lem:not-cd-delivered-lying}.
\end{proof}

\begin{proof}[Proof of Theorem~\ref{thm:refinement-lying}]
  Analogous to the proof of Theorem~\ref{thm:refinement}, by using
  Lemma~\ref{lem:simulate-prefix-lying}.
\end{proof}

\begin{proof}[Proof of Corollary~\ref{cor:cscp-correct-lying}] %
  Analogous to the proof of Corollary~\ref{cor:cscp-correct}, by using
  Theorem~\ref{thm:refinement-lying}.
\end{proof}



\end{document}
